\documentclass[11pt,twoside]{article}

\usepackage{fancyhdr}
\usepackage{epsfig}
\usepackage{latexsym}
\usepackage{microtype}
\usepackage{lastpage}
\usepackage{hyperref}
\usepackage{graphicx}
\usepackage{lipsum}
\usepackage[latin9]{inputenc}
\usepackage{amsmath}
\usepackage{amssymb}
\usepackage{amsfonts}

\usepackage[dvipsnames*,svgnames]{xcolor}
\hypersetup{colorlinks=true,allcolors=DarkBlue,linkcolor=black}

%Please include other packages if you need them.

% if you use BibTeX
%\bibliographystyle{plainurl}
% please also provide DOI information in your BibTeX entries

%------------------------------------------------------------------------------
% Please do not modify these values
\setlength{\textwidth}{12.6cm}
\setlength{\textheight}{18.5cm}
\setlength{\evensidemargin}{1.9cm}
\setlength{\oddsidemargin}{1.9cm}
%------------------------------------------------------------------------------

\newtheorem{definition}{Definition}
\newtheorem{theorem}{Theorem}
\newtheorem{proposition}{Proposition}
\newtheorem{corollary}{Corollary}
\newtheorem{lemma}{Lemma}
\newtheorem{remark}{Remark}

\newtheorem{example}{Example}

\newenvironment{proof}[1][Proof]{\noindent\textbf{#1.} }{\ \rule{0.5em}{0.5em}}
%Please define other environments if you need them

%\setcounter{page}{}

%------------------------------------------------------------------------------
% For headings

\pagestyle{fancy}
 \thispagestyle{empty}

\newcommand{\TheAuthor}{}
\newcommand{\Author}[1]{\renewcommand{\TheAuthor}{#1}}

\newcommand{\TheTitle}{}
\newcommand{\Title}[1]{\renewcommand{\TheTitle}{#1}}

\Author{Eleni Mandrali}
\Title{A translation of weighted LTL formulas to weighted Büchi automata over
$\omega$-valuation monoids}
\fancyhead{} % clear all header fields

\fancyhead[CE]{\TheAuthor}
\fancyhead[CO]{\TheTitle}

\fancyhead[LE,RO]{\thepage}
\fancyfoot{} % clear all footer fields

%------------------------------------------------------------------------------

\newcommand\blfootnote[1]{%
	\begingroup
	\renewcommand\thefootnote{}\footnote{#1}%
	\addtocounter{footnote}{-1}%
	\endgroup
}

\begin{document}

	\blfootnote{This postdoctoral research is cofinanced by
Greece and European Union (European Social Fund) through the Operational
program "Development of human resources, education and lifelong learning " in
the context of the Project "REINFORCEMENT OF POSTDOCTORAL RESEARCHERS- Second
Cycle " (MIS\ 5033021) that is implemented by State Scholarship Foundation
(IKY). }   	

\parindent=8mm
\noindent

\underline{{\noindent \bf \scriptsize  \thepage--\hyperlink{lastpage}{\pageref{LastPage}}}}
\vskip -3mm
\noindent
\vskip -1mm
\noindent

\vspace{1cm}
\begin{center}
{\Large\bf {A translation of weighted LTL formulas to weighted Büchi automata over
$\omega$-valuation monoids }
}
\end{center}
\vspace{4mm}

\begin{center}
{\large Eleni MANDRALI}\footnote{Department of Mathematics, Aristotle University of Thessaloniki, 54124 Thessaloniki,
\c Greece, Email: {\tt elemandr@math.auth.gr}}
\end{center}
\vspace{3ex}

\date{}

\begin{abstract}

In this paper we introduce a weighted LTL over product $\omega$-valuation
monoids that satisfy specific properties. We also introduce weighted
generalized Büchi automata with $\varepsilon$-transitions, as well as weighted
Büchi automata with $\varepsilon$-transitions over product $\omega$-valuation
monoids and prove that these two models are expressively equivalent and also
equivalent to weighted Büchi automata already introduced in the literature. We
prove that every formula of a syntactic fragment of our logic can be
effectively translated to a weighted generalized Büchi automaton with
$\varepsilon$-transitions. For generalized product $\omega$-valuation monoids that satisfy specific properties we define a weighted LTL, weighted generalized Büchi automata with $\varepsilon$-transitions, and weighted Büchi automata with $\varepsilon$-transitions, and we prove the aforementioned results for generalized product $\omega$-valuation monoids as well. The translation of weighted LTL formulas to weighted generalized Büchi automata with $\varepsilon$-transitions is now obtained for a restricted syntactical fragment of the logic.

\smallskip

\noindent
{\bf Keywords:} $weighted$ $automata$, $valuation$ $monoids$, $weighted$ $LTL$

\end{abstract}

\section{Introduction}

Weighted automata over finite and infinite words, defined in \cite{Sc-On} and
\cite{Es-Ku}, \cite{Es-On} respectively, are essential models in theoretical
computer science suitable to describe quantitative features of systems'
behavior. They can be seen as classical automata whose transitions are
equipped with some value, usually taken from a semiring. Weighted automata
have already been successfully used in applications in digital image
compression and natural language processing (cf. Chapters 11 and 14
respectively in \cite{Dr-Ku}), and there is a constantly increasing interest
for possible use of these models in other fields also, e.g., in medicine,
biology (cf. \cite{Yi-Ma},\cite{Yi-Li}). Chatterjee, Doyen, and Henzinger in
\cite{Ch-Do} defined automata with weights over the real numbers. The behavior
of these automata is not computed with the use of the structure of the
semiring. More precisely, the weight of a run (finite or infinite) is computed
by using a function that assigns a real value to the (finite or infinite) run
of the automaton. Examples of such functions are $Max$ and $Sum$ for finite
runs, and $Sup,$ $Limsup,$ $Liminf,$ limit average, and discounted sum for
infinite runs. The real value that eventually the automaton assigns to a word
is computed as the maximum (resp. supremum for infinite words) of the values
of all possible runs of the automaton on the word. In that work, Chatterjee,
Doyen and Henzinger presented answers to decidability problems and studied
their computational complexity, and further compared the expressive power of
their model for different functions. Similar questions were answered in
\cite{Ch-He}, \cite{Ch-Ex}, \cite{Ch-Pr} where other kinds of automata that
use functions for the computation of the weight of a run were presented. With
the functions mentioned above we can model a wide spectrum of procedures of
the behavior of several systems. The peak of power consumption can be modeled
as the maximum of a sequence of real numbers that represent power consumption,
while average response time can be modeled as the limit average
(\cite{Ch-Ch}, \cite{Ch-De}). For a detailed reference on the importance of
valuation functions we refer to \cite{Ch-Do}. Droste and Gastin introduced a weighted MSO logic in \cite{Dr-We}, and Droste and Meinecke extended this logic in
\cite{Dr-Me} to a weighted MSO logic capable of describing properties of
the automata of \cite{Ch-Do}, and introduced the structures of valuation
monoids and $\omega$-valuation monoids as a formalism capable of describing in
a generic way their behavior for different functions. The authors further
defined the structures of product valuation monoids and product $\omega
$-valuation monoids by equipping valuation and $\omega$-valuation monoids with
a multiplicative operation that is not necessarily associative or commutative.
Under the consideration of specific properties of the aforementioned
structures the authors proved for finite (resp. infinite) words the expressive
equivalence of syntactical fragments of their logic with weighted automata
(resp. weighted Muller automata) whose behavior is computed with the use of
valuation functions (resp. $\omega$-valuation functions). In \cite{Me-Do}, the
structure of valuation monoids was equipped with a family of product
operations, as well as with a Cauchy product and iteration of series, and the
expressive equivalence of weighted automata over valuation monoids and
weighted rational expressions was proved. In the same work, similar results
were obtained for the case of infinite words.

In the field of quantitative description of systems, the interest is also
focused in the development of tools able to perform quantitative analysis and
verification of systems \cite{Fa-Le}, \cite{No-Pa}. A possible road to follow
is the definition of quantitative specification languages and the
investigation of their relation with weighted automata. Such a study would set
the foundations for a successful generalization of the automata
theoretic-approach in model-checking (cf. \cite{Va-Wo},\cite{Va-Re}) in the
quantitative setup.

An automata theoretic approach for reasoning about multivalued objects was
proposed in \cite{Ku-Lu}. More, precisely, the authors defined a weighted
\textit{LTL }and weighted automata over De Morgan Algebras and presented a
translation of the formulas of the logic to weighted automata. In \cite{Ma-We}
the author defined a weighted\textit{ LTL} with weights and discounting
parameters over the max-plus semiring and introduced the model of weighted
generalized Büchi automata with $\varepsilon$-transitions and discounting. In
that work, formulas of a syntactic fragment of the proposed logic were
effectively translated to weighted generalized Büchi automata with discounting
and $\varepsilon$-transitions and this model was proved expressively
equivalent to weighted Büchi automata with discounting introduced in
\cite{Dr-Sk}. In \cite{Ma-Co} (Chapter 4) it was shown that the aforementioned
translation is also possible for formulas of a larger fragment of that logic.

It is the aim of this work to introduce a weighted \textit{LTL} over product
$\omega$-valuation monoids capable of describing how the quantitative behavior
of systems changes over time and present a translation of formulas of a
fragment of the logic to weighted generalized Büchi automata with
$\varepsilon$-transitions, and provide in this way a theoretical basis for the
definition of algorithms that can be used for the verification of quantitative
properties of systems. As mentioned before the structure of product $\omega
$-valuation monoids refers to a wide range of applications.

More precisely, we introduce a weighted \textit{LTL} with weights over
product $\omega$-valuation monoids (resp. generalized product $\omega$-valuation monoids), and prove the results of \cite{Ma-Co}
(Chapter 4) for a restricted syntactical fragment of the proposed logic. In
more detail, the content of this paper can be described as follows. After
presenting some preliminary notions in Section 3, in Section 4 we present the
structures of product $\omega$-valuation monoids and generalized product $\omega$-valuation monoids and we study their properties. In
Section 5, for product $\omega$-valuation monoids (resp. generalized product $\omega$-valuation monoids) that satisfy specific
properties we define the models of weighted generalized Büchi automata with
$\varepsilon$-transitions and weighted Büchi automata with $\varepsilon
$-transitions. We prove that these two models are equivalent and also
equivalent to weighted Büchi automata over product $\omega$-valuation momoids (resp. over generalized product $\omega$-valuation monoids).
In Section 6, we introduce the weighted \textit{LTL} over product $\omega
$-valuation monoids that satisfy specific properties and prove that the
formulas of a syntactic fragment of the proposed \textit{LTL} can be
effectively translated to weighed generalized Büchi automata with
$\varepsilon$-transitions following the constructive approach of \cite{Ma-We}.
In Section 7 we obtain the results of Section 6 for a restricted syntactical fragment of the weighted \textit{LTL} over generalized product $\omega$-valuation monoids.

\section{Related work}

We recall from the introduction that weighted versions of \textit{LTL }and
translations of formulas of the proposed logics to weighted automata were
presented in \cite{Ku-Lu} and in \cite{Ma-We} (see also in \cite{Ma-Co}). Both
constructions, the one in \cite{Ku-Lu} and the one in \cite{Ma-We} aim to simulate the inductive computation of the semantics of
the formulas of the proposed logics, nevertheless different algebraic
properties of the underlying structures lead to different constructive
approaches. More precisely, in \cite{Ku-Lu} the authors treat the formulas as
classical ones where elements of De Morgan Algebras are considered as atomic
propositions and obtain by \cite{Va-Re} the corresponding Büchi automaton.
Then, the automaton is transformed into a weighted one where the weights of
the transitions are indicated by the sets of atomic propositions with which
the unweighted automaton moves between two states. In \cite{Ma-We} the approach of \cite{Va-Re} is also followed in the sense that
the states of the automaton are sets of formulas satisfying discrete
conditions of consistency, and the final subsets are defined with respect to
the until operators. Nevertheless, the effective simulation of the computation
of the semantics of the given formula requires the existence of a maximal
formula (according to subformula relation) in each state that will indicate
the induction (and thus the operations) connecting the formulas $k\in K$ of
the state. In addition, as in \cite{De-Ga}, $\varepsilon$-transitions are used
to reduce formulas. However, in \cite{De-Ga} the goal of the reduction is the
production of sets of formulas whose elements are atomic propositions, or
their negations, or formulas with outermost connective the next operator. In
the case of \cite{Ma-We} $\varepsilon$-transitions are used to
reduce the maximal formula of a set, and to ensure that the state set of the
automaton is finite. In this work we follow the constructive approach of
\cite{Ma-We}, however the lack of algebraic properties, with
which every semiring is equipped with, imposes the need for a stronger
syntactical restriction on the formulas of our logic, in order to achieve the
desired translation of formulas to weighted generalized Büchi automata with
$\varepsilon$-transitions. Another quantitative version
of \textit{LTL} with values over [0,1] and discounting parameters is presented
in \cite{Al-Bo}, \ where the authors show that threshold model checking
problems can be decided by translating the weighted \textit{LTL} formulas of
that logic into Boolean nondeterministic Büchi automata.

Lately, classical results for \textit{LTL }have been generalized\ in the
weighted set-up. More precisely, in \cite{Dr-Vo} the authors proved for
(infinitary) series over arbitrary bounded lattices the coincidence of\textit{
LTL}-definability,\textit{ FO}-definability, star-freeness and aperiodicity.
In \cite{Ma - On} (cf. also Chapter 5 in \cite{Ma-Co}) the expressive
equivalence of (fragments of) \textit{LTL}-definable, \textit{FO}-definable,
star-free and counter-free series infinitary series over the max-plus semiring
with discounting was proved. This result was generalized in \cite{Ma-Ch} (cf.
also Chapter 5 in \cite{Ma-Co}) for infinitary series over totally commutative
complete, idempotent and zero-divisor free semirings.

\section{Preliminaries}
Let $C,K$ be sets. If $B$ is a subset of $C$ (resp. proper subset of $C$), we shall write
$B\subseteq C$ (resp. $B\subset C$). We shall denote by $\mathcal{P}\left(C\right)$ the powerset of $C$. An index set $I$ of $C$ is a subset of $C$ whose elements are used to label the elements of another set. A family of elements of $K$ over the index set $I$, denoted by $\left(k_i\right)_{i\in I}$, is a mapping $f$ from $I$ to $K$ where $k_i=f\left(i\right)$ for all $i\in I$. We shall denote by $\mathbb{N}$ the set of non-negative integers.

\textit{Words }Let $A$ be an alphabet, i.e., a finite non-empty set. As
usually, we denote by $A^{\ast}$ the set of all finite words over $A$ and
$A^{+}=A^{\ast}\left\backslash \left\{  \varepsilon\right\}  \right.  ,$ where
$\varepsilon$ is the empty word. The set of all infinite sequences with
elements in $A$, i.e., the set of all infinite words over $A,$ is denoted by
$A^{\omega}.$ Let $w\in A^{\omega}.$ A word $v\in A^{\omega}$ is called a
suffix of $w$, if $w=uv$ for some $u\in A^{\ast}.$ Every infinite word
$w=a_{0}a_{1}\ldots$ with $a_{i}\in A\left(  i\geq0\right)  $ is written also
as $w=w\left(  0\right)  w\left(  1\right)  \ldots$ where $w\left(  i\right)
=a_{i}\left(  i\geq0\right)  $. The word $w_{\geq i}$ denotes the suffix of
$w$ that starts at position $i,$ i.e., $w_{\geq i}=w\left(  i\right)  w\left(
i+1\right)  \ldots.$

\textit{Monoids }A monoid $\left(  K,+,\mathbf{0}\right)  $ is an algebraic
structure equipped with a non-empty set $K$ and an associative additive
operation $+$ with a zero element\textbf{\ }$\mathbf{0},$ i.e., $\mathbf{0}%
+k=k+\mathbf{0}=k$ for every $k\in K.$ The monoid $K$ is called commutative if $+$ is commutative.

\bigskip{}A monoid $\left(  K,+,\mathbf{0}\right)  $ is called complete if it
is equipped, for every index set $I$, with an infinitary sum operation
$\sum_{I}:K^{I}\rightarrow K$ such that for every family $\left(
k_{i}\right)  _{i\in I}$ of elements of $K$ we have
\[
\underset{i\in\emptyset}{\sum}k_{i}=\mathbf{0},\underset{i\in\left\{
j\right\}  }{\sum}k_{i}=k_{j},\underset{i\in\left\{  j,l\right\}  }{%
%TCIMACRO{\dsum }%
%BeginExpansion
{\displaystyle\sum}
%EndExpansion
}k_{i}=k_{j}+k_{l}\text{ for }j\neq l
\]
and
\[
\underset{j\in J}{%
%TCIMACRO{\dsum }%
%BeginExpansion
{\displaystyle\sum}
%EndExpansion
}\left(  \underset{i\in I_{j}}{\sum}k_{i}\right)  =\underset{i\in I}{%
%TCIMACRO{\dsum }%
%BeginExpansion
{\displaystyle\sum}
%EndExpansion
}k_{i},
\]
if $%
%TCIMACRO{\dbigcup _{j\in J}}%
%BeginExpansion
{\displaystyle\bigcup_{j\in J}}
%EndExpansion
I_{j}=I$ and $I_{j}%
%TCIMACRO{\dbigcap }%
%BeginExpansion
{\displaystyle\bigcap}
%EndExpansion
I_{j^{\prime}}=\emptyset$ for $j\neq j^{\prime}.$
We note that every complete monoid is commutative.

Let $K$ be a complete monoid. $K$ is called additively
idempotent (or simply idempotent), if $k+k=k$ for every $k\in K.$ Furthermore, $K$ is zero-sum free if $k+k^{\prime}=\mathbf{0}$ implies
$k=k^{\prime}=\mathbf{0}$. It is well known that if $K$ is idempotent, then $K$ is
necessarily zero-sum free (\cite{Ak - Ga}). We recall (cf. \cite{Gu - An}) that idempotency gives rise to a natural
partial order in $K$ defined in the following way. Let $k,k^{\prime}\in K,$
then $k\leq k^{\prime}$ iff $k^{\prime}=k^{\prime}+k.$ Equivalently, it holds
$k\leq k^{\prime}$ iff $k^{\prime}=k^{\prime\prime}+k$ for some $k^{\prime
\prime}\in K$ (cf. Chapter 5 in {\cite{Dr-Ku}}). We recall that a partial order of a set $K$ is a total order, if $k\leq k^{\prime}$, or $k^{\prime}\leq k $ for all $k, k^{\prime}\in K. $
Let now $K^{\prime},K^{\prime\prime}$ be two non-empty subsets of a complete monoid $K$. We define the sum of $K^{\prime}$ and
$K^{\prime\prime}$ in the following way
\[
K^{\prime}+K^{\prime\prime}=\left\{  k\in K\mid\exists k^{\prime}\in
K^{\prime},k^{\prime\prime}\in K^{\prime\prime}\text{ s.t. }k=k^{\prime
}+k^{\prime\prime}\right\}  .
\]

\textit{Series }Let $A$ be an alphabet and\textit{ }$K$ be a complete monoid.
An infinitary series over $A$ and $K$ is a mapping $s:A^{\omega}\rightarrow
K.$ For every $w\in A^{\omega}$ we write $\left(  s,w\right)  $ for the value
$s\left(  w\right)  $ and refer to it as the coefficient of $s$ on $w.$ We
denote by $K\left\langle \left\langle A^{\omega}\right\rangle \right\rangle $
the class of all infinitary series over $A^{\omega}$ and $K.$

\section{Product $\omega$-valuation monoids, generalized product $\omega$-valuation monoids, and their properties}

For a set $K$ we denote by $L\subseteq_{fin}K$ the fact that $L$ is a finite
subset of $K$ and we let $\left(  K_{fin}\right)  ^{\omega}=\underset
{L\subseteq_{fin}K}{%
%TCIMACRO{\dbigcup }%
%BeginExpansion
{\displaystyle\bigcup}
%EndExpansion
}L^{\omega}.$ We now recall the definition of $\omega$-valuation monoids and
product $\omega$-valuation monoids from \cite{Dr-Me}, with the difference that
we equip these structures with two additional properties.

\begin{definition}
An $\omega$-valuation monoid $\left(  K,+,Val^{\omega},\mathbf{0}\right)  $ is
a complete monoid $\left(  K,+,\mathbf{0}\right)  $ equipped with an $\omega
$-valuation function $Val^{\omega}:\left(  K_{fin}\right)  ^{\omega
}\rightarrow K$ such that $Val^{\omega}\left(  k_{i}\right)  _{i\in%
%TCIMACRO{\U{2115} }%
%BeginExpansion
\mathbb{N}
%EndExpansion
}=\mathbf{0}$ whenever $k_{i}=\mathbf{0}$ for some $i\geq0.$ A product
$\omega$-valuation monoid $\left(  K,+,\cdot,Val^{\omega},\mathbf{0}%
,\mathbf{1}\right)  $ is an $\omega$-valuation monoid $\left(  K,+,Val^{\omega
},\mathbf{0}\right)  $ further equipped with a product operation $\cdot
:K^{2}\rightarrow K,$ with $\mathbf{1}\in K,\mathbf{1}\neq\mathbf{0},$ such that $Val^{\omega}\left(
\mathbf{1}^{\omega}\right)  =\mathbf{1}$ and $\mathbf{0}\cdot k=k\cdot
\mathbf{0=0,}$ $\mathbf{1\cdot}k=k\cdot\mathbf{1}=k$ for all $k\in K$;
additionally, for every index set $I$ and $k\in K,$ $\underset{I}{\sum}\left(
k\cdot\mathbf{1}\right)  =k\cdot$ $\underset{I}{\sum}\mathbf{1},$ and for
every $L\subseteq_{fin}K,$ finite index sets $I_{j}(j\geq0),$ and all $k_{i_{j}}\in
L\left(  i_{j}\in I_{j}\right)  $
\begin{equation}
Val^{\omega}\left(  \underset{i_{j}\in I_{j}}{\sum}k_{i_{j}}\right)  _{j\in%
%TCIMACRO{\U{2115} }%
%BeginExpansion
\mathbb{N}
%EndExpansion
}=\underset{\left(  i_{j}\right)  _{j}\in I_{0}\times I_{1}\times\ldots}{%
%TCIMACRO{\dsum }%
%BeginExpansion
{\displaystyle\sum}
%EndExpansion
}Val^{\omega}\left(  k_{i_{j}}\right)  _{j\in%
%TCIMACRO{\U{2115} }%
%BeginExpansion
\mathbb{N}
%EndExpansion
}. \label{Sum-valuation}%
\end{equation}

\end{definition}

The property described by equation 1 expresses the \textbf{distributivity} of $Val^{\omega}$ over finite sums. We recall that this property has also been considered in
\cite{Me-Do} for the definition of Cauchy $\omega$-indexed valuation monoids.

\begin{remark}
Observe that for every $k\in K,$ it holds $k_{1}\cdot\left(  k_{2}\cdot
k_{3}\right)  =\left(  k_{1}\cdot k_{2}\right)  \cdot k_{3}$ for every
$k_{1},k_{2},k_{3}\in\left\{  \mathbf{0},\mathbf{1},k\right\}  $ such that $k_{i}\in K\backslash\left\{ \mathbf{0},\mathbf{1}\right\} $ for at most one $i\in \left\{1,2,3\right\}$.\label{remark_commutative}
\end{remark}

We introduce now the notion of \textit{generalized} product $\omega$-valuation monoids. These are defined with the same way as product $\omega$-valuation monoids with the difference that these structures are equipped with a restricted version of the distributivity property of $Val^{\omega}$ over finite sums.

\begin{definition}
A generalized product
$\omega$-valuation monoid $\left(  K,+,\cdot,Val^{\omega},\mathbf{0}%
,\mathbf{1}\right)  $ is an $\omega$-valuation monoid $\left( K,+,Val^{\omega
},\mathbf{0}\right)  $ further equipped with a product operation $\cdot
:K^{2}\rightarrow K,$ with $\mathbf{1}\in K,\mathbf{1}\neq\mathbf{0},$ such that $Val^{\omega}\left(
\mathbf{1}^{\omega}\right)  =\mathbf{1}$ and $\mathbf{0}\cdot k=k\cdot
\mathbf{0=0,}$ $\mathbf{1\cdot}k=k\cdot\mathbf{1}=k$ for all $k\in K;$
additionally, for every index set $I$ and $k\in K,$ $\underset{I}{\sum}\left(
k\cdot\mathbf{1}\right)  =k\cdot$ $\underset{I}{\sum}\mathbf{1},$ and the following hold: Let $L\subseteq_{fin}K,$ and $I_{j}(j\geq0)$ a family of finite index sets. If
for
all but a finite number of $j\geq0,$ one of the following holds $k_{i_{j}}\in L\backslash
\left\{\mathbf{0},\mathbf{1}  \right\}  $ for all $i_{j}\in I_{j},$ or $k_{i_{j}}%
\in\left\{  \mathbf{0},\mathbf{1}\right\}  $ for all $i_{j}\in I_{j}$, we have
\[
Val^{\omega}\left(  \underset{i_{j}\in I_{j}}{\sum}k_{i_{j}}\right)  _{j\in%
%TCIMACRO{\U{2115} }%
%BeginExpansion
\mathbb{N}
%EndExpansion
}=\underset{\left(  i_{j}\right)  _{j}\in I_{0}\times I_{1}\times\ldots}{%
%TCIMACRO{\dsum }%
%BeginExpansion
{\displaystyle\sum}
%EndExpansion
}Val^{\omega}\left(  k_{i_{j}}\right)  _{j\in%
%TCIMACRO{\U{2115} }%
%BeginExpansion
\mathbb{N}
%EndExpansion
}.%
\]
\end{definition}

\bigskip Observe that every product $\omega$-valuation monoid is a generalized product $\omega$-valuation monoid. However, not every generalized product $\omega$-valuation monoid is a product $\omega$-valuation monoid (see Example \ref{example2}). We will call the product $\omega$-valuation monoid (resp. the generalized product $\omega$-valuation monoid) $\left(
K,+,\cdot,Val^{\omega},\mathbf{0},\mathbf{1}\right)  $ idempotent if the
complete monoid $\left(  K,+,\mathbf{0}\right)  $ is idempotent.

Subsequently, we derive properties of product $\omega$-valuation monoids and generalized product $\omega$-valuation monoids. For simplicity we provide the proofs only for product $\omega$-valuation monoids. The reader may check that essentially the same arguments also hold, if $K$ is a generalized product $\omega$-valuation monoid.
\begin{lemma}
\label{Properties Sum}Let $K$ be an idempotent product $\omega$-valuation
monoid or an idempotent generalized product $\omega$-valuation monoid. Then,

(i) {[}\cite{Dr-Ku}, Chapter 5, Lemma7.3{]} $%
%TCIMACRO{\dsum \limits_{I}}%
%BeginExpansion
{\displaystyle\sum\limits_{I}}
%EndExpansion
\mathbf{1}=\mathbf{1}$ for every set $I$ with size at most continuum.

(ii) $%
%TCIMACRO{\dsum \limits_{I}}%
%BeginExpansion
{\displaystyle\sum\limits_{I}}
%EndExpansion
k=k$ for every set $I$ with size at most continuum and every $k\in K.$

(iii) $%
%TCIMACRO{\dsum \limits_{k\in K}}%
%BeginExpansion
{\displaystyle\sum\limits_{k\in K}}
%EndExpansion
k+%
%TCIMACRO{\dsum \limits_{k^{\prime}\in K^{\prime}}}%
%BeginExpansion
{\displaystyle\sum\limits_{k^{\prime}\in K^{\prime}}}
%EndExpansion
k^{\prime}=$ $%
%TCIMACRO{\dsum \limits_{k\in K}}%
%BeginExpansion
{\displaystyle\sum\limits_{k\in K}}
%EndExpansion
k$ for every $K^{\prime}\subseteq K.$

(iv) $%
%TCIMACRO{\dsum \limits_{k^{\prime\prime}\in K^{^{\prime\prime}}}}%
%BeginExpansion
{\displaystyle\sum\limits_{k^{\prime\prime}\in K^{^{\prime\prime}}}}
%EndExpansion
k^{^{\prime\prime}}+%
%TCIMACRO{\dsum \limits_{k^{\prime}\in K^{\prime}}}%
%BeginExpansion
{\displaystyle\sum\limits_{k^{\prime}\in K^{\prime}}}
%EndExpansion
k^{\prime}=$ $%
%TCIMACRO{\dsum \limits_{k\in K^{\prime}+K^{\prime\prime}}}%
%BeginExpansion
{\displaystyle\sum\limits_{k\in K^{\prime}+K^{\prime\prime}}}
%EndExpansion
k$ $\ \ $for every $K^{\prime},K^{\prime\prime}$ non-empty subsets of $K$ with
size at most continuum$.$
\end{lemma}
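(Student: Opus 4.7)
The plan is to establish (i)--(iv) in order, using the complete-monoid partition axiom, idempotency ($k+k=k$), and the distributivity property $\sum_{I}(k\cdot\mathbf{1})=k\cdot\sum_{I}\mathbf{1}$ built into the definition of (restricted) product $\omega$-valuation monoids.

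Part (i) is quoted verbatim from Droste--Kuich and needs no argument. For (ii), I would write $\sum_{I} k = \sum_{I}(k\cdot\mathbf{1})$ using $k\cdot\mathbf{1}=k$, then apply the distributivity axiom to pull $k$ out and obtain $k\cdot\sum_{I}\mathbf{1}$, and finally invoke (i) to reduce to $k\cdot\mathbf{1}=k$.

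For (iii), I would split $K$ into the disjoint union $K'\sqcup(K\setminus K')$ and apply the complete-monoid partition axiom to obtain $\sum_{k\in K}k = \sum_{k\in K'}k + \sum_{k\in K\setminus K'}k$. Substituting on the right-hand side of the desired identity and regrouping reduces the claim to $a+a=a$, where $a=\sum_{k\in K'}k$, which is immediate from idempotency.

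For (iv), the core idea is to evaluate the double sum $S=\sum_{(k',k'')\in K'\times K''}(k'+k'')$ in two ways. On one hand, the map $\pi:K'\times K''\to K'+K''$ given by $\pi(k',k'')=k'+k''$ is surjective by definition of $K'+K''$; partitioning $K'\times K''$ by the fibers of $\pi$ and noting that each fiber contributes $\sum_{\pi^{-1}(k)}k=k$ by (ii) yields $S=\sum_{k\in K'+K''}k$. On the other hand, iterating the sum along the partition $K'\times K''=\bigsqcup_{k'\in K'}\{k'\}\times K''$ and using the elementary identity $\sum_{i}(a_i+b_i)=\sum_{i}a_i+\sum_{i}b_i$ (itself a consequence of partitioning $I\sqcup I$ in two different ways), I get $S=\sum_{k'\in K'}\left(k'+\sum_{k''\in K''}k''\right)$; a further application of (ii) to the constant-over-$K'$ inner sum $\sum_{k''\in K''}k''$ turns this into $\sum_{k'\in K'}k'+\sum_{k''\in K''}k''$. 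Equating the two evaluations of $S$ gives (iv).

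The main obstacle is the bookkeeping in (iv): two separate invocations of the complete-monoid partition axiom (once along the fibers of the sum map $\pi$, once to swap the order of summation), each combined with (ii) to collapse a constant infinitary sum. One also needs to check that all index sets involved (notably $K'\times K''$ and the fibers $\pi^{-1}(k)$) remain within the at-most-continuum cardinality bound required for (i) and (ii) to apply, which follows from the hypotheses on $K'$ and $K''$. Everything else is routine manipulation using idempotency and the monoid axioms.
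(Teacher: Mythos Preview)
Your proposal is correct and follows essentially the same approach as the paper. Parts (ii) and (iii) are identical to the paper's argument, and for (iv) both you and the paper evaluate the double sum $\sum_{(k',k'')\in K'\times K''}(k'+k'')$ and rearrange it via the completeness axioms and (ii); your presentation as ``compute $S$ two ways'' is slightly more explicit, but the underlying manipulations coincide.
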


\begin{proof} (ii) It holds  $\underset{I}{\sum}\left(  k\cdot\mathbf{1}\right)  =k\cdot$
$\underset{I}{\sum}\mathbf{1},$ for every $k\in K,$ and index set $I.$
Hence, by the above property and (i) we get $\sum_{I}k=k\cdot\sum_{I}\mathbf{1}=k\cdot\mathbf{1}=k$.

(iii) For $K^{\prime}=\emptyset$ it is obvious. Otherwise, we get\\\\
$\underset{k\in K}{\sum}k+\underset{k^{\prime}\in K^{\prime}}{\sum}k^{\prime}=\underset{k\in K\setminus K^{\prime}}{\sum}k+\underset{k^{\prime}\in K^{\prime}}{\sum}k^{\prime}+\underset{k^{\prime}\in K^{\prime}}{\sum}k^{\prime}=\underset{k\in K\setminus K^{\prime}}{\sum}k$
$+\underset{k{}^{\prime}\in K^{\prime}}{\sum}k^{\prime}=\underset{k\in K}{\sum}k$,\\\\where the first and last equality hold by the completeness axioms of the monoid, and the second one by idempotency.

(iv) For each $k^{\prime}\in K^{\prime}$ (resp. $k^{\prime\prime}\in K^{\prime\prime}$)
there exists an index set $I_{k^{\prime}}\neq\emptyset$ (resp. $I_{k^{\prime\prime}}\neq\emptyset$)
with size at most continuum such that
\[
\begin{array}{cl}
\underset{k\in K^{\prime}+K^{\prime\prime}}{\sum}k & =\underset{k\in K^{\prime}+K^{\prime\prime}}{\sum}\left(\underset{\textrm{s.t }k=k^{\prime}+k^{\prime\prime}}{\underset{\left(k^{\prime},k^{\prime\prime}\right)\in K^{\prime}\times K^{\prime\prime}}{\sum}}\left(k^{\prime}+k^{\prime\prime}\right)\right)\\
 & =\underset{k^{\prime}\in K^{\prime}}{\sum}\left(\sum_{I_{k^{\prime}}}k^{\prime}\right)+\underset{k^{\prime\prime}\in K^{\prime\prime}}{\sum}\left(\sum_{I_{k^{\prime\prime}}}k^{\prime\prime}\right)\\
 & =\underset{k^{\prime}\in K^{\prime}}{\sum}k^{\prime}+\underset{k^{\prime\prime}\in K^{\prime\prime}}{\sum}k^{\prime\prime}
\end{array}
\]
where the first and last equality holds by (ii), and the second equality by the completeness axioms of the monoid.
\end{proof}

\begin{lemma}
\label{Lemma inequality sum}Let $K$ be an idempotent product $\omega
$-valuation monoid or an idempotent generalized product $\omega$-valuation monoid, and $K^{\prime},K^{\prime\prime}\subseteq K$ such that the size of $K^{\prime}$ is at
most continuum and for every $k^{\prime}\in K^{\prime}$ there exists
$k^{\prime\prime}\in K^{\prime\prime}$ with $k^{\prime}\leq k^{\prime\prime}.$
Then, $\underset{k^{\prime}\in K^{\prime}}{%
%TCIMACRO{\dsum }%
%BeginExpansion
{\displaystyle\sum}
%EndExpansion
}k^{\prime}\leq\underset{k^{\prime\prime}\in K^{\prime\prime}}{%
%TCIMACRO{\dsum }%
%BeginExpansion
{\displaystyle\sum}
%EndExpansion
}k^{\prime\prime}.$
\end{lemma}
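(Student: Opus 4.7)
The plan is to prove the equivalent statement $\sum_{k''\in K''}k'' + \sum_{k'\in K'}k' = \sum_{k''\in K''}k''$, which by the definition of the natural order on $K$ is precisely the desired inequality. If $K'=\emptyset$ the left-hand sum is $\mathbf{0}$ and the identity follows from the monoid axioms, so assume $K'\neq\emptyset$. Using the hypothesis (and choice), pick a map $\psi\colon K'\to K''$ with $k'\leq\psi(k')$, equivalently $\psi(k')+k'=\psi(k')$, for every $k'\in K'$, and set $K'''=\psi(K')\subseteq K''$; note $|K'''|\leq|K'|$ is at most continuum.

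The first key identity is
\[
\sum_{k'\in K'}\psi(k')\;+\;\sum_{k'\in K'}k' \;=\; \sum_{k'\in K'}\bigl(\psi(k')+k'\bigr) \;=\; \sum_{k'\in K'}\psi(k'),
\]
where the first equality is obtained by forming the index set $K'\times\{1,2\}$ with the function $(k',1)\mapsto\psi(k')$ and $(k',2)\mapsto k'$, and invoking the completeness axioms under two different partitions (the two ``slices'' $K'\times\{1\},\,K'\times\{2\}$ on one hand, and the $|K'|$ pairs $\{(k',1),(k',2)\}$ on the other), while the second equality uses the choice of $\psi$. The second key identity is
\[
\sum_{k'\in K'}\psi(k') \;=\; \sum_{k'''\in K'''}\;\sum_{k'\in\psi^{-1}(k''')}k''' \;=\; \sum_{k'''\in K'''}k''',
\]
obtained from the partition $K'=\bigsqcup_{k'''\in K'''}\psi^{-1}(k''')$, the completeness axioms, and Lemma~\ref{Properties Sum}(ii) applied to each non-empty fibre (of size at most continuum). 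Combining the two displays yields $\sum_{k'''\in K'''}k''' + \sum_{k'\in K'}k' = \sum_{k'''\in K'''}k'''$.

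To finish, I will apply the argument of Lemma~\ref{Properties Sum}(iii)---whose proof only uses idempotency of $+$ on individual elements together with the completeness axioms, and therefore goes through with $K''$ in place of the ambient monoid and $K'''\subseteq K''$ in place of $K'$---to conclude $\sum_{k''\in K''}k'' + \sum_{k'''\in K'''}k''' = \sum_{k''\in K''}k''$. Adding $\sum_{k'\in K'}k'$ to both sides of this equation and substituting the identity established just above collapses the right-hand side to $\sum_{k''\in K''}k''$, as required. The main obstacle is the careful bookkeeping of the completeness-axiom manipulations: in particular, the distributivity-type step $\sum_{k'}\bigl(\psi(k')+k'\bigr) = \sum_{k'}\psi(k') + \sum_{k'}k'$ is not an axiom and must be derived through the doubled index set, and invoking Lemma~\ref{Properties Sum}(iii) for the subset $K''$ instead of the full monoid requires explicitly noting that its proof survives this generalization.
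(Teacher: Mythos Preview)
Your proof is correct and follows essentially the same strategy as the paper's: absorb $\sum_{K'}k'$ into a sum over a subset of $K''$ via the domination hypothesis, idempotency, and Lemma~\ref{Properties Sum}(ii), then pass to all of $K''$ using (the argument of) Lemma~\ref{Properties Sum}(iii). The only notable difference is organizational---the paper works from the $K''$ side, covering the index set of $K'$ by the possibly overlapping sets $I_j=\{i:k_i'\le k_j''\}$, whereas you work from the $K'$ side via a choice function $\psi$, whose fibers give a genuine partition.
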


\begin{proof}
There exist index sets $I,J$, with the size of $I$ being at most continuum, such that $K^{\prime
}=\left\{  k_{i}^{\prime}\in K\mid i\in I\right\}  $, $K^{\prime\prime
}=\left\{  k_{j}^{\prime\prime}\in K\mid j\in J\right\}  .$ We let
\[
\overline{J}=\left\{  j\in J\mid\exists i_{j}\in I,k_{i_{j}}^{\prime}\in
K^{\prime}\text{ such that }k_{i_{j}}^{\prime}\leq k_{j}^{\prime\prime
}\right\}  .
\]
For every $j\in\overline{J}$ we let $I_{j}=\left\{  i\in I\mid k_{i}^{\prime
}\leq k_{j}^{\prime\prime}\right\}  .$ It holds $\underset{j\in\overline{J}}{%
%TCIMACRO{\dbigcup }%
%BeginExpansion
{\displaystyle\bigcup}
%EndExpansion
}I_{j}=I.$ We fix a $j\in\overline{J}.$ For every $i\in I_{j},$ we have $k_{j}^{\prime\prime}=k_{i}^{\prime}+k_{j}^{\prime\prime}$ and by
idempotency and Lemma \ref{Properties Sum}(ii) it holds $k_{j}^{\prime\prime
}=\underset{i\in I_{j}}{%
%TCIMACRO{\dsum }%
%BeginExpansion
{\displaystyle\sum}
%EndExpansion
}\left(  k_{i}^{\prime}+k_{j}^{\prime\prime}\right)  $. We thus get
\begin{align*}
\underset{j\in\overline{J}}{%
%TCIMACRO{\dsum }%
%BeginExpansion
{\displaystyle\sum}
%EndExpansion
}k_{j}^{\prime\prime}  &  =\underset{j\in\overline{J}}{%
%TCIMACRO{\dsum }%
%BeginExpansion
{\displaystyle\sum}
%EndExpansion
}\left(  \underset{i\in I_{j}}{%
%TCIMACRO{\dsum }%
%BeginExpansion
{\displaystyle\sum}
%EndExpansion
}\left(  k_{i}^{\prime}+k_{j}^{\prime\prime}\right)  \right) \\
&  =\underset{j\in\overline{J}}{%
%TCIMACRO{\dsum }%
%BeginExpansion
{\displaystyle\sum}
%EndExpansion
}\left(  \underset{i\in I_{j}}{%
%TCIMACRO{\dsum }%
%BeginExpansion
{\displaystyle\sum}
%EndExpansion
}k_{i}^{\prime}\right)  +\underset{j\in\overline{J}}{%
%TCIMACRO{\dsum }%
%BeginExpansion
{\displaystyle\sum}
%EndExpansion
}\left(  \underset{i\in I_{j}}{%
%TCIMACRO{\dsum }%
%BeginExpansion
{\displaystyle\sum}
%EndExpansion
}k_{j}^{\prime\prime}\right) \\
&  =\underset{i\in I}{%
%TCIMACRO{\dsum }%
%BeginExpansion
{\displaystyle\sum}
%EndExpansion
}k_{i}^{\prime}+\underset{j\in\overline{J}}{%
%TCIMACRO{\dsum }%
%BeginExpansion
{\displaystyle\sum}
%EndExpansion
}k_{j}^{\prime\prime}.
\end{align*}
We conclude the second equality by the completeness axioms of the monoid, and the last one by Lemma \ref{Properties Sum}(ii), and the fact that $I_j$ has size at most continuum for every $j\in\overline{J}$. Hence, $\underset{i\in I}{%
%TCIMACRO{\dsum }%
%BeginExpansion
{\displaystyle\sum}
%EndExpansion
}k_{i}^{\prime}\leq\underset{j\in\overline{J}}{%
%TCIMACRO{\dsum }%
%BeginExpansion
{\displaystyle\sum}
%EndExpansion
}k_{j}^{\prime\prime}\leq\underset{j\in J}{%
%TCIMACRO{\dsum }%
%BeginExpansion
{\displaystyle\sum}
%EndExpansion
}k_{j}^{\prime\prime}$ where the first inequality is concluded taking into account the definition of the natural order of $K$, and the second inequality holds by Lemma
\ref{Properties Sum}(iii), and again by the definition of the natural order of $K$.
\end{proof}

\begin{lemma}
\label{Valuation inequality}(i) Let $\left(  K,+,\cdot,Val^{\omega}%
,\mathbf{0},\mathbf{1}\right)  $ be an idempotent product $\omega$-valuation
monoid, and $L\subseteq_{fin}K.$ If $\left(  k_{i}^{1}\right)  _{i\geq0}$ and
$\left(  k_{i}^{2}\right)  _{i\geq0}$ are families of elements of $L$ such
that for every $i\geq0,k_{i}^{1}\leq k_{i}^{2},$ then $Val^{\omega}\left(
k_{i}^{1}\right)  _{i\geq0}\leq Val^{\omega}\left(  k_{i}^{2}\right)
_{i\geq0}.$

(ii) Let $\left(  K,+,\cdot,Val^{\omega},\mathbf{0},\mathbf{1}\right)  $ be an
idempotent generalized product $\omega$-valuation monoid, and $L\subseteq
_{fin}K.$ If $\left(  k_{i}^{1}\right)  _{i\geq0}$ and $\left(  k_{i}%
^{2}\right)  _{i\geq0}$ are families of elements of $L$ such that for every
$i\geq0,k_{i}^{1}\leq k_{i}^{2},$ and for all but a finite number of $i\geq0,$
it holds $\left\{  k_{i}^{1},k_{i}^{2}\right\}  \subseteq L\backslash\left\{
\mathbf{0},\mathbf{1}\right\}  ,$ or $\left\{  k_{i}^{1},k_{i}^{2}\right\}
\subseteq\left\{  \mathbf{0},\mathbf{1}\right\}  ,$ then $Val^{\omega}\left(
k_{i}^{1}\right)  _{i\geq0}\leq Val^{\omega}\left(  k_{i}^{2}\right)
_{i\geq0}.$
\end{lemma}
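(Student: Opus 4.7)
The plan is to reduce the monotonicity statement to a single application of Property~1 (respectively its restricted version), exploiting idempotency in the same spirit as the proof of Lemma~\ref{Lemma inequality sum}. The key observation is that if $k_i^1 \leq k_i^2$ then by definition of the natural order $k_i^2 = k_i^1 + k_i^2$, so that $Val^{\omega}(k_i^2)_{i\geq 0}$ can be rewritten as the valuation of pointwise two-element sums, to which Property~1 may be applied.

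More concretely, I would proceed as follows for part (i). For each $j \geq 0$, set $I_j = \{1,2\}$ and define $k_{1_j} := k_j^1$ and $k_{2_j} := k_j^2$; these are all elements of the finite set $L$, so the hypotheses of Property~1 are met. Since $k_j^1 \leq k_j^2$ gives $k_j^1 + k_j^2 = k_j^2$, we have $\sum_{i_j \in I_j} k_{i_j} = k_j^2$, and hence
\[
Val^{\omega}(k_j^2)_{j\in\mathbb{N}}
= Val^{\omega}\Bigl(\sum_{i_j \in I_j} k_{i_j}\Bigr)_{j\in\mathbb{N}}
= \sum_{(i_j)_j \in \{1,2\}^{\mathbb{N}}} Val^{\omega}(k_{i_j})_{j\in\mathbb{N}}.
\]
The summand corresponding to the constant sequence $i_j = 1$ is precisely $Val^{\omega}(k_j^1)_{j\in\mathbb{N}}$, so by the completeness axioms of the monoid the right-hand side equals $Val^{\omega}(k_j^1)_{j\in\mathbb{N}} + R$ for some $R \in K$. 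By the equivalent characterization of the natural order recorded in the preliminaries, this gives $Val^{\omega}(k_j^1)_{j\in\mathbb{N}} \leq Val^{\omega}(k_j^2)_{j\in\mathbb{N}}$.

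For part (ii) the argument is identical, but one must verify that the restricted form of Property~1 actually applies to the families $(k_{i_j})_{i_j \in I_j}$ built above. Since by hypothesis, for all but finitely many $j$, the pair $\{k_j^1, k_j^2\}$ is either contained in $L \setminus \{\mathbf{0},\mathbf{1}\}$ or in $\{\mathbf{0},\mathbf{1}\}$, the same dichotomy holds for $\{k_{1_j}, k_{2_j}\} = \{k_j^1, k_j^2\}$, so the restricted Property~1 is available and the rest of the argument goes through unchanged.

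The main obstacle, which is really just a bookkeeping point, is precisely the verification in part (ii) that the chosen two-element index sets satisfy the finiteness/colocation condition appearing in the restricted version of Property~1; this is what forces the extra hypothesis in (ii) beyond the one in (i) and motivates why the lemma has to be split into two cases in the first place. Once this check is made, the conclusion follows by exactly the same expansion and order-theoretic reading as in case (i).
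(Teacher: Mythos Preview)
Your proof is correct and follows essentially the same approach as the paper: both rewrite $Val^{\omega}(k_i^2)_{i\geq 0}$ as $Val^{\omega}(k_i^1+k_i^2)_{i\geq 0}$ using the order hypothesis, apply Property~1 (respectively its restricted version) to expand into a sum over $\{1,2\}^{\omega}$, and then split off the summand corresponding to the constant sequence $1^{\omega}$ to conclude via the definition of the natural order. Your explicit verification in part~(ii) that the restricted Property~1 is applicable is exactly the point the paper leaves implicit when it says the same arguments work.
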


\begin{proof}
(i) It holds
\begin{align*}
Val^{\omega}\left(  k_{i}^{2}\right)  _{i\geq0}  &  =Val^{\omega}\left(
k_{i}^{2}+k_{i}^{1}\right)  _{i\geq0}=\underset{\left(  j_{0},j_{1}%
,\ldots\right)  \in\left\{  1,2\right\}  ^{\omega}}{\sum}Val^{\omega}\left(
k_{i}^{j_{i}}\right)  _{i\geq0}\\
&  =Val^{\omega}\left(  k_{i}^{1}\right)  _{i\geq0}+\underset{\left(
j_{0},j_{1},\ldots\right)  \neq1^{\omega}}{\underset{\left(  j_{0}%
,j_{1},\ldots\right)  \in\left\{  1,2\right\}  ^{\omega}}{\sum}}Val^{\omega
}\left(  k_{i}^{j_{i}}\right)  _{i\geq0}%
\end{align*}
where the second equality holds by the distributivity of $Val^{\omega}$ over finite sums, and the third one by the completeness axioms of the monoid, and this
proves our claim.

(ii) We can prove the claim with the same arguments used in the previous case.
\end{proof}

In the rest of this paper we will consider idempotent product $\omega
$-valuation monoids $\left(  K,+,\cdot,Val^{\omega},\mathbf{0},\mathbf{1}%
\right)  $ (resp. idempotent generalized product $\omega$-valuation monoids) that further satisfy the following properties (resp. further satisfy the following properties and the natural order is a total order). For all $k,k_{i}\in
K,\left(i\geq 1\right)$
\begin{equation}
Val^{\omega}\left(  \mathbf{1,}k_{1},k_{2},k_{3},\ldots\right)  =Val^{\omega
}\left(  k_{i}\right)  _{i\geq1}, \label{Property 1}%
\end{equation}%
\begin{equation}
k=Val^{\omega}\left(  k,\mathbf{1},\mathbf{1},\mathbf{1},\ldots\right)
\label{Property 3}%
\end{equation}

In the rest of the paper we shall call the properties described by equations \ref{Property 1}, \ref{Property 3}, \textbf{Property 2}, and \textbf{Property 3} respectively. We note that Properties 2 and 3 express a notion of neutrality of $\mathbf{1} $ over $Val^{\omega}$. Next, we present examples of product $\omega$-valuation monoids, and generalized product $\omega$-valuation monoids.

\begin{example}
Every idempotent totally commutative complete semiring (cf. \cite{Ma-Ch}) $\left(  K,+,\cdot
,\mathbf{0},\mathbf{1}\right)  $ can be considered as an idempotent product
$\omega$-valuation monoid $\left(  K,+,\cdot,\prod
,\mathbf{0},\mathbf{1}\right)  $ if we consider as the $\omega$-valuation function the countably infinite products operation $\prod$ that every totally commutative complete semiring is equipped with. Moreover, these structures satisfy Properties 2, and 3. We can verify these properties, as well as the ones in the definition of product $\omega$-valuation monoids in a straightforward way by applying the completeness axioms of the structures.
\end{example}

\begin{example}
\label{example2}
We consider the structure $\left(  \overline{%
%TCIMACRO{\U{211d} }%
%BeginExpansion
\mathbb{R}
%EndExpansion
},\sup,\inf,\text{liminf, }-\infty,\infty\right)  $ where $\overline{%
%TCIMACRO{\U{211d} }%
%BeginExpansion
\mathbb{R}
%EndExpansion
}=%
%TCIMACRO{\U{211d} }%
%BeginExpansion
\mathbb{R}
%EndExpansion
\cup\left\{  \infty,-\infty\right\}  $ and liminf is an $\omega$-valuation
function from $\left(  \overline{%
%TCIMACRO{\U{211d} }%
%BeginExpansion
\mathbb{R}
%EndExpansion
}_{fin}\right)  ^{\omega}$ to $\overline{%
%TCIMACRO{\U{211d} }%
%BeginExpansion
\mathbb{R}
%EndExpansion
}$ defined by
\[
\text{liminf}\left(  \left(  d_{i}\right)  _{i\geq0}\right)  =\left\{
\begin{array}
[c]{ll}%
-\infty &
\begin{array}
[c]{l}%
\text{if there exists }i\geq0\text{ with}\\
d_{i}=-\infty
\end{array}
\\
& \\
\infty & \text{ \ if for all }i\geq0\text{, }d_{i}=\infty\\
& \\
\underset{i\geq0}{\text{lim}}(\inf\left\{  d_{k}\mid k\geq i,d_{k}\neq
\infty\right\}  ) &
\begin{array}
[c]{l}%
\text{if }d_{j}\neq-\infty\text{ for all }j\geq0,\\
\text{and there exist infinitely }\\
\text{many }i\geq0\text{ with }d_{i}\neq\infty
\end{array}
\\
& \\
\inf\left\{  d_{i}\mid i\geq0\text{ with }d_{i}\neq\infty\right\}   & \text{
\ otherwise}%
\end{array}
\right.
\]
$\left(  \overline{%
%TCIMACRO{\U{211d} }%
%BeginExpansion
\mathbb{R}
%EndExpansion
},\sup,\inf,\text{liminf,}-\infty,\infty\right)  $ is an idempotent generalized product
$\omega$-valuation monoid that satisfy Properties \ref{Property 1}, and
\ref{Property 3}, and the natural order obtained in the structure is a total order.
We observe that $\left( \overline{%
%TCIMACRO{\U{211d} }%
%BeginExpansion
\mathbb{R}
%EndExpansion
},\sup ,\inf ,\text{\textit{liminf}, }-\infty ,\infty \right) $ is not a
product $\omega $-valuation monoid. To verify this observation we present
the following counterexample. We consider the families of
elements of $\overline{%
%TCIMACRO{\U{211d} }%
%BeginExpansion
\mathbb{R}
%EndExpansion
},$ $\left( k_{i_{j}}\right) _{i_{j}\in I_{j}},$ where for every $j\geq 0$
with $j\neq 1,$ it holds $I_{j}=\left\{ 1,2\right\} $, and $k_{1}=\infty
,k_{2}=6,$ and for $j=1,$ we have $I_{1}=\left\{ 1\right\} ,$ and $k_{1}=5.$
Then, liminf$\left( \underset{i_{j}\in I_{j}}{\sup }k_{i_{j}}\right)
_{j\in
%TCIMACRO{\U{2115} }%
%BeginExpansion
\mathbb{N}
%EndExpansion
}=5,$ and $\underset{\left( i_{j}\right) _{j}\in I_{0}\times I_{1}\times
\ldots }{\sup }\left( \text{liminf}\left( k_{i_{j}}\right) _{j\in
%TCIMACRO{\U{2115} }%
%BeginExpansion
\mathbb{N}
%EndExpansion
}\right) =6.$
\end{example}

In the Appendix we prove that the structure presented in the previous example is indeed a generalized product $\omega$-valuation monoid. In $\cite{Dr-Me}$ the authors have also considered an $\omega$-product valuation monoid where the classical liminf-function is used. The definition of $\it{liminf}$ in our example is motivated by the need to capture the semantics of weighted logics that will be introduced of Section 6. In particular, the semantics of our $\it{until}$-operator expresses the fact that whenever the $\omega$-valuation function is applied, then the valuation should take into account only a finite number of first terms of an infinite sequence. This in our example is expressed by the fourth case in the definition of the $\omega$-valuation function.

\section{Weighted generalized Büchi automata with $\varepsilon$-transitions
over product $\omega$-valuation monoids, and generalized product $\omega$-valuation monoids}

Let $\left(  K,+,\cdot,Val^{\omega},\mathbf{0}%
,\mathbf{1}\right)  $ be an idempotent product $\omega$-valuation monoid, and $A$ be an alphabet.
We introduce now the models of weighted generalized Büchi automata with
$\varepsilon$-transitions and weighted Büchi automata with $\varepsilon
$-transitions over $A$, and $K$. We note that weighted
Büchi automata over $\omega$-valuation monoids have already been considered in
\cite{Dr-Me}, \cite{Me-Do}.
\begin{definition}
\label{definition2}
(i) A weighted generalized Büchi automaton with $\varepsilon$-transitions
($\varepsilon$-wgBa for short) over $A$ and $K$ is a quadruple $\mathcal{A}%
=\left(  Q,wt,I,\mathcal{F}\right)  $, where $Q$ is the finite set of states,
$wt:Q\times\left(  A\cup\left\{  \varepsilon\right\}  \right)  \times
Q\rightarrow K$ is a mapping assigning weights to the transitions of the
automaton, $I$ is the set of initial states and $\mathcal{F=}\left\{
F_{1},\ldots,F_{l}\right\}  $ is the set of final sets $F_{i}\in
\mathcal{P}\left(  Q\right)  ,$ for every $1\leq i\leq l.$ For every $t\in
Q\times\left\{  \varepsilon\right\}  \times Q$ we require that $wt\left(
t\right)  =\mathbf{0}$ or $wt\left(  t\right)  =\mathbf{1}.$ Moreover, for
every $\left(  q,\varepsilon,q^{\prime}\right)  \in Q\times\left\{
\varepsilon\right\}  \times Q$ with $wt  \left(  q,\varepsilon
,q^{\prime}\right)   =\mathbf{1},$ and every $i\in\left\{
1,\ldots,l\right\}  $, we have $q\in F_{i}$ iff $q^{\prime}\in F_{i}.$

(ii) An $\varepsilon$-wgBa is a weighted Büchi automaton with $\varepsilon
$-transitions ($\varepsilon$-wBa for short) if $l=1,$ i.e., there is only one
final set.

(iii) An $\varepsilon$-wBa is a weighted Büchi automaton (wBa for short) if
for every $q,q^{\prime}\in Q$ it holds $wt  \left(  q,\varepsilon
,q^{\prime}\right)    =\mathbf{0}.$
\end{definition}

If $\mathcal{A=}\left(  Q,wt,I,\mathcal{F}\right)  $ is an $\varepsilon$-wBa,
then we simply write $\mathcal{A=}\left(  Q,wt,I,F\right)  $. Let
$w=a_{0}a_{1}\ldots\in A^{\omega}$ with $a_{i}\in A\left(  i\geq0\right)  .$ A
path $P_{w}$ of $\mathcal{A}$ over $w$ is an infinite sequence of transitions
$P_{w}=\left(  q_{j},b_{j},q_{j+1}\right)  _{j\geq0}$, $b_{j}\in A\cup\left\{
\varepsilon\right\}  $ $\left(  j\geq0\right)  $, such that $w=b_{0}%
b_{1}\ldots.$ Let $i_{0}\leq i_{1}\leq i_{2}\leq i_{3}\leq\ldots$ be the
sequence of positions with $b_{i_{k}}=a_{k}$ for every $k\geq0,$ and
$h_{0}\leq h_{1}\leq h_{2}\leq h_{3}\leq\ldots$ be the sequence of positions
with $b_{h_{l}}=\varepsilon$ for every $l\geq0$. Then, we let the weight of
$P_{w}$ be the value
\[
weight_{\mathcal{A}}\left(  P_{w}\right)  =%
\begin{cases}%
\begin{array}{ll}
Val^{\omega}\left( wt\left(   q_{i_{k}},a_{k},q_{i_{k}+1}\right)  \right)
_{k\geq0} &
\begin{array}{l}
\text{if\textrm{ }}wt\left(    q_{h_{l}},\varepsilon,q_{h_{l}+1}
\right)  =\mathbf{1}\\
\text{for every }l\geq0
\end{array}
\\
\mathbf{0} & \text{ otherwise}%
\end{array}
\end{cases}
\]

Let $P_{w}=\left(  q_{j},b_{j},q_{j+1}\right)  _{j\geq0}$, $b_{j}\in
A\cup\left\{  \varepsilon\right\}  $ $\left(  j\geq0\right)  $ be a path of
$\mathcal{A}$ over $w.$ The set of states that appear infinitely often along
$P_{w}$ is denoted by $In^{Q}\left(  P_{w}\right)  .$ The path $P_{w}$ is
called \textit{successful} if $q_{0}\in I$ and $In^{Q}\left(  P_{w}\right)
\cap F_{i}\neq\emptyset$, for every $i\in\left\{  1,\ldots,l\right\}  .$ We
shall denote by $succ_{\mathcal{A}}\left(  w\right)  $ the set of all
successful paths of $\mathcal{A}$ over $w.$ The behavior of $\mathcal{A}$ is
the infinitary series $\left\Vert \mathcal{A}\right\Vert :A^{\omega
}\rightarrow K$ with coefficients specified, for every $w\in A^{\omega}$,
\[
\left(  \left\Vert \mathcal{A}\right\Vert ,w\right)  =\underset{P_{w}\in
succ_{\mathcal{A}}\left(  w\right)  }{%
%TCIMACRO{\dsum }%
%BeginExpansion
{\displaystyle\sum}
%EndExpansion
}weight_{\mathcal{A}}\left(  P_{w}\right)  .
\]

\begin{remark}
In the definition of our $\varepsilon$-wgBa we impose a restriction on the weights that $wt$ assigns to $\varepsilon$-transitions. More specifically, we require that $\varepsilon$-transitions have weight $\mathbf{0}$, or $\mathbf{1}$, and that $\mathbf{1}$-weight $\varepsilon$-transitions are only allowed between states that belong to the same final subsets of the automaton. As it will be presented in the sequel, this restriction is sufficient for expressing the intuition of the translation of our weighted formulas to $\varepsilon$-wgBa, as we will need transitions that reduce formulas without modifying the weight of the path, and in the same time respecting the conditions imposed by final subsets. In the general framework, with this definition we obtain a generalization of wBa that allows flexibility to move between states, and at the same time respects acceptance conditions, and weight computation that are determined by transitions that consume a letter of the input word.
\end{remark}
\begin{remark}
As mentioned before, wBa over $\omega$-valuation monoids have already been considered in
\cite{Dr-Me}, \cite{Me-Do}. In contrast to \cite{Dr-Me}, \cite{Me-Do}, in our notations we do not explicitly define a set of transitions as a subset of all possible triples $\left( q,a,q^{\prime}\right)$, and then use a weight function to assign weights to the elements of this subset. Our weight function $wt$ assigns weights to all possible transitions of the automaton, and then, similarly to \cite{Dr-Me}, \cite{Me-Do}, we obtain the weight of a path by applying $Val^{\omega}$ to the sequence of weights of the transitions of the path. By the definition of $Val^{\omega}$, if the weight of one transition is $\mathbf{0}$, then the weight of a path is $\mathbf{0}$. This implies that given a wBa defined by Definition \ref{definition2}, we can obtain an equivalent wBa defined as in \cite{Dr-Me} with a set of transitions at least the ones with non-zero weight at the original automaton, and vice-versa. Given a wBa defined as in \cite{Dr-Me}, we can construct a wBa defined by Definition \ref{definition2}, by assigning the weight $\mathbf{0}$ to every tuple $\left( q,a,q^{\prime}\right)$ that does not belong to the set of transitions of the original wBa. Hence the two notations lead to equivalent definitions. We note that for \cite{Me-Do} we refer to the simplest form of wBa introduced in that paper.
\end{remark}
Two $\varepsilon$-wgBa are called equivalent if they have the same behavior.
We shall also denote an $\varepsilon$-transition with weight=$\mathbf{1}$ by
$\overset{\varepsilon}{\rightarrow}$ and we will write $\overset{\ast
}{\rightarrow}$ for the transitive and reflexive closure of $\overset
{\varepsilon}{\rightarrow}.$ Finally, for every $w=a_{0}\ldots a_{n}\in A^{+}$
we shall denote by $q\overset{w}{\rightarrow}q^{\prime}$ a sequence of
transitions $\left(  q_{j},a_{j},q_{j+1}\right)  _{0\leq j\leq n}$ with
$q_{0}=q,$ and $q_{n+1}=q^{\prime}.$ Now, we let
\[
pri_{\mathcal{A}}\left(  w\right)  =\left\{  weight_{\mathcal{A}}\left(
P_{w}\right)  \mid P_{w}\in succ_{\mathcal{A}}\left(  w\right)  \right\}
\]
for every $w\in A^{\omega}.$

$\varepsilon$-wgBa (resp. $\varepsilon$-wBa, and
wBa) over generalized product $\omega$-valuation monoids are defined in the same
way with $\varepsilon$-wgBa (resp. $\varepsilon$-wBa, and wBa) over product $\omega$-valuation monoids presented above.

\begin{lemma}
Let $\left(  K,+,\cdot,Val^{\omega},\mathbf{0},\mathbf{1}\right)  $ be an
idempotent product $\omega$-valuation monoid or an idempotent generalized product $\omega$-valuation monoid. For every $\varepsilon$-wgBa over $A$ and $K$ we can effectively construct an
equivalent $\varepsilon$-wBa.
\end{lemma}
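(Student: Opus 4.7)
The plan is to adapt the standard ``counter'' product construction that converts a generalized B\"uchi automaton into an ordinary one, being careful that the required constraints on $\varepsilon$-transitions are preserved and that weights are not altered.

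Given $\mathcal{A}=(Q,wt,I,\mathcal{F})$ with $\mathcal{F}=\{F_{1},\dots,F_{l}\}$, I would define $\mathcal{A}'=(Q',wt',I',F')$ by $Q'=Q\times\{1,\dots,l\}$, $I'=I\times\{1\}$, $F'=F_{1}\times\{1\}$, and with $wt'$ specified as follows. For each transition $(q,a,q')\in Q\times A\times Q$ with weight $k=wt(q,a,q')$ and each $i\in\{1,\dots,l\}$, set $wt'((q,i),a,(q',i\oplus 1))=k$ if $q\in F_{i}$ and $wt'((q,i),a,(q',i))=k$ otherwise, where $\oplus$ is addition modulo $l$ with values in $\{1,\dots,l\}$. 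For each $\varepsilon$-transition with $wt(q,\varepsilon,q')=\mathbf{1}$ and each $i$, set $wt'((q,i),\varepsilon,(q',i))=\mathbf{1}$, i.e. the counter is \emph{not} advanced on $\varepsilon$-steps. All other tuples receive weight $\mathbf{0}$. By construction $\mathcal{A}'$ has a single final set, so after checking the two constraints in Definition~\ref{definition2}(i) it is an $\varepsilon$-wBa.

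The first constraint (that $\varepsilon$-transitions carry weight $\mathbf{0}$ or $\mathbf{1}$) is immediate. The second requires that for every $\mathbf{1}$-weighted $\varepsilon$-transition $((q,i),\varepsilon,(q',i))$ of $\mathcal{A}'$ we have $(q,i)\in F'$ iff $(q',i)\in F'$; since $F'=F_{1}\times\{1\}$, this reduces to $q\in F_{1}$ iff $q'\in F_{1}$, which follows from the analogous condition in $\mathcal{A}$ applied to $F_{1}$. The key reason for not advancing the counter on $\varepsilon$-steps is exactly to preserve this property.

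Next I would exhibit a weight-preserving bijection between $succ_{\mathcal{A}}(w)$ and $succ_{\mathcal{A}'}(w)$. Any path $P_{w}=(q_{j},b_{j},q_{j+1})_{j\geq 0}$ in $\mathcal{A}$ lifts uniquely to a path in $\mathcal{A}'$ starting at $(q_{0},1)$ by iteratively applying the counter rule, and, conversely, projection on the first coordinate is the inverse. The crucial forward-direction observation is that if $P_{w}$ is successful then $P'_{w}$ is successful too: for each $i\in\{1,\dots,l\}$ the path visits $F_{i}$ infinitely often, and because $\varepsilon$-transitions preserve membership in $F_{i}$, between any occurrence of $q_{j}\in F_{i}$ and the next input letter all intermediate states also lie in $F_{i}$; hence there are infinitely many non-$\varepsilon$ transitions whose source lies in $F_{i}$, so the counter is advanced from $i$ to $i\oplus 1$ infinitely often, and consequently returns to $1$ infinitely often, giving infinitely many visits to $F'$. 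The converse is immediate from the definition. Finally, weights of corresponding transitions are identical by construction, and $Val^{\omega}$ is applied to exactly the same sequence of non-$\varepsilon$ weights on both sides, so $weight_{\mathcal{A}'}(P'_{w})=weight_{\mathcal{A}}(P_{w})$; summing yields $(\|\mathcal{A}'\|,w)=(\|\mathcal{A}\|,w)$.

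The main obstacle is the interaction between $\varepsilon$-transitions and the counter, and I expect the verification sketched in the previous paragraph (that $F_{i}$-membership ``propagates'' through $\varepsilon$-segments, so a successful path of $\mathcal{A}$ necessarily has infinitely many \emph{input-consuming} transitions out of $F_{i}$) is the only nontrivial point; everything else is a routine product construction. The restricted-product-$\omega$-valuation-monoid case is identical: the construction does not use Property~1 in any stronger form than the restricted version, and weights of $\varepsilon$-transitions remain in $\{\mathbf{0},\mathbf{1}\}$ so no additional care with Property~1 is needed.
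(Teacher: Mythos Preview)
Your construction is identical to the paper's, and the overall strategy is the same. There is, however, a genuine imprecision in your bijection claim. The assertion that ``$\varepsilon$-transitions preserve membership in $F_{i}$'' is guaranteed by Definition~\ref{definition2}(i) only for $\varepsilon$-transitions of weight $\mathbf{1}$, not for those of weight $\mathbf{0}$. Consequently a successful path $P_{w}$ of $\mathcal{A}$ that contains a weight-$\mathbf{0}$ $\varepsilon$-transition can lift to a \emph{non}-successful path of $\mathcal{A}'$: a visit to $F_{i}$ may be followed by a weight-$\mathbf{0}$ $\varepsilon$-step leaving $F_{i}$ before the next input letter is consumed, so the counter never advances past $i$. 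Conversely, $succ_{\mathcal{A}'}(w)$ contains weight-$\mathbf{0}$ paths that violate the counter discipline (e.g.\ change the counter on an $\varepsilon$-step) and are not lifts of anything in $\mathcal{A}$. So the map you describe is not a bijection between $succ_{\mathcal{A}}(w)$ and $succ_{\mathcal{A}'}(w)$.

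The fix is straightforward and is exactly what the paper does: restrict the correspondence to paths of nonzero weight (equivalently, paths whose $\varepsilon$-transitions all carry weight $\mathbf{1}$). On that subclass your argument goes through verbatim and yields $pri_{\mathcal{A}'}(w)\setminus\{\mathbf{0}\}=pri_{\mathcal{A}}(w)\setminus\{\mathbf{0}\}$; the paper then invokes Lemma~\ref{Properties Sum}(ii) (hence idempotency) to pass from equality of these weight sets to equality of the behaviours. Your bijection-of-paths viewpoint would also finish the job once you note that the discarded weight-$\mathbf{0}$ paths contribute $\mathbf{0}$ to either sum.
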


\begin{proof}
Let $\mathcal{A=}\left(  Q,wt,I,\mathcal{F}\right)  $ be an $\varepsilon$-wgBa
over $A$ and $K$ with $\mathcal{F=}\left\{  F_{1},\ldots,F_{l}\right\}  $. We
let $\mathcal{A}^{\prime}=\left(  Q^{\prime},I^{\prime},wt^{\prime},F^{\prime
}\right)  $ be an $\varepsilon$-wBa defined as follows:

\begin{itemize}
\item $Q^{\prime}=Q\times\left\{  1,\ldots,l\right\}  ,$

\item $I^{\prime}=I\times\left\{  1\right\}  ,$

\item For every $\left(  \left(  q,i\right)  ,b,\left(  q^{\prime},j\right)
\right)  \in Q^{\prime}\times\left(  A\cup\left\{  \varepsilon\right\}
\right)  \times Q^{\prime}$ we let:

$wt^{\prime}\left( \left(    q,i\right)  ,b,\left(  q^{\prime},j\right) \right)
\\=\left\{
\begin{array}
[c]{ll}%
wt \left(  q,b,q^{\prime}\right)    &
\begin{array}
[c]{l}%
\text{if \ }\left(  b\in A,i=j\text{ and }q\notin F_{i}\right) \\
\text{or }\left(  b\in A,j\equiv\left(  i+1\right)  \operatorname{mod}l\text{
and }q\in F_{i}\right) \\
\text{or }\left(  b=\varepsilon\text{ and }i=j\right)
\end{array}
\\
\mathbf{0} & \text{ otherwise}%
\end{array}
\right.  ,$

\item $F^{\prime}=F_{1}\times\left\{  1\right\}  .$
\end{itemize}

We will prove that $pri_{\mathcal{A}^{\prime}}\left(  w\right)  \backslash
\left\{  \mathbf{0}\right\}  =pri_{\mathcal{A}}\left(  w\right)
\backslash\left\{  \mathbf{0}\right\}  $ for every $w\in A^{\omega}.$ To this
end let $w=a_{0}a_{1}\ldots\in A^{\omega}$ and $P_{w}=\left(  q_{i}%
,b_{i},q_{i+1}\right)  _{i\geq0}$ be a successful path of $\mathcal{A}$ over
$w$ with $weight_{\mathcal{A}}\left(  P_{w}\right)  \neq\mathbf{0}.$ Moreover,
let $i_{0}\leq i_{1}\leq i_{2}\leq i_{3}\leq\ldots$ be the sequence of
positions with $b_{i_{k}}=a_{k}$ for every $k\geq0,$ and $h_{0}\leq
h_{1}\leq h_{2}\leq h_{3}\leq \ldots$ be the sequence of positions with $b_{h_{l}%
}=\varepsilon$ for every $l\geq0$. It holds
\[
Val^{\omega}\left(  wt\left(  q_{i_{k}},a_{k},q_{i_{k}+1}\right)  \right)
_{k\geq0}\neq\mathbf{0}\text{ and }wt\left(   q_{h_{k}},\varepsilon
,q_{h_{k}+1}  \right)  =\mathbf{1}\text{ for all }k\geq0\text{. }%
\]
We define the path
\[
P_{w}^{\prime}=  \left(  \left(  q_{i},g_{i}\right)  ,b_{i},\left(
q_{i+1},g_{i+1}\right)    \right)  _{i\geq0}%
\]
of $\mathcal{A}^{\prime}$ over $w$ by setting $g_{0}=1$ and for every $i\geq
1$, we point out the following cases. If $b_{i-1}\neq\varepsilon$, we let
$g_{i}=g_{i-1}$ if $q_{i-1}\notin F_{g_{i-1}},$ and $g_{i}\equiv\left(
g_{i-1}+1\right)  \operatorname{mod}l$ if $q_{i-1}\in F_{g_{i-1}}.$ If
$b_{i-1}=\varepsilon$, we let $g_{i}=g_{i-1}.$ By construction $P_{w}%
^{\prime}$ is successful and for every $i\geq0$ it holds
\[
wt^{\prime}\left(  \left(  q_{i},g_{i}\right)  ,b_{i},\left(  q_{i+1}%
,g_{i+1}\right)  \right)  =wt \left(  q_{i},b_{i},q_{i+1}
\right)
\]
which implies that
\begin{align*}
&  weight_{\mathcal{A}^{\prime}}\left(  P_{w}^{\prime}\right) \\
&  =Val^{\omega}\left(  wt^{\prime} \left(  \left(  q_{i_{k}},g_{i_{k}%
}\right)  ,a_{k},\left(  q_{i_{k}+1},g_{i_{k}+1}\right)  \right)
\right)  _{k\geq0}\\
&  =Val^{\omega}\left(  wt\left(    q_{i_{k}},a_{k},q_{i_{k}%
+1}\right)    \right)  _{k\geq0}\\
&  =weight_{\mathcal{A}}\left(  P_{w}\right)  .
\end{align*}
We thus get $pri_{\mathcal{A}}\left(  w\right)  \backslash\left\{
\mathbf{0}\right\}  \subseteq pri_{\mathcal{A}^{\prime}}\left(  w\right)
\backslash\left\{  \mathbf{0}\right\}  .$ In order to prove the opposite
inclusion we let $P_{w}^{\prime}=\left(  \left(  q_{i},g_{i}\right)
,b_{i},\left(  q_{i+1},g_{i+1}\right)  \right)  _{i\geq0}$ be a successful
path of $\mathcal{A}^{\prime}$ over $w$ with $weight_{\mathcal{A}^{\prime}%
}\left(  P_{w}^{\prime}\right)  \neq\mathbf{0},$ which by construction of
$\mathcal{A}^{\prime}$ implies $wt^{\prime}\left(  \left(  q_{i}%
,g_{i}\right)  ,b_{i},\left(  q_{i+1},g_{i+1}\right)    \right)
=wt\left(    q_{i},b_{i},q_{i+1}  \right)  $ and either $b_{i}\in
A,g_{i}=g_{i+1}$ and $q_{i}\notin F_{g_{i}},$ or $b_{i}\in A,$ $g_{i+1}%
=\left(  g_{i}+1\right)  \operatorname{mod}l$ and $q_{i}\in F_{g_{i}},$ or
$b_{i}=\varepsilon$ and $g_{i}=g_{i+1}.$ Then, $P_{w}=\left(  q_{i}%
,b_{i},q_{i+1}\right)  _{i\geq0}$ is a successful path of $\mathcal{A}$ over
$w$ and
\begin{align*}
&  weight_{\mathcal{A}^{\prime}}\left(  P_{w}^{\prime}\right) \\
&  =Val^{\omega}\left(  wt^{\prime}\left(  \left(  q_{i_{k}},g_{i_{k}%
}\right)  ,a_{k},\left(  q_{i_{k}+1},g_{i_{k}+1}\right)    \right)
\right)_{k\geq0} \\
&  =Val^{\omega}\left(  wt\left(   q_{i_{k}},a_{k},q_{i_{k}%
+1}\right)   \right)  _{k\geq0}\\
&  =weight_{\mathcal{A}}\left(  P_{w}\right)
\end{align*}
where the sequence of positions $i_{0}\leq i_{1}\leq i_{2}\leq i_{3}\leq
\ldots$ is defined as before. Thus, $pri_{\mathcal{A}^{\prime}}\left(
w\right)  \backslash\left\{  \mathbf{0}\right\}  \subseteq pri_{\mathcal{A}%
}\left(  w\right)  \backslash\left\{  \mathbf{0}\right\}  .$ Which implies
that
\[
pri_{\mathcal{A}^{\prime}}\left(  w\right)  \backslash\left\{  \mathbf{0}%
\right\}  =pri_{\mathcal{A}}\left(  w\right)  \backslash\left\{
\mathbf{0}\right\}  .
\]

Hence,
\begin{align*}
\left(  \left\Vert \mathcal{A}^{\prime}\right\Vert ,w\right)   &
=\underset{k\in pri_{\mathcal{A}^{\prime}}\left(  w\right)  }{%
%TCIMACRO{\dsum }%
%BeginExpansion
{\displaystyle\sum}
%EndExpansion
}k=\underset{k\in pri_{\mathcal{A}^{\prime}}\left(  w\right)  \backslash
\left\{  \mathbf{0}\right\}  }{%
%TCIMACRO{\dsum }%
%BeginExpansion
{\displaystyle\sum}
%EndExpansion
}k\\
&  =\underset{k\in pri_{\mathcal{A}}\left(  w\right)  \backslash\left\{
\mathbf{0}\right\}  }{%
%TCIMACRO{\dsum }%
%BeginExpansion
{\displaystyle\sum}
%EndExpansion
}k=\underset{k\in pri_{\mathcal{A}}\left(  w\right)  }{%
%TCIMACRO{\dsum }%
%BeginExpansion
{\displaystyle\sum}
%EndExpansion
}k=\left(  \left\Vert \mathcal{A}\right\Vert ,w\right)  ,
\end{align*}
where the first and last equality hold due to the completeness axioms of the monoid and Lemma \ref{Properties Sum}(ii), and this concludes our proof.
For idempotent generalized product $\omega$-valuation monoids we can prove the lemma's claim using the same arguments.
\end{proof}

We shall need some auxiliary definitions. Let $\mathcal{A=}\left(
Q,wt,I,F\right)  $ be an $\varepsilon$-wBa over $A$ and $K,$ $w=a_{0}%
a_{1}\ldots\in A^{\omega}$ and $P_{w}=\left(  q_{i},a_{i},q_{i+1}\right)
_{i\geq0}\in$ $succ_{\mathcal{A}}\left(  w\right)  $ with no $\varepsilon
$-transitions. We consider the set of paths $Paths\left(  P_{w}\right)
\subseteq$ $succ_{\mathcal{A}}\left(  w\right)  $ containing $P_{w}$ and every
path derived by $P_{w}$ if we replace one or more transitions $\left(
q_{i},a_{i},q_{i+1}\right)  ,i\geq0,$ by a sequence of transitions of the form
$q_{i}\overset{\ast}{\rightarrow}\widetilde{q}\overset{a_{i}}{\rightarrow
}\overline{q}\overset{\ast}{\rightarrow}q_{i+1}.$ Furthermore, we let
\[
\mathcal{V}_{P_{w}}=\left\{  k\in K\mid\exists\widetilde{P}_{w}\in
Paths\left(  P_{w}\right)  \text{ with }weight_{\mathcal{A}}\left(
\widetilde{P}_{w}\right)  =k\right\}  .
\]
\

\begin{lemma}
(i) Let $\left(  K,+,\cdot,Val^{\omega},\mathbf{0},\mathbf{1}\right)  $ be an
idempotent product $\omega$-valuation monoid. For every
$\varepsilon$-wBa over $A$ and $K$ we can effectively construct an equivalent
wBa over $A$ and $K$.

(ii) Let $\left(  K,+,\cdot,Val^{\omega},\mathbf{0},\mathbf{1}\right)  $ be an
idempotent generalized product $\omega$-valuation monoid. For every
$\varepsilon$-wBa over $A$ and $K$ we can effectively construct an equivalent
wBa over $A$ and $K$.
\end{lemma}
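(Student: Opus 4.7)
The plan is to build a wBa $\mathcal{A}' = (Q, wt', I, F)$ that shares the state set, initial states, and final set of $\mathcal{A} = (Q, wt, I, F)$ but absorbs every $\varepsilon$-closure into the letter-transition weights, by setting
\[
wt'(q,a,q') = \sum \bigl\{ wt(\tilde q, a, \bar q) \mid \tilde q, \bar q \in Q,\; q \overset{\ast}{\rightarrow} \tilde q,\; \bar q \overset{\ast}{\rightarrow} q' \bigr\}
\]
for every $(q,a,q') \in Q \times A \times Q$. Finiteness of $Q$ makes each such sum finite and confines the values of $wt'$ to a finite subset $L$ of $K$, so $\mathcal{A}'$ is indeed a wBa.

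To match paths, I would take a successful path $P_w' = (q_i, a_i, q_{i+1})_{i \geq 0}$ of $\mathcal{A}'$. Unfolding $wt'$ shows that the paths of $\mathcal{A}$ matching $P_w'$ are exactly the elements of $Paths(P_w')$ introduced just before the lemma, each obtained by choosing at step $i$ a pair $(\tilde q_i, \bar q_i)$ together with concrete $\varepsilon$-runs $q_i \overset{\ast}{\rightarrow} \tilde q_i$ and $\bar q_i \overset{\ast}{\rightarrow} q_{i+1}$; conversely, every successful path of $\mathcal{A}$ is recovered by picking some state on each of its intermediate $\varepsilon$-chains as the new $q_{i+1}$. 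By Definition \ref{definition2}, $F$-membership is constant along every $\varepsilon$-chain, and on a successful path only finitely many $\varepsilon$-transitions can sit between two consecutive letter transitions, so the original path hits $F$ infinitely often iff infinitely many of the $q_i$ do. Hence $P_w' \in succ_{\mathcal{A}'}(w)$ iff its matching paths lie in $succ_{\mathcal{A}}(w)$.

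Property 1 then yields
\[
weight_{\mathcal{A}'}(P_w') = Val^{\omega}\!\Bigl(\sum_{(\tilde q_i, \bar q_i)} wt(\tilde q_i, a_i, \bar q_i)\Bigr)_{i \geq 0} = \sum_{((\tilde q_i, \bar q_i))_i} Val^{\omega}\bigl(wt(\tilde q_i, a_i, \bar q_i)\bigr)_{i \geq 0},
\]
so $weight_{\mathcal{A}'}(P_w') = \sum_{k \in \mathcal{V}_{P_w'}} k$. Summing over $P_w' \in succ_{\mathcal{A}'}(w)$ overcounts each $\tilde P_w \in succ_{\mathcal{A}}(w)$ by the number of ways to pick its $q_i$'s on the $\varepsilon$-chains, but by idempotency (Lemma \ref{Properties Sum}(ii) and (iii)) duplicates collapse and $(\|\mathcal{A}'\|, w) = (\|\mathcal{A}\|, w)$ follows.

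The main obstacle sits in part (ii): the restricted form of Property 1 applies only when, for cofinitely many $i$, the weights $wt(\tilde q, a_i, \bar q)$ entering the sum defining $wt'(q_i, a_i, q_{i+1})$ are either all in $L \setminus \{\mathbf{0}, \mathbf{1}\}$ or all in $\{\mathbf{0}, \mathbf{1}\}$. I would secure this hypothesis by first splitting each state $q$ into copies $(q, \mathrm{T})$ and $(q, \mathrm{N})$ and duplicating letter transitions so that outgoing transitions of $(q, \mathrm{T})$-states only carry weights in $\{\mathbf{0}, \mathbf{1}\}$ and those of $(q, \mathrm{N})$-states only carry weights outside that set; on this preprocessed $\varepsilon$-wBa the construction above produces only homogeneous sums, the restricted Property 1 applies, and the remainder of the argument goes through as in case (i).
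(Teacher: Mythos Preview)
Your argument for part~(i) matches the paper's construction and reasoning: same $\varepsilon$-closure weight function $wt'$, same use of Property~1 to distribute $Val^{\omega}$ over the componentwise sums, same appeal to idempotency (Lemma~\ref{Properties Sum}) to collapse duplicates.

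For part~(ii), you and the paper share the same core idea---duplicate states so that each copy handles only one weight type, making the per-position sums homogeneous so that the restricted Property~1 applies---but the packaging differs. The paper builds the wBa $\overline{\mathcal{A}}$ directly on $Q \cup \{s_q : q \in Q\}$, routing weight-$\mathbf{1}$ steps through decoy-to-decoy transitions and non-$\mathbf{0}$, non-$\mathbf{1}$ steps through the remaining transitions; it then argues separately about paths confined to $Q$ (where restricted Property~1 applies directly) and their decoy variants $\overline{Paths}(P'_w)$. Your two-step route (tag, then reuse~(i)) is more modular, but one detail must be made explicit for it to work: in the preprocessed $\varepsilon$-wBa, $\varepsilon$-transitions must \emph{preserve the tag}. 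Otherwise the $\varepsilon$-closure of a $(q,\mathrm{T})$-state reaches $\mathrm{N}$-states and the sum defining $wt'((q,X),a,(q',Y))$ mixes summands of both types, destroying the homogeneity you claim. With tag-preserving $\varepsilon$-transitions and tag-switching permitted only at letter transitions (target tag free), every non-zero-weight path of $\mathcal{A}$ lifts uniquely by setting each state's tag according to the type of its next outgoing letter step, the preprocessed automaton is equivalent to $\mathcal{A}$, and the rest of your argument goes through.
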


\begin{proof}
(i) Let $\mathcal{A=}\left(  Q,wt,I,F\right)  $ be an $\varepsilon$-wBa over
$A$ and $K.$ We define the wBa $\mathcal{A}^{\prime}=\left(  Q,wt^{\prime
},I,F\right)  $ by setting
\[
wt^{\prime}\left(    q,a,q^{\prime}  \right)  =\underset
{q\overset{\ast}{\rightarrow}\widetilde{q},\overline{q}\overset{\ast
}{\rightarrow}q^{\prime}}{\underset{\widetilde{q},\overline{q}\in Q}{\sum}%
}wt\left(    \widetilde{q},a,\overline{q} \right)  .
\]
We let $w=a_{0}a_{1}\ldots\in A^{\omega}$ and $P_{w}^{\prime}=\left(
q_{i},a_{i},q_{i+1}\right)  _{i\geq0}\in succ_{\mathcal{A}^{\prime}}\left(
w\right)  .$ For all $i\geq0$ it holds $wt^{\prime}\left(   q_{i}%
,a_{i},q_{i+1}  \right)  =\underset{q_{i}\overset{\ast}{\rightarrow
}\widetilde{q}_{i},\overline{q}_{i}\overset{\ast}{\rightarrow}q_{i+1}%
}{\underset{\widetilde{q}_{i},\overline{q}_{i}\in Q}{\sum}}wt\left(
\widetilde{q}_{i},a_{i},\overline{q}_{i} \right)  $. Then,
\begin{align*}
weight_{\mathcal{A}^{\prime}}\left(  P_{w}^{\prime}\right)   &  =Val^{\omega
}\left(  \underset{q_{i}\overset{\ast}{\rightarrow}\widetilde{q}_{i}%
,\overline{q}_{i}\overset{\ast}{\rightarrow}q_{i+1}}{\underset{\widetilde
{q}_{i},\overline{q}_{i}\in Q}{\sum}}wt\left(    \widetilde{q}_{i}%
,a_{i},\overline{q}_{i}\right)   \right)  _{i\geq0}\\
&  =\underset{q_{i}\overset{\ast}{\rightarrow}\widetilde{q}_{i},\overline
{q}_{i}\overset{\ast}{\rightarrow}q_{i+1}}{\underset{\widetilde{q}%
_{i},\overline{q}_{i}\in Q}{\underset{i\geq0}{%
%TCIMACRO{\dsum }%
%BeginExpansion
{\displaystyle\sum}
%EndExpansion
}}}Val^{\omega}\left(  wt\left(   \widetilde{q}_{i},a_{i},\overline
{q}_{i}\right)    \right)  _{i\geq0}%
\end{align*}
where the second equality holds by the distributivity of $Val^{\omega}$ over finite sums. Clearly,
$P_{w}^{\prime}$ is also a successful path of $\mathcal{A}$ over $w$ and it
holds $weight_{\mathcal{A}^{\prime}}\left(  P_{w}^{\prime}\right)=\underset
{\widetilde{P}_{w}\in Paths\left(  P_{w}^{\prime}\right)  }{%
%TCIMACRO{\dsum }%
%BeginExpansion
{\displaystyle\sum}
%EndExpansion
}weight_{\mathcal{A}}\left(  \widetilde{P}_{w}\right)
=\underset{k\in\mathcal{V}_{P_{w}^{\prime}}}{%
%TCIMACRO{\dsum }%
%BeginExpansion
{\displaystyle\sum}
%EndExpansion
}k$. The last equality is concluded by the completeness axioms of the monoid and Lemma \ref{Properties Sum}(ii). Hence, we get
\begin{align*}
\left(  \left\Vert \mathcal{A}^{\prime}\right\Vert ,w\right)   &
=\underset{P_{w}^{\prime}\in succ_{\mathcal{A}^{\prime}}\left(  w\right)  }{%
%TCIMACRO{\dsum }%
%BeginExpansion
{\displaystyle\sum}
%EndExpansion
}\left(  \underset{k\in\mathcal{V}_{P_{w}^{\prime}}}{%
%TCIMACRO{\dsum }%
%BeginExpansion
{\displaystyle\sum}
%EndExpansion
}k\right)
&  =\underset{P_{w}^{\prime}\in succ_{\mathcal{A}^{\prime}}\left(  w\right)
}{%
%TCIMACRO{\dsum }%
%BeginExpansion
{\displaystyle\sum}
%EndExpansion
}\left(  \underset{k\in\mathcal{V}_{P_{w}^{\prime}}\backslash\left\{
\mathbf{0}\right\}  }{%
%TCIMACRO{\dsum }%
%BeginExpansion
{\displaystyle\sum}
%EndExpansion
}k\right)  .
\end{align*}

We show that $\left(  \underset{P_{w}^{\prime}\in succ_{\mathcal{A}^{\prime}%
}\left(  w\right)  }{%
%TCIMACRO{\dbigcup }%
%BeginExpansion
{\displaystyle\bigcup}
%EndExpansion
}\mathcal{V}_{P_{w}^{\prime}}\right)  \backslash\left\{  \mathbf{0}\right\}
=pri_{\mathcal{A}}\left(  w\right)  \backslash\left\{  \mathbf{0}\right\}  .$
The first inclusion $\left(  \underset{P_{w}^{\prime}\in succ_{\mathcal{A}%
^{\prime}}\left(  w\right)  }{%
%TCIMACRO{\dbigcup }%
%BeginExpansion
{\displaystyle\bigcup}
%EndExpansion
}\mathcal{V}_{P_{w}^{\prime}}\right)  \backslash\left\{  \mathbf{0}\right\}
\subseteq pri_{\mathcal{A}}\left(  w\right)  \backslash\left\{  \mathbf{0}%
\right\}  $ holds by definition of $\mathcal{V}_{P_{w}^{\prime}}.$ To prove
the converse inclusion, i.e., $pri_{\mathcal{A}}\left(  w\right)
\backslash\left\{  \mathbf{0}\right\}  \subseteq\left(  \underset
{P_{w}^{\prime}\in succ_{\mathcal{A}^{\prime}}\left(  w\right)  }{%
%TCIMACRO{\dbigcup }%
%BeginExpansion
{\displaystyle\bigcup}
%EndExpansion
}\mathcal{V}_{P_{w}^{\prime}}\right)  \backslash\left\{  \mathbf{0}\right\}
$, we prove that for every $k\in pri_{\mathcal{A}}\left(  w\right)
\backslash\left\{  \mathbf{0}\right\}  $ there exists $P_{w}^{\prime}\in
succ_{\mathcal{A}^{\prime}}\left(  w\right)  $ such that $k\in\mathcal{V}%
_{P_{w}^{\prime}}.$ To this end we fix a $k\in pri_{\mathcal{A}}\left(
w\right)  \backslash\left\{  \mathbf{0}\right\}  $ and we let
\begin{align*}
P_{w}  &  :q_{0}\overset{\ast}{\rightarrow}q_{i_{1}}\overset{w\left(
0\right)  \ldots w\left(  k_{1}-1\right)  }{\rightarrow}q_{i_{1}+k_{1}%
}\overset{\ast}{\rightarrow}q_{i_{2}}\overset{w\left(  k_{1}\right)  \ldots
w\left(  k_{1}+k_{2}-1\right)  }{\rightarrow}\\
&  q_{i_{2}+k_{2}}\overset{\ast}{\rightarrow}q_{i_{3}}\overset{w\left(
k_{1}+k_{2}\right)  \ldots w\left(  k_{1}+k_{2}+k_{3}-1\right)  }{\rightarrow
}q_{i_{3}+k_{3}}\overset{\ast}{\rightarrow}\ldots
\end{align*}
be a successful path of $\mathcal{A}$ over $w$ with $weight_{\mathcal{A}%
}\left(  P_{w}\right)  =k,k_{j}\in%
%TCIMACRO{\U{2115} }%
%BeginExpansion
\mathbb{N}
%EndExpansion
,j\geq1.$ We define the path $P_{w}^{\prime}$ of $\mathcal{A}^{\prime}$ by
setting
\begin{align*}
P_{w}^{\prime}  &  :q_{0}\overset{w\left(  0\right)  }{\longrightarrow
}q_{i_{1}+1}\overset{w\left(  1\right)  }{\longrightarrow}q_{i_{1}+2}\ldots\\
&  \ldots q_{i_{1}+k_{1}-1}\overset{w\left(  k_{1}-1\right)  }{\longrightarrow
}q_{i_{1}+k_{1}}\overset{w\left(  k_{1}\right)  }{\longrightarrow}q_{i_{2}%
+1}\ldots\\
&  \ldots q_{i_{2}+k_{2}-1}\overset{w\left(  k_{1}+k_{2}-1\right)
}{\longrightarrow}q_{i_{2}+k_{2}}\ldots.
\end{align*}
By Definition \ref{definition2} we get that $P_{w}^{\prime}\in
succ_{\mathcal{A}^{\prime}}\left(  w\right)  $ (observe that for every
$j\geq0,$ $q_{i_{j}+k_{j}}\in F\Leftrightarrow q_{i_{j+1}}\in F,$ where
$k_{0}=i_{0}=0).$ Moreover, $P_{w}^{\prime}\in succ_{\mathcal{A}}\left(
w\right)  $ and $weight_{\mathcal{A}}\left(  P_{w}\right)  \in\mathcal{V}%
_{P_{w}^{\prime}}$ as wanted. Hence,
\begin{align*}
\left(  \left\Vert \mathcal{A}^{\prime}\right\Vert ,w\right)   &
=\underset{P_{w}^{\prime}\in succ\left(  \mathcal{A}^{\prime}\right)  }{%
%TCIMACRO{\dsum }%
%BeginExpansion
{\displaystyle\sum}
%EndExpansion
}\left(  \underset{k\in\mathcal{V}_{P_{w}^{\prime}}\backslash\left\{
\mathbf{0}\right\}  }{%
%TCIMACRO{\dsum }%
%BeginExpansion
{\displaystyle\sum}
%EndExpansion
}k\right) \\
&  =\underset{k\in pri_{\mathcal{A}}\left(  w\right)  \backslash\left\{
\mathbf{0}\right\}  }{\sum}k=\underset{k\in pri_{\mathcal{A}}\left(  w\right)
}{\sum}k=\left(  \left\Vert \mathcal{A}\right\Vert ,w\right)
\end{align*}
and the proof is completed.

(ii) Let $\mathcal{A=}\left(  Q,wt,I,F\right)  $ be an $\varepsilon$-wBa over
$A$ and $K.$ We define the wBa $\overline{\mathcal{A}}=\left(  \overline
{Q},\overline{wt},\overline{I},\overline{F}\right)  $ by setting $\overline
{Q}=Q\cup S$ where $S=\left\{  s_{q}\mid q\in Q\right\}  $ is a set of new states. Moreover, we let
$\overline{I}=\left\{  s_{q}\mid q\in I\right\}\cup I  $, $\overline{F}=\left\{
s_{q}\mid q\in F\right\}\cup F ,$ and for every $\left(  p,a,\overline{p}\right)
\in\overline{Q}\times A\times\overline{Q}$ we set

\begin{itemize}
\item $\overline{wt}\left( p,a,\overline{p}\right) =\left\{
\begin{array}{ll}
\mathbf{1} &
\begin{array}{l}
\text{if }p=s_{q},\overline{p}=s_{\overline{q}}\in S \\
\text{and there exist }\widetilde{q},q^{\prime }\in Q, \\
\text{with }wt\left( \widetilde{q},a,q^{\prime }\right) =\mathbf{1}, \\
q\overset{\ast }{\rightarrow }\widetilde{q},q^{\prime }\overset{\ast }{%
\rightarrow }\overline{q}%
\end{array}
\\
&  \\
\underset{wt\left( \widetilde{q},a,\overline{q}\right) \neq \mathbf{1},%
\mathbf{0}}{\underset{q_{1}\overset{\ast }{\rightarrow }\widetilde{q},%
\overline{q}\overset{\ast }{\rightarrow }q_{2}}{\underset{\widetilde{q},%
\overline{q}\in Q}{\sum }}}wt\left( \widetilde{q},a,\overline{q}\right)  &
\begin{array}{l}
\text{if }q_{1}=p,q_{2}=\overline{p}\in Q,\text{ or if } \\
p=q_{1}\in Q,\text{ }\overline{p}=s_{q_{2}}\in S,\text{ \ or if} \\
p=s_{q_{1}}\in S,\overline{p}=q_{2}\in Q%
\end{array}
\\
&  \\
\mathbf{0} & \text{ otherwise}%
\end{array}%
\text{ }\right. .$
\end{itemize}

We let $w=a_{0}a_{1}\ldots\in A^{\omega}$ and $P_{w}^{\prime}=\left(
q_{i},a_{i},q_{i+1}\right)  _{i\geq0}\in succ_{\overline{\mathcal A}}\left(
w\right)  $, with non-zero weight, such that $q_{i}\in Q$ $\ $for all $i\geq0$. Then,
\begin{align*}
weight_{\mathcal{\overline{\mathcal A}}}\left(  P_{w}^{\prime}\right)   &  =Val^{\omega
}\left(  \underset{wt\left(  \widetilde{q}_{i},a,\overline{q}%
_{i}\right)    \neq\mathbf{1},\mathbf{0}}{\underset{q_{i}\overset{\ast}{\rightarrow
}\widetilde{q}_{i},\overline{q}_{i}\overset{\ast}{\rightarrow}q_{i+1}%
}{\underset{\widetilde{q}_{i},\overline{q}_{i}\in Q}{\sum}}}wt\left(
\widetilde{q}_{i},a_{i},\overline{q}_{i}\right)   \right)  _{i\geq0}\\
&  =\underset{wt\left(    \widetilde{q}_{i},a,\overline{q}_{i}
\right)  \neq\mathbf{1},\mathbf{0}}{\underset{q_{i}\overset{\ast}{\rightarrow}\widetilde{q}%
_{i},\overline{q}_{i}\overset{\ast}{\rightarrow}q_{i+1}}{\underset
{\widetilde{q}_{i},\overline{q}_{i}\in Q}{\sum}}}Val^{\omega}\left(  wt\left(
 \widetilde{q}_{i},a_{i},\overline{q}_{i}\right)  \right)
_{i\geq0}
\end{align*}
where the second equality is obtained by the distributivity of $Val^{\omega}$ over finite sums for generalized product $\omega$-valuation monoids. For every $P_{w}^{\prime}=\left(  q_{i},a_{i},q_{i+1}\right)  _{i\geq0}\in
succ_{\overline{\mathcal{A}}}\left(  w\right)  $ such that $q_{i}\in Q$ $\ $for
all $i\geq0,$  we consider the set of paths $\overline{Paths}\left(  P_{w}^{\prime}\right)  $
that contains all the paths derived by $P_{w}^{\prime}$ if we replace one or
more states in $P_{w}^{\prime}$ by its decoy in $S.$ Clearly, every path in
$\overline{Paths}\left(  P_{w}^{\prime}\right)  $ is a successful path of
$\overline{\mathcal{A}}$ over $w$ and it holds
\[
weight_{\overline{\mathcal{A}}}\left(  P_{w}^{\prime}\right)  +\underset
{\overline{P}_{w}\in\overline{Paths}\left(  P_{w}^{\prime}\right)  }{\sum
}weight_{\overline{\mathcal{A}}}\left(  \overline{P}_{w}\right)
=\underset{k\in\mathcal{V}_{P_{w}^{\prime}}}{%
%TCIMACRO{\dsum }%
%BeginExpansion
{\displaystyle\sum}
%EndExpansion
}k.
\]
Hence, we get
\begin{align*}
&  \left(  \left\Vert \overline{\mathcal{A}}\right\Vert ,w\right)  \\
&  =\underset{P_{w}^{\prime}\in succ_{\overline{\mathcal{A}}}\left(  w\right)
\cap\left(  Q\times A\times Q\right)  ^{\omega}}{%
%TCIMACRO{\dsum }%
%BeginExpansion
{\displaystyle\sum}
%EndExpansion
}weight_{\overline{\mathcal{A}}}\left(  P_{w}^{\prime}\right)  +\underset
{\overline{P}_{w}\in\overline{Paths}\left(  P_{w}^{\prime}\right)  }%
{\underset{P_{w}^{\prime}\in succ_{\overline{\mathcal{A}}}\left(  w\right)
\cap\left(  Q\times A\times Q\right)  ^{\omega}}{%
%TCIMACRO{\dsum }%
%BeginExpansion
{\displaystyle\sum}
%EndExpansion
}}weight_{\overline{\mathcal{A}}}\left(  \overline{P}_{w}\right)  \\
&  =\underset{P_{w}^{\prime}\in succ_{\overline{\mathcal{A}}}\left(  w\right)
\cap\left(  Q\times A\times Q\right)  ^{\omega}}{%
%TCIMACRO{\dsum }%
%BeginExpansion
{\displaystyle\sum}
%EndExpansion
}\left(  weight_{\overline{\mathcal{A}}}\left(  P_{w}^{\prime}\right)
+\underset{\overline{P}_{w}\in\overline{Paths}\left(  P_{w}^{\prime}\right)
}{%
%TCIMACRO{\dsum }%
%BeginExpansion
{\displaystyle\sum}
%EndExpansion
}weight_{\overline{\mathcal{A}}}\left(  \overline{P}_{w}\right)  \right)  \\
&  =\underset{P_{w}^{\prime}\in succ_{\overline{\mathcal{A}}}\left(  w\right)
\cap\left(  Q\times A\times Q\right)  ^{\omega}}{%
%TCIMACRO{\dsum }%
%BeginExpansion
{\displaystyle\sum}
%EndExpansion
}\left(  \underset{k\in\mathcal{V}_{P_{w}^{\prime}}}{%
%TCIMACRO{\dsum }%
%BeginExpansion
{\displaystyle\sum}
%EndExpansion
}k\right)  \\
&  =\underset{P_{w}^{\prime}\in succ_{\overline{\mathcal{A}}}\left(  w\right)
\cap\left(  Q\times A\times Q\right)  ^{\omega}}{%
%TCIMACRO{\dsum }%
%BeginExpansion
{\displaystyle\sum}
%EndExpansion
}\left(  \underset{k\in\mathcal{V}_{P_{w}^{\prime}}\backslash\left\{
\mathbf{0}\right\}  }{%
%TCIMACRO{\dsum }%
%BeginExpansion
{\displaystyle\sum}
%EndExpansion
}k\right)  .
\end{align*}
We note that the second equality holds by the completeness axioms of the monoid.
Using the same arguments as in case (i), and taking into account that for every $k\in
pri_{\mathcal{A}}\left(  w\right)  \backslash\left\{  \mathbf{0}\right\}  $
there exists $P_{w}^{\prime}\in succ_{\overline{\mathcal{A}}}\left(  w\right)
\cap\left(  Q\times A\times Q\right)  ^{\omega}$ such that $k\in
\mathcal{V}_{P_{w}^{\prime}}$, we can prove that $\left(  \underset
{P_{w}^{\prime}\in succ_{\overline{\mathcal{A}}}\left(  w\right)  \cap\left(
Q\times A\times Q\right)  ^{\omega}}{%
%TCIMACRO{\dbigcup }%
%BeginExpansion
{\displaystyle\bigcup}
%EndExpansion
}\mathcal{V}_{P_{w}^{\prime}}\right)  \backslash\left\{  \mathbf{0}\right\}
=pri_{\mathcal{A}}\left(  w\right)  \backslash\left\{  \mathbf{0}\right\}  ,$
and thus it holds%
\begin{align*}
\left(  \left\Vert \overline{\mathcal{A}}\right\Vert ,w\right)   &
=\underset{P_{w}^{\prime}\in succ_{\overline{\mathcal{A}}}\left(  w\right)
\cap\left(  Q\times A\times Q\right)  ^{\omega}}{%
%TCIMACRO{\dsum }%
%BeginExpansion
{\displaystyle\sum}
%EndExpansion
}\left(  \underset{k\in\mathcal{V}_{P_{w}^{\prime}}\backslash\left\{
\mathbf{0}\right\}  }{%
%TCIMACRO{\dsum }%
%BeginExpansion
{\displaystyle\sum}
%EndExpansion
}k\right)  \\
&  =\underset{k\in pri_{\mathcal{A}}\left(  w\right)  \backslash\left\{
\mathbf{0}\right\}  }{\sum}k=\underset{k\in pri_{\mathcal{A}}\left(  w\right)
}{\sum}k=\left(  \left\Vert \mathcal{A}\right\Vert ,w\right)
\end{align*}
as wanted.
\end{proof}

\section{Weighted LTL over product $\omega$-valuation monoids}

In what follows, we present our definition of the weighted $LTL$ over product $\omega$-valuation monoids. We recall that a weighted $LTL$ has appeared for the first time in \cite{Ku-Lu}. This follows the definition of weighted $MSO$ logic over semirings presented in \cite{Dr-We}. We also recall that a weighted $MSO$ logic over $\omega$-valuation monoids was defined in \cite{Dr-Me}, where the $\omega$-valuation function has been used for the definition of the semantics of the universal first order, and second order quantifier. Analogously, we will use the $\omega$-valuation function of the underlying structure for the definition of the semantics of the \emph{always} operator of our logic.

Let $AP$ be a finite set of atomic propositions and $K=\left(
K,+,\cdot,Val^{\omega},\mathbf{0},\mathbf{1}\right)  $ be an idempotent
product $\omega$-valuation monoid. In the sequel we shall denote the elements
of $AP$ by $a,b,c,\ldots.$ The syntax of the weighted $LTL$ over $AP$ and $K$
is given by the grammar
\[
\varphi::=k\mid a\mid\lnot a\mid\varphi\vee\varphi\mid\varphi\wedge\varphi
\mid\bigcirc\varphi\mid\varphi U\varphi\mid\square\varphi
\]
where $k\in K$ and $a\in AP.$

We denote by $LTL\left(K,AP\right)$ the class of all weighted $LTL$-formulas over $AP$ and $K$.

\bigskip{}

\begin{definition}
The semantics $\Vert\varphi\Vert$ of formulas $\varphi\in LTL\left(
K,AP\right)  $ are represented as infinitary series in $K\left\langle
\left\langle \left(  \mathcal{P}\left(  AP\right)  \right)  ^{\omega
}\right\rangle \right\rangle $ inductively defined in the following way. For
every $w\in\left(  \mathcal{P}\left(  AP\right)  \right)  ^{\omega}$ we set

\begin{itemize}
\item $\left(  \left\Vert k\right\Vert ,w\right)  =k,$

\item $\left(  \left\Vert a\right\Vert ,w\right)  =\left\{
\begin{array}
[c]{ll}%
\mathbf{1} & \text{if }a\in w\left(  0\right) \\
\mathbf{0} & \text{otherwise}%
\end{array}
\right.  ,$

\item $\left(  \left\Vert \lnot a\right\Vert ,w\right)  =\left\{
\begin{array}
[c]{ll}%
\mathbf{1} & \text{if }a\notin w\left(  0\right) \\
\mathbf{0} & \text{otherwise}%
\end{array}
\right.  ,$

\item $\left(  \left\Vert \varphi\wedge\psi\right\Vert ,w\right)  =\left(
\left\Vert \varphi\right\Vert ,w\right)  \cdot\left(  \left\Vert
\psi\right\Vert ,w\right)  ,$

\item $\left(  \left\Vert \varphi\vee\psi\right\Vert ,w\right)  =\left(
\left\Vert \varphi\right\Vert ,w\right)  +\left(  \left\Vert \psi\right\Vert
,w\right)  ,$

\item $\left(  \left\Vert \bigcirc\varphi\right\Vert ,w\right)  =\left(
\left\Vert \varphi\right\Vert ,w_{\geq1}\right)  ,$

\item $\left(  \left\Vert \varphi U\psi\right\Vert ,w\right)  $

$=%
%TCIMACRO{\dsum \limits_{i\geq0}}%
%BeginExpansion
{\displaystyle\sum\limits_{i\geq0}}
%EndExpansion
Val^{\omega}\left(  \left(  \left\Vert \varphi\right\Vert ,w_{\geq0}\right)
,\ldots,\left(  \left\Vert \varphi\right\Vert ,w_{\geq i-1}\right)  ,\left(
\left\Vert \psi\right\Vert ,w_{\geq i}\right)  ,\mathbf{1},\mathbf{1}%
,\ldots\right)  ,$

\item $\left(  \left\Vert \square\varphi\right\Vert ,w\right)  =Val^{\omega
}\left(  \left(  \left\Vert \varphi\right\Vert ,w_{\geq i}\right)  \right)
_{i\geq0}.$
\end{itemize}
\end{definition}

\bigskip{} We shall denote by $true$ the formula $\mathbf{1}\in K.$ The
syntactic boolean fragment $bLTL\left(  K,AP\right)  $ of $LTL\left(
K,AP\right)  $ is given by the grammar
\[
\varphi::=\mathbf{0}\mid true\mid a\mid\lnot a\mid\varphi\vee\varphi
\mid\varphi\wedge\varphi\mid\bigcirc\varphi\mid\varphi\mathit{U}\varphi
\mid\square\varphi
\]
where $a\in AP.$ Inductively, we can prove that $\operatorname{Im}\left(
\left\Vert \varphi\right\Vert \right)  \subseteq\left\{  \mathbf{0,1}\right\}  $ for
every $\varphi\in bLTL\left(  K,AP\right)  $ and the semantics of the formulas
of $bLTL\left(  K,AP\right)  $ and the corresponding classical $LTL$ formulas
coincide. Let $\varphi,\psi\in LTL\left(  K,AP\right)  .$ We will call
$\varphi,\psi$ equivalent if $\left(  \left\Vert \varphi\right\Vert ,w\right)
=\left(  \left\Vert \psi\right\Vert ,w\right)  $ for every $w\in\left(
\mathcal{P}\left(  AP\right)  \right)  ^{\omega}.$

\begin{proposition}
\label{LTL equiv}For every $\varphi,\psi\in LTL\left(  K,AP\right)  $ the
following equivalences hold:

\begin{itemize}
\item $\psi\wedge\psi\equiv\psi,$ whenever $\psi$ is boolean

\item $\varphi\wedge\psi\equiv\psi\wedge\varphi,$ whenever $\varphi$ is boolean

\item $\varphi\wedge true\equiv\varphi$

\item $\bigcirc\left(  \varphi\wedge\psi\right)  \equiv\left(  \bigcirc
\varphi\right)  \wedge\left(  \bigcirc\psi\right)  $

\item $\bigcirc\left(  \varphi\vee\psi\right)  \equiv\left(  \bigcirc
\varphi\right)  \vee\left(  \bigcirc\psi\right)  $

\item $\bigcirc\left(  \varphi U\psi\right)  \equiv\left(  \bigcirc
\varphi\right)  U\left(  \bigcirc\psi\right)  $

\item $\bigcirc\left(  \square\varphi\right)  \equiv\square\left(
\bigcirc\varphi\right)  $

\item $\bigcirc k\equiv k$, for all $k \in K$
\end{itemize}
\end{proposition}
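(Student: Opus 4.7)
The plan is to verify each equivalence directly from the semantics, fixing an arbitrary word $w\in(\mathcal{P}(AP))^{\omega}$ and unfolding the definitions on both sides until one can invoke the algebraic properties of the product $\omega$-valuation monoid. For the three equivalences involving boolean formulas, I would first record the standard fact (provable by induction on the syntax of $bLTL(K,AP)$) that if $\chi$ is boolean then $(\Vert\chi\Vert,w)\in\{\mathbf{0},\mathbf{1}\}$ for every $w$. This lets me reduce the claims to the unit/zero laws $\mathbf{0}\cdot k=k\cdot\mathbf{0}=\mathbf{0}$ and $\mathbf{1}\cdot k=k\cdot\mathbf{1}=k$: idempotence of $\wedge$ for boolean $\psi$ is the case split $\mathbf{0}\cdot\mathbf{0}=\mathbf{0}$ vs.\ $\mathbf{1}\cdot\mathbf{1}=\mathbf{1}$; commutativity $\varphi\wedge\psi\equiv\psi\wedge\varphi$ when $\varphi$ is boolean follows because $(\Vert\varphi\Vert,w)\in\{\mathbf{0},\mathbf{1}\}$ and $\mathbf{0},\mathbf{1}$ commute with every $k$ under $\cdot$; and $\varphi\wedge true\equiv\varphi$ is immediate from $(\Vert true\Vert,w)=\mathbf{1}$ together with the right unit law.

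For the distributivity of $\bigcirc$ over $\wedge$, $\vee$, and $U$, I would compute $(\Vert\bigcirc(\varphi\wedge\psi)\Vert,w)=(\Vert\varphi\wedge\psi\Vert,w_{\geq 1})$ and expand; the $\wedge$ and $\vee$ cases reduce to the trivial observation $(\Vert\bigcirc\chi\Vert,w)=(\Vert\chi\Vert,w_{\geq 1})$ applied on each factor. For $\bigcirc(\varphi U\psi)\equiv(\bigcirc\varphi)U(\bigcirc\psi)$ I would write out both sides as $\sum_{i\geq 0}Val^{\omega}(\ldots)$. On the left side one obtains
\[
\sum_{i\geq 0}Val^{\omega}\!\bigl((\Vert\varphi\Vert,w_{\geq 1}),\ldots,(\Vert\varphi\Vert,w_{\geq i}),(\Vert\psi\Vert,w_{\geq i+1}),\mathbf{1},\mathbf{1},\ldots\bigr),
\]
while on the right side, using $(\Vert\bigcirc\chi\Vert,w_{\geq j})=(\Vert\chi\Vert,w_{\geq j+1})$, one obtains the same sum; thus equality is a mere re-indexing.

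For $\bigcirc(\square\varphi)\equiv\square(\bigcirc\varphi)$ I would compute both sides as $Val^{\omega}((\Vert\varphi\Vert,w_{\geq i+1}))_{i\geq 0}$ by unfolding $\square$ and applying the identity $(w_{\geq 1})_{\geq i}=w_{\geq i+1}$; the equality is then literal. For $\bigcirc k\equiv k$, the semantics of a constant is independent of the word, so $(\Vert\bigcirc k\Vert,w)=(\Vert k\Vert,w_{\geq 1})=k=(\Vert k\Vert,w)$.

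I do not expect any real obstacle: every equivalence reduces either to a unit/zero/commutation identity of $\cdot$ restricted to $\{\mathbf{0},\mathbf{1}\}$ or to the position-shifting identity $(w_{\geq 1})_{\geq i}=w_{\geq i+1}$. The mildest subtlety is the $U$-case, where one must align the two expansions of $\sum_{i\geq 0}Val^{\omega}(\cdot)$ and check the padding with $\mathbf{1}$'s matches on both sides; this is a bookkeeping exercise in re-indexing and uses no new algebraic property beyond the definition of the semantics.
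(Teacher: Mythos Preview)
Your proposal is correct and is exactly the routine semantic verification one would expect; the paper itself states Proposition~\ref{LTL equiv} without proof, treating these equivalences as immediate consequences of the semantics and the unit/zero laws of the product $\omega$-valuation monoid. Your case analysis, including the re-indexing in the $U$ case and the suffix identity $(w_{\geq 1})_{\geq i}=w_{\geq i+1}$, is precisely how such a proof would go.
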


As in {\cite{Ma-Co}} we let an $LTL$\textit{-step formula }be an $LTL\left(
K,AP\right)  $-formula of the form $\underset{1\leq i\leq n}{%
%TCIMACRO{\dbigvee }%
%BeginExpansion
{\displaystyle\bigvee}
%EndExpansion
}\left(  k_{i}\wedge\varphi_{i}\right)  $ with $k_{i}\in K$ and $\varphi
_{i}\in bLTL\left(  K,AP\right)  $ for every $1\leq i\leq n.$ We denote by
$stLTL\left(  K,AP\right)  $ the class of all $LTL$-step formulas over $AP$ and
$K.$ We introduce now the syntactic fragment of \textit{restricted}
$U$-\textit{nesting} $LTL$-formulas.

\begin{definition}
The fragment of \textit{restricted} $U$-\textit{nesting} $LTL$-formulas over
$AP$ and $K$, denoted by $RULTL\left(  K,AP\right)  $, is the least class of
formulas in $LTL\left(  K,AP\right)  $ which is defined inductively in the
following way.

\begin{itemize}
\item[$\cdot$] $k\in RULTL\left(  K,AP\right)  $ for every $k\in K.$

\item[$\cdot$] $bLTL\left(  K,AP\right)  \subseteq RULTL\left(  K,AP\right)
.$

\item[$\cdot$] If $\varphi\in RULTL\left(  K,AP\right)  ,$ then $\bigcirc
\varphi\in RULTL\left(  K,AP\right)  .$\

\item[$\cdot$] If $\varphi,\psi\in RULTL\left(  K,AP\right)  ,$ then
$\varphi\vee\psi\in RULTL\left(  K,AP\right)  .$

\item[$\cdot$] If $\varphi\in bLTL\left(  K,AP\right)  $ and $\psi\in
RULTL\left(  K,AP\right)  ,$

then $\varphi\wedge\psi,\psi\wedge\varphi\in RULTL\left(  K,AP\right)  .$

\item[$\cdot$] If $\varphi,\psi\in stLTL\left(  K,AP\right)  ,$ then $\varphi
U\psi\in RULTL\left(  K,AP\right)  .$

\item[$\cdot$] If $\varphi\in stLTL\left(  K,AP\right)  ,$ then $\square
\varphi\in RULTL\left(  K,AP\right)  .$
\end{itemize}
\end{definition}

\begin{remark}
Let $\varphi\in LTL\left(  K,AP\right)  .$ We will say \ that $\varphi$ is of
form A, if it is of the form $\varphi=\underset{1\leq i\leq n}{%
%TCIMACRO{\dbigwedge }%
%BeginExpansion
{\displaystyle\bigwedge}
%EndExpansion
}\varphi_{i}$, $n\geq1,$ where there exist at most one $i\in\left\{
1,\ldots,n\right\}  $ such that $\varphi_{i}\notin bLTL\left(  K,AP\right)  ,$
and for all $i\in\left\{  1,\ldots,n\right\}  ,$ $\varphi_{i}$ is not a finite
conjunction $\lambda_{1}\wedge\ldots\wedge\lambda_{k}$ with $k\geq2.$ We can
prove inductively in the structure of $RULTL\left(  K,AP\right)  $-formulas
that every $\varphi\in RULTL\left(  K,AP\right)  $ is of form A. For
$\varphi=k,$ $\varphi=a,$ $\varphi=\lnot a,\varphi=\bigcirc\varphi^{\prime
},\varphi=\varphi^{\prime}\vee\varphi^{\prime\prime},\varphi=\psi U\xi,$
$\varphi=\square\psi$ where $\varphi^{\prime},\varphi^{\prime\prime}\in
RULTL\left(  K,AP\right)  ,\xi, \psi\in stLTL\left(  K,AP\right)  ,$ we have
$\varphi=\underset{1\leq i\leq1}{%
%TCIMACRO{\dbigwedge }%
%BeginExpansion
{\displaystyle\bigwedge}
%EndExpansion
}\varphi_{i}$ with $\varphi_{1}=\varphi.$ Assume that $\varphi=\psi\wedge\xi$
where $\psi=\underset{1\leq i\leq n}{%
%TCIMACRO{\dbigwedge }%
%BeginExpansion
{\displaystyle\bigwedge}
%EndExpansion
}\psi_{i}\in RULTL\left(  K,AP\right)  ,$ and $\xi=\underset{1\leq j\leq m}{%
%TCIMACRO{\dbigwedge }%
%BeginExpansion
{\displaystyle\bigwedge}
%EndExpansion
}\xi_{j}\in bLTL\left(  K,AP\right)  $ are in form A. Then, $\varphi
=\underset{1\leq i\leq m+n}{%
%TCIMACRO{\dbigwedge }%
%BeginExpansion
{\displaystyle\bigwedge}
%EndExpansion
}\varphi_{i},$ where $\varphi_{i}=\psi_{i},1\leq i\leq n,$ and $\varphi
_{n+j}=\xi_{j},1\leq j\leq m,$ is also in form A, i.e., $\varphi_{i}$ is not a
conjunction $\lambda_{1}\wedge\ldots\wedge\lambda_{k}$ with $k\geq2$ for all
$i\in\left\{  1,\ldots,m+n\right\}  ,$ and since $\xi$ is boolean, there
exists at most an $i\in\left\{  1,\ldots,n+m\right\}  $ with $\varphi
_{i}\notin bLTL\left(  K,AP\right)  .$
\end{remark}

A formula $\varphi\in LTL\left(  K,AP\right)  $ is called \textit{reduced }if
(a) for every subformula of the form $\varphi_{1}\wedge\ldots\wedge\varphi
_{k}$ with $k\geq2$ it holds: $\varphi_{i}\neq true$ for every $1\leq i\leq
k,$ and $\varphi_{i}\neq\varphi_{j}$ whenever $\varphi_{i},\varphi_{j}$ with
$1\leq i<j\leq k$ are boolean and (b) no until operator is in the scope of any
next operator. For every $\varphi\in LTL\left(  K,AP\right)  $ we can
effectively construct an equivalent reduced formula by applying the
equivalences of Proposition \ref{LTL equiv}. We shall denote this formula by
$\varphi_{re}.$

In the sequel, we prove that for every reduced restricted $U$-nesting formula
$\varphi$ there exists an $\varepsilon$-wgBa accepting its semantics.We recall that the value
assigned by $\left\Vert \varphi\right\Vert $ to an infinite word $w$ is
computed by induction on the structure of $\varphi.$ Moreover, in the induction
for the semantics of the $\bigcirc,\square,$ and $U$ operators we compute the
values assigned by the semantics of subformulas of $\varphi$ on suffixes of
$w.$ It is our aim to define $\mathcal{A}_{\varphi}$ in a way that it
simulates the above induction. For this we define the states of the automaton as
sets of formulas, and every non-empty state will contain a maximal (according
to subformulas relation) formula. The weights of the transitions are defined so
that successful paths with non-empty states will simulate the inductive
computation of the semantics of the maximal formula of the first state of the
path. We consider as non-empty initial states of the automaton the ones with
maximal formula $\varphi.$
 We recall from \cite{Wo - Co} that if $\varphi\in LTL\left(  K,AP\right)  ,$ the
closure $cl\left(  \varphi\right)  $ of $\varphi$ is the smallest set $C$ such
that (a) $\varphi\in C,$ (b) if $\psi\wedge\xi\in C,$ or $\psi\vee\xi\in C,$
or $\psi U\xi\in C,$ then $\psi,\xi\in C,$ and (c) if $\bigcirc\psi\in C$ or
$\square\psi\in C,$ then $\psi\in C.$ In fact $cl\left(  \varphi\right)  $
contains $\varphi$ and all its subformulas.

\begin{definition}
\cite{Ma-Co}Let $\varphi\in LTL\left(  K,AP\right)  $. A subset $B$ of
$cl\left(  \varphi\right)  $ will be called $\varphi$-consistent if
$B=\emptyset$, or the following conditions hold.

\begin{itemize}
\item[$\cdot$] For every $a\in AP,$ $a\in B$ implies $\lnot a\notin B,$

\item[$\cdot$] $\varphi\in B,$

\item[$\cdot$] If $\psi\wedge\xi\in B,$ then $\psi,\xi\in B,$

\item[$\cdot$] If $\psi\vee\xi\in B$ or $\psi U\xi\in B,$ then $\psi\in B$ or
$\xi\in B,$

\item[$\cdot$] If $\square\psi\in B,$ then $\psi\in B.$
\end{itemize}
\end{definition}

\begin{example}
Let $%
%TCIMACRO{\U{211d} }%
%BeginExpansion
\mathbb{R}
%EndExpansion
_{\min}=\left(
%TCIMACRO{\U{211d} }%
%BeginExpansion
\mathbb{R}
%EndExpansion
_{+}\cup\left\{  \infty\right\}  ,\min,+,\infty,0\right)  $ be the tropical
semiring. It is well known that the tropical semiring is idempotent totally commutative
complete (see Chapter 5 in \cite{Ma-Co}). Let $AP=\left\{  a,b\right\}  ,$ and $\varphi=a\vee
b\in RULTL\left(
%TCIMACRO{\U{211d} }%
%BeginExpansion
\mathbb{R}
%EndExpansion
_{\min},AP\right)  .$ Then, $\left\{
\emptyset,\left\{  a\vee b,b\right\}  ,\left\{  a\vee b,a\right\}  ,\left\{
a\vee b,a,b\right\}  \right\}  \subset\mathcal{P}\left(  cl\left(
\varphi\right)  \right)  $ is the set of all $\varphi$-consistent sets.
\end{example}

\begin{example}
Let $AP=\left\{  a,b,c\right\}  ,$ and $\varphi=\left(  a\wedge2\right)
\vee\left(  b\wedge3\right)  ,\psi=\varphi U\left(  \bigcirc c\right)  \in
LTL\left(
%TCIMACRO{\U{211d} }%
%BeginExpansion
\mathbb{R}
%EndExpansion
_{\min},AP\right)  .$ Then, $B_{\varphi}=\left\{  \varphi,a\wedge
2,a,2\right\}  $ is a $\varphi$-consistent set, and $B_{\psi}=\left\{
\psi,\varphi,a\wedge2,a,2\right\}  =\left\{  \psi\right\}  \cup B_{\varphi}$
is a $\psi$-consistent set.
\end{example}

Let $\varphi\in LTL\left(  K,AP\right)  $ and $B\neq\emptyset$ be a $\varphi
$-consistent set. Let also $\varphi^{\prime}\in B,$ and $\Delta,\Gamma$ be
$\varphi^{\prime}$-consistent subsets of $B.$ Then, with standard arguments we
get that $\Delta\cup\Gamma$ is also a $\varphi^{\prime}$-consistent subset of
$B.$ This implies that for every $\varphi\in LTL\left(  K,AP\right)  ,$
$\varphi$-consistent set $B,$ and $\psi\in cl\left(  \varphi\right)  ,$ there
exists the greatest (according to subset relation) $\psi$-consistent subset of
$B.$ Keeping the notations of {\cite{Ma-We}}, we denote this set by
$M_{B,\psi}.$\ Clearly, if $\psi\in cl\left(  \varphi\right)  \backslash B,$
then $M_{B,\psi}=\emptyset.$ Moreover, we shall denote a $\varphi$-consistent
set $B$ by $B_{\varphi}.$

\begin{definition}
\cite{Ma-Co}Let $\varphi\in LTL\left(  K,AP\right)  $ and $B_{\varphi}$ be a
$\varphi$-consistent set. The finite set of formulas $next\left(  B_{\varphi
}\right)  \subseteq LTL\left(  K,AP\right)  $ is defined in the following way.
We set $next\left(  \emptyset\right)  =\left\{  \mathbf{0}\right\}  $ and for
$B_{\varphi}\neq\emptyset,$

\begin{itemize}
\item[$\cdot$] if $\varphi=a,$ or $\varphi=\lnot a,$ or $\varphi=k,a\in
AP,k\in K,$ then $next\left(  B_{\varphi}\right)  =\left\{  true\right\}  ,$

\item[$\cdot$] if $\varphi=\psi\wedge\xi,$ then

$next\left(  B_{\varphi}\right)  =\left\{  \psi^{\prime}\wedge\xi^{\prime}%
\mid\psi^{\prime}\in next\left(  M_{B_{\varphi},\psi}\right)  ,\xi^{\prime}\in
next\left(  M_{B_{\varphi},\xi}\right)  \right\}  ,$

\item[$\cdot$] if $\varphi=\psi\vee\xi,$ then $next\left(  B_{\varphi}\right)
=next\left(  M_{B_{\varphi},\psi}\right)  \cup next\left(  M_{B_{\varphi},\xi
}\right)  ,$

\item[$\cdot$] if $\varphi=\bigcirc\psi,$ then $next\left(  B_{\varphi
}\right)  =\left\{  \psi\right\}  ,$

\item[$\cdot$] if $\varphi=\psi U\xi,$ then \\$next\left(  B_{\varphi}\right)
=next\left(  M_{B_{\varphi},\xi}\right)  \cup\left\{  \varphi\wedge
\psi^{\prime}\mid\psi^{\prime}\in next\left(  M_{B_{\varphi},\psi}\right)
\right\}  ,$

\item[$\cdot$] if $\varphi=\square\psi,$ then $next\left(  B_{\varphi}\right)
=\left\{  \varphi\wedge\psi^{\prime}\mid\psi^{\prime}\in next\left(
M_{B_{\varphi},\psi}\right)  \right\}  .$
\end{itemize}
\end{definition}

The elements of $next\left(  B_{\varphi}\right)  $ will be called \textit{next
formulas }of\textit{ }$B_{\varphi}.$ Clearly, every formula in $next\left(
B_{\varphi}\right)  $ is a finite conjunction of the form $\underset{1\leq
i\leq k}{%
%TCIMACRO{\dbigwedge }%
%BeginExpansion
{\displaystyle\bigwedge}
%EndExpansion
}\psi_{i}$ where for every $1\leq i\leq k,\psi_{i}\in cl\left(  \varphi
\right)  $ or $\psi_{i}=\mathbf{0}$ or $\psi_{i}=true.$ Using induction on the
structure of $\varphi$ we can easily derive that for every $\varphi\in
bLTL\left(  K,AP\right)  $ (resp. $\varphi\in stLTL\left(  K,AP\right)  ,$
$\varphi\in RULTL\left(  K,AP\right)  $) and $\varphi$-consistent set
$B_{\varphi}$, $\ $it holds $next\left(  B_{\varphi}\right)  \subseteq
bLTL\left(  K,AP\right)  $ (resp. $next\left(  B_{\varphi}\right)  \subseteq
bLTL\left(  K,AP\right)  ,next\left(  B_{\varphi}\right)  \subseteq
RULTL\left(  K,AP\right)  $). We recall that the value assigned by $\left\Vert
\varphi\right\Vert $ to an infinite word $w$ is computed by induction on the
structure of $\varphi.$ Moreover, in the induction of the semantics of the
$\bigcirc,\square,$ and $U$ operators we compute the values assigned by the
semantics of subformulas of $\varphi$ on suffixes of $w.$ Next formulas of a
$\varphi$-consistent set indicate the formulas whose semantics should assign a
value to $w_{\geq1}$ so that $\left(  \left\Vert \varphi\right\Vert
,w\right)  $ can be effectively computed.

Next, we define inductively for every formula $\varphi\in LTL\left(
K,AP\right)  $ and every $\varphi$-consistent set $B_{\varphi},$ a mapping
$v_{B_{\varphi}}:next\left(  B_{\varphi}\right)  \rightarrow K$ assigning
values from $K$ to next formulas of $B_{\varphi}.$ We let $v_{\emptyset
}\left(  \mathbf{0}\right)  =\mathbf{0}.$ Now, assume that $B_{\varphi}%
\neq\emptyset$.\\
\begin{itemize}
\item For $\varphi=k\in K,$
$v_{\left\{  \varphi\right\}  }\left(  true\right)  =k.$

\item For $\varphi=a$ or $\varphi=\lnot a$ with $a\in AP,$ we set $v_{\left\{
\varphi\right\}  }\left(  true\right)  =\mathbf{1}.$

\item Let $\varphi=\psi\wedge\xi.$ We let $v_{B_{\varphi}}\left(  \psi^{\prime
}\wedge\xi^{\prime}\right)  =v_{M_{B_{\varphi},\psi}}\left(  \psi^{\prime
}\right)  \cdot v_{M_{B_{\varphi},\xi}}\left(  \xi^{\prime}\right)  $ where
$\psi^{\prime}\in next\left(  M_{B_{\varphi},\psi}\right)  ,\xi^{\prime}\in
next\left(  M_{B_{\varphi},\xi}\right)  .$

\item Next, let $\varphi=\psi\vee\xi.$ For every $\varphi^{\prime}\in next\left(
M_{B_{\varphi},\psi}\right)  \cup next\left(  M_{B_{\varphi},\xi}\right)  $ we
let $v_{B_{\varphi}}\left(  \varphi^{\prime}\right)  =v_{M_{B_{\varphi},\psi}%
}\left(  \varphi^{\prime}\right)  +v_{M_{B_{\varphi},\xi}}\left(
\varphi^{\prime}\right)  $ where abusing notations $v_{M_{B_{\varphi},\psi}%
}\left(  \varphi^{\prime}\right)  $ (resp. $v_{M_{B_{\varphi},\xi}}\left(
\varphi^{\prime}\right)  $) will stand for $\mathbf{0}$ whenever
$\varphi^{\prime}\notin next\left(  M_{B_{\varphi},\psi}\right) \\ (
resp.\text{ }\varphi^{\prime}\notin next\left(  M_{B_{\varphi},\xi}\right)
) .$

\item Assume that $\varphi=\bigcirc\psi.$ Then, for the unique element $\psi$ of
$next\left(  B_{\varphi}\right)  $ we set $v_{B_{\varphi}}\left(  \psi\right)
=\mathbf{1}.$

\item For $\varphi=\psi\mathit{U}\xi,$ we let $v_{B_{\varphi}}\left(
\varphi\wedge\psi^{\prime}\right)  =v_{M_{B_{\varphi},\psi}}\left(
\psi^{\prime}\right)  $ where $\psi^{\prime}\in next\left(  M_{B_{\varphi
},\psi}\right)  ,$ and $v_{B_{\varphi}}\left(  \xi^{\prime}\right)
=v_{M_{B_{\varphi},\xi}}\left(  \xi^{\prime}\right)  $ with $\xi^{\prime}\in
next\left(  M_{B_{\varphi},\xi}\right)  .$

\item For $\varphi=\square\psi,$ we set $v_{B_{\varphi}}\left(  \varphi\wedge
\psi^{\prime}\right)  =v_{M_{B_{\varphi},\psi}}\left(  \psi^{\prime}\right)  $
where $\psi^{\prime}\in next\left(  M_{B_{\varphi},\psi}\right)  .$
\end{itemize}

\begin{example} Let $AP=\left\{ a,b,c\right\} $ and $\varphi=\left(a\wedge2\right)\vee\left(b\wedge3\right)\in LTL\left(\mathbb{R}_{min},AP\right)$.
For $B_{\varphi}=\left\{ \varphi,a\wedge2,b\wedge3,a,2,b,3\right\} $
we have $next\left(B_{\varphi}\right)=\left\{ true\wedge true\right\} $
and $v_{B_{\varphi}}\left(true\wedge true\right)$$=0$$.$

 Now, let $\psi=\varphi U\left(\bigcirc c\right).$ For $B_{\psi}=\left\{ \psi,\varphi,a\wedge2,b\wedge3,a,2,b,3,\bigcirc c\right\} $
we have $next\left(B_{\psi}\right)$$=\left\{ \psi\wedge\left(true\wedge true\right),c\right\} .$ Thus,
$v_{B_{\psi}}\left(\psi\wedge\left(true\wedge true\right)\right)$
\emph{$=0$} and $v_{B_{\psi}}\left(c\right)=0.$ \end{example}

In the sequel, we use next formulas of a set $B_{\varphi}$ and the mapping
$v_{B_{\varphi}}$ to define the non-$\varepsilon$-transitions of the desired
automaton and their weights. More precisely, the states of the automaton will
be consistent sets of formulas. We allow non-$\varepsilon$-transitions with
weight$\neq\mathbf{0}$ only from a set $B_{\varphi}$ to a set $B_{\varphi
^{\prime}}$ with $\varphi^{\prime}\in next\left(  B_{\varphi}\right)  ,$ and
the weight of this transition will be equal to $v_{B_{\varphi}}\left(
\varphi^{\prime}\right)  .$ We will also use $\varepsilon$-transitions with
weight=$\mathbf{1}$ to move from a $\varphi$-consistent set to a $\varphi
_{re}$-consistent set, i.e., through the $\varepsilon$-transitions we will
reduce formulas by erasing from conjunctions multiple copies of identical
boolean formulas and the formula true. This reduction ensures that the state
set of the automaton is finite. It is also crucial that we reduce formulas
only with $\varepsilon$-transitions, since otherwise the reduction should
change the structure of a formula, and thus the computation of $v_{B_{\varphi
}}$ would not be well defined.

Let $\varphi ,\psi \in LTL\left( K,AP\right) $ and $B_{\varphi },B_{\psi },$
$\varphi $-consistent, $\psi $-consistent sets respectively. We say that $%
B_{\psi }$ is reachable by $B_{\varphi },$ if there exists a sequence $%
\varphi _{0},\ldots ,\varphi _{j}$ $\in LTL\left( K,AP\right) ,$ and $%
B_{\varphi _{0}},\ldots ,B_{\varphi _{j}},$ $\varphi _{0}$-consistent,\ldots
,$\varphi _{j}$-consistent set respectively, such that $\left( i\right) $ $%
\varphi _{0}=\varphi ,\ldots ,\varphi _{j}=\psi ,$ and $\left( ii\right) $
for every $0\leq l\leq j-1$ if $\varphi _{l}$ is reduced, then $\varphi
_{l+1}\in next\left( B_{\varphi _{l}}\right) ,$ otherwise $\varphi
_{l+1}=\left( \varphi _{l}\right) _{re}.$
 Observe that, since
$\varphi_{0}$ is reduced, the formulas $\varphi_{l}\left(  1\leq l\leq
j\right)  $ satisfy condition (b) in the definition of reduced
formulas. This implies that reduction, whenever it is applied, reduces only
conjunction. Let $reach\left(  B_{\varphi_{0}}\right)  $ contain all sets of formulas
reachable by $B_{\varphi_{0}}$. As the following remark shows $reach\left(  B_{\varphi_{0}}\right)$ is not finite in general.

\begin{remark}\cite{Ma-Co}
We let $AP=\left\{  a,b,c\right\}  ,$ $\varphi=\square\left(  \square\left(
a\wedge2\right)  \right)  \in LTL\left(
%TCIMACRO{\U{211d} }%
%BeginExpansion
\mathbb{R}
%EndExpansion
_{\min},AP\right)  ,$ and $B_{\varphi}=\left\{  \varphi,\square\left(
a\wedge2\right)  ,a\wedge2,a,2\right\}  .$ Then, for every $j\geq1,$ every
consistent set of the formula $\varphi\wedge\left(  \underset{1\leq i\leq j}{%
%TCIMACRO{\dbigwedge }%
%BeginExpansion
{\displaystyle\bigwedge}
%EndExpansion
}\psi_{i}\right)  $ with $\psi_{i}=\square\left(  a\wedge2\right)  \left(
1\leq i\leq j\right)  $ belongs to the set $reach\left(  B_{\varphi}\right)
,$ and hence $reach\left(  B_{\varphi}\right)  $ is not finite.

Let now $\varphi=\left(  \left(  a\wedge2\right)  Uc\right)  Ud$ and $B_{\varphi}=\left\{  \varphi,\left(  a\wedge
2\right)  Uc,a\wedge2,a,2\right\}  .$ Then, for every $j\geq1,$ every
consistent set of the formula $\varphi\wedge\left(  \underset{1\leq i\leq j}{%
%TCIMACRO{\dbigwedge }%
%BeginExpansion
{\displaystyle\bigwedge}
%EndExpansion
}\psi_{i}\right)  $ with $\psi_{i}=\left(  a\wedge2\right)  Uc\left(  1\leq
i\leq j\right)  $ belongs to the set $reach\left(  B_{\varphi}\right)  ,$ and
hence $reach\left(  B_{\varphi}\right)  $ is not finite.
\end{remark}
However, the situation is different, if we consider formulas from $RULTL(K,AP)$.
\begin{lemma}
Let $\varphi\in RULTL\left(  K,AP\right)  $ be reduced and $B_{\varphi}$ be a
$\varphi$-consistent set. Then, $reach\left(  B_{\varphi}\right)  $ is finite and effectively computable.
\end{lemma}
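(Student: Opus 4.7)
The plan is to prove the lemma by structural induction on $\varphi\in RULTL(K,AP)$, exhibiting a finite set $\mathcal{F}(\varphi)\subseteq RULTL(K,AP)$ of reduced formulas such that every consistent set in $reach(B_\varphi)$ has the form $B_{\varphi'}$ for some $\varphi'\in\mathcal{F}(\varphi)$. Since $cl(\varphi')$ is finite for each such $\varphi'$ and every $\varphi'$-consistent set is a subset of $cl(\varphi')$, this will establish the finiteness of $reach(B_\varphi)$.

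For the base cases ($\varphi\in\{k,a,\lnot a\}$ and, more generally, $\varphi\in bLTL(K,AP)$) the set $next(B_\varphi)$ is contained in a finite pool drawn from $cl(\varphi)\cap bLTL(K,AP)$ together with $\{true,\mathbf{0}\}$, and the classical finite-state construction for boolean LTL yields a finite $\mathcal{F}(\varphi)$. For $\varphi=\bigcirc\psi$ and $\varphi=\psi\vee\xi$ the definition of $next$ reduces the computation directly to $next$ applied to $\psi$-consistent or $\xi$-consistent sets, so the induction hypothesis applies. For $\varphi=\psi\wedge\xi$ with $\psi\in bLTL(K,AP)$ and $\xi\in RULTL(K,AP)$, the boolean parts of successive next-formulas lie in the finite pool of boolean subformulas of $\varphi$, while the non-boolean part descends into $\mathcal{F}(\xi)$; combined with the Remark that every $RULTL$ formula is in form A and with the reduction $(\cdot)_{re}$ that removes duplicate boolean conjuncts and $true$, this yields a finite $\mathcal{F}(\varphi)$.

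The main obstacle is the handling of $\varphi=\psi U\xi$ and $\varphi=\square\psi$, both with $\psi,\xi\in stLTL(K,AP)$. In these cases $next(B_\varphi)$ introduces formulas of the shape $\varphi\wedge\psi'$, and naive iteration threatens conjunctions of unbounded length. This is tamed by two observations: first, the step-formula restriction forces every such $\psi'$ to be boolean and to lie in the finite set $\mathcal{B}$ of boolean next-formulas of maximal consistent subsets derived from $cl(\psi)\cup cl(\xi)$; second, after applying $(\cdot)_{re}$, every reachable reduced formula has the canonical shape $\varphi\wedge\bigwedge_{b\in S}b$ for some $S\subseteq\mathcal{B}$. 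Since $\mathcal{P}(\mathcal{B})$ is finite, only finitely many such reduced formulas occur, so $\mathcal{F}(\varphi)$ is finite.

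Combining the above induction with the fact that each $cl(\varphi')$ is finite, so each $\varphi'\in\mathcal{F}(\varphi)$ contributes only finitely many consistent subsets to $reach(B_\varphi)$, will give the conclusion. The argument parallels that of Lemma 94 in \cite{Ma-Co}, and the verification reduces to checking each inductive case along the lines sketched above, the $U$-nesting restriction being precisely what makes the argument go through in this weaker algebraic setting.
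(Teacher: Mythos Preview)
Your proof plan is correct and follows essentially the same route as the paper, which simply defers to Lemma~94 in \cite{Ma-Co}; your sketch of the structural induction, with the key observation that the $stLTL$ restriction on $U$ and $\square$ forces all accumulated conjuncts to be boolean (hence collapsible by $(\cdot)_{re}$ into subsets of a fixed finite pool), is exactly the mechanism that makes that argument work. One small imprecision: in the $U$ case, once the $\xi$-branch is taken the reachable reduced formulas no longer contain $\varphi$ and become purely boolean, so your canonical shape ``$\varphi\wedge\bigwedge_{b\in S}b$'' should be supplemented by the boolean-only branch; this is covered by your base case and does not affect finiteness.
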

The previous lemma is proved with the same arguments as in Lemma
94 in \cite{Ma-Co}.

\begin{definition}
Let $\varphi,\psi\in RULTL\left(  K,AP\right)  .$ For every $\pi\in
\mathcal{P}\left(  AP\right)  $ the triple $\left(  B_{\varphi},\pi,B_{\psi
}\right)  $ is called a \textit{next transition }if the following conditions hold.

\begin{itemize}
\item[$\cdot$] For every $a\in AP,$

$a\in B_{\varphi}$ implies $a\in\pi,$ and $\lnot a\in B_{\varphi}$ implies
$a\notin\pi,$

\item[$\cdot$] $\varphi$ is reduced and $\psi\in next\left(  B_{\varphi
}\right)  .$
\end{itemize}
\end{definition}

\begin{definition}
Let $\varphi\in RULTL\left(  K,AP\right)  .$ Then, for every $B_{\varphi
},B_{\varphi_{re}}$ with $B_{\varphi}\neq\emptyset$ and $B_{\varphi_{re}}%
\neq\emptyset$ the triple $\left(  B_{\varphi},\varepsilon,B_{\varphi_{re}%
}\right)  $ is called an $\varepsilon$-reduction transition.
\end{definition}

Sometimes in the sequel an $\varepsilon$-reduction transition will be called
for simplicity an $\varepsilon$-transition. Next, for every reduced
$RULTL\left(  K,AP\right)  $-formula $\varphi$ we construct an $\varepsilon
$-wgBa $\mathcal{A_{\varphi}}$ and show that $\varphi$ and
$\mathcal{A_{\varphi}}$ are expressively equivalent.

\begin{definition}
Let $\varphi\in RULTL\left(  K,AP\right)  $ be reduced. We define the
$\varepsilon$-wgBa $\mathcal{A_{\varphi}=}\left(  Q,wt,I,\mathcal{F}\right)  $
over $\mathcal{P}\left(  AP\right)  $ and $K$ as follows. We set

\begin{itemize}
\item $Q=\underset{B_{\varphi}}{\bigcup}\left(  \left\{  B_{\varphi}\right\}
\cup reach\left(  B_{\varphi}\right)  \right)  $,

\item $I=\left\{  B_{\varphi}\mid B_{\varphi}:\varphi\text{-consistent
set}\right\}  ,$

\item $wt\left(    B_{\psi},b,B_{\xi}  \right) \\ =%
\begin{cases}%
\begin{array}{ll}
v_{B_{\psi}}\left(  \xi\right)  &
\begin{array}{l}
\text{\text{if }}\left(  B_{\psi},b,B_{\xi}\right)  \text{\text{ \text{is a
next transition }}}\\
\text{\text{\text{and }}}B_{\xi}\neq\emptyset\text{ }%
\end{array}
\\
\mathbf{1} & \text{ if }\left(  B_{\psi},b,B_{\xi}\right)  \text{ }\text{is an
}\varepsilon\text{-reduction transition}\\
\mathbf{0} & \text{ otherwise }%
\end{array}
\end{cases}
$

for every $\left(  B_{\psi},b,B_{\xi}\right)  \in Q\times\left(
\mathcal{P}\left(  AP\right)  \mathcal{\cup\left\{  \varepsilon\right\}
}\right)  \times Q$, and

\item $\mathcal{F=}\left\{  F_{\varphi^{\prime}U\varphi^{\prime\prime}}%
\mid\varphi^{\prime}U\varphi^{\prime\prime}\in cl\left(  \varphi\right)
\right\}  $ $\text{where }$

$F_{\varphi^{\prime}U\varphi^{\prime\prime}}=\left\{  B_{\overline{\varphi}%
}\in Q\mid%
\begin{array}
[c]{c}%
B_{\overline{\varphi}}\neq\emptyset,\overline{\varphi}=\underset{1\leq i\leq
k}{\bigwedge}\varphi_{i}\text{ is of form A with }\\
\varphi_{i}\neq\varphi^{\prime}U\varphi^{\prime\prime},1\leq i\leq k
\end{array}
\right\}  $

for every $\varphi^{\prime}U\varphi^{\prime\prime}\in cl\left(  \varphi
\right)  .$
\end{itemize}
\end{definition}

Observe that for every $\varphi^{\prime}U\varphi^{\prime\prime}\in cl\left(
\varphi\right)  ,$ and every non-empty $B_{\psi},B_{\psi_{re}}\in Q,$ the
relation $B_{\psi}\in F_{\varphi^{\prime}U\varphi^{\prime\prime}}$ implies
that $B_{\psi_{re}}\in F_{\varphi^{\prime}U\varphi^{\prime\prime}},$ and
vice-versa. Thus, the $\varepsilon$-transitions of the automaton are well
defined. We note that if $\varphi$ contains no $U$ operators, then we have no
acceptance conditions, which means that all infinite paths that start with a
$\varphi$-consistent set are successful. Now, let $w\in\left(  \mathcal{P}%
\left(  AP\right)  \right)  ^{\omega}$ and $P_{w}=t_{0}t_{1}t_{2}\ldots$ be a
successful path of $\mathcal{A}_{\varphi}$ over $w.$ If there is an $i\geq0,$
such that $t_{i}$ is not a next transition or an $\varepsilon$-reduction
transition, then $weight_{\mathcal{A}_{\varphi}}\left(  P_{w}\right)
=\mathbf{0}$. We shall denote by $next_{\mathcal{A}_{\varphi}}\left(
w\right)  $ the set of all successful paths of $\mathcal{A}_{\varphi}$ over
$w$ composed of next and $\varepsilon$-transitions only. For the rest of this
section we let $pri_{\mathcal{A}_{\varphi}}\left(  w\right)  =\left\{
weight_{\mathcal{A}_{\varphi}}\left(  P_{w}\right)  \mid P_{w}\in
next_{\mathcal{A}_{\varphi}}\left(  w\right)  \right\}  .$

\begin{remark}
\label{Remark}Let $\varphi\in RULTL\left(  K,AP\right)  $ be reduced and
$\varphi_{1}U\varphi_{2}\in cl\left(  \varphi\right)  $ such that $\varphi
_{1}U\varphi_{2}$ does not appear in the scope of an always operator in
$\varphi.$ Then, for every $w\in\left(  \mathcal{P}\left(  AP\right)  \right)
^{\omega}$ and $P_{w}\in next_{\mathcal{A}_{\varphi}}\left(  w\right)  ,$
there is a state in $P_{w}$ such that all the subsequent states are in
$F_{\varphi_{1}U\varphi_{2}}.$ More precisely, for every $\varphi_{1}%
U\varphi_{2}\in cl\left(  \varphi\right)  ,$ if there is a next
transition in the path where a next formula of the maximal $\varphi
_{1}U\varphi_{2}$-consistent subset of the beginning state (of the transition) appears as a part
of the conjunction defining the maximal formula of the arriving state, then, after
a finite number of next transitions and since $P_{w}$ is successful, there is
a next transition where the next formula of the maximal $\varphi_{1}%
U\varphi_{2}$-consistent maximal subset of the beginning state of the transition is a next formula of the maximal $\varphi_{2}%
$-subset of the beginning state, i.e., a formula that is a conjunction not
containing $\varphi_{1}U\varphi_{2}.$ Since this holds for every appearance of
$\varphi_{1}U\varphi_{2},$ we conclude that there is a state in $P_{w}$ such
that all the subsequent states are in $F_{\varphi_{1}U\varphi_{2}}.$
\end{remark}

Next, we prove by induction on the structure of a reduced $RULTL\left(
K,AP\right)  $-formula $\varphi$ that $\mathcal{A}_{\varphi}$ accepts
$\left\Vert \varphi\right\Vert .$

\begin{lemma}
\label{Lemma atomic}Let $\varphi=a,\varphi=\lnot a,$ or $\varphi=k,$ with
$a\in AP,$ $k\in K.$ Then, $\left\Vert \mathcal{A}_{\varphi}\right\Vert
=\left\Vert \varphi\right\Vert .$
\end{lemma}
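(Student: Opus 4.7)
The plan is to handle the three atomic cases separately since each gives a transparent automaton, and then in every case to exhibit a bijection (modulo zero weights) between successful paths of $\mathcal{A}_\varphi$ and the single relevant computation of $\|\varphi\|$ on $w$. Note first that for each of $\varphi = a$, $\varphi = \lnot a$, $\varphi = k$, the closure $cl(\varphi)$ contains no until subformula, so $\mathcal{F} = \emptyset$ and every infinite path starting at an initial state is \emph{successful}. This eliminates the acceptance condition from the analysis and leaves only the weight computation to track.

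For the case $\varphi = a$ (the case $\varphi = \lnot a$ is identical with the obvious sign change), the non-empty $a$-consistent subset of $cl(a) = \{a\}$ is $\{a\}$ itself, and $next(\{a\}) = \{true\}$ with $v_{\{a\}}(true) = \mathbf{1}$. By the definition of a next transition, $(\{a\}, \pi, B_{true})$ exists iff $a \in \pi$, so a successful path from $\{a\}$ on $w$ exists iff $a \in w(0)$. All subsequent next transitions have source a $true$-consistent set with weight $\mathbf{1}$, and any $\varepsilon$-reduction transitions also carry weight $\mathbf{1}$; hence $Val^\omega$ applied to the weight sequence yields $Val^\omega(\mathbf{1},\mathbf{1},\ldots) = \mathbf{1}$ by the assumption $Val^\omega(\mathbf{1}^\omega) = \mathbf{1}$ together with Property 2 to absorb $\varepsilon$-steps. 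Paths originating at $\emptyset$ only use the default next formula $\mathbf{0}$ with weight $v_\emptyset(\mathbf{0}) = \mathbf{0}$, so they contribute $\mathbf{0}$ to the behavior. Summing via idempotency and Lemma~\ref{Properties Sum}(ii) gives $(\|\mathcal{A}_a\|, w) = \mathbf{1}$ when $a \in w(0)$ and $\mathbf{0}$ otherwise, matching $(\|a\|, w)$.

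For $\varphi = k$, the only non-empty $k$-consistent set is $\{k\}$, and $next(\{k\}) = \{true\}$ with $v_{\{k\}}(true) = k$. From $\{k\}$, every symbol $\pi$ yields a next transition to a $true$-consistent state with weight $k$, and from there all further next transitions have weight $\mathbf{1}$ (as in the previous case). Thus every non-zero successful path starting from $\{k\}$ has weight $Val^\omega(k, \mathbf{1}, \mathbf{1}, \ldots) = k$ by Property 3. Paths starting at $\emptyset$ again contribute $\mathbf{0}$. Using idempotency and Lemma~\ref{Properties Sum}(ii) to collapse the sum over the (at most continuum-many) copies of $k$ coming from different successful paths, we conclude $(\|\mathcal{A}_k\|, w) = k = (\|k\|, w)$.

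The main obstacle, though minor, is bookkeeping for the empty initial state and the $\varepsilon$-reduction steps: one must check that $\emptyset$-paths can only contribute $\mathbf{0}$, and that interleaving $\varepsilon$-transitions with next transitions does not change the weight of a path, relying on Property 2 which removes leading $\mathbf{1}$'s inside $Val^\omega$. Once these two observations are in place, the computation reduces to the straightforward application of $Val^\omega(\mathbf{1}^\omega) = \mathbf{1}$, Property 3, and idempotency.
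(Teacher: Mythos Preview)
Your proof is correct and follows essentially the same approach as the paper's: observing that $\mathcal{F}=\emptyset$, tracing the unique non-trivial path from the nonempty initial state and computing its weight via Property~3 (respectively $Val^\omega(\mathbf{1}^\omega)=\mathbf{1}$ for $a,\lnot a$), and dismissing the $\emptyset$-initiated paths as contributing $\mathbf{0}$. One small remark: you invoke Property~2 to ``absorb $\varepsilon$-steps,'' but by the definition of $weight_{\mathcal{A}}(P_w)$ the $\varepsilon$-transitions are already omitted from the sequence passed to $Val^\omega$ (only the letter-consuming transitions appear there), so Property~2 is not actually needed for that purpose.
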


\begin{proof}
Let $\varphi=k\in K\backslash\left\{  \mathbf{0},\mathbf{1}\right\}  .$ Then,
the automaton $\mathcal{A}_{\varphi}=\left(  Q,wt,I,\mathcal{F}\right)  $ is
defined in the following way.

\begin{itemize}
\item $Q=\left\{  \emptyset,\left\{  k\right\}  ,\left\{  true\right\}
,\left\{  \mathbf{0}\right\}  \right\}  $

\item $wt\left(    q,b,q^{\prime} \right)  =\left\{
\begin{array}
[c]{ll}%
k & \text{\ if }q=\left\{  k\right\}  ,q^{\prime}=\left\{  true\right\}
,\text{ and }b\in\mathcal{P}\left(  AP\right) \\
\mathbf{1} &
\begin{array}
[c]{l}%
\text{if }q=q^{\prime}\text{, }q\neq\emptyset,\text{ and }b=\varepsilon,\text{
or}\\
\text{if }q=q^{\prime}=true,\text{ and }b\in\mathcal{P}\left(  AP\right)
\end{array}
\\
\mathbf{0} & \text{ \ otherwise}%
\end{array}
\right.  $

\item $I=\left\{  \left\{  k\right\}  ,\emptyset\right\}  $
\end{itemize}

The automaton contains no final subsets, i.e., $\mathcal{F=\emptyset}$. Let
$P_{w}\in next_{\mathcal{A}_{\varphi}}\left(  w\right)  $. Then, the next
transitions appearing in the path either form the sequence
\[
\left(  \left\{  k\right\}  ,w\left(  0\right)  ,\left\{  true\right\}
\right)  \left(  \left(  \left\{  true\right\}  ,w\left(  i\right)  ,\left\{
true\right\}  \right)  \right)  _{i\geq1},
\]
or the sequence%
\[
\left(  \emptyset,w\left(  0\right)  ,\left\{  \mathbf{0}\right\}  \right)
\left(  \left\{  \mathbf{0}\right\}  ,w\left(  1\right)  ,\left\{
true\right\}  \right)  \left(  \left(  \left\{  true\right\}  ,w\left(
i\right)  ,\left\{  true\right\}  \right)  \right)  _{i\geq2}.
\]
In the first case, using Property \ref{Property 3} we get $weight_{\mathcal{A}%
_{\varphi}}\left(  P_{w}\right)  =Val^{\omega}\left(  k,\mathbf{1}%
,\mathbf{1},\mathbf{1},\ldots\right)  \\=k,$ and in the latter case
$weight_{\mathcal{A}_{\varphi}}\left(  P_{w}\right)  =Val^{\omega}\left(
\mathbf{0},\mathbf{0},\mathbf{1},\mathbf{1},\mathbf{1},\ldots\right)
=\mathbf{0.}$ Hence, we get for every $w\in\left(  \mathcal{P}\left(
AP\right)  \right)  ^{\omega}$%
\[
\left(  \left\Vert \mathcal{A}_{\varphi}\right\Vert ,w\right)  =\underset
{k^{\prime}\in pri_{\mathcal{A}_{\varphi}}\left(  w\right)  }{%
%TCIMACRO{\dsum }%
%BeginExpansion
{\displaystyle\sum}
%EndExpansion
}k^{\prime}=k=\left(  \left\Vert k\right\Vert ,w\right)
\]
as wanted. The lemma's claim for $\varphi=a,\varphi=\lnot a,$ $\varphi
=\mathbf{0,}$ $\varphi=\mathbf{1}$ can be proved with similar arguments.
\end{proof}

\begin{lemma}
\label{Lemma disjuction}Let $\psi,\xi\in RULTL\left(  K,AP\right)  $ and
$\varphi=\psi\vee\xi.$ If $\mathcal{A}_{\psi},\mathcal{A}_{\xi}$ recognize
$\left\Vert \psi\right\Vert ,\left\Vert \xi\right\Vert $ respectively, then
$\mathcal{A}_{\varphi}$ recognizes $\left\Vert \varphi\right\Vert .$
\end{lemma}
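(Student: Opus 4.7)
The plan is to show that $(\|\mathcal{A}_\varphi\|, w) = (\|\mathcal{A}_\psi\|, w) + (\|\mathcal{A}_\xi\|, w)$ for every $w \in (\mathcal{P}(AP))^\omega$, which by the inductive hypothesis and the semantics of disjunction equals $(\|\psi\|, w) + (\|\xi\|, w) = (\|\varphi\|, w)$. The crucial observation is intrinsic to the construction: for a $\varphi$-consistent set $B_\varphi$ with $\varphi = \psi \vee \xi$, the definition of next formulas gives $next(B_\varphi) = next(M_{B_\varphi, \psi}) \cup next(M_{B_\varphi, \xi})$, and the valuation decomposes as $v_{B_\varphi}(\varphi') = v_{M_{B_\varphi, \psi}}(\varphi') + v_{M_{B_\varphi, \xi}}(\varphi')$, with the convention that each summand is $\mathbf{0}$ when $\varphi'$ lies outside the corresponding $next$ set.

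Next I would set up the path-level correspondence. Fix $w$ and a successful path $P_w$ of $\mathcal{A}_\varphi$ over $w$ with non-zero weight. Since $\varphi$ is already reduced, the initial $\varepsilon$-reductions from $B_\varphi$ preserve the maximal formula $\varphi$, and the first next transition lands in some $B_{\varphi_1}$ with $\varphi_1 \in next(B_\varphi)$. From $B_{\varphi_1}$ onward, the tail of $P_w$ is precisely a successful path fragment in $\mathcal{A}_\psi$ when prepended with the transition $M_{B_\varphi,\psi} \to B_{\varphi_1}$ (because $B_{\varphi_1} \in reach(M_{B_\varphi,\psi})$ whenever $\varphi_1 \in next(M_{B_\varphi,\psi})$, and $M_{B_\varphi,\psi}$ is a $\psi$-consistent set, hence an initial state of $\mathcal{A}_\psi$); symmetrically for $\mathcal{A}_\xi$. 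Writing $weight_{\mathcal{A}_\varphi}(P_w) = Val^\omega(v_{B_\varphi}(\varphi_1), k_1, k_2, \ldots)$ where $k_j$ are the weights of the subsequent next transitions, I apply Property \ref{Sum-valuation} with $I_0 = \{\psi, \xi\}$ and $I_j = \{*\}$ for $j \geq 1$ to obtain
\[
Val^\omega(v_{B_\varphi}(\varphi_1), k_1, k_2, \ldots) = Val^\omega(v_{M_{B_\varphi,\psi}}(\varphi_1), k_1, \ldots) + Val^\omega(v_{M_{B_\varphi,\xi}}(\varphi_1), k_1, \ldots),
\]
each summand being exactly the weight of the matched path in $\mathcal{A}_\psi$ or $\mathcal{A}_\xi$ (with a $\mathbf{0}$-valued path when the corresponding $v$ is zero). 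Conversely, every successful path of $\mathcal{A}_\psi$ starting at $B_\psi$ lifts to a path of $\mathcal{A}_\varphi$ by starting at $B_\varphi = B_\psi \cup \{\varphi\}$, which is $\varphi$-consistent and satisfies $M_{B_\varphi,\psi} = B_\psi$. Summing over all successful paths and invoking the completeness axioms together with Lemma \ref{Properties Sum} to regroup and absorb $\mathbf{0}$ terms yields $(\|\mathcal{A}_\varphi\|, w) = (\|\mathcal{A}_\psi\|, w) + (\|\mathcal{A}_\xi\|, w)$.

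The main obstacle will be reconciling acceptance conditions. The set $\mathcal{F}$ of $\mathcal{A}_\varphi$ is indexed by $U$-subformulas of $cl(\varphi) = \{\psi \vee \xi\} \cup cl(\psi) \cup cl(\xi)$, while $\mathcal{A}_\psi$ has final sets only for $U$-subformulas in $cl(\psi)$. For a path of $\mathcal{A}_\varphi$ whose first next transition chooses $\varphi_1 \in next(M_{B_\varphi,\psi})$, every subsequent reachable state is a consistent set whose maximal formula belongs to $cl(\psi)$; thus for any $\chi U \chi' \in cl(\xi) \setminus cl(\psi)$, the conjunction form A of the state's maximal formula never contains $\chi U \chi'$ as a conjunct, so all such states lie in $F_{\chi U \chi'}$ automatically, reducing the acceptance condition in $\mathcal{A}_\varphi$ for this path to the acceptance condition in $\mathcal{A}_\psi$. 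A minor but related issue, potential double counting when $\varphi_1 \in next(M_{B_\varphi,\psi}) \cap next(M_{B_\varphi,\xi})$ or when multiple distinct $B_\varphi$ share the same $M_{B_\varphi,\psi}$, is dissolved by idempotency of $+$ together with the additive decomposition of $v_{B_\varphi}$ above, which ensures the bookkeeping of the final sum remains correct.
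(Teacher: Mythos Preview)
Your proposal is correct and follows essentially the same approach as the paper: both decompose the first next-transition weight via $v_{B_\varphi}(\varphi_1) = v_{M_{B_\varphi,\psi}}(\varphi_1) + v_{M_{B_\varphi,\xi}}(\varphi_1)$ and Property~\ref{Sum-valuation}, then match paths of $\mathcal{A}_\varphi$ with paths of $\mathcal{A}_\psi$ and $\mathcal{A}_\xi$ in both directions, handling acceptance conditions exactly as you describe. The only cosmetic difference is that the paper packages the argument as two inequalities (invoking Lemmas~\ref{Properties Sum}(iv), \ref{Lemma inequality sum}, and \ref{Valuation inequality} explicitly), whereas you argue equality directly via surjectivity of the lift/decomposition correspondence together with idempotency; this lets you bypass Lemma~\ref{Valuation inequality}, but the underlying mechanics are identical.
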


\begin{proof}
Let $w=\pi_{0}\pi_{1}\pi_{2}\ldots\in\left(  \mathcal{P}\left(  AP\right)
\right)  ^{\omega}$ and $\mathcal{A}_{\psi}=\left(  Q_{1},wt_{1}%
,I_{1},\mathcal{F}_{1}\right)  ,\mathcal{A}_{\xi}=\left(  Q_{2},wt_{2}%
,I_{2},\mathcal{F}_{2}\right)  $ and $\mathcal{A}_{\varphi}=\left(
Q,wt,I,\mathcal{F}\right)  $. First, we prove that
\[
\underset{k\in pri_{\mathcal{A}_{\varphi}}\left(  w\right)  }{%
%TCIMACRO{\dsum }%
%BeginExpansion
{\displaystyle\sum}
%EndExpansion
}k\leq\underset{k\in pri_{\mathcal{A}_{\psi}}\left(  w\right)  }{%
%TCIMACRO{\dsum }%
%BeginExpansion
{\displaystyle\sum}
%EndExpansion
}k+\underset{k\in pri_{\mathcal{A}_{\xi}}\left(  w\right)  }{%
%TCIMACRO{\dsum }%
%BeginExpansion
{\displaystyle\sum}
%EndExpansion
}k.
\]
To this end we show that for every path $P_{w}\in next_{\mathcal{A}_{\varphi}%
}\left(  w\right)  $ there exist paths $P_{w}^{1}\in next_{\mathcal{A}_{\psi}%
}\left(  w\right)  ,P_{w}^{2}\in next_{\mathcal{A}_{\xi}}\left(  w\right)  $
with
\[
weight_{\mathcal{A}_{\varphi}}\left(  P_{w}\right)  =weight_{\mathcal{A}%
_{\psi}}\left(  P_{w}^{1}\right)  +weight_{\mathcal{A}_{\xi}}\left(  P_{w}%
^{2}\right)  .
\]
We let
\[
P_{w}:B_{\varphi}\overset{\ast}{\rightarrow}B_{\varphi}^{\prime}\overset
{\pi_{0}}{\rightarrow}B_{\varphi^{1}}\overset{\ast}{\rightarrow}%
B_{\varphi_{re}^{1}}\overset{\pi_{1}}{\rightarrow}B_{\varphi_{2}}\overset
{\ast}{\rightarrow}B_{\varphi_{re}^{2}}\ldots
\]
be a path in $next_{\mathcal{A}_{\varphi}}\left(  w\right)  $ with
$weight_{\mathcal{A}_{\varphi}}\left(  P_{w}\right)  \neq\mathbf{0}.$ This
implies that
\[
wt\left(    B_{\varphi}^{\prime},\pi_{0},B_{\varphi^{1}}
\right)  \neq\mathbf{0}\text{ and }wt\left(    B_{\varphi_{re}^{i}}%
,\pi_{i},B_{\varphi^{i+1}}  \right)  \neq\mathbf{0}%
\]
for every $i\geqslant1$. Then, by definition\\\\$\left(  a\right)  $ $\varphi^{1}\in next\left(  M_{B_{\varphi}^{\prime},\psi
}\right)  \backslash next\left(  M_{B_{\varphi}^{\prime},\xi}\right)  $ and
$wt\left(   B_{\varphi}^{\prime},\pi_{0},B_{\varphi^{1}}
\right)  =v_{M_{B_{\varphi}^{\prime},\psi}}\left(  \varphi^{1}\right)  ,$\\ \\or\\\\$\left(  b\right)  $ $\varphi^{1}\in next\left(  M_{B_{\varphi}^{\prime},\xi
}\right)  \backslash next\left(  M_{B_{\varphi}^{\prime},\psi}\right)  $ and
$wt\left(    B_{\varphi}^{\prime},\pi_{0},B_{\varphi^{1}}
\right)  =v_{M_{B_{\varphi}^{\prime},\xi}}\left(  \varphi^{1}\right)  ,$ \\\\or\\\\$\left(  c\right)  $ $\varphi^{1}\in next\left(  M_{B_{\varphi}^{\prime},\psi
}\right)
%TCIMACRO{\dbigcap }%
%BeginExpansion
{\displaystyle\bigcap}
%EndExpansion
  next\left(  M_{B_{\varphi}^{\prime},\xi}\right)  $ and
\[
wt\left(    B_{\varphi}^{\prime},\pi_{0},B_{\varphi^{1}}
\right)  =v_{M_{B_{\varphi}^{\prime},\psi}\left(  \varphi^{1}\right)
}+v_{M_{B_{\varphi}^{\prime},\xi}\left(  \varphi^{1}\right)  }.
\]
If $\left(  a\right)  $ holds, then the path
\[
P_{w}^{1}:M_{B_{\varphi}^{\prime},\psi}\overset{\pi_{0}}{\rightarrow
}B_{\varphi^{1}}\overset{\ast}{\rightarrow}B_{\varphi_{re}^{1}}\overset
{\pi_{1}}{\rightarrow}B_{\varphi^{2}}\overset{\ast}{\rightarrow}%
B_{\varphi_{re}^{2}}\ldots
\]
of $\mathcal{A}_{\psi}$ over $w$ is successful, hence $P_{w}^{1}\in
next_{\mathcal{A}_{\psi}}\left(  w\right)  $ and $weight_{\mathcal{A}%
_{\varphi}}\left(  P_{w}\right)  =weight_{\mathcal{A}_{\psi}}\left(  P_{w}%
^{1}\right)  $, which implies that
\[
weight_{\mathcal{A}_{\varphi}}\left(  P_{w}\right)  =weight_{\mathcal{A}%
_{\psi}}\left(  P_{w}^{1}\right)  +weight_{\mathcal{A}_{\xi}}\left(  P_{w}%
^{2}\right)
\]
for
\[
P_{w}^{2}:\left(  \emptyset,\pi_{0},\left\{  \mathbf{0}\right\}  \right)
\left(  \left\{  \mathbf{0}\right\}  ,\pi_{1},\left\{  true\right\}  \right)
\left(  \left(  \left\{  true\right\}  ,\pi_{i},\left\{  true\right\}
\right)  \right)  _{i\geq0}.
\]
If case $\left(  b\right)  $ holds, then the path
\[
P_{w}^{2}:M_{B_{\varphi}^{\prime},\xi}\overset{\pi_{0}}{\rightarrow}%
B_{\varphi^{1}}\overset{\ast}{\rightarrow}B_{\varphi_{re}^{1}}\overset{\pi
_{1}}{\rightarrow}B_{\varphi^{2}}\overset{\ast}{\rightarrow}B_{\varphi
_{re}^{2}}\ldots.
\]
of $\mathcal{A}_{\xi}$ over $w$ is successful, i.e., $P_{w}^{2}\in
next_{\mathcal{A}_{\xi}}\left(  w\right)  ,$ and
\[
weight_{\mathcal{A}_{\varphi}}\left(  P_{w}\right)  =weight_{\mathcal{A}%
_{\psi}}\left(  P_{w}^{1}\right)  +weight_{\mathcal{A}_{\xi}}\left(  P_{w}%
^{2}\right)
\]
for
\[
P_{w}^{1}:\left(  \emptyset,\pi_{0},\left\{  \mathbf{0}\right\}  \right)
\left(  \left\{  \mathbf{0}\right\}  ,\pi_{1},\left\{  true\right\}  \right)
\left(  \left(  \left\{  true\right\}  ,\pi_{i},\left\{  true\right\}
\right)  \right)  _{i\geq0}.
\]
If case $\left(  c\right)  $ holds, then
\[
v_{M_{B_{\varphi}^{\prime},\psi}}\left(  \varphi^{1}\right)  +v_{M_{B_{\varphi
}^{\prime},\xi}}\left(  \varphi^{1}\right)  =v_{B_{\varphi}^{\prime}}\left(
\varphi^{1}\right)
\]
and for the paths $P_{w}^{1},P_{w}^{2}$ of $\mathcal{A}_{\psi},\mathcal{A}%
_{\xi}$ respectively, defined as in cases $\left(  a\right)  $ and $\left(
b\right)  $ respectively, we get
\[
weight_{\mathcal{A}_{\varphi}}\left(  P_{w}\right)  =weight_{\mathcal{A}%
_{\psi}}\left(  P_{w}^{1}\right)  +weight_{\mathcal{A}_{\xi}}\left(  P_{w}%
^{2}\right)  .
\]
More precisely, for $k_{1}=v_{M_{B_{\varphi}^{\prime},\psi}}\left(
\varphi^{1}\right)  ,k_{2}=v_{M_{B_{\varphi}^{\prime},\xi}}\left(  \varphi
^{1}\right)  $ it holds
\begin{align*}
&  weight_{\mathcal{A}_{\varphi}}\left(  P_{w}\right) \\
&  =Val^{\omega}\left(  wt\left(   B_{\varphi}^{\prime},\pi
_{0},B_{\varphi^{1}}\right)    ,wt\left(    B_{\varphi_{re}^{1}%
},\pi_{1},B_{\varphi^{2}}\right)   ,wt\left(    B_{\varphi
_{re}^{2}},\pi_{2},B_{\varphi^{3}}\right)    ,\ldots\right) \\
&  =Val^{\omega}\left(  k_{1}+k_{2},wt\left(  B_{\varphi_{re}^{1}}%
,\pi_{1},B_{\varphi^{2}}\right)    ,wt\left(    B_{\varphi
_{re}^{2}},\pi_{2},B_{\varphi^{3}}  \right)  ,\ldots\right) \\
&  =Val^{\omega}\left(  k_{1},wt\left(    B_{\varphi_{re}^{1}},\pi
_{1},B_{\varphi^{2}}\right)    ,wt\left(    B_{\varphi_{re}^{2}%
},\pi_{2},B_{\varphi^{3}}\right)    ,\ldots\right) \\
&  +Val^{\omega}\left(  k_{2},wt\left(   B_{\varphi_{re}^{1}},\pi
_{1},B_{\varphi^{2}}\right)    ,wt\left(    B_{\varphi_{re}^{2}%
},\pi_{2},B_{\varphi^{3}}\right)    ,\ldots\right) \\
&  =weight_{\mathcal{A}_{\psi}}\left(  P_{w}\right)  +weight_{\mathcal{A}%
_{\xi}}\left(  P_{w}\right)  ,
\end{align*}
where the third equality holds by the distributivity of $Val^{\omega}$ over finite sums.

We have thus shown that for every $k\in pri_{\mathcal{A}_{\varphi}}\left(
w\right)  $ there exist $k^{\prime}\in pri_{\mathcal{A}_{\psi}}\left(
w\right)  +pri_{\mathcal{A}_{\xi}}\left(  w\right)  $ such that $k\leq
k^{\prime}$ and by Lemmas \ref{Properties Sum}iv, \ref{Lemma inequality sum}%
\ we get
\begin{align}
\underset{k\in pri_{\mathcal{A}_{\varphi}}\left(  w\right)  }{%
%TCIMACRO{\dsum }%
%BeginExpansion
{\displaystyle\sum}
%EndExpansion
}k  &  \leq\underset{k\in pri_{\mathcal{A}_{\psi}}\left(  w\right)
+pri_{\mathcal{A}_{\xi}}\left(  w\right)  }{%
%TCIMACRO{\dsum }%
%BeginExpansion
{\displaystyle\sum}
%EndExpansion
}k\label{inequality}\\
&  =\underset{k\in pri_{\mathcal{A}_{\psi}}\left(  w\right)  }{%
%TCIMACRO{\dsum }%
%BeginExpansion
{\displaystyle\sum}
%EndExpansion
}k+\underset{k\in pri_{\mathcal{A}_{\xi}}\left(  w\right)  }{%
%TCIMACRO{\dsum }%
%BeginExpansion
{\displaystyle\sum}
%EndExpansion
}k.\nonumber
\end{align}

We now show that
\[
\underset{k\in pri_{\mathcal{A}_{\psi}}\left(  w\right)  }{%
%TCIMACRO{\dsum }%
%BeginExpansion
{\displaystyle\sum}
%EndExpansion
}k+\underset{k\in pri_{\mathcal{A}_{\xi}}\left(  w\right)  }{%
%TCIMACRO{\dsum }%
%BeginExpansion
{\displaystyle\sum}
%EndExpansion
}k\leq\underset{k\in pri_{\mathcal{A}_{\varphi}}\left(  w\right)  }{%
%TCIMACRO{\dsum }%
%BeginExpansion
{\displaystyle\sum}
%EndExpansion
}k.
\]
Assume that
\[
P_{w}^{1}:B_{\psi}\overset{\ast}{\rightarrow}B_{\psi}^{\prime}\overset{\pi
_{0}}{\rightarrow}B_{\psi^{1}}\overset{\ast}{\rightarrow}B_{\psi_{re}^{1}%
}\overset{\pi_{1}}{\rightarrow}B_{\psi^{2}}\overset{\ast}{\rightarrow}%
B_{\psi_{re}^{2}}\ldots
\]
is a path in $next_{\mathcal{A}_{\psi}}\left(  w\right)  $ with
$weight_{\mathcal{A}_{\psi}}\left(  P_{w}^{1}\right)  \neq\mathbf{0.}$ We set
$B_{\varphi}^{\prime}=B_{\psi}^{\prime}\cup\left\{  \psi\vee\xi\right\}  .$
Then, the path
\[
P_{w}:B_{\varphi}^{\prime}\overset{\pi_{0}}{\rightarrow}B_{\psi^{1}}%
\overset{\ast}{\rightarrow}B_{\psi_{re}^{1}}\overset{\pi_{1}}{\rightarrow
}B_{\psi^{2}}\overset{\ast}{\rightarrow}B_{\psi_{re}^{2}}\ldots
\]
is a path of $\mathcal{A}_{\varphi}$ over $w$ in $next_{\mathcal{A}_{\varphi}%
}\left(  w\right)  $ and we claim that $weight_{\mathcal{A}_{\varphi}}\left(
P_{w}\right)  \geq weight_{\mathcal{A}_{\psi}}\left(  P_{w}^{1}\right)  .$ It
suffices to prove that%
\[
wt\left(    B_{\varphi}^{\prime},\pi_{0},B_{\psi^{1}}  \right)
\geq wt_{1}\left(    B_{\psi}^{\prime},\pi_{0},B_{\psi^{1}}
\right)  ,
\]
then our claim is derived by Lemma \ref{Valuation inequality}. If $\psi
^{1}\notin next\left(  M_{B_{\varphi}^{\prime},\xi}\right)  ,$ then the
equality holds by definition. Otherwise, $wt\left(   B_{\varphi
}^{\prime},\pi_{0},B_{\psi^{1}}  \right)  =v_{B_{\varphi}^{\prime}%
}\left(  \psi^{1}\right)  =v_{M_{B_{\varphi}^{\prime},\psi}}\left(  \psi
^{1}\right)  +v_{M_{B_{\varphi}^{\prime},\xi}}\left(  \psi^{1}\right)
=v_{B_{\psi}^{\prime}}\left(  \psi^{1}\right)  +v_{M_{B_{\varphi}^{\prime}%
,\xi}}\left(  \psi^{1}\right)  =wt_{1}\left(   B_{\psi}^{\prime}%
,\pi_{0},B_{\psi^{1}}  \right)  +v_{M_{B_{\varphi}^{\prime},\xi}%
}\left(  \psi^{1}\right)  ,$ i.e., $wt\left(    B_{\varphi}^{\prime}%
,\pi_{0},B_{\psi^{1}}  \right)  \geq wt_{1}\left(   B_{\psi
}^{\prime},\pi_{0},B_{\psi^{1}}  \right)  $ as wanted. For $P_{w}%
^{1}\in next_{\mathcal{A}_{\psi}}\left(  w\right)  $ with $weight_{\mathcal{A}%
_{\psi}}\left(  P_{w}^{1}\right)  =\mathbf{0}$ it trivially holds
$weight_{\mathcal{A}_{\varphi}}\left(  P_{w}\right)  \geq weight_{\mathcal{A}%
_{\psi}}\left(  P_{w}^{1}\right)  $ for every $P_{w}\in next_{\mathcal{A}%
_{\varphi}}\left(  w\right)  .$

Similarly, for every path $P_{w}^{2}\in next_{\mathcal{A}_{\xi}}\left(
w\right)  $ there exist a path $P_{w}\in next_{\mathcal{A}_{\varphi}}\left(
P_{w}\right)  $ with $weight_{\mathcal{A}_{\varphi}}\left(  P_{w}\right)  \geq
weight_{\mathcal{A}_{\xi}}\left(  P_{w}^{2}\right)  .$ Hence, for every $k\in
pri_{\mathcal{A}_{\psi}}\left(  P_{w}\right)  $ (resp. $k\in pri_{\mathcal{A}%
_{\xi}}\left(  w\right)  $) there exists a $k^{\prime}\in pri_{\mathcal{A}%
_{\varphi}}\left(  w\right)  $ such that $k\leq k^{\prime}$. This implies that
$\underset{k\in pri_{\mathcal{A}_{\psi}}\left(  w\right)  }{%
%TCIMACRO{\dsum }%
%BeginExpansion
{\displaystyle\sum}
%EndExpansion
}k\leq\underset{k\in pri_{\mathcal{A}_{\varphi}}\left(  w\right)  }{%
%TCIMACRO{\dsum }%
%BeginExpansion
{\displaystyle\sum}
%EndExpansion
}k$ \ and $\underset{k\in pri_{\mathcal{A}_{\xi}}\left(  w\right)  }{%
%TCIMACRO{\dsum }%
%BeginExpansion
{\displaystyle\sum}
%EndExpansion
}k\leq\underset{k\in pri_{\mathcal{A}_{\varphi}}\left(  w\right)  }{%
%TCIMACRO{\dsum }%
%BeginExpansion
{\displaystyle\sum}
%EndExpansion
}k.$ Again by Lemma \ref{Lemma inequality sum} and idempotency we get,
\begin{equation}
\underset{k\in pri_{\mathcal{A}_{\psi}}\left(  w\right)  }{%
%TCIMACRO{\dsum }%
%BeginExpansion
{\displaystyle\sum}
%EndExpansion
}k+\underset{k\in pri_{\mathcal{A}_{\xi}}\left(  w\right)  }{%
%TCIMACRO{\dsum }%
%BeginExpansion
{\displaystyle\sum}
%EndExpansion
}k\leq\underset{k\in pri_{\mathcal{A}_{\varphi}}\left(  w\right)  }{%
%TCIMACRO{\dsum }%
%BeginExpansion
{\displaystyle\sum}
%EndExpansion
}k \label{disjuction 1}%
\end{equation}
as wanted.

Therefore, for every $w\in\left(  \mathcal{P}\left(  AP\right)  \right)
^{\omega}$ we get
\begin{align*}
\left(  \left\Vert \mathcal{A}_{\varphi}\right\Vert ,w\right)   &
=\underset{k\in pri_{\mathcal{A}_{\varphi}}\left(  w\right)  }{%
%TCIMACRO{\dsum }%
%BeginExpansion
{\displaystyle\sum}
%EndExpansion
}k\\
&  =\underset{k\in pri_{\mathcal{A}_{\psi}}\left(  w\right)  }{%
%TCIMACRO{\dsum }%
%BeginExpansion
{\displaystyle\sum}
%EndExpansion
}k+\underset{k\in pri_{\mathcal{A}_{\xi}}\left(  w\right)  }{%
%TCIMACRO{\dsum }%
%BeginExpansion
{\displaystyle\sum}
%EndExpansion
}k\\
&  =\left(  \left\Vert \mathcal{A}_{\psi}\right\Vert ,w\right)  +\left(
\left\Vert \mathcal{A}_{\xi}\right\Vert ,w\right) \\
&  =\left(  \left\Vert \psi\right\Vert ,w\right)  +\left(  \left\Vert
\xi\right\Vert ,w\right) \\
&  =\left(  \left\Vert \psi\vee\xi\right\Vert ,w\right)
\end{align*}
where the second equality holds by \ref{inequality} and \ref{disjuction 1}.
\end{proof}

\begin{lemma}
\label{Lemma next}Let $\psi\in RULTL\left(  K,AP\right)  $ and $\varphi
=\bigcirc\psi.$ If $\mathcal{A}_{\psi}$ recognizes $\left\Vert \psi\right\Vert
,$ then $\mathcal{A}_{\varphi}$ recognizes $\left\Vert \varphi\right\Vert .$
\end{lemma}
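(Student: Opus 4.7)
The plan is to set up a weight-preserving correspondence between $next_{\mathcal{A}_\varphi}(w)$ and $next_{\mathcal{A}_\psi}(w_{\geq 1})$, invoke Property \ref{Property 1} to peel off an initial $\mathbf{1}$ from the $Val^\omega$ sequence, and then use the inductive hypothesis $\|\mathcal{A}_\psi\| = \|\psi\|$ together with the definitional identity $(\|\bigcirc\psi\|, w) = (\|\psi\|, w_{\geq 1})$.

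First I analyze the structure of paths. The formula $\varphi = \bigcirc\psi$ is reduced, its until-subformulas coincide with those of $\psi$, and the families $\mathcal{F}_\varphi$ and $\mathcal{F}_\psi$ are therefore indexed by the same set, with $F_{\varphi'U\varphi''}$-membership determined solely by the form-A decomposition of the maximal formula of a state. Moreover, for every nonempty $\varphi$-consistent $B_\varphi$ we have $next(B_\varphi) = \{\psi\}$ with $v_{B_\varphi}(\psi) = \mathbf{1}$, so any $P_w \in next_{\mathcal{A}_\varphi}(w)$ of nonzero weight has the shape
\[
P_w:\; B_\varphi \overset{\ast}{\rightarrow} B_\varphi \overset{w(0)}{\rightarrow} B_\psi \overset{\ast}{\rightarrow} B_{\psi_{re}} \overset{w(1)}{\rightarrow} B_{\psi^{2}} \overset{\ast}{\rightarrow} \cdots,
\]
where the first non-$\varepsilon$ transition carries weight $\mathbf{1}$ and, from $B_\psi$ onward, the suffix $\widetilde{P}$ evolves entirely within $\{B_\psi\}\cup reach(B_\psi)$; on these states the weight function, the $\varepsilon$-transitions and the final subsets of $\mathcal{A}_\varphi$ coincide with those of $\mathcal{A}_\psi$, so $\widetilde{P} \in next_{\mathcal{A}_\psi}(w_{\geq 1})$.

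Next I match weights and acceptance. The extra initial state $B_\varphi$ has maximal formula $\bigcirc\psi$, which is not of until form, so $B_\varphi \in F_{\varphi'U\varphi''}$ for every $\varphi'U\varphi'' \in cl(\varphi)$; hence $P_w$ is successful in $\mathcal{A}_\varphi$ iff $\widetilde{P}$ is successful in $\mathcal{A}_\psi$. If $(k_i)_{i\geq 1}$ is the sequence of weights of the non-$\varepsilon$ transitions of $\widetilde{P}$, then by Property \ref{Property 1}
\[
weight_{\mathcal{A}_\varphi}(P_w) = Val^{\omega}(\mathbf{1}, k_1, k_2, \ldots) = Val^{\omega}(k_i)_{i\geq 1} = weight_{\mathcal{A}_\psi}(\widetilde{P}).
\]
Conversely, given any $\widetilde{P} \in next_{\mathcal{A}_\psi}(w_{\geq 1})$ starting at $B_\psi$, taking the minimal $\varphi$-consistent set $B_\varphi = \{\bigcirc\psi\}$ (which is automatically compatible with $w(0)$) and prepending the transition $(B_\varphi, w(0), B_\psi)$ of weight $\mathbf{1}$ produces a successful path of $\mathcal{A}_\varphi$ over $w$ of the same weight.

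Consequently $pri_{\mathcal{A}_\varphi}(w) \setminus \{\mathbf{0}\} = pri_{\mathcal{A}_\psi}(w_{\geq 1}) \setminus \{\mathbf{0}\}$, so by Lemma \ref{Properties Sum}(ii) and the completeness axioms
\[
(\|\mathcal{A}_\varphi\|, w) = \underset{k \in pri_{\mathcal{A}_\varphi}(w)}{\sum} k = \underset{k \in pri_{\mathcal{A}_\psi}(w_{\geq 1})}{\sum} k = (\|\mathcal{A}_\psi\|, w_{\geq 1}) = (\|\psi\|, w_{\geq 1}) = (\|\varphi\|, w).
\]
The main obstacle is the bookkeeping around the initial segment: several $\varphi$-consistent initial states and several finite numbers of $\varepsilon$-self-loops can produce distinct paths of $\mathcal{A}_\varphi$ that project to the same tail $\widetilde{P}$, so one must check that these multiplicities collapse in the coefficient sum thanks to idempotency (and symmetrically that the extension from the $\mathcal{A}_\psi$ side does not introduce any spurious nonzero contribution).
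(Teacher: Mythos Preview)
Your proof is correct and follows essentially the same approach as the paper: set up a weight-preserving correspondence between $next_{\mathcal{A}_\varphi}(w)$ and $next_{\mathcal{A}_\psi}(w_{\geq 1})$, use Property~\ref{Property 1} to strip the leading $\mathbf{1}$, and conclude via the induction hypothesis and Lemma~\ref{Properties Sum}. The only noteworthy variation is that in the converse direction you prepend the minimal $\varphi$-consistent set $\{\bigcirc\psi\}$ rather than the paper's $B_\psi\cup\{\bigcirc\psi\}$; your choice is arguably cleaner since it makes the compatibility of the new initial state with the letter $w(0)$ automatic.
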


\begin{proof}
Let $\mathcal{A}_{\psi}=\left(  Q^{\prime},wt^{\prime},I^{\prime}%
,\mathcal{F}^{\prime}\right)  ,\mathcal{A}_{\varphi}=\left(
Q,wt,I,\mathcal{F}\right)  ,$ and $w=\pi_{0}\pi_{1}\ldots\in\left(
\mathcal{P}\left(  AP\right)  \right)  ^{\omega}.$ We show that
$pri_{\mathcal{A}_{\varphi}}\left(  w\right)  =pri_{\mathcal{A}_{\psi}}\left(
w_{\geq1}\right)  .$ It suffices to prove that for every $P_{w}\in
next_{\mathcal{A}_{\varphi}}\left(  w\right)  ,$ there exists a $P_{w_{\geq1}%
}^{\prime}\in next_{\mathcal{A}_{\psi}}\left(  w_{\geq1}\right)  $ with
$weight_{\mathcal{A}_{\varphi}}\left(  P_{w}\right)  $
$=weight_{\mathcal{A_{\psi}}}\left(  P_{w_{\geq1}}^{\prime}\right)  $ and
vice-versa. For the straight implication, our claim clearly holds if the empty
state appears in $P_{w}.$ \ Now, let $P_{w}\in next_{\mathcal{A}_{\varphi}%
}\left(  w\right)  $ be a path with non-empty states that starts with a next
transition, i.e., it is of the form
\[
P_{w}:B_{\varphi}\overset{\pi_{0}}{\rightarrow}B_{\varphi^{1}}\overset{\ast
}{\rightarrow}B_{\varphi_{re}^{1}}\overset{\pi_{1}}{\rightarrow}B_{\varphi
^{2}}\overset{\ast}{\rightarrow}B_{\varphi_{re}^{2}}\ldots
\]
We have $\varphi^{1}=\psi$, and $wt\left(    B_{\varphi},\pi
_{0},B_{\varphi^{1}}  \right)  =v_{B_{\varphi}}\left(  \psi\right) =\mathbf{1},$ where the last equality holds by the definition of $v_{B_{\varphi}}.$ Then, the sequence
\[
P_{w_{\geq1}}^{\prime}:B_{\varphi_{re}^{1}}\overset{\pi_{1}}{\rightarrow
}B_{\varphi^{2}}\overset{\ast}{\rightarrow}B_{\varphi_{re}^{2}}\ldots
\]
is a path of $\mathcal{A}_{\psi}$ over $w_{\geq1}$ with $P_{w_{\geq1}}%
^{\prime}\in next_{\mathcal{A}_{\psi}}\left(  w_{\geq1}\right)  ,$ and
\begin{align*}
weight_{\mathcal{A}_{\varphi}}\left(  P_{w}\right)   &  =Val^{\omega}\left(
\mathbf{1},wt\left(    B_{\varphi_{re}^{1}},\pi_{1},B_{\varphi^{2}%
}\right)    ,wt\left(    B_{\varphi_{re}^{2}},\pi_{2}%
,B_{\varphi^{3}}\right)    ,\ldots\right) \\
&  =Val^{\omega}\left(  wt^{\prime}\left(   B_{\varphi_{re}^{1}}%
,\pi_{1},B_{\varphi^{2}}  \right)  ,wt^{\prime}\left(
B_{\varphi_{re}^{2}},\pi_{2},B_{\varphi^{3}}  \right)  ,\ldots\right)
\\
&  =weight_{\mathcal{A}_{\psi}}\left(  P_{w_{\geq1}}^{\prime}\right).
\end{align*}
The second equality holds since
$wt\left(   B_{\varphi_{re}^{i}},\pi_{i},B_{\varphi^{i+1}}
\right)  =wt^{\prime}\left(    B_{\varphi_{re}^{i}},\pi_{i}%
,B_{\varphi^{i+1}}  \right)  $ \ for every $i\geq1,$ and by Property \ref{Property 1}.

Conversely, let $P_{w_{\geq1}}^{\prime}:B_{\psi}\overset{\pi_{1}}{\rightarrow
}B_{\psi^{2}}\overset{\ast}{\rightarrow}B_{\psi_{re}^{2}}\ldots$ be a path in
$next_{\mathcal{A}_{\psi}}\left(  w_{\geq1}\right)  $ with non-empty states.
Then, the sequence
\[
P_{w}:B_{\psi}\cup\left\{  \bigcirc\psi\right\}  \overset{\pi_{0}}%
{\rightarrow}B_{\psi}\overset{\pi_{1}}{\rightarrow}B_{\psi^{1}}\overset{\ast
}{\rightarrow}B_{\psi_{re}^{1}}\ldots
\]
is a path of $\mathcal{A}_{\varphi}$ over $w$ with $P_{w}\in next_{\mathcal{A}%
_{\varphi}}\left(  w\right)  ,$ and
\[
weight_{\mathcal{A}_{\varphi}}\left(  P_{w}\right)  =weight_{\mathcal{A}%
_{\psi}}\left(  P_{w_{\geq1}}^{\prime}\right)  .
\]
If the empty state occurs in $P_{w_{\geq1}}^{\prime}$ it is obvious.

Thus, for every $w\in\left(  \mathcal{P}\left(  AP\right)  \right)  ^{\omega}$
we have
\[
\left(  \left\Vert \mathcal{A}_{\varphi}\right\Vert ,w\right)  =\underset{k\in
pri_{\mathcal{A}_{\varphi}}\left(  w\right)  }{%
%TCIMACRO{\dsum }%
%BeginExpansion
{\displaystyle\sum}
%EndExpansion
}k=\underset{k\in pri_{\mathcal{A}_{\psi}}\left(  w_{\geq 1}\right)  }{%
%TCIMACRO{\dsum }%
%BeginExpansion
{\displaystyle\sum}
%EndExpansion
}k=\left(  \left\Vert \psi\right\Vert ,w_{\geq1}\right)  =\left(  \left\Vert
\bigcirc\psi\right\Vert ,w\right)  ,
\]
as wanted.
\end{proof}

For every $\varphi\in LTL\left(  K,AP\right)  $, and every $B_{\varphi},$ we
let $\widehat{next}\left(  B_{\varphi}\right)  $ be the subset of $next\left(
B_{\varphi}\right)  $ containing all $\underset{1\leq i\leq k}{%
%TCIMACRO{\dbigwedge }%
%BeginExpansion
{\displaystyle\bigwedge}
%EndExpansion
}\psi_{i}\in next\left(  B_{\varphi}\right)  $ of Form A where $\psi_{i}\neq\mathbf{0}$
for all $i\in\left\{  1,\ldots,k\right\}  .$ The subsequent three lemmas will
contribute to the proof of the remaining induction steps.

\begin{lemma}
\label{boithitiko lemma 1}Let $\varphi\in RULTL\left(  K,AP\right)  ,$ and
$B_{\varphi},B_{\varphi}^{\prime}\neq\emptyset$ be $\varphi$-consistent sets
with $B_{\varphi}\subseteq B_{\varphi}^{\prime}$. Then, $\widehat{next}\left(
B_{\varphi}\right)  \subseteq next\left(  B_{\varphi}^{\prime}\right)  $ and
for every $\psi\in\widehat{next}\left(  B_{\varphi}\right)  $ it holds
$v_{B_{\varphi}^{\prime}}\left(  \psi\right)  \geq v_{B_{\varphi}}\left(
\psi\right)  .$
\end{lemma}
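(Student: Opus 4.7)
The plan is to proceed by structural induction on $\varphi\in RULTL(K,AP)$, verifying the two conclusions simultaneously at each step. The observation that drives everything is monotonicity of the maximal consistent subset operation: for every subformula $\chi\in cl(\varphi)$, the inclusion $B_{\varphi}\subseteq B_{\varphi}^{\prime}$ forces $M_{B_{\varphi},\chi}\subseteq M_{B_{\varphi}^{\prime},\chi}$, which in turn lets us apply the induction hypothesis to the maximal $\chi$-consistent subsets.

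For the base cases $\varphi=k$, $\varphi=a$, $\varphi=\lnot a$, both $next(B_{\varphi})$ and $next(B_{\varphi}^{\prime})$ equal $\{true\}$, and $v$ depends only on the outermost form of $\varphi$, so the claims hold trivially. For $\varphi=\bigcirc\psi$, we have $next(B_{\varphi})=next(B_{\varphi}^{\prime})=\{\psi\}$ with value $\mathbf{1}$ on both sides. The $U$ and $\square$ cases are in some sense easy because $RULTL$ forces the arguments to lie in $stLTL(K,AP)$, so the next-formulas are shaped as either $\varphi\wedge\psi'$ for $\psi'\in next(M_{B_{\varphi},\psi})$ or as $\xi'\in next(M_{B_{\varphi},\xi})$ (only the $U$ case), and the values are directly inherited from the induction hypothesis applied to step-formula subformulas.

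The two real cases are $\varphi=\psi\wedge\xi$ (where by definition of $RULTL$ exactly one of $\psi,\xi$ is allowed to be non-boolean) and $\varphi=\psi\vee\xi$. For conjunction, an element of $\widehat{next}(B_{\varphi})$ has the form $\psi'\wedge\xi'$ with $\psi'\in next(M_{B_{\varphi},\psi})$, $\xi'\in next(M_{B_{\varphi},\xi})$, and since the composite is of form A without $\mathbf{0}$-conjuncts, the two factors themselves lie in the respective $\widehat{next}$. Induction then yields $\psi',\xi'\in next(M_{B_{\varphi}^{\prime},\psi})\cdot$ and $\cdot\cup next(M_{B_{\varphi}^{\prime},\xi})$ respectively, giving membership in $next(B_{\varphi}^{\prime})$, while the multiplicativity of $v_{B_{\varphi}}$ combined with the inductive inequalities $v_{M_{B_{\varphi}^{\prime},\psi}}(\psi')\geq v_{M_{B_{\varphi},\psi}}(\psi')$ and analogously for $\xi'$ gives the desired inequality (using that $\cdot$ is monotone in each coordinate, which follows from Lemma~\ref{Lemma inequality sum} applied through the natural order together with $\mathbf{1}\cdot k=k$ and idempotency). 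For disjunction, given $\psi'\in\widehat{next}(B_{\varphi})$ we have $v_{B_{\varphi}}(\psi')=v_{M_{B_{\varphi},\psi}}(\psi')+v_{M_{B_{\varphi},\xi}}(\psi')$ with the convention that missing membership contributes $\mathbf{0}$; induction bounds each summand above by the corresponding $B_{\varphi}^{\prime}$-value, and Lemma~\ref{Properties Sum}(iv) together with idempotency yields $v_{B_{\varphi}^{\prime}}(\psi')\geq v_{B_{\varphi}}(\psi')$.

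The main obstacle I expect is the bookkeeping around the form-A condition and the $\mathbf{0}$-free hypothesis in the definition of $\widehat{next}$: one has to check at each inductive step that the factors (or disjuncts) we pull out are themselves in $\widehat{next}$ of the relevant maximal consistent subset, so that the induction hypothesis is actually applicable. For the $\wedge$-case this needs the $RULTL$-specific constraint that at most one conjunct is non-boolean, combined with the remark preceding the lemma that every $RULTL$-formula is of form A; for the $\vee$-case the care is in handling the three possibilities ($\psi'$ in only one of the two next-sets, or in both) and verifying that the monotonicity of $+$ with respect to the natural order (a direct consequence of idempotency) turns the two inductive inequalities into the single desired inequality. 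Everything else is a routine verbatim unfolding of the definitions of $next$ and $v_{B_{\varphi}}$.
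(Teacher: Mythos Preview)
Your overall plan---structural induction on $\varphi$, driven by the monotonicity of the maximal-consistent-subset operation---is exactly what the paper does, and your treatment of the base cases, $\bigcirc$, $\vee$, $U$, and $\square$ is correct.

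There is, however, a genuine gap in your $\wedge$-case. You write that the inductive inequalities on the two factors combine to give the product inequality ``using that $\cdot$ is monotone in each coordinate, which follows from Lemma~\ref{Lemma inequality sum} applied through the natural order together with $\mathbf{1}\cdot k=k$ and idempotency.'' But Lemma~\ref{Lemma inequality sum} concerns only sums, and the product $\cdot$ in a product $\omega$-valuation monoid is \emph{not} assumed to be monotone, distributive, associative, or commutative; the only axioms are that $\mathbf{0}$ and $\mathbf{1}$ act as absorbing and neutral elements. So monotonicity of $\cdot$ cannot be derived from the stated lemmas, and in fact the whole reason the $RULTL$ fragment forces one conjunct to be boolean is precisely to sidestep this failure.

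The paper's argument for $\varphi=\lambda\wedge\xi$ with $\xi$ boolean uses that booleanness directly: $v_{M_{B_{\varphi},\xi}}(\xi')$ and $v_{M_{B_{\varphi}^{\prime},\xi}}(\xi')$ both lie in $\{\mathbf{0},\mathbf{1}\}$; the induction hypothesis gives $v_{M_{B_{\varphi}^{\prime},\xi}}(\xi')\geq v_{M_{B_{\varphi},\xi}}(\xi')$, so if the latter is $\mathbf{1}$ then so is the former, and the product inequality reduces to the inductive inequality on the $\lambda$-factor alone (via $k\cdot\mathbf{1}=k$); if the latter is $\mathbf{0}$ then $v_{B_{\varphi}}(\psi)=\mathbf{0}$ and the inequality is trivial since $\mathbf{0}$ is the minimum of the natural order. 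You noted the $RULTL$ constraint that one conjunct is boolean but then did not actually use it---that is where the fix goes.
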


\begin{proof}
For atomic propositions $a,\lnot a\in AP,$ and for $k\in K$ our claim is
obvious. Let $\varphi=\lambda\vee\xi.$ If $M_{B_{\varphi,\lambda}}%
\neq\emptyset$ and $M_{B_{\varphi},\xi}\neq\emptyset,$ then
\begin{align*}
\widehat{next}\left(  B_{\varphi}\right)   &  =\widehat{next}\left(
M_{B_{\varphi},\lambda}\right)  \cup\widehat{next}\left(  M_{B_{\varphi},\xi
}\right) \\
&  \subseteq next\left(  M_{B_{\varphi}^{\prime},\lambda}\right)  \cup
next\left(  M_{B_{\varphi}^{\prime},\xi}\right) \\
&  =next\left(  B_{\varphi}^{\prime}\right)
\end{align*}
where the inclusion holds by the induction hypothesis since $M_{B_{\varphi
},\lambda}\subseteq M_{B_{\varphi}^{\prime},\lambda}$\ and these two sets are $\lambda$-consistent, and similarly $M_{B_{\varphi}%
,\xi}\subseteq M_{B_{\varphi}^{\prime},\xi}$ and these two sets are $\xi$-consistent.

Moreover, for every $\psi\in\widehat{next}\left(  M_{B_{\varphi},\lambda
}\right)  \cap\widehat{next}\left(  M_{B_{\varphi},\xi}\right)  ,$ we have
\begin{align*}
v_{B_{\varphi}^{\prime}}\left(  \psi\right)   &  =v_{M_{B_{\varphi}^{\prime
},\lambda}}\left(  \psi\right)  +v_{M_{B_{\varphi}^{\prime},\xi}}\left(
\psi\right) \\
&  \geq v_{M_{B_{\varphi},\lambda}}\left(  \psi\right)  +v_{M_{B_{\varphi}%
,\xi}}\left(  \psi\right) \\
&  =v_{B_{\varphi}}\left(  \psi\right)  ,
\end{align*}
whereas for $\psi\in\widehat{next}\left(  M_{B_{\varphi},\lambda}\right)
\backslash\widehat{next}\left(  M_{B_{\varphi},\xi}\right)  $ we have
\[
v_{B_{\varphi}^{\prime}\left(  \psi\right)  }\geq v_{M_{B_{\varphi}^{\prime
},\lambda}}\left(  \psi\right)  \geq v_{M_{B_{\varphi},\lambda}}\left(
\psi\right)  =v_{B_{\varphi}}\left(  \psi\right)  .
\]
In the same way, for $\psi\in\widehat{next}\left(  M_{B_{\varphi},\xi}\right)
\diagdown\widehat{next}\left(  M_{B_{\varphi},\lambda}\right)  $ we get
$v_{B_{\varphi}^{\prime}}\left(  \psi\right)  \geq v_{B_{\varphi}}\left(
\psi\right)  .$ Now, if $M_{B_{\varphi,\lambda}}\neq\emptyset$ and
$M_{B_{\varphi},\xi}=\emptyset,$ then%

\[
\widehat{next}\left(  B_{\varphi}\right)  =\widehat{next}\left(
M_{B_{\varphi},\lambda}\right)  \subseteq next\left(  M_{B_{\varphi}^{\prime
},\lambda}\right)  \subseteq next\left(  B_{\varphi}^{\prime}\right)
\]
and for every $\psi\in\widehat{next}\left(  B_{\varphi}\right)  ,$ we have
\[
v_{B_{\varphi}^{\prime}\left(  \psi\right)  }\geq v_{M_{B_{\varphi}^{\prime
},\lambda}}\left(  \psi\right)  \geq v_{M_{B_{\varphi},\lambda}}\left(
\psi\right)  =v_{B_{\varphi}}\left(  \psi\right)  .
\]
The case $M_{B_{\varphi},\lambda}=\emptyset$ and $M_{B_{\varphi},\xi}%
\neq\emptyset$ is treated similarly.

Let $\varphi=\lambda\wedge\xi$ such that $\lambda\in RULTL\left(  K,AP\right)
$, $\xi\in bLTL\left(  K,AP\right)  $. Then,
\begin{align*}
\widehat{next}\left(  B_{\varphi}\right)   &  =\left\{  \lambda^{\prime}%
\wedge\xi^{\prime}\mid\lambda^{\prime}\in\widehat{next}\left(  M_{B_{\varphi
},\lambda}\right)  ,\xi^{\prime}\in\widehat{next}\left(  M_{B_{\varphi},\xi
}\right)  \right\} \\
&  \subseteq\left\{  \lambda^{\prime}\wedge\xi^{\prime}\mid\lambda^{\prime}\in
next\left(  M_{B_{\varphi}^{\prime},\lambda}\right)  ,\xi^{\prime}\in
next\left(  M_{B_{\varphi}^{\prime},\xi}\right)  \right\} \\
&  =next\left(  B_{\varphi}^{\prime}\right)
\end{align*}
where the inclusion again holds by the induction hypothesis since
$M_{B_{\varphi},\lambda}\subseteq M_{B_{\varphi}^{\prime},\lambda}$ and
$M_{B_{\varphi},\xi}\subseteq M_{B_{\varphi}^{\prime},\xi}.$ Moreover, for
every $\psi=\lambda^{\prime}\wedge\xi^{\prime}$ with $\lambda^{\prime}%
\in\widehat{next}\left(  M_{B_{\varphi},\lambda}\right)  ,\xi^{\prime}%
\in\widehat{next}\left(  M_{B_{\varphi},\xi}\right)  $ we have
\begin{align*}
v_{B_{\varphi}^{\prime}}\left(  \psi\right)   &  =v_{M_{B_{\varphi}^{\prime
},\lambda}}\left(  \lambda^{\prime}\right)  \cdot v_{M_{B_{\varphi}^{\prime
},\xi}}\left(  \xi^{\prime}\right) \\
&  \geq v_{M_{B_{\varphi},\lambda}}\left(  \lambda^{\prime}\right)  \cdot
v_{M_{B_{\varphi}},\xi}\left(  \xi^{\prime}\right) \\
&  =v_{B_{\varphi}}\left(  \psi\right)
\end{align*}
where the inequality holds by induction hypothesis and the fact that $\xi$ is boolean. More precisely, by induction hypothesis it holds $v_{M_{B_{\varphi}^{\prime
},\lambda}}\left(  \lambda^{\prime}\right)
\geq v_{M_{B_{\varphi},\lambda}}\left(  \lambda^{\prime}\right)$, and $v_{M_{B_{\varphi}^{\prime
},\xi}}\left(  \xi^{\prime}\right) \geq
v_{M_{B_{\varphi}},\xi}\left(  \xi^{\prime}\right)
 $, and since $\xi$ is boolean
$v_{M_{B^{\prime}_{\varphi},\xi}}\left(\xi^{\prime}\right)=\mathbf{0}$ implies $v_{M_{B_{\varphi},\xi}}\left(\xi^{\prime}\right)=\mathbf{0}$, and $v_{M_{B_{\varphi},\xi}}\left(\xi^{\prime}\right)=\mathbf{1}$ implies $v_{M_{B^{\prime}_{\varphi},\xi}}\left(\xi^{\prime}\right)=\mathbf{1}$, and thus we conclude the inequality.

Assume now that $\varphi=\bigcirc\xi$. Then, $\widehat{next}\left(  B_{\varphi
}\right)  =\left\{  \xi\right\}  $ or $\widehat{next}\left(  B_{\varphi
}\right)  =\emptyset.$ In both cases $\widehat{next}\left(  B_{\varphi
}\right)  \subseteq next\left(  B_{\varphi}^{\prime}\right)  =\left\{
\xi\right\}  .$ In addition if $next\left(  B_{\varphi}\right)  =\left\{
\xi\right\}  ,$ then $v_{B_{\varphi}^{\prime}}\left(  \xi\right)
=\mathbf{1=}v_{B_{\varphi}}\left(  \xi\right)  .$

Next, let $\varphi=\lambda U\xi$ where $\lambda,\xi\in stLTL\left(
K,AP\right)  $%
\begin{align*}
&  \widehat{next}\left(  B_{\varphi}\right) \\
&  =\left\{  \varphi\wedge\lambda^{\prime}\mid\lambda^{\prime}\in
\widehat{next}\left(  M_{B_{\varphi},\lambda}\right)  \right\} \\
&  \cup\left\{  \xi^{\prime}\mid\xi^{\prime}\in\widehat{next}\left(
M_{B_{\varphi},\xi}\right)  \right\} \\
&  \subseteq\left\{  \varphi\wedge\lambda^{\prime}\mid\lambda^{\prime}\in
next\left(  M_{B_{\varphi}^{\prime},\lambda}\right)  \right\} \\
&  \cup\left\{  \xi^{\prime}\mid\xi^{\prime}\in next\left(  M_{B_{\varphi
}^{\prime},\xi}\right)  \right\} \\
&  =next\left(  B_{\varphi}^{\prime}\right)  .
\end{align*}
For $\psi\in\widehat{next}\left(  B_{\varphi}\right)  $ with $\psi
=\varphi\wedge\lambda^{\prime}$ and $\lambda^{\prime}\in\widehat{next}\left(
M_{B_{\varphi},\lambda}\right)  ,$ we \ have $v_{B_{\varphi}^{\prime}}\left(
\psi\right)  =v_{M_{B_{\varphi}^{\prime},\lambda}}\left(  \lambda^{\prime
}\right)  \geq v_{M_{B_{\varphi},\lambda}}\left(  \lambda^{\prime}\right)
=v_{B_{\varphi}}\left(  \psi\right)  .$ For $\psi\in\widehat{next}\left(
B_{\varphi}\right)  $ with $\psi=\xi^{\prime}\in\widehat{next}\left(
M_{B_{\varphi},\xi}\right)  $ we get $v_{B_{\varphi}^{\prime}}\left(
\psi\right)  =v_{M_{B_{\varphi}^{\prime},\xi}}\left(  \xi^{\prime}\right)
\geq v_{M_{B_{\varphi},\xi}}\left(  \xi^{\prime}\right)  =v_{B_{\varphi}%
}\left(  \psi\right)  .$

Finally, if $\varphi=\square\xi$ where $\xi\in stLTL\left(  K,AP\right)  ,$
then
\begin{align*}
\widehat{next}\left(  B_{\varphi}\right)   &  =\left\{  \varphi\wedge
\xi^{\prime}\mid\xi^{\prime}\in\widehat{next}\left(  M_{B_{\varphi},\xi
}\right)  \right\} \\
&  \subseteq\left\{  \varphi\wedge\xi^{\prime}\mid\xi^{\prime}\in next\left(
M_{B_{\varphi}^{\prime},\xi}\right)  \right\} \\
&  =next\left(  B_{\varphi}^{\prime}\right)  ,
\end{align*}
and for every $\psi=\varphi\wedge\xi^{\prime},$ with $\xi^{\prime}\in
\widehat{next}\left(  M_{B_{\varphi},\xi}\right)  ,$ we have $v_{B_{\varphi
}^{\prime}}\left(  \psi\right)  =v_{M_{B_{\varphi}^{\prime},\xi}}\left(
\xi^{\prime}\right)  \geq v_{M_{B_{\varphi},\xi}}\left(  \xi^{\prime}\right)
=v_{B_{\varphi}}\left(  \psi\right)  .$
\end{proof}
\begin{lemma}
\label{boithitiko lemma 2}Let $\psi\in bLTL\left(  K,AP\right)  $ and
$\xi\in RULTL\left(  K,AP\right)  .$ If $\psi,\xi$ are reduced and
$\varphi=\left(  \psi\wedge\xi\right)  _{re}$, then for every $B_{\varphi}%
\neq\emptyset$ and $\varphi^{\prime}\in next\left(  B_{\varphi}\right)  $ there exist
a $\psi$-consistent set $B_{\psi}$ and a $\xi$-consistent set $B_{\xi
}$ such that for some $\psi^{\prime}\in next\left(  B_{\psi}\right)  $,
$\xi^{\prime}\in next\left(  B_{\xi}\right)  $ it holds $\varphi_{re}^{\prime
}=\left(  \psi_{re}^{\prime}\wedge\xi_{re}^{\prime}\right)  _{re}$ and
$v_{B_{\varphi}}\left(  \varphi^{\prime}\right)  =v_{B_{\psi}}\left(
\psi^{\prime}\right)  \cdot v_{B_{\xi}}\left(  \xi^{\prime}\right)  .$
\end{lemma}

\begin{proof}
Assume first that $\psi\neq true$ and $\xi\neq true.$ We point out the
following cases.

\begin{itemize}
\item[(a)] $\psi=\underset{1\leq i\leq m_{1}}{%
%TCIMACRO{\dbigwedge }%
%BeginExpansion
{\displaystyle\bigwedge}
%EndExpansion
}\psi_{i},\xi=\underset{1\leq j\leq m_{2}}{%
%TCIMACRO{\dbigwedge }%
%BeginExpansion
{\displaystyle\bigwedge}
%EndExpansion
}\xi_{j}$ and there exist $i_{1},\ldots,i_{h}\in\left\{  1,\ldots
,m_{1}\right\}  $ and $j_{1},\ldots,j_{h}\in\left\{  1,\ldots,m_{2}\right\}  $
such that $\psi_{i_{1}}=\xi_{j_{1}},\ldots,\psi_{i_{h}}=\xi_{j_{h}}.$
Then,
\[
\varphi=\left(  \underset{1\leq i\leq m_{1}}{%
%TCIMACRO{\dbigwedge }%
%BeginExpansion
{\displaystyle\bigwedge}
%EndExpansion
}\psi_{i}\right)  \wedge\left(  \underset{j\neq j_{1},\ldots,j\neq j_{h}%
}{\underset{1\leq j\leq m_{2}}{%
%TCIMACRO{\dbigwedge }%
%BeginExpansion
{\displaystyle\bigwedge}
%EndExpansion
}}\xi_{j}\right)
\]

and%

\[
\varphi^{\prime}=\left(  \underset{1\leq i\leq m_{1}}{%
%TCIMACRO{\dbigwedge }%
%BeginExpansion
{\displaystyle\bigwedge}
%EndExpansion
}\psi_{i}^{\prime}\right)  \wedge\left(  \underset{j\neq j_{1},\ldots,j\neq
j_{h}}{\underset{1\leq j\leq m_{2}}{%
%TCIMACRO{\dbigwedge }%
%BeginExpansion
{\displaystyle\bigwedge}
%EndExpansion
}}\xi_{j}^{\prime}\right)
\]
where $\psi_{i}^{\prime}\in next\left(  M_{B_{\varphi},\psi_{i}}\right)
,i\in\left\{  1,\ldots,m_{1}\right\}  ,$ and $\xi_{j}^{\prime}\in next\left(
M_{B_{\varphi},\xi_{j}}\right)  ,j\in\left\{  1,\ldots,m_{2}\right\}
\backslash\left\{  j_{1},\ldots,j_{h}\right\}  .$ We let
\[
B_{\psi}=\left\{  \psi\right\}  \cup\left(  \underset{1\leq i\leq m_{1}%
}{%
%TCIMACRO{\dbigcup }%
%BeginExpansion
{\displaystyle\bigcup}
%EndExpansion
}M_{B_{\varphi},\psi_{i}}\right)
\]
and
\[
B_{\xi}=\left\{  \xi\right\}  \cup\left(  \underset{1\leq j\leq m_{2}}{%
%TCIMACRO{\dbigcup }%
%BeginExpansion
{\displaystyle\bigcup}
%EndExpansion
}M_{B_{\varphi},\xi_{j}}\right)  .
\]

Then, $\psi^{\prime}=\underset{1\leq i\leq m_{1}}{%
%TCIMACRO{\dbigwedge }%
%BeginExpansion
{\displaystyle\bigwedge}
%EndExpansion
}\psi_{i}^{\prime}\in next\left(  B_{\psi}\right)  $ and $\xi^{\prime
}=\xi^{\prime\prime}\wedge\left(  \psi_{i_{1}}^{\prime}\wedge\ldots
\wedge\psi_{i_{h}}^{\prime}\right)  \in next\left(  B_{\xi}\right)  $
where $\xi^{\prime\prime}=\underset{j\neq j_{1},\ldots,j\neq j_{h}}%
{\underset{1\leq j\leq m_{2}}{%
%TCIMACRO{\dbigwedge }%
%BeginExpansion
{\displaystyle\bigwedge}
%EndExpansion
}}\xi_{j}^{\prime}.$ It follows that $\psi_{re}^{\prime}=\left(  \psi
_{re}^{\prime}\wedge\xi_{re}^{\prime}\right)  _{re},$ and since
$M_{B_{\psi},\psi_{i}}=M_{B_{\varphi},\psi_{i}},M_{B_{\xi},\xi_{j}%
}=M_{B_{\varphi},\xi_{j}}\left(  1\leq i\leq m_{1},1\leq j\leq m_{2}\right)  $ we
get that
\begin{align*}
v_{B_{\varphi}}\left(  \varphi^{\prime}\right)   &  =\underset{1\leq i\leq m_{1}}{%
%TCIMACRO{\dprod }%
%BeginExpansion
{\displaystyle\prod}
%EndExpansion
}v_{M_{B_{\varphi}},\psi_{i}}\left(  \psi_{i}^{\prime}\right)
\cdot\underset{j\neq j_{1},\ldots,j\neq j_{h}}{\underset{1\leq j\leq m_{2}}{%
%TCIMACRO{\dprod }%
%BeginExpansion
{\displaystyle\prod}
%EndExpansion
}}v_{M_{B_{\varphi}},\xi_{j}}\left(  \xi_{j}^{\prime}\right) \\
&  =\underset{1\leq i\leq m_{1}}{%
%TCIMACRO{\dprod }%
%BeginExpansion
{\displaystyle\prod}
%EndExpansion
}v_{M_{B_{\varphi}},\psi_{i}}\left(  \psi_{i}^{\prime}\right)
\cdot\underset{j\neq j_{1},\ldots,j\neq j_{h}}{\underset{1\leq j\leq m_{2}}{%
%TCIMACRO{\dprod }%
%BeginExpansion
{\displaystyle\prod}
%EndExpansion
}}v_{M_{B_{\xi}},\xi_{j}}\left(  \xi_{j}^{\prime}\right) \\
&  \cdot\underset{1\leq k\leq h}{%
%TCIMACRO{\dprod }%
%BeginExpansion
{\displaystyle\prod}
%EndExpansion
}v_{M_{B_{\xi},\psi_{i_{k}}}}\left(  \psi_{i_{k}}^{\prime}\right) \\
&  =v_{B_{\psi}}\left(  \psi^{\prime}\right)  \cdot v_{B_{\xi}}\left(
\xi^{\prime}\right)  .
\end{align*}

\item[(b)] $\left(  \psi\wedge\xi\right)  _{re}=\psi\wedge\xi,$ and
our claim follows by definition.
\end{itemize}

Now, let $\psi=true,$ and $\xi\neq true.$ Then, $\varphi=\left(  \psi
\wedge\xi\right)  _{re}=\xi$, and for $B_{\xi}=B_{\varphi},B_{\psi}=\left\{
true\right\}  ,\xi^{\prime}=\varphi^{\prime},$ and $\psi^{\prime}=true$ our
claim obviously holds. For the case where $\psi\neq true,$ and $\xi=true,$
and the case $\psi=\xi=true,$ we act similarly. \
\end{proof}

\begin{lemma}
\label{boithitiko lemma 3}Let $\psi\in bLTL\left(  K,AP\right)  $ and
$\xi\in RULTL\left(  K,AP\right)  $ be reduced and $\pi\in\mathcal{P}\left(
AP\right)  .$ If $\left(  B_{\psi},\pi,B_{\psi^{\prime}}\right)
,\left(  B_{\xi},\pi,B_{\xi^{\prime}}\right)  $ are next transitions with
$\psi^{\prime}\in\widehat{next}\left(  B_{\psi}\right)  ,$ $\xi
^{\prime}\in\widehat{next}\left(  B_{\xi}\right)  $, and $v_{B_{\psi}%
}\left(  \psi^{\prime}\right)  \neq\mathbf{0},v_{B_{\xi}}\left(
\xi^{\prime}\right)  \neq\mathbf{0},$ then for $\varphi=\left(  \psi\wedge
\xi\right)  _{re}$ there exist $B_{\varphi}\neq\emptyset,\varphi^{\prime}\in
\widehat{next}\left(  B_{\varphi}\right)  ,$ and $\psi^{\prime\prime}\in
RULTL\left(  K,AP\right)  $ such that

\begin{itemize}
\item[(i)] $\left(  B_{\varphi},\pi,B_{\varphi^{\prime}}\right)  $ is a next
transition for every $B_{\varphi^{\prime}}$ and $v_{B_{\varphi}}\left(  \varphi^{\prime
}\right)  \geq v_{B_{\psi}}\left(  \psi^{\prime}\right)  \cdot
v_{B_{\xi}}\left(  \xi^{\prime}\right)  ,$

\item[(ii)] $\varphi_{re}^{\prime}=\left(  \psi_{re}^{\prime}\wedge\psi
_{re}^{\prime\prime}\right)  _{re},$ and for every infinite sequence of next
and $\varepsilon$-reduction transitions
\[
B_{\xi^{0}}\overset{\pi_{0}}{\longrightarrow}B_{\xi^{1}}\overset
{\varepsilon}{\longrightarrow}B_{\xi_{re}^{1}}\overset{\pi_{1}}%
{\longrightarrow}B_{\xi^{2}}\overset{\varepsilon}{\longrightarrow}%
B_{\xi_{re}^{2}}\ldots
\]
with $\xi^{0}=\xi_{re}^{\prime}$ and $v_{B_{\xi_{re}^{i}}}\left(
\xi^{i+1}\right)  \neq\mathbf{0}$ $\left(  i\geq0\right)  $, there exist an
infinite sequence of next and $\varepsilon$-reduction transitions
\[
B_{\lambda^{0}}\overset{\pi_{0}}{\longrightarrow}B_{\lambda^{1}}%
\overset{\varepsilon}{\longrightarrow}B_{\lambda_{re}^{1}}\overset{\pi_{1}%
}{\longrightarrow}B_{\lambda^{2}}\overset{\varepsilon}{\longrightarrow
}B_{\lambda_{re}^{2}}\ldots
\]
with $\lambda^{0}=\psi_{re}^{\prime\prime}$ and $v_{B_{\lambda_{re}^{i}}%
}\left(  \lambda^{i+1}\right)  =v_{B_{\xi_{re}^{i}}}\left(  \xi
^{i+1}\right)  $ for every $i\geq0.$
\end{itemize}
\end{lemma}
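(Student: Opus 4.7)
The plan is to reverse-engineer the construction of Lemma \ref{boithitiko lemma 2}. First I would invoke the form-A remark to write $\varphi=\bigwedge_{i=1}^{m_1}\varphi_i$ and $\psi=\bigwedge_{j=1}^{m_2}\psi_j$, and identify the index pairs $\{(i_k,j_k)\}_{k=1}^h$ at which $\varphi_{i_k}=\psi_{j_k}$; these coincidences (together with any $true$ conjuncts) are precisely what $\xi=(\varphi\wedge\psi)_{re}$ collapses. Then I would take $B_\xi=\{\xi\}\cup(B_\varphi\setminus\{\varphi\})\cup(B_\psi\setminus\{\psi\})$, which is $\xi$-consistent since the required conjuncts of $\xi$ together with their downward closures already lie in $B_\varphi\cup B_\psi$, and set
\[
\xi'=\Bigl(\bigwedge_{i=1}^{m_1}\varphi_i'\Bigr)\wedge\Bigl(\bigwedge_{j\notin\{j_1,\ldots,j_h\}}\psi_j'\Bigr).
\]
Since $M_{B_\xi,\lambda}\supseteq M_{B_\varphi,\lambda}$ (and $\supseteq M_{B_\psi,\lambda}$ when applicable), Lemma \ref{boithitiko lemma 1} lifts each conjunct of $\xi'$ into $next(M_{B_\xi,\cdot})$ with values that dominate the original ones, so $\xi'\in\widehat{next}(B_\xi)$; the atomic/negation conditions on $\pi$ inherited from the two given next transitions then deliver the next-transition requirement for $(B_\xi,\pi,B_{\xi'})$.

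For the value inequality in (i), the key observation is that $\varphi\in bLTL\left(K,AP\right)$ implies by a routine induction on boolean formulas that $v_{B_\varphi}(\varphi')\in\{\mathbf{0},\mathbf{1}\}$; since by hypothesis this value is non-zero it equals $\mathbf{1}$, and the target reduces to $v_{B_\xi}(\xi')\ge v_{B_\psi}(\psi')$. I would factor both sides over their conjuncts, apply Lemma \ref{boithitiko lemma 1} factor by factor, and note that each discarded $v_{M_{B_\psi,\psi_{j_k}}}(\psi_{j_k}')$ is boolean and non-zero (else $v_{B_\psi}(\psi')=\mathbf{0}$, contradicting the hypothesis), hence equal to $\mathbf{1}$. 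Form A ensures at most one factor in each product is non-boolean, so Remark \ref{remark_commutative} supplies the associativity and commutativity of $\cdot$ needed for the comparison.

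For part (ii), I would set $\psi''=\bigwedge_{j\notin\{j_1,\ldots,j_h\}}\psi_j'$, which lies in $RULTL\left(K,AP\right)$ by form A (all but at most one conjunct are boolean); the identity $\xi_{re}'=(\varphi_{re}'\wedge\psi_{re}'')_{re}$ is then immediate from the definition of $(\cdot)_{re}$ together with the equivalences of Proposition \ref{LTL equiv}. The path correspondence is produced step by step: given an infinite next/$\varepsilon$-reduction path out of $\psi_{re}'$ with non-zero step weights, at each next transition I reapply the present construction recursively on the current states to obtain a matching transition out of the $\psi''$-side state with the same weight, while each $\varepsilon$-reduction is mirrored directly since $(\cdot)_{re}$ does not alter $v$. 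The main obstacle is the bookkeeping across the infinite construction: because $next$ and $v$ are defined on the unreduced form-A representative of a formula while consecutive path states are reduced, every step requires re-decomposing the newly reached maximal formula and re-invoking Lemmas \ref{boithitiko lemma 1} and \ref{boithitiko lemma 2} so as to propagate the invariant that the $\psi''$-side path mirrors the $\psi'$-side with matching per-step weights.
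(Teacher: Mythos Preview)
Your overall strategy matches the paper's: decompose $\varphi$ and $\psi$ via form A, build $B_\xi$ from the constituent consistent sets, assemble $\xi'$ from the non-duplicated conjunct-wise next formulas, and use Lemma~\ref{boithitiko lemma 1} together with the fact that boolean conjuncts contribute $\mathbf{1}$ to get the value inequality. Two points deserve tightening.

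First, your $B_\xi=\{\xi\}\cup(B_\varphi\setminus\{\varphi\})\cup(B_\psi\setminus\{\psi\})$ is not quite right: under the paper's parsing, $\xi=(\bigwedge_i\varphi_i)\wedge(\bigwedge_{j\notin\{j_1,\ldots,j_h\}}\psi_j)$, so $\varphi=\bigwedge_i\varphi_i$ is itself a sub-conjunction of $\xi$ and must lie in any $\xi$-consistent set; removing it breaks consistency. The paper instead takes $B_\xi=\{\xi\}\cup\bigcup_i M_{B_\varphi,\varphi_i}\cup\bigcup_j M_{B_\psi,\psi_j}$ and defers the (non-trivial) consistency verification to \cite{Ma-Co}. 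Either keep $\varphi$ in, or close under the required intermediate conjunctions.

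Second, for part (ii) your phrase ``reapply the present construction recursively'' points the wrong way: the present lemma \emph{combines} two transitions into one and yields only an inequality, whereas (ii) asks you to \emph{split} a given $\psi'_{re}$-path and obtain \emph{equality} of per-step weights. The paper's mechanism is exactly Lemma~\ref{boithitiko lemma 2}: write $\psi^0=\psi'_{re}=(\psi''_{re}\wedge\zeta^0)_{re}$ with $\zeta^0=(\psi'_{j_1}\wedge\cdots\wedge\psi'_{j_h})_{re}$ boolean, then inductively factor each step as $v_{B_{\psi^i_{re}}}(\psi^{i+1})=v_{B_{\lambda^i_{re}}}(\lambda^{i+1})\cdot v_{B_{\zeta^i_{re}}}(\zeta^{i+1})$; since each $\zeta^i$ remains boolean and the value is non-zero, the second factor is $\mathbf{1}$ and the $\lambda$-path inherits the exact weights. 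You do cite Lemma~\ref{boithitiko lemma 2} in your final sentence, so the right tool is on the table, but the argument should lead with it rather than with a recursive appeal to the lemma being proved; Lemma~\ref{boithitiko lemma 1} plays no role in this half.
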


\begin{proof}
First, we assume that both $\psi,\xi$ are different from $true$ and we
point out the following cases.

(a) $\psi=\underset{1\leq i\leq m_{1}}{%
%TCIMACRO{\dbigwedge }%
%BeginExpansion
{\displaystyle\bigwedge}
%EndExpansion
}\psi_{i},\xi=\left(  \underset{1\leq j\leq m_{2}+1}{%
%TCIMACRO{\dbigwedge }%
%BeginExpansion
{\displaystyle\bigwedge}
%EndExpansion
}\xi_{j}\right)  $ with $\xi_{m_{2}+1}\in RULTL\left(  K,AP\right)  $,
$\xi_{j}\in bLTL\left(  K,AP\right)  $ for every $j\in\left\{  1,\ldots
,m_{2}\right\}  ,$ and there exist $i_{1},\ldots,i_{h}\in\left\{
1,\ldots,m_{1}\right\}  $, $j_{1},\ldots,j_{h}\in\left\{  1,\ldots
,m_{2}\right\}  $ such that $\psi_{i_{1}}=\xi_{j_{1}},\ldots
,\psi_{i_{h}}=\xi_{j_{h}}$. Then, $\psi^{\prime}=\underset{1\leq i\leq
m_{1}}{%
%TCIMACRO{\dbigwedge }%
%BeginExpansion
{\displaystyle\bigwedge}
%EndExpansion
}\psi_{i}^{\prime},\xi^{\prime}=\underset{1\leq j\leq m_{2}+1}{%
%TCIMACRO{\dbigwedge }%
%BeginExpansion
{\displaystyle\bigwedge}
%EndExpansion
}\xi_{j}^{^{\prime}}$ where $\psi_{i}^{^{\prime}}\in\widehat{next}\left(
M_{B_{\psi},\psi_{i}}\right)  $ for every $i\in\left\{  1,\ldots
,m_{1}\right\}  ,$ $\xi_{j}^{\prime}\in\widehat{next}\left(  M_{B_{\xi}%
,\xi_{j}}\right)  $ for every $\left\{  1,\ldots,m_{2}+1\right\}  $. Clearly,
$\varphi=\left(  \underset{1\leq i\leq m_{1}}{%
%TCIMACRO{\dbigwedge }%
%BeginExpansion
{\displaystyle\bigwedge}
%EndExpansion
}\psi_{i}\right)  \wedge\left(  \underset{j\neq j_{1},\ldots,j\neq j_{h}%
}{\underset{1\leq j\leq m_{2}+1}{%
%TCIMACRO{\dbigwedge }%
%BeginExpansion
{\displaystyle\bigwedge}
%EndExpansion
}}\xi_{j}\right)  .$ Let $B_{\varphi}=\left\{  \varphi\right\}  \cup\left(
\underset{1\leq i\leq m_{1}}{%
%TCIMACRO{\dbigcup }%
%BeginExpansion
{\displaystyle\bigcup}
%EndExpansion
}M_{B_{\psi},\psi_{i}}\right)  \cup\left(  \underset{1\leq j\leq
m_{2}+1}{%
%TCIMACRO{\dbigcup }%
%BeginExpansion
{\displaystyle\bigcup}
%EndExpansion
}M_{B_{\xi},\xi_{j}}\right)  .$ We can prove that $B_{\varphi}$ is a $\varphi
$-consistent set (see proof of Lemma 107 in \cite{Ma-Co}). Moreover,
$M_{B_{\psi},\psi_{i}}\subseteq M_{B_{\varphi},\psi_{i}}$ and
$M_{B_{\xi},\xi_{j}}\subseteq M_{B_{\varphi},\xi_{j}}$ $\left(  1\leq i\leq
m_{1},1\leq j\leq m_{2}\right)  $, which by Lemma \ref{boithitiko lemma 1}
implies that $\varphi^{\prime}=\left(  \underset{1\leq i\leq m_{1}}{%
%TCIMACRO{\dbigwedge }%
%BeginExpansion
{\displaystyle\bigwedge}
%EndExpansion
}\psi_{i}^{\prime}\right) \\ \wedge\left(  \underset{j\neq j_{1},\ldots,j\neq
j_{h}}{\underset{1\leq j\leq m_{2}+1}{%
%TCIMACRO{\dbigwedge }%
%BeginExpansion
{\displaystyle\bigwedge}
%EndExpansion
}}\xi_{j}^{\prime}\right)  \in\widehat{next}\left(  B_{\varphi}\right)  .$
\ Therefore, $\left(  B_{\varphi},\pi,B_{\varphi^{\prime}}\right)  $ is a next
transition and
\begin{align*}
v_{B_{\varphi}}\left(  \varphi^{\prime}\right)   &  =\underset{1\leq i\leq m_{1}}{%
%TCIMACRO{\dprod }%
%BeginExpansion
{\displaystyle\prod}
%EndExpansion
}v_{M_{B_{\varphi}},\psi_{i}}\left(  \psi_{i}^{\prime}\right)  \cdot\left(
\underset{j\neq j_{1},\ldots,j\neq j_{h}}{\underset{1\leq j\leq m_{2}+1}{%
%TCIMACRO{\dprod }%
%BeginExpansion
{\displaystyle\prod}
%EndExpansion
}}v_{M_{B_{\varphi}},\xi_{j}}\left(  \xi_{j}^{\prime}\right)  \right) \\
&  \geq\underset{1\leq i\leq m_{1}}{%
%TCIMACRO{\dprod }%
%BeginExpansion
{\displaystyle\prod}
%EndExpansion
}v_{M_{B_{\psi},\psi_{i}}}\left(  \psi_{i}^{\prime}\right)
\cdot\left(  \underset{j\neq j_{1},\ldots,j\neq j_{h}}{\underset{1\leq j\leq
m_{2}+1}{%
%TCIMACRO{\dprod }%
%BeginExpansion
{\displaystyle\prod}
%EndExpansion
}}v_{M_{B_{\xi}},\xi_{j}}\left(  \xi_{j}^{\prime}\right)  \right) \\
&  \cdot\underset{1\leq k\leq h}{{\prod}}v_{M_{B_{\xi},\xi_{j_{k}}}}\left(
\xi_{j_{k}}^{\prime}\right) \\
&  =v_{B_{\psi}}\left(  \psi^{\prime}\right)  \cdot v_{B_{\xi}}\left(
\xi^{\prime}\right)
\end{align*}
where the inequality is concluded
due to the following: By Lemma \ref{boithitiko lemma 1}, we get $v_{M_{B_{\varphi}},\psi_{i}}\left(
\psi_{i}^{\prime}\right)  \geq v_{M_{B_{\psi}},\psi_{i}}\left(
\psi_{i}^{\prime}\right)  $, and $v_{M_{B_{\varphi}},\xi_{j}}\left(  \xi_{j}^{\prime}\right)  \geq v_{M_{B_{\xi
}},\xi_{j}}\left(  \xi_{j}^{\prime}\right)  $ for every $1\leq i\leq m_{1},$ $1\leq j\leq
m_{2}+1.$ It holds $v_{M_{B_{\psi}},\psi_{i}}\left(  \psi_{i}^{\prime
}\right)  =\mathbf{1,}$ $v_{M_{B_{\xi}},\xi_{j}}\left(  \xi_{j}^{\prime
}\right)  =\mathbf{1}$ $\left(  1\leq i\leq m_{1},1\leq j\leq m_{2}\right)  ,$
which implies $v_{M_{B_{\varphi}},\psi_{i}}\left(  \psi_{i}^{\prime
}\right)  =\mathbf{1,}v_{M_{B_{\varphi}},\xi_{j}}\left(  \xi_{j}^{\prime
}\right)  =\mathbf{1}$ $\left(  1\leq i\leq m_{1},1\leq j\leq m_{2}\right)  .$ Then, taking into account Remark \ref{remark_commutative}, we conclude the inequality.

We have completed the proof of (i). In order to prove (ii) we set
$\psi^{\prime\prime}=\underset{j\neq j_{1},\ldots,j\neq j_{h}}{\underset{1\leq
j\leq m_{2}+1}{%
%TCIMACRO{\dbigwedge }%
%BeginExpansion
{\displaystyle\bigwedge}
%EndExpansion
}}\xi_{j}^{\prime}.$ It holds $\varphi_{re}^{\prime}=\left(  \psi_{re}%
^{\prime}\wedge\psi_{re}^{\prime\prime}\right)  _{re}$ . We consider now the
infinite sequence of next and $\varepsilon$-reduction transitions
\[
B_{\xi^{0}}\overset{\pi_{0}}{\rightarrow}B_{\xi^{1}}\overset{\varepsilon
}{\rightarrow}B_{\xi_{re}^{1}}\overset{\pi_{1}}{\rightarrow}B_{\xi^{2}%
}\overset{\varepsilon}{\rightarrow}B_{\xi_{re}^{2}}\ldots
\]
with $\xi^{0}=\xi_{re}^{\prime}$ and $v_{B_{\xi_{re}^{i}}}\left(
\xi^{i+1}\right)  \neq\mathbf{0}$ $\left(  i\geq0\right)  .$ Clearly,
\[
\xi^{0}=\xi_{re}^{\prime}=\left(  \psi_{re}^{\prime\prime}\wedge\left(
\psi_{j_{1}}^{\prime}\wedge\ldots\wedge\psi_{j_{h}}^{\prime}\right)
_{re}\right)  _{re}.
\]
Then, for $\lambda^{0}=\psi_{re}^{\prime\prime}$, and $\zeta^{0}=\left(
\xi_{j_{1}}^{\prime}\wedge\ldots\wedge\xi_{j_{h}}^{\prime}\right)  _{re}$,
by induction on $i$ and Lemma \ref{boithitiko lemma 2}, we obtain that for every $i\geq0$,
there exist a $\lambda_{re}^{i}$-consistent set $B_{\lambda_{re}^{i}}$, and a
$\zeta_{re}^{i}$-consistent set $B_{\zeta_{re}^{i}}$, and formulas
$\lambda^{i+1}\in next\left(  B_{\lambda_{re}^{i}}\right)  ,$ $\zeta^{i+1}\in
next\left(  B_{\zeta_{re}^{i}}\right)  $ such that $\xi_{re}^{i+1}=\left(
\lambda_{re}^{i+1}\wedge\zeta_{re}^{i+1}\right)  _{re}$, and $v_{B_{\xi
_{re}^{i}}}\left(  \xi^{i+1}\right)  =v_{B_{\lambda_{re}^{i}}}\left(
\lambda^{i+1}\right)  \cdot v_{B_{\zeta_{re}^{i}}}\left(  \zeta^{i+1}\right)
$. For every $i\geq0$, $v_{B_{\xi_{re}^{i}}}\left(  \xi^{i+1}\right)
\neq\mathbf{0}$ and $\zeta_{re}^{i}$ is boolean, hence $v_{B_{\zeta_{re}^{i}}%
}\left(  \zeta^{i+1}\right)  =\mathbf{1}$, i.e., $v_{B_{\xi_{re}^{i}}}\left(
\xi^{i+1}\right)  =v_{B_{\lambda_{re}^{i}}}\left(  \lambda^{i+1}\right)  $
for every $i\geq0$. So, the sequence
\[
B_{\lambda^{0}}\overset{\pi_{0}}{\rightarrow}B_{\lambda^{1}}\overset{\ast
}{\rightarrow}B_{\lambda_{re}^{1}}\overset{\pi_{1}}{\rightarrow}B_{\lambda
^{2}}\overset{\ast}{\rightarrow}B_{\lambda_{re}^{2}}\ldots
\]
satisfies the lemma's claim.

(b) If $\left(  \psi\wedge\xi\right)  _{re}=\psi\wedge\xi,$ we set
$B_{\varphi}=\left\{  \varphi\right\}  \cup B_{\psi}\cup B_{\xi}$, and we proceed
in the same way. Finally, it is trivial to prove our claim in the cases where
at least one of $\psi,\xi$ equals to $true.$
\end{proof}

\begin{lemma}
\label{Lemma conjuction}Let $\varphi=\psi\wedge\xi$ with $\psi\in bLTL\left(
K,AP\right)  $ and $\xi\in RULTL\left(  K,AP\right)  $. If $\mathcal{A_{\psi}%
},\mathcal{A_{\xi}}$ recognize $\left\Vert \psi\right\Vert ,\left\Vert
\mathcal{\xi}\right\Vert $ respectively, then $\mathcal{A_{\varphi}}$
recognizes $\left\Vert \varphi\right\Vert $.
\end{lemma}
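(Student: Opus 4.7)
The plan is to follow the template of the disjunction case (Lemma~\ref{Lemma disjuction}): I would establish the two opposing inequalities $\sum_{k\in pri_{\mathcal{A}_{\varphi}}(w)}k\leq(\|\psi\|,w)\cdot(\|\xi\|,w)$ and $(\|\psi\|,w)\cdot(\|\xi\|,w)\leq\sum_{k\in pri_{\mathcal{A}_{\varphi}}(w)}k$ for every $w\in(\mathcal{P}(AP))^{\omega}$, and conclude equality by idempotency together with the induction hypothesis. A useful preliminary observation is that, since $\psi\in bLTL(K,AP)$, every path in $next_{\mathcal{A}_{\psi}}(w)$ has weight in $\{\mathbf{0},\mathbf{1}\}$: the transition weights $v_{B}(\cdot)$ in $\mathcal{A}_{\psi}$ are boolean, so either some coordinate in the weight sequence equals $\mathbf{0}$ and $Val^{\omega}$ vanishes, or every coordinate equals $\mathbf{1}$ and Property~\ref{Property 1} with $Val^{\omega}(\mathbf{1}^{\omega})=\mathbf{1}$ forces the weight to be $\mathbf{1}$.

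For the upper bound I would take any $P_{w}\in next_{\mathcal{A}_{\varphi}}(w)$ with nonzero weight and apply Lemma~\ref{boithitiko lemma 2} iteratively at its successive next-transitions. At the $i$-th step the lemma supplies $\psi$- and $\xi$-consistent sets $B_{\psi^{i}},B_{\xi^{i}}$ and next-formulas $\psi^{i+1}\in next(B_{\psi^{i}})$, $\xi^{i+1}\in next(B_{\xi^{i}})$ whose reductions remain consistent along the run and which factor the $i$-th transition weight of $P_{w}$ as $v_{B_{\psi^{i}}}(\psi^{i+1})\cdot v_{B_{\xi^{i}}}(\xi^{i+1})$. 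Stitching these steps produces paths $P_{w}^{1}\in next_{\mathcal{A}_{\psi}}(w)$ and $P_{w}^{2}\in next_{\mathcal{A}_{\xi}}(w)$ whose transition-weight sequences multiply coordinatewise to that of $P_{w}$. By the preliminary observation, either $weight_{\mathcal{A}_{\psi}}(P_{w}^{1})=\mathbf{0}$ (so some boolean factor vanishes and the product is $\mathbf{0}$ on both sides) or $weight_{\mathcal{A}_{\psi}}(P_{w}^{1})=\mathbf{1}$ (so the $Val^{\omega}$-sequences of $P_{w}$ and $P_{w}^{2}$ coincide). In either subcase $weight_{\mathcal{A}_{\varphi}}(P_{w})=weight_{\mathcal{A}_{\psi}}(P_{w}^{1})\cdot weight_{\mathcal{A}_{\xi}}(P_{w}^{2})$, and Lemma~\ref{Lemma inequality sum} combined with the induction hypothesis delivers the desired inequality.

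For the reverse inequality I would dualize, using Lemma~\ref{boithitiko lemma 3}. Given $P_{w}^{1}\in next_{\mathcal{A}_{\psi}}(w)$ and $P_{w}^{2}\in next_{\mathcal{A}_{\xi}}(w)$ of nonzero weight, I would take $B_{\varphi}=\{\varphi\}\cup B_{\psi}\cup B_{\xi}$ (reduced if necessary) as the starting consistent set and build a path $P_{w}\in next_{\mathcal{A}_{\varphi}}(w)$ step by step: clause~(i) of the lemma supplies, at each step, a next-transition of $\mathcal{A}_{\varphi}$ whose weight dominates the product of the two side weights, while clause~(ii) guarantees a compatible continuation of the $\xi$-branch with an unchanged weight profile. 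By Lemma~\ref{Valuation inequality} the resulting run satisfies $weight_{\mathcal{A}_{\varphi}}(P_{w})\geq weight_{\mathcal{A}_{\psi}}(P_{w}^{1})\cdot weight_{\mathcal{A}_{\xi}}(P_{w}^{2})$, and a final application of Lemma~\ref{Lemma inequality sum}, together with the boolean case distinction on the $\psi$-side, yields $(\|\psi\|,w)\cdot(\|\xi\|,w)\leq(\|\mathcal{A}_{\varphi}\|,w)$.

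The main obstacle is the B\"uchi bookkeeping. In both directions one must verify that the extracted or constructed paths are genuinely successful, i.e., visit every final set $F_{\varphi'U\varphi''}$ infinitely often. Since $\psi$ is boolean, the until-subformulas appearing in $\varphi=\psi\wedge\xi$ are precisely those of $\xi$, so the acceptance conditions of $\mathcal{A}_{\varphi}$ are in natural bijection with those of $\mathcal{A}_{\xi}$ under the factorization; Remark~\ref{Remark} together with the matching supplied by Lemmas~\ref{boithitiko lemma 2} and~\ref{boithitiko lemma 3} lets one transport the ``visits infinitely often'' property between the three runs. A secondary technicality is aligning the $\varepsilon$-reduction transitions of the three automata so that the reduced formulas stay synchronized along the runs --- this is precisely what clause~(ii) of Lemma~\ref{boithitiko lemma 3} is crafted to ensure.
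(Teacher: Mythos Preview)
Your overall architecture is right and matches the paper: split into two inequalities, factor $P_{w}$ via Lemma~\ref{boithitiko lemma 2} for the upper bound, and assemble $P_{w}$ via Lemma~\ref{boithitiko lemma 3} for the lower bound, exploiting throughout that the $\psi$-component is boolean so its transition weights are all $\mathbf{1}$ on a nonzero path. The paper does exactly this.

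However, your treatment of the B\"uchi bookkeeping contains a real error. You write that ``since $\psi$ is boolean, the until-subformulas appearing in $\varphi=\psi\wedge\xi$ are precisely those of $\xi$.'' This is false: the boolean fragment $bLTL(K,AP)$ is closed under $U$, so $\psi$ may well contain until-subformulas, and $cl(\varphi)$ contains the union $cl(\psi)\cup cl(\xi)$. The acceptance conditions of $\mathcal{A}_{\varphi}$ therefore include final sets $F_{\varphi_{1}U\varphi_{2}}$ coming from $\psi$, and---more delicately---there may be subformulas $\varphi_{1}U\varphi_{2}\in cl(\psi)\cap cl(\xi)$ shared by both sides.

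This shared case is exactly where your lower-bound argument breaks. When you merge $P_{w}^{1}$ and $P_{w}^{2}$ via Lemma~\ref{boithitiko lemma 3}, the resulting state at position $i$ carries $\varphi_{1}U\varphi_{2}$ in its maximal formula whenever \emph{either} component does; so $P_{w}$ visits $F_{\varphi_{1}U\varphi_{2}}$ only at positions where \emph{both} $P_{w}^{1}$ and $P_{w}^{2}$ happen to discharge that until simultaneously. If $\varphi_{1}U\varphi_{2}$ sits under an always in both $\psi$ and $\xi$, the two paths may each discharge it infinitely often yet never at the same position, and then the merged $P_{w}$ is not successful. The paper isolates this as case~(d) and fixes it by replacing $P_{w}^{2}$ with a modified path $\widehat{P_{w}^{2}}$ (built by a separate ``Procedure~3'' in the appendix) whose discharge positions for each such shared $\varphi_{1}U\varphi_{2}$ are forced to coincide with those of $P_{w}^{1}$, while preserving the weight inequality. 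Remark~\ref{Remark} alone, which you invoke, only covers untils \emph{not} under an always and does not resolve this.
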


\begin{proof}
Let $\mathcal{A_{\psi}=}\left(  Q_{1},wt_{1},I_{1},\mathcal{F}_{1}\right)  $,
$\mathcal{A_{\xi}=}\left(  Q_{2},wt_{2},I_{2},\mathcal{F}_{2}\right)  $,
$\mathcal{A_{\varphi}=}\left(  Q,wt,I,\mathcal{F}\right)  ,$ and $w=\pi_{0}%
\pi_{1}\pi_{2}\ldots\in\left(  \mathcal{P}\left(  AP\right)  \right)
^{\omega}$. First, we show that $\left(  \left\Vert \mathcal{A_{\varphi}%
}\right\Vert ,w\right)  \leq\left(  \left\Vert \varphi\right\Vert ,w\right)
.$ In order to do this, it is necessary to prove that for every $P_{w}\in
next_{\mathcal{A}_{\varphi}}\left(  w\right)  $, there exist $P_{w}^{1}\in
next_{\mathcal{A_{\psi}}}\left(  w\right)  $ and $P_{w}^{2}\in
next_{\mathcal{A_{\xi}}}\left(  w\right)  $ such that
$weight_{\mathcal{A_{\varphi}}}\left(  P_{w}\right)  =weight_{\mathcal{A}%
_{\psi}}\left(  P_{w}^{1}\right)  \cdot weight_{\mathcal{A_{\xi}}}\left(
P_{w}^{2}\right)  $. If $weight_{\mathcal{A_{\varphi}}}\left(  P_{w}\right)
=\mathbf{0}$, then the paths $P_{w}^{1}$, $P_{w}^{2}$ can be defined in the
obvious way. Otherwise, it is possible to define the paths $P_{w}^{1}$,
$P_{w}^{2}$ due to the following. At every next transition of $P_{w}$ the
automaton $\mathcal{A}_{\varphi}$ simulates two next transitions, one of
$\mathcal{A_{\psi}}$ and one of $\mathcal{A}_{\xi},$ and multiplies their
weights. Since $\varphi$ is reduced there are two possibilities. Either
$P_{w}$ starts with a next transition, or if not, before realizing the first
next transition the automaton realizes a finite number of $\varepsilon
$-transitions connecting $\varphi$-consistent sets. In the second case, the
weight of the path coincides with the weight of the suffix path starting with
the first next transition. So it suffices to prove our claim for paths
$P_{w}\in next_{\mathcal{A}_{\varphi}}\left(  w\right)  $ with non-zero weight
of the form
\[
P_{w}:B_{\varphi^{0}}\overset{\pi_{0}}{\rightarrow}B_{\varphi^{1}}%
\overset{\ast}{\rightarrow}B_{\varphi_{re}^{1}}\overset{\pi_{1}}{\rightarrow
}B_{\varphi^{2}}\overset{\ast}{\rightarrow}B_{\varphi_{re}^{2}}\ldots
\]
where $\varphi^{0}=\left(  \psi\wedge\xi\right)  _{re}=\psi\wedge\xi.$

We let $\psi^{0}=\psi$ and $\xi^{0}=\xi$. By induction and Lemma
\ref{boithitiko lemma 2}, we get that for every $i\geq0,$ there exist a
$\psi_{re}^{i}$-consistent set $B_{\psi_{re}^{i}}$, a $\xi_{re}^{i}%
$-consistent set $B_{\xi_{re}^{i}}$, and formulas $\psi^{i+1}\in next\left(
B_{\psi_{re}^{i}}\right)  ,$ $\xi^{i+1}\in next\left(  B_{\xi_{re}^{i}%
}\right)  $ such that $\varphi_{re}^{i+1}=\left(  \psi_{re}^{i+1}\wedge
\xi_{re}^{i+1}\right)  _{re}$ and $wt\left(    B_{\varphi_{re}^{i}}%
,\pi_{i},B_{\varphi^{i+1}} \right)  =v_{B_{\psi_{re}^{i}}}\left(
\psi^{i+1}\right)  \cdot v_{B_{\xi_{re}^{i}}}\left(  \xi^{i+1}\right)  .$\footnote{We shall call this inductive procedure, Procedure 1.}

So, the sequences
\[
P_{w}^{1}:B_{\psi^{0}}\overset{\pi_{0}}{\rightarrow}B_{\psi^{1}}%
\overset{\varepsilon}{\rightarrow}B_{\psi_{re}^{1}}\overset{\pi_{1}%
}{\rightarrow}B_{\psi^{2}}\overset{\varepsilon}{\rightarrow}B_{\psi_{re}^{2}%
}\ldots
\]

and%

\[
P_{w}^{2}:B_{\xi^{0}}\overset{\pi_{0}}{\rightarrow}B_{\xi^{1}}\overset
{\varepsilon}{\rightarrow}B_{\xi_{re}^{1}}\overset{\pi_{1}}{\rightarrow}%
B_{\xi^{2}}\overset{\varepsilon}{\rightarrow}B_{\xi_{re}^{2}}\ldots
\]
form successful paths of next and $\varepsilon$-transitions of
$\mathcal{A_{\psi}}$ and $\mathcal{A_{\xi}}$, respectively. We note that in
the above paths for every $i\geq0$, $B_{\psi^{i}},B_{\xi^{i}}$ are non-empty sets and
$\psi^{i}$-consistent, $\xi^{i}$-consistent respectively. It holds%

\[%
\begin{array}
[c]{cl}%
weight_{\mathcal{A_{\varphi}}}\left(  P_{w}\right)  & =Val^{\omega}\left(
wt\left(   B_{\varphi_{re}^{i}},\pi_{i},B_{\varphi^{i+1}}\right)
 \right)  _{i\geq0}\\
& =Val^{\omega}\left(  wt_{1}\left(   B_{\psi_{re}^{i}},\pi_{i}%
,B_{\psi^{i+1}}\right)    \cdot wt_{2}\left(    B_{\xi_{re}^{i}%
},\pi_{i},B_{\xi^{i+1}}\right)    \right)  _{i\geq0}\\
& =Val^{\omega}\left(  wt_{2}\left(   B_{\xi_{re}^{i}},\pi_{i}%
,B_{\xi^{i+1}}\right)   \right)  _{i\geq0}\\
& =Val^{\omega}\left(  wt_{1}\left(    B_{\psi_{re}^{i}},\pi_{i}%
,B_{\psi^{i}}\right)    \right)  _{i\geq0}\cdot Val^{\omega}\left(
wt_{2}\left(   B_{\xi_{re}^{i}},\pi_{i},B_{\xi^{i}}  \right)
\right)  _{i\geq0}\\
& =weight_{\mathcal{A_{\psi}}}\left(  P_{w}^{1}\right)  \cdot
weight_{\mathcal{A_{\xi}}}\left(  P_{w}^{2}\right)
\end{array}
\]
where the third and fourth equality hold by the fact that $P_{w}^{1}$ is a
path on the boolean formula $\psi$ with $weight_{\mathcal{A}_{\psi}}\left(
P_{w}^{1}\right)  =\boldsymbol{1}$, i.e., the weight of each next transition
appearing in the path is equal to $\mathbf{1}$. We thus conclude that for
every $k\in pri_{\mathcal{A_{\varphi}}}\left(  w\right)  \setminus\left\{
\mathbf{0}\right\}  $ there exist $k_{1}\in pri_{\mathcal{A_{\psi}}}\left(
w\right)  \setminus\left\{  \mathbf{0}\right\}  =\left\{  \mathbf{1}\right\}
,k_{2}\in pri_{\mathcal{A_{\xi}}}\left(  w\right)  \backslash\left\{
\mathbf{0}\right\}  $ such that $k\leq k_{1}\cdot k_{2}=k_{2}$ which implies%
\[%
\begin{array}
[c]{ll}%
\left(  \left\Vert \mathcal{A}_{\varphi}\right\Vert ,w\right)  &
=\underset{k\in pri_{\mathcal{A}_{\varphi}}\left(  w\right)  \setminus\left\{
\mathbf{0}\right\}  }{\sum}k\\
& \leq\underset{k_{2}\in pri_{\mathcal{A}_{\xi}}\left(  w\right)
\backslash\left\{  \mathbf{0}\right\}  }{\sum}k_{2}\\
& =\left(  \underset{k_{1}\in pri_{\mathcal{A_{\psi}}}\left(  w\right)
\backslash\left\{  \mathbf{0}\right\}  }{\sum}k_{1}\right)  \cdot\left(
\underset{k_{2}\in pri_{\mathcal{A_{\xi}}}\left(  w\right)  \backslash\left\{
\mathbf{0}\right\}  }{\sum}k_{2}\right) \\
& =\left(  \left\Vert \psi\right\Vert ,w\right)  \cdot\left(  \left\Vert
\xi\right\Vert ,w\right) \\
& =\left(  \left\Vert \varphi\right\Vert ,w\right).
\end{array}
\]
Clearly, if $\left(  \left\Vert \mathcal{A}_{\varphi}\right\Vert ,w\right)
=\mathbf{0}$, it holds $\left(  \left\Vert \mathcal{A}_{\varphi}\right\Vert
,w\right)  \leq\left(  \left\Vert \varphi\right\Vert ,w\right)  .$

Now, we prove that $\left(  \left\Vert \varphi\right\Vert ,w\right)
\leqslant\left(  \left\Vert \mathcal{A}_{\varphi}\right\Vert ,w\right)  $. For
this, we first prove that for every $P_{w}^{1}\in next_{\mathcal{A_{\psi}}%
}\left(  w\right)  $ with $weight_{\mathcal{A_{\psi}}}\left(  w\right)
=\mathbf{1}$ and every $P_{w}^{2}\in next_{\mathcal{A_{\xi}}}\left(  w\right)
$ with $weight_{\mathcal{A_{\xi}}}\left(  w\right)  \neq\mathbf{0}$ there is a
$P_{w}\in next_{\mathcal{A_{\varphi}}}\left(  w\right)  $ such that
$weight_{\mathcal{A_{\psi}}}\left(  P_{w}^{1}\right)  \cdot
weight_{\mathcal{A_{\xi}}}\left(  P_{w}^{2}\right)  =weight_{\mathcal{A_{\xi}%
}}\left(  P_{w}^{2}\right)  \leq weight_{\mathcal{A_{\varphi}}}\left(
P_{w}\right)  $. We let
\[
P_{w}^{1}:B_{\psi^{0}}\overset{\pi_{0}}{\rightarrow}B_{\psi^{1}}%
\overset{\varepsilon}{\rightarrow}B_{\psi_{re}^{1}}\overset{\pi_{1}%
}{\rightarrow}B_{\psi^{2}}\overset{\varepsilon}{\rightarrow}B_{\psi_{re}^{2}%
}\ldots
\]

and%

\[
P_{w}^{2}:B_{\xi^{0}}\overset{\pi_{0}}{\rightarrow}B_{\xi^{1}}\overset
{\varepsilon}{\rightarrow}B_{\xi_{re}^{1}}\overset{\pi_{1}}{\rightarrow}%
B_{\xi^{2}}\overset{\varepsilon}{\rightarrow}B_{\xi_{re}^{2}}\ldots
\]

Clearly, $P_{w}^{1}$,$P_{w}^{2}$ contain no empty states and $\xi^{i+1}%
\in\widehat{next}\left(  B_{\xi_{re}^{i}}\right)  $, $\psi^{i+1}\in
\widehat{next}\left(  B_{\psi_{re}^{i}}\right)  $ for every $i\geq0$. Taking
into account Remark \ref{Remark}, we distinguish the following cases.

$\mathbf{(a)}$ The set $cl\left(  \psi\right)  \bigcap cl\left(  \xi\right)  $
contains no subformulas of the form $\varphi_{1}U\varphi_{2}$.

$\mathbf{(b)}$
For every $\varphi_{1}U\varphi_{2}\in cl\left(  \psi\right)  \bigcap cl\left(
\xi\right)  $, $\varphi_{1}U\varphi_{2}$ does not appear in the scope of an
always operator $\square$ in at least one of $\psi$, $\xi$.

$\mathbf{(c)}$ For
every $\varphi_{1}U\varphi_{2}\in cl\left(  \psi\right)  \bigcap cl\left(
\xi\right)  $ that is in the scope of an always operator $\square$ in both
$\psi$, $\xi$, there is an $n\geq0$, such that the acceptance condition from
$\varphi_{1}U\varphi_{2}$ is satisfied for every position $n^{\prime}\geqslant
n$ in at least one of $P_{w}^{1}$,$P_{w}^{2}$.

$\mathbf{(d)}$ There is at
least one $\varphi_{1}U\varphi_{2}$$\in cl\left(  \psi\right)  \bigcap
cl\left(  \xi\right)  $ that is in the scope of an always operator $\square$
in both $\psi,\xi$, and in both $P_{w}^{1}$,$P_{w}^{2}$ the acceptance
condition from $\varphi_{1}U\varphi_{2}$ is satisfied for infinitely many
positions, and not satisfied for infinitely many positions, too.

If case (a), or (b), or (c) holds we act as follows. Inductively, we can determine a path
$P_{w}$ of next and $\varepsilon$-transitions of $\mathcal{A}_{\varphi}$ over $w$
\[
P_{w}:B_{\varphi^{0}}\overset{\pi_{0}}{\rightarrow}B_{\varphi^{1}}%
\overset{\ast}{\rightarrow}B_{\varphi_{re}^{1}}\overset{\pi_{1}}{\rightarrow
}B_{\varphi^{2}}\overset{\ast}{\rightarrow}B_{\varphi_{re}^{2}}\ldots
\]
in the following way\footnote{We shall call this inductive procedure, Procedure 2.}: For $\varphi=\varphi^{0}=\left(  \psi^{0}\wedge\xi
^{0}\right)  _{re}$, and for the next transitions $\left(  B_{\psi^{0}}%
,\pi_{0},B_{\psi^{1}}\right)  ,\left(  B_{\xi^{0}},\pi_{0},B_{\xi^{1}}\right)
$ we apply Lemma \ref{boithitiko lemma 3}, and we obtain $B_{\varphi^{0}}%
\neq\emptyset$, $\varphi^{1}\in next\left(  B_{\varphi^{0}}\right)  $ such
that for every $B_{\varphi^{1}}$ the triple $\left(  B_{\varphi^{0}},\pi
_{0},B_{\varphi^{1}}\right)  $ is a next transition with%
\[
wt_{1}\left(  B_{\psi_{re}^{0}},\pi_{0},B_{\psi^{1}}\right)  \cdot
wt_{2}\left(  B_{\xi_{re}^{0}},\pi_{0},B_{\xi^{1}}\right)  \leq wt\left(
B_{\varphi^{0}},\pi_{0},B_{\varphi^{1}}\right)  .
\]
We also get $\overline{\xi}^{\left(  1,1\right)  }\in RULTL\left(
K,AP\right)  $ with $\varphi_{re}^{1}=\left(  \psi_{re}^{1}\wedge\overline
{\xi}_{re}^{\left(  1,1\right)  }\right)  _{re}$, and an infinite sequence of
next and $\varepsilon$-transitions $B_{\overline{\xi}_{re}^{\left(
1,1\right)  }}\overset{\pi_{1}}{\rightarrow}B_{\overline{\xi}^{\left(
1,2\right)  }}\overset{\ast}{\rightarrow}B_{\overline{\xi}_{re}^{\left(
1,2\right)  }}\ldots$ with $v_{B_{\overline{\xi}_{re}^{\left(  1,i\right)  }}%
}\left(  \overline{\xi}^{\left(  1,i+1\right)  }\right)  =wt_{2}$$\left(
B_{\xi_{re}^{i}},\pi_{i},B_{\xi^{i+1}}\right)  $ for every $i\geq1$. \ Assume
now that $B_{\varphi^{j-1}}$ are built with the previous procedure for every
$j\leq m,$ which implies that there exists $\overline{\xi}^{\left(
m,1\right)  }\in RULTL\left(  K,AP\right)  $ such that $\varphi_{re}%
^{m}=\left(  \psi_{re}^{m}\wedge\overline{\xi}_{re}^{\left(  m,1\right)
}\right)  _{re}$, and an infinite sequence of next and $\varepsilon$-transitions
$B_{\overline{\xi}_{re}^{\left(  m,1\right)  }}\overset{\pi_{m}}{\rightarrow
}B_{\overline{\xi}^{\left(  m,2\right)  }}\overset{\ast}{\rightarrow
}B_{\overline{\xi}_{re}^{\left(  m,2\right)  }}\ldots$ with $v_{B_{\overline
{\xi}_{re}^{\left(  m,i\right)  }}}\left(  \overline{\xi}^{\left(  m,i+1\right)
}\right)  =wt_{2}\left(  B_{\xi_{re}^{m-1+i}},\pi_{m-1+i},B_{\xi^{m+i}}\right)
$ for all $i\geq1.$ We apply Lemma  \ref{boithitiko lemma 3} for $\varphi
_{re}^{m}=\left(  \psi_{re}^{m}\wedge\overline{\xi}_{re}^{\left(
m,1\right)  }\right)  _{re}$ and the next transitions $\left(  B_{\psi
_{re}^{m}},\pi_{m},B_{\psi^{m+1}}\right)  ,\left(  B_{\overline{\xi}%
_{re}^{\left(  m,1\right)  }},\pi_{m},B_{\overline{\xi}^{\left(  m,2\right)
}}\right)  .$ We get $B_{\varphi_{re}^{m}}\neq\emptyset,\varphi^{m+1}\in
next\left(  B_{\varphi_{re}^{m}}\right)  $ such that for every
$B_{\varphi^{m+1}}$, $\left(  B_{\varphi_{re}^{m}},\pi
_{m},B_{\varphi^{m+1}}\right)  $ is a next transition with

\begin{align*}
& wt_{1}\left(  B_{\psi_{re}^{m}},\pi_{m},B_{\psi^{m+1}}\right)  \cdot
v_{B_{_{\overline{\xi}_{re}^{\left(  m,1\right)  }}}}\left(  \overline{\xi
}^{\left(  m,2\right)  }\right)  \\
& =wt_{1}\left(  B_{\psi_{re}^{m}},\pi_{m},B_{\psi^{m+1}}\right)  \cdot
wt_{2}\left(  B_{\xi_{re}^{m}},\pi_{m},B_{\xi^{m+1}}\right)  \\
& \leq wt\left(  B_{\varphi_{re}^{m}},\pi_{m},B_{\varphi^{m+1}}\right)  .
\end{align*}

Hence, for every $i\geq0$ it holds $wt_{1}\left(
B_{\psi_{re}^{i}},\pi_{i},B_{\psi^{i+1}}  \right)  \cdot wt_{2}\left(
  B_{\xi_{re}^{i}},\pi_{i},B_{\xi^{i+1}} \right)  \leq wt\left(
  B_{\varphi_{re}^{i}},\pi_{i},B_{\varphi^{i+1}}  \right)  ,$
which implies%

\begin{align*}
&  Val^{\omega}\left(  wt_{1}\left(   B_{\psi_{re}^{i}},\pi_{i}%
,B_{\psi^{i+1}}  \right)  \cdot wt_{2}\left(    B_{\xi_{re}^{i}%
},\pi_{i},B_{\xi^{i+1}}\right)    \right)  _{i\geq0}\\
&  \leq Val^{\omega}\left(  wt\left(    B_{\varphi_{re}^{i}},\pi
_{i},B_{\varphi^{i+1}}\right)   \right)  _{i\geq0}\\
&  \Longrightarrow\\
&  Val^{\omega}\left(  wt_{2}\left(    B_{\xi_{re}^{i}},\pi_{i}%
,B_{\xi^{i+1}}\right)    \right)  _{i\geq0}\\
&  \leq Val^{\omega}\left(  wt\left(    B_{\varphi_{re}^{i}},\pi
_{i},B_{\varphi^{i+1}}\right)  \right)  _{i\geq0}\\
&  \Longrightarrow\\
&  Val^{\omega}\left(  wt_{1}\left(    B_{\psi_{re}^{i}},\pi_{i}%
,B_{\psi^{i+1}}\right)    \right)  _{i\geq0}\cdot Val^{\omega}\left(
wt_{2}\left(   B_{\xi_{re}^{i}},\pi_{i},B_{\xi^{i+1}}  \right)
\right)  _{i\geq0}\\
&  \leq Val^{\omega}\left(  wt\left(   B_{\varphi_{re}^{i}},\pi
_{i},B_{\varphi^{i+1}}\right)   \right)  _{i\geq0}%
\end{align*}
where the first inequality holds by Lemma \ref{Valuation inequality}, and the
third and fourth inequality are derived by the fact that $wt_{1}
\left(  B_{\psi_{re}^{i}},\pi_{i},B_{\psi^{i+1}}  \right)  =\mathbf{1}$
for every $i\geq0.$ Thus,
\[
weight_{\mathcal{A_{\psi}}}\left(  P_{w}^{1}\right)  \cdot
weight_{\mathcal{A_{\xi}}}\left(  P_{w}^{2}\right)  \leq
weight_{\mathcal{A_{\varphi}}}\left(  P_{w}\right)  .
\]
Following the constructive proof of Lemma \ref{boithitiko lemma 3}, and since either (a), or (b), or (c) holds, we get that for all $\varphi_{1}U\varphi_{2}$$\in cl\left(  \psi\right)  \bigcap cl\left(  \xi\right)  $, $P_{w}$ satisfies the acceptance condition for $\varphi_{1}U\varphi_{2}$ for infinitely many $i\geq 0$, i.e., $P_{w}$ is successful.

Assume now that case (d) holds, and let $\varphi_{1}U\varphi_{2}\in cl\left(
\psi\right)  \cap cl\left(  \xi\right)  $ with the property of case (d).
Clearly, $\varphi_{1}U\varphi_{2}$ is boolean. Let $i_{1}<i_{2}<\ldots$ be the
sequence of positions with $B_{\psi^{i_{p}}}\in F_{\varphi_{1}U\varphi_{2}%
},p\geq1,$ and with the additional property that in positions $i_{1}%
-1<i_{2}-1<\ldots$ the acceptance condition from $\varphi_{1}U\varphi_{2}$ is
not satisfied. Then, due to the fact that $\varphi_{1}U\varphi_{2}$ is not in
the scope of a next operator, we can determine a path $\widehat{P_{w}^{2}}$
of next and $\varepsilon$-reduction transitions of $\mathcal{A}_{\xi}$ over
$w$ such that for every position $i_{p},p\geq1,$ the acceptance condition of
$\varphi_{1}U\varphi_{2}$ is satisfied, and $weight_{\mathcal{A}_{\psi}%
}\left(  P_{w}^{1}\right)  \cdot weight_{\mathcal{A}_{\xi}}\left(  P_{w}%
^{2}\right)  \leq weight_{\mathcal{A}_{\psi}}\left(  P_{w}^{1}\right)  \cdot
weight_{\mathcal{A}_{\xi}}\left(  \widehat{P_{w}^{2}}\right)  .$ Also,
$\widehat{P_{w}^{2}}$ can be chosen in such a way that the above statement is
satisfied for every $\varphi_{1}U\varphi_{2}\in cl\left(  \psi\right)  \cap
cl\left(  \xi\right)  $ with the property of case (d).\footnote{The existence of this path is determined by Procedure 3, which is presented in the Appendix of this paper.} We construct the path
$P_{w}\in next_{\mathcal{A}_{\varphi}}\left(  w\right)  $ by $P_{w}^{1}$ and
$\widehat{P_{w}^{2}}$, in the same way that $P_{w}$ was constructed by
$P_{w}^{1}$ and $P_{w}^{2}$ in cases (a), (b), (c). Then, $P_{w}$ is
successful and we get $weight_{\mathcal{A_{\psi}}}\left(  P_{w}^{1}\right)
\cdot weight_{\mathcal{A_{\xi}}}\left(  P_{w}^{2}\right)  \leq
weight_{\mathcal{A}_{\psi}}\left(  P_{w}^{1}\right)  \cdot
weight_{\mathcal{A_{\xi}}}\left(  \widehat{P_{w}^{2}}\right)  \leq
weight_{\mathcal{A_{\varphi}}}\left(  P_{w}\right)  .$

We have shown that for every $P_{w}^{1}\in next_{\mathcal{A_{\psi}}}\left(
w\right)  $ with $weight_{\mathcal{A_{\psi}}}\left(  w\right)  =\mathbf{1}$
and every $P_{w}^{2}\in next_{\mathcal{A_{\xi}}}\left(  w\right)  $ with
$weight_{\mathcal{A_{\xi}}}\left(  w\right)  \neq\mathbf{0}$ there is a
$P_{w}\in next_{\mathcal{A_{\varphi}}}\left(  w\right)  $ with
\[weight_{\mathcal{A_{\psi}}}\left(  P_{w}^{1}\right)  \cdot
weight_{\mathcal{A_{\xi}}}\left(  P_{w}^{2}\right)  =weight_{\mathcal{A_{\xi}%
}}\left(  P_{w}^{2}\right)  \leq weight_{\mathcal{A_{\varphi}}}\left(
P_{w}\right) . \] This implies that for every $k_{1}\in pri_{\mathcal{A}_{\psi
}}\left(  w\right)  \setminus\left\{  \mathbf{0}\right\}  ,k_{2}\in
pri_{\mathcal{A}_{\xi}}\left(  w\right)  \setminus\left\{  \mathbf{0}\right\}
$ there exists $k\in pri_{\mathcal{A}_{\varphi}}\left(  w\right)  $ such that
$k_{1}\cdot k_{2}=k_{2}\leq k,$ i.e.,
\begin{align*}
\left(  \left\Vert \varphi\right\Vert ,w\right)   &  =\left(  \left\Vert
\psi\right\Vert ,w\right)  \cdot\left(  \left\Vert \xi\right\Vert ,w\right) \\
&  =\left(  \underset{k_{1}\in pri_{\mathcal{A}_{\psi}}\left(  w\right)
\setminus\left\{  \mathbf{0}\right\}  }{%
%TCIMACRO{\dsum }%
%BeginExpansion
{\displaystyle\sum}
%EndExpansion
}k_{1}\right)  \cdot\left(  \underset{k_{2}\in pri_{\mathcal{A}_{\xi}}\left(
w\right)  \setminus\left\{  \mathbf{0}\right\}  }{%
%TCIMACRO{\dsum }%
%BeginExpansion
{\displaystyle\sum}
%EndExpansion
}k_{2}\right) \\
&  =\left(  \underset{k_{2}\in pri_{\mathcal{A}_{\xi}}\left(  w\right)
\setminus\left\{  \mathbf{0}\right\}  }{%
%TCIMACRO{\dsum }%
%BeginExpansion
{\displaystyle\sum}
%EndExpansion
}k_{2}\right) \\
&  \leq\left(  \underset{k\in pri_{\mathcal{A}_{\varphi}}\left(  w\right)
\setminus\left\{  \mathbf{0}\right\}  }{%
%TCIMACRO{\dsum }%
%BeginExpansion
{\displaystyle\sum}
%EndExpansion
}k\right) \\
&  =\left(  \left\Vert \mathcal{A}_{\varphi}\right\Vert ,w\right)
\end{align*}
as wanted. Hence, we have shown that $\left(  \left\Vert \varphi\right\Vert
,w\right)  \leq\left(  \left\Vert \mathcal{A}_{\varphi}\right\Vert ,w\right)
,$ and $\left(  \left\Vert \mathcal{A}_{\varphi}\right\Vert ,w\right)\\
\leq\left(  \left\Vert \varphi\right\Vert ,w\right)  $ for every $w\in\left(
\mathcal{P}\left(  AP\right)  \right)  ^{\omega},$ which implies that $\left(
\left\Vert \varphi\right\Vert ,w\right)  =\left(  \left\Vert \mathcal{A}%
_{\varphi}\right\Vert ,w\right)  $, and the proof is completed.
\end{proof}

The proof of the Lemma \ref{boithitiko until 1} can be found in the Appendix. Then, Lemma \ref{boithitiko until 2} can be proved with the same arguments with ones we used in the proof of Lemma \ref{boithitiko until 1}.
\begin{lemma}
\label{boithitiko until 1}Let $\psi U\xi\in RULTL\left(  K,AP\right)  $ with
$\psi,\xi\in stLTL\left(  K,AP\right)  ,$ and $\xi_{j}\in bLTL\left(
K,AP\right)  $ be reduced formulas $\left(  1\leq j\leq k,k\geq1\right)  $ and
$\pi\in\mathcal{P}\left(  AP\right)  .$ Let \ $\left(  B_{\psi},\pi
,B_{\psi^{\prime}}\right)  ,\left(  B_{\xi_{j}},\pi,B_{\xi_{j}^{\prime}%
}\right)  $ be next transitions with $B_{\psi}\neq\emptyset,\psi^{\prime}%
\in\widehat{next}\left(  B_{\psi}\right)  ,\\B_{\xi_{j}}\neq\emptyset,\xi
_{j}^{\prime}\in\widehat{next}\left(  B_{\xi_{j}}\right)  \left(  1\leq j\leq
k\right)  ,$ and $v_{B_{\psi}}\left(  \psi^{\prime}\right)  \cdot
\underset{1\leq j\leq k}{%
%TCIMACRO{\dprod }%
%BeginExpansion
{\displaystyle\prod}
%EndExpansion
}v_{B_{\xi_{j}}}\left(  \xi_{j}^{\prime}\right)  \neq\mathbf{0.}$\footnote{Since $\xi_{j}\left(1\leq j \leq k\right)$ are boolean, by Remark \ref{remark_commutative}, and the fact that $k\cdot\mathbf{1}=\mathbf{1}\cdot k=k, $ $k\cdot\mathbf{0}=\mathbf{0}\cdot k=\mathbf{0} $ for every $k\in K$, we conclude that the product $v_{B_{\psi}}\left(  \psi^{\prime}\right)  \cdot
\underset{1\leq j\leq k}{%
%TCIMACRO{\dprod }%
%BeginExpansion
{\displaystyle\prod}
%EndExpansion
}v_{B_{\xi_{j}}}\left(  \xi_{j}^{\prime}\right)$ is well defined.}

Then, for $\varphi=\left(  \left(  \psi U\xi\right)  \wedge\left(
\underset{1\leq j\leq k}{%
%TCIMACRO{\dbigwedge }%
%BeginExpansion
{\displaystyle\bigwedge}
%EndExpansion
}\xi_{j}\right)  \right)  _{re}$ there exist $B_{\varphi}\neq\emptyset
,\varphi^{\prime}\in\widehat{next}\left(  B_{\varphi}\right)  ,$ $\psi
^{\prime\prime},$ $\xi_{j}^{\prime\prime}\in bLTL\left(  K,AP\right)  \left(
1\leq j\leq k\right)  $ such that

(i) $\left(  B_{\varphi},\pi,B_{\varphi^{\prime}}\right)  $ is a next
transition and $v_{B_{\varphi}}\left(  \varphi^{\prime}\right)  \geq
v_{B_{\psi}}\left(  \psi^{\prime}\right)  \cdot\underset{1\leq j\leq k}{%
%TCIMACRO{\dprod }%
%BeginExpansion
{\displaystyle\prod}
%EndExpansion
}v_{B_{\xi_{j}}}\left(  \xi_{j}^{\prime}\right)  ,$

(ii) $\varphi_{re}^{\prime}=\left(  \left(  \psi U\xi\right)  \wedge\psi
_{re}^{\prime\prime}\wedge\left(  \underset{1\leq j\leq k}{%
%TCIMACRO{\dbigwedge }%
%BeginExpansion
{\displaystyle\bigwedge}
%EndExpansion
}\left(  \xi_{j}^{\prime\prime}\right)  _{re}\right)  \right)  _{re}$ and for
every infinite sequence of next and $\varepsilon$-reduction transitions%
\[
B_{\psi^{0}}\overset{\pi_{0}}{\rightarrow}B_{\psi^{1}}\overset{\varepsilon
}{\rightarrow}B_{\psi_{re}^{1}}\overset{\pi_{1}}{\rightarrow}B_{\psi^{2}%
}\overset{\varepsilon}{\rightarrow}B_{\psi_{re}^{2}}\ldots
\]
with $\psi^{0}=\psi_{re}^{\prime}\left(  \text{resp. }\psi^{0}=\left(  \xi
_{j}^{\prime}\right)  _{re},1\leq j\leq k\right)  $ and $v_{B_{\psi_{re}^{i}}%
}\left(  \psi^{i+1}\right)  \neq\mathbf{0}$ $\left(  i\geq0\right)  ,$ there
exists an infinite sequence of next and $\varepsilon$-reduction transitions
\[
B_{\lambda^{0}}\overset{\pi_{0}}{\rightarrow}B_{\lambda^{1}}\overset
{\varepsilon}{\rightarrow}B_{\lambda_{re}^{1}}\overset{\pi_{1}}{\rightarrow
}B_{\lambda^{2}}\overset{\varepsilon}{\rightarrow}B_{\lambda_{re}^{2}}\ldots
\]
with $\lambda^{0}=\psi_{re}^{\prime\prime}$ \ $\left(  \text{resp. }%
\lambda^{0}=\left(  \xi_{j}^{\prime\prime}\right)  _{re},1\leq j\leq k\right)
$ and $v_{B_{\lambda_{re}^{i}}}\left(  \lambda^{i+1}\right)  =v_{B_{\psi
_{re}^{i}}}\left(  \psi^{i+1}\right)  $ for every $i\geq0.$
\end{lemma}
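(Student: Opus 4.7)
The plan is to mirror the proof of Lemma \ref{boithitiko lemma 3}, but with the until subformula $\psi U\xi$ treated as the single non-boolean conjunct, while the boolean conjuncts $\xi_{1},\ldots,\xi_{k}$ are merged into the consistent set in the same way as the two conjuncts were merged there. As the first move, I will assume that none of $\psi U\xi,\xi_{1},\ldots,\xi_{k}$ equals $true$ and that the reduction of the whole conjunction $\varphi=\left(\left(\psi U\xi\right)\wedge\bigwedge_{1\leq j\leq k}\xi_{j}\right)_{re}$ is of Form~A with one non-boolean conjunct $\psi U\xi$ and a list of pairwise distinct boolean conjuncts $\xi_{j_{1}},\ldots,\xi_{j_{h}}$ (the remaining $\xi_{j}$ either duplicate one of these or duplicate a subformula reachable from $\psi U\xi$; the $true$ and duplicate cases follow by the same simple bookkeeping as in Lemma \ref{boithitiko lemma 3}(b)).

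Next, I will set $B_{\varphi}=\{\varphi\}\cup M_{B_{\psi U\xi},\psi U\xi}\cup\bigcup_{1\leq j\leq k}M_{B_{\xi_{j}},\xi_{j}}$, where $B_{\psi U\xi}=\{\psi U\xi\}\cup M_{B_{\psi},\psi}$ is the natural $\psi U\xi$-consistent set obtained from the hypothesis on $\psi$. The key verification, carried out by the same standard argument as in Lemma~107 of \cite{Ma-Co}, is that $B_{\varphi}$ is $\varphi$-consistent and that $M_{B_{\psi U\xi},\psi U\xi}\subseteq M_{B_{\varphi},\psi U\xi}$ and $M_{B_{\xi_{j}},\xi_{j}}\subseteq M_{B_{\varphi},\xi_{j}}$ for each $j$. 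I then define $\varphi^{\prime}=\left(\left(\psi U\xi\right)\wedge\psi^{\prime}\right)\wedge\bigwedge_{j\notin\{j_{1},\ldots,j_{h}\}}\xi_{j}^{\prime}$, where $(\psi U\xi)\wedge\psi^{\prime}\in next(M_{B_{\psi U\xi},\psi U\xi})$ is the \emph{defer} branch of the $U$-next (this is the choice that keeps $\psi U\xi$ present in the successor, which is what we need to align with the boolean conjuncts that will be dropped from $F_{\psi U\xi}$-membership in Remark~\ref{Remark}-style book-keeping). Part (i) then follows because, by Lemma~\ref{boithitiko lemma 1} applied to each of the pairs $M_{B_{\psi U\xi},\psi U\xi}\subseteq M_{B_{\varphi},\psi U\xi}$ and $M_{B_{\xi_{j}},\xi_{j}}\subseteq M_{B_{\varphi},\xi_{j}}$, every factor of $v_{B_{\varphi}}(\varphi^{\prime})$ dominates the corresponding factor of $v_{B_{\psi}}(\psi^{\prime})\cdot\prod_{j}v_{B_{\xi_{j}}}(\xi_{j}^{\prime})$. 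Since the $\xi_{j}$'s are boolean the latter factors are $\mathbf{1}$ and commutativity/associativity issues are harmless by Remark~\ref{remark_commutative}, so the inequality holds.

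For part (ii), I will set $\psi^{\prime\prime}=\psi^{\prime}$ and $\xi_{j}^{\prime\prime}=\xi_{j}^{\prime}$, for $j\notin\{j_{1},\ldots,j_{h}\}$, with the remaining $\xi_{j}^{\prime\prime}$ chosen equal to the matching next of their duplicates; then the identity $\varphi_{re}^{\prime}=\left(\left(\psi U\xi\right)\wedge\psi_{re}^{\prime\prime}\wedge\bigwedge_{1\leq j\leq k}(\xi_{j}^{\prime\prime})_{re}\right)_{re}$ is by construction. To lift an infinite sequence of next and $\varepsilon$-reduction transitions starting from $\psi_{re}^{\prime}$ (respectively $(\xi_{j}^{\prime})_{re}$) to one starting from $\psi_{re}^{\prime\prime}$ (respectively $(\xi_{j}^{\prime\prime})_{re}$) producing identical weights step by step, I proceed by induction on the position $i\geq0$, at each step invoking Lemma \ref{boithitiko lemma 2} to split the reduced conjunction at position $i$ into its two pieces and to read off a next-formula decomposition whose valuation factorizes as a product of the corresponding $v$-values. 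Since one of the two pieces is boolean, its $v$-value is $\mathbf{1}$, so the other factor matches the original weight exactly, which gives the desired equality $v_{B_{\lambda_{re}^{i}}}(\lambda^{i+1})=v_{B_{\psi_{re}^{i}}}(\psi^{i+1})$ for all $i\geq0$.

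The main obstacle I expect is the book-keeping in the inductive construction of part (ii): Lemma \ref{boithitiko lemma 2} only decomposes a single reduction step, so one must carefully verify that the decomposition chosen at step $i$ is compatible with the one imposed at step $i+1$ by the input sequence. This is where the restriction that the non-boolean conjunct be (at most) $\psi U\xi$ and that everything else is boolean is essential, because booleanness of the other pieces forces their contribution to the valuation to be $\mathbf{1}$, which removes any nontrivial interaction between the two factors at the level of weights and lets the induction close cleanly by exactly the same argument as the end of the proof of Lemma \ref{boithitiko lemma 3}. Finally, the cases where some $\xi_{j}$ or $\psi U\xi$ equals $true$, or where duplicate boolean conjuncts are present, are handled by the obvious simplification of the above construction, exactly as in the boolean-conjunction cases of Lemma \ref{boithitiko lemma 3}.
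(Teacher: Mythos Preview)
Your plan is the paper's plan: build $B_{\varphi}$ as the union of the given consistent sets, pick $\varphi'=(\psi U\xi)\wedge\psi'\wedge(\text{deduplicated boolean conjuncts})'$, get the inequality from Lemma~\ref{boithitiko lemma 1} and Remark~\ref{remark_commutative}, and for (ii) set $\psi''=\psi'$ and run the induction via Lemma~\ref{boithitiko lemma 2} exactly as at the end of Lemma~\ref{boithitiko lemma 3}.

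Two bookkeeping points to clean up. First, each $\xi_{j}$ is an arbitrary reduced boolean formula, hence may itself be a Form-A conjunction $\xi_{j}=\bigwedge_{1\leq i_{j}\leq m_{j}}\xi^{(j,i_{j})}$; the reduction $\bigl((\psi U\xi)\wedge\bigwedge_{j}\xi_{j}\bigr)_{re}$ flattens these, so the deduplication must happen at the level of the atomic conjuncts $\xi^{(j,i_{j})}$, not at the level of the $\xi_{j}$'s themselves. This is exactly what the paper does (its case (a)); your write-up treats each $\xi_{j}$ as a single Form-A conjunct, which is not general enough, and in particular your $B_{\varphi}$ should collect the $M_{B_{\xi_{j}},\xi^{(j,i_{j})}}$ rather than $M_{B_{\xi_{j}},\xi_{j}}=B_{\xi_{j}}$ (the full $\xi_{j}$ need not even lie in $cl(\varphi)$). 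Second, your indexing is inverted: you declare $\xi_{j_{1}},\ldots,\xi_{j_{h}}$ to be the \emph{surviving} distinct conjuncts but then form $\varphi'$ over $j\notin\{j_{1},\ldots,j_{h}\}$; flip one of these. With those fixes your argument coincides with the paper's.
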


\begin{lemma}
\label{boithitiko until 2}Let $\psi U\xi\in RULTL\left(  K,AP\right)  $ with
$\psi,\xi\in stLTL\left(  K,AP\right)  ,$ and $\xi_{j}\in bLTL\left(
K,AP\right)  $ be reduced formulas $\left(  1\leq j\leq k,k\geq1\right)  $ and
$\pi\in\mathcal{P}\left(  AP\right)  .$ Let \ $\left(  B_{\xi},\pi
,B_{\xi^{\prime}}\right)  ,\left(  B_{\xi_{j}},\pi,B_{\xi_{j}^{\prime}%
}\right)  $ be next transitions with $B_{\xi}\neq\emptyset,\xi^{\prime}%
\in\widehat{next}\left(  B_{\xi}\right)  ,\\B_{\xi_{j}}\neq\emptyset,\xi
_{j}^{\prime}\in\widehat{next}\left(  B_{\xi_{j}}\right)  \left(  1\leq j\leq
k\right)  ,$ and $v_{B_{\xi}}\left(  \xi^{\prime}\right)  \cdot\underset{1\leq
j\leq k}{%
%TCIMACRO{\dprod }%
%BeginExpansion
{\displaystyle\prod}
%EndExpansion
}v_{B_{\xi_{j}}}\left(  \xi_{j}^{\prime}\right)  \neq\mathbf{0.}$

Then, for $\varphi=\left(  \left(  \psi U\xi\right)  \wedge\left(
\underset{1\leq j\leq k}{%
%TCIMACRO{\dbigwedge }%
%BeginExpansion
{\displaystyle\bigwedge}
%EndExpansion
}\xi_{j}\right)  \right)  _{re}$ there exist $B_{\varphi}\neq\emptyset
,\varphi^{\prime}\in\widehat{next}\left(  B_{\varphi}\right)  ,$ $\xi
^{\prime\prime},$ $\xi_{j}^{\prime\prime}\in bLTL\left(  K,AP\right)  \left(
1\leq j\leq k\right)  $ such that

(i) $\left(  B_{\varphi},\pi,B_{\varphi^{\prime}}\right)  $ is a next
transition and $v_{B_{\varphi}}\left(  \varphi^{\prime}\right)  \geq
v_{B_{\xi}}\left(  \xi^{\prime}\right)  \cdot\underset{1\leq j\leq k}{%
%TCIMACRO{\dprod }%
%BeginExpansion
{\displaystyle\prod}
%EndExpansion
}v_{B_{\xi_{j}}}\left(  \xi_{j}^{\prime}\right)  ,$

(ii) $\varphi_{re}^{\prime}=\left(  \xi_{re}^{\prime\prime}\wedge\left(
\underset{1\leq j\leq k}{%
%TCIMACRO{\dbigwedge }%
%BeginExpansion
{\displaystyle\bigwedge}
%EndExpansion
}\left(  \xi_{j}^{\prime\prime}\right)  _{re}\right)  \right)  _{re}$ and for
every infinite sequence of next and $\varepsilon$-reduction transitions%
\[
B_{\psi^{0}}\overset{\pi_{0}}{\rightarrow}B_{\psi^{1}}\overset{\varepsilon
}{\rightarrow}B_{\psi_{re}^{1}}\overset{\pi_{1}}{\rightarrow}B_{\psi^{2}%
}\overset{\varepsilon}{\rightarrow}B_{\psi_{re}^{2}}\ldots
\]
with $\psi^{0}=\xi_{re}^{\prime}\left(  \text{resp. }\psi^{0}=\left(  \xi
_{j}^{\prime}\right)  _{re},1\leq j\leq k\right)  $ and $v_{B_{\psi_{re}^{i}}%
}\left(  \psi^{i+1}\right)  \neq\mathbf{0}$ $\left(  i\geq0\right)  ,$ there
exists an infinite sequence of next and $\varepsilon$-reduction transitions
\[
B_{\lambda^{0}}\overset{\pi_{0}}{\rightarrow}B_{\lambda^{1}}\overset
{\varepsilon}{\rightarrow}B_{\lambda_{re}^{1}}\overset{\pi_{1}}{\rightarrow
}B_{\lambda^{2}}\overset{\varepsilon}{\rightarrow}B_{\lambda_{re}^{2}}\ldots
\]
with $\lambda^{0}=\xi_{re}^{\prime\prime}$ \ $\left(  \text{resp. }\lambda
^{0}=\left(  \xi_{j}^{\prime\prime}\right)  _{re},1\leq j\leq k\right)  $ and
$v_{B_{\lambda_{re}^{i}}}\left(  \lambda^{i+1}\right)  =v_{B_{\psi_{re}^{i}}%
}\left(  \psi^{i+1}\right)  $ for every $i\geq0.$
\end{lemma}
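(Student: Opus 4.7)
The plan is to mirror the proof of Lemma \ref{boithitiko until 1} with one key structural modification: since we are given $\xi' \in \widehat{next}(B_{\xi})$ rather than $\psi' \in \widehat{next}(B_{\psi})$, the resulting next formula $\varphi'$ must realize the until via the first part of the union in
\[
next(B_{\psi U \xi}) = next(M_{B_{\psi U \xi}, \xi}) \cup \{(\psi U \xi) \wedge \lambda \mid \lambda \in next(M_{B_{\psi U \xi}, \psi})\},
\]
i.e., via the branch that ``discharges'' the until at the current position. Consequently $(\psi U \xi)$ will \emph{not} appear as a conjunct of $\varphi'$, which is precisely why part (ii) states $\varphi'_{re} = (\xi''_{re} \wedge (\bigwedge_{1 \leq j \leq k}(\xi_j'')_{re}))_{re}$ rather than carrying a leading $(\psi U \xi)$ factor as in Lemma \ref{boithitiko until 1}.

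First, I would construct $B_\varphi$ by assembling a $(\psi U \xi)$-consistent set $B_{\psi U \xi}$ with $M_{B_{\psi U \xi}, \xi} \supseteq B_\xi$, and setting $B_\varphi = \{\varphi\} \cup B_{\psi U \xi} \cup \bigcup_{1 \leq j \leq k} B_{\xi_j}$, imitating the bookkeeping of Lemma \ref{boithitiko lemma 3}(a) to merge any $\xi_j$ that coincides with a subformula already appearing in $B_{\psi U \xi}$. The consistency of $B_\varphi$ follows by routine verification of the consistency clauses, using that $\varphi$ is in form A and all $\xi_j$ are boolean.

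Next, I would define $\varphi' \in \widehat{next}(B_\varphi)$ so that its until-part is $\xi'$, absorbed together with the next formulas of the $\xi_j$. Applying Lemma \ref{boithitiko lemma 1} (recalling that $B_\xi \subseteq M_{B_\varphi, \xi}$ and $B_{\xi_j} \subseteq M_{B_\varphi, \xi_j}$) we obtain $v_{M_{B_\varphi, \xi}}(\xi') \geq v_{B_\xi}(\xi')$ and $v_{M_{B_\varphi, \xi_j}}(\xi_j') \geq v_{B_{\xi_j}}(\xi_j')$ for each $j$; combined with Remark \ref{remark_commutative} and the fact that each $\xi_j$ is boolean (so its contribution lies in $\{\mathbf{0},\mathbf{1}\}$ by non-zero hypothesis it equals $\mathbf{1}$ on the unshared factors), this yields $v_{B_\varphi}(\varphi') \geq v_{B_\xi}(\xi') \cdot \prod_{1\leq j\leq k} v_{B_{\xi_j}}(\xi_j')$, proving (i). For (ii), I set $\xi'' = \xi'$ and extract $\xi_j''$ by deleting from $\xi_j'$ those conjuncts that already occur in $\xi'$, so that $\varphi'_{re} = (\xi''_{re} \wedge \bigwedge_j (\xi_j'')_{re})_{re}$. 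The corresponding infinite sequences $B_{\lambda^0} \to B_{\lambda^1} \to \cdots$ are then built inductively via repeated applications of Lemma \ref{boithitiko lemma 2}, exactly as in the proof of Lemma \ref{boithitiko until 1}.

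The main obstacle, as in Lemma \ref{boithitiko until 1}, will be the precise combinatorial bookkeeping of which conjuncts of $\xi'$ coincide with conjuncts of some $\xi_j'$: these shared boolean factors must be counted only once in $\varphi'$ so that the reduction identity in (ii) and the product-inequality in (i) hold simultaneously, while the non-shared factors must contribute a $\mathbf{1}$ to any product in which they appear. This is the same subtlety handled in Lemma \ref{boithitiko lemma 3}(a) via the index sets $\{i_1, \ldots, i_h\}$ and $\{j_1, \ldots, j_h\}$; since in our setting the only non-boolean conjunct on the $\xi$-side is the discharged $\xi'$ itself (not a surviving $(\psi U \xi)$), the bookkeeping is in fact somewhat cleaner than in Lemma \ref{boithitiko until 1}, and the construction of the $\lambda$-sequences needs no carrier for the until.
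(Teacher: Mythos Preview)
Your proposal is correct and follows essentially the same approach as the paper, which simply states that Lemma \ref{boithitiko until 2} ``can be proved with the same arguments'' as Lemma \ref{boithitiko until 1}. You have correctly identified the single structural change: the next formula $\varphi'$ realizes the until via the $\xi$-branch $next(M_{B_{\psi U\xi},\xi})$ rather than the $\psi$-branch, so that $(\psi U\xi)$ is discharged and does not survive as a conjunct of $\varphi'$; the remaining construction of $B_\varphi$, the application of Lemma \ref{boithitiko lemma 1} for part (i), and the inductive use of Lemma \ref{boithitiko lemma 2} for part (ii) are exactly as in the proof of Lemma \ref{boithitiko until 1} given in the Appendix.
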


\begin{lemma}
\label{Lemma until}Let $\varphi=\psi U\xi$ with $\psi,\xi\in stLTL\left(
K,AP\right)  $. If $\mathcal{A}_{\psi},\mathcal{A}_{\xi}$ recognize $\Vert
\psi\Vert,\Vert\xi\Vert$ respectively, then $\mathcal{A_{\varphi}}$ recognizes
$\Vert\varphi\Vert.$
\end{lemma}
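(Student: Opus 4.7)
The plan is to prove $(\|\mathcal{A}_\varphi\|, w) = (\|\varphi\|, w)$ for every $w \in (\mathcal{P}(AP))^\omega$ by establishing both inequalities, in the spirit of Lemmas \ref{Lemma disjuction} and \ref{Lemma conjuction}. The central new ingredient is the outer sum $\sum_{i \geq 0}$ in the semantics of $\psi U \xi$; it will be mirrored by classifying each path $P_w \in next_{\mathcal{A}_\varphi}(w)$ of non-zero weight by the unique index $i(P_w)$ at which it ``commits'' to $\xi$, whose existence is forced by the acceptance condition $F_{\psi U \xi}$ (any state whose maximal formula still has $\psi U \xi$ as a conjunct lies outside $F_{\psi U \xi}$, so $F$-success forces the commit at some finite position).

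For the forward direction, I would fix $w = \pi_0 \pi_1 \ldots$, consider a path $P_w : B_{\varphi^0} \overset{\pi_0}{\rightarrow} B_{\varphi^1} \overset{\ast}{\rightarrow} B_{\varphi_{re}^1} \overset{\pi_1}{\rightarrow} \ldots$ of non-zero weight, and let $i = i(P_w)$ be the least index such that $\psi U \xi$ is not a conjunct of $\varphi_{re}^{i+1}$. For $j < i$ each $\varphi^{j+1}$ is a next-formula of $M_{B_{\varphi_{re}^{j}}, \psi}$ conjoined with accumulated boolean conjuncts (which contribute $\mathbf{1}$ to $v_{B_{\varphi_{re}^{j}}}(\varphi^{j+1})$); for $j = i$ it comes from $M_{B_{\varphi_{re}^{i}}, \xi}$; for $j > i$ the maximal formula is boolean and the transition weight equals $\mathbf{1}$. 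Applying Property \ref{Property 3} together with Property \ref{Sum-valuation} to interchange the sum over admissible next-formula choices with $Val^\omega$, and then the induction hypotheses for $\psi$ and $\xi$, one obtains
\begin{equation*}
\underset{i(P_w) = i}{\sum} weight_{\mathcal{A}_\varphi}(P_w) = Val^\omega \bigl( (\|\psi\|, w_{\geq 0}), \ldots, (\|\psi\|, w_{\geq i-1}), (\|\xi\|, w_{\geq i}), \mathbf{1}, \mathbf{1}, \ldots \bigr),
\end{equation*}
i.e., exactly the $i$-th term of $(\|\varphi\|, w)$; summing over $i \geq 0$ yields $(\|\mathcal{A}_\varphi\|, w) \leq (\|\varphi\|, w)$.

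For the backward direction, I would fix $i \geq 0$ and use the induction hypotheses together with Property \ref{Sum-valuation} to rewrite the $i$-th term of $(\|\varphi\|, w)$ as a sum, over tuples $(P_0, \ldots, P_{i-1}, P_i')$ with $P_j \in next_{\mathcal{A}_\psi}(w_{\geq j})$ and $P_i' \in next_{\mathcal{A}_\xi}(w_{\geq i})$, of inner $Val^\omega$ expressions in the one-step weights of these paths. For each such tuple I would inductively build a path $P_w$ of $\mathcal{A}_\varphi$ with commit position $i$ by applying Lemma \ref{boithitiko until 1} at positions $0, 1, \ldots, i-1$ (merging the current $\mathcal{A}_\psi$-step with the boolean obligations inherited from earlier ``continue'' steps) and Lemma \ref{boithitiko until 2} at position $i$ (merging the current $\mathcal{A}_\xi$-step with those obligations), analogously to Procedure 2 in the proof of Lemma \ref{Lemma conjuction}. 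Clause (i) of each lemma makes the transition weights of $P_w$ dominate the corresponding products, and clause (ii) lets the construction extend beyond position $i$; Lemma \ref{Valuation inequality} then gives the desired weight comparison, and summing over tuples with idempotency yields the reverse inequality.

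The main obstacle will be the bookkeeping in the backward construction: maintaining the invariant that after $j$ applications the current maximal formula is $(\psi U \xi) \wedge (\textrm{boolean obligations})_{re}$, choosing the auxiliary formulas $\psi''$ and $\xi_j''$ from the lemmas coherently with continuations of the input paths, and verifying $F$-success of $P_w$ for every until-subformula in $cl(\varphi)$ and not only for $F_{\psi U \xi}$. The other until-subformulas live inside the boolean parts of $\psi$ and $\xi$, and their acceptance is handled by a case analysis akin to cases (a)--(d) of the proof of Lemma \ref{Lemma conjuction}, possibly replacing one of the input paths via Procedure 3 inside a shared $\square$-scope.
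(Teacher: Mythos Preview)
Your proposal follows essentially the same architecture as the paper's proof: identify the commit position forced by $F_{\psi U\xi}$, decompose each path's weight into a $Val^\omega$ of one-step $\psi$- and $\xi$-contributions (with the accumulated boolean obligations contributing $\mathbf{1}$), and for the reverse inequality build a $\varphi$-path from given $\psi$- and $\xi$-paths by iterated application of Lemmas \ref{boithitiko until 1} and \ref{boithitiko until 2}, with the same case analysis on until-subformulas inside $\square$-scopes as in Lemma \ref{Lemma conjuction}.

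One point deserves care. In your forward direction you assert the aggregate \emph{equality}
\[
\sum_{i(P_w)=i} weight_{\mathcal{A}_\varphi}(P_w)=Val^\omega\bigl((\|\psi\|,w_{\geq 0}),\ldots,(\|\psi\|,w_{\geq i-1}),(\|\xi\|,w_{\geq i}),\mathbf{1},\mathbf{1},\ldots\bigr),
\]
obtained by ``interchanging the sum over admissible next-formula choices with $Val^\omega$'' via Property~\ref{Sum-valuation}. That interchange requires the choices at each position to have a product structure $I_0\times I_1\times\cdots$, but in $\mathcal{A}_\varphi$ the $\psi$-consistent set $M_{B_{\varphi_{re}^{j}},\psi}$ available at step $j$ depends on the boolean obligations accumulated from earlier steps, so the choices are not independent and Property~\ref{Sum-valuation} does not apply directly. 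The paper sidesteps this by proving only the \emph{inequality} $weight_{\mathcal{A}_\varphi}(P_w)\leq(\|\varphi\|,w)$ for each individual path (via Lemmas \ref{Properties Sum}(iii) and \ref{Lemma inequality sum}), and then establishing the reverse direction separately, exactly as you do. Replacing your claimed equality by this per-path inequality makes the forward argument go through and brings your proof in line with the paper's.
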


\begin{proof}
Let $w=\pi_{0}\pi_{1}\pi_{2}\ldots\in\left(  \mathcal{P}\left(  AP\right)
\right)  ^{\omega}$ and $\mathcal{A_{\psi}=}\left(  Q_{1},wt_{1}%
,I_{1},\mathcal{F}_{1}\right)  ,$ $\mathcal{A_{\xi}}=\left(  Q_{2}%
,wt_{2},I_{2},\mathcal{F}_{2}\right)  $, and $\mathcal{A_{\varphi}=}\left(
Q,wt,I,\mathcal{F}\right)  $. Let also
\[
P_{w}:B_{\varphi^{0}}\overset{\pi_{0}}{\rightarrow}B_{\varphi^{1}}%
\overset{\ast}{\rightarrow}B_{\varphi_{re}^{1}}\overset{\pi_{1}}{\rightarrow
}B_{\varphi^{2}}\overset{\ast}{\rightarrow}B_{\varphi_{re}^{2}}\ldots
\]
be a path in $next_{\mathcal{A_{\varphi}}}\left(  w\right)  $ with
$weight_{\mathcal{A_{\varphi}}}\left(  w\right)  \neq\mathbf{0}.$ Since
$P_{w}$ is successful there is an $l>0$ such that $B_{\varphi^{l}}\in
F_{\varphi}$ for the first time. We claim that there are paths $P_{w_{\geq j}%
}^{1}$ of $\mathcal{A}_{\psi}$ over $w_{\geq j}$, $0\leq j\leq l-1$, and a
path $P_{w_{\geq l-1}}^{1}$ of $\mathcal{A}_{\xi}$ over $w_{\geq l-1}$, that
are simultaneously simulated while $\mathcal{A_{\varphi}}$ runs $P_{w}$. This
is due to the following. Until the $l$th next transition the automaton moves
between states that are consistent sets of conjunctions containing $\varphi$.
After the $lth$ next transition the automaton moves between states that are
consistent sets of conjunctions not containing $\varphi$. For every $0<j\leq
l-1$, at the $j$th next transition of $P_{w}$ the choice of the next formula
of the maximal $\varphi$-consistent subset of the state indicates a next
transition of $\mathcal{A_{\psi}}$ that can be considered as the first
transition of a path of $\mathcal{A_{\psi}}$ over the suffix of $w$ starting
at this point. At the $l$th next transition this choice indicates a next
transition of $\mathcal{A_{\xi}}$ that can be considered as the first of a
path of $\mathcal{A_{\xi}}$ over $w_{\geq l-1}$.

Now, formally for $P_{w}$ we have that $\varphi^{0}=\varphi$, and the
following hold.

$\bullet$ For every $0<m<l$ there exist boolean formulas $\varphi^{\left(
m,1\right)  },\ldots,\varphi^{\left(  m,m\right)  }$ such that $\varphi
_{re}^{m}=\left(  \varphi\wedge\varphi_{re}^{\left(  m,1\right)  }\wedge
\ldots\wedge\varphi_{re}^{\left(  m,m\right)  }\right)  _{re}$ with

$\left(  i\right)  $ $\varphi\wedge\varphi^{\left(  m,1\right)  }\in
next\left(  M_{B_{\varphi_{re}^{m-1}},\varphi}\right)  $, and $\varphi
^{\left(  m,1\right)  }\in next\left(  M_{B_{\varphi_{re}^{m-1}},\psi}\right) $, and

$\left(  ii\right)  $ $\varphi^{\left(  m,p\right)  }\in next\left(
A^{\left(  m-1,p-1\right)  }\right)  $ for some $\varphi_{re}^{\left(
m-1,p-1\right)  }$-consistent set $A^{\left(  m-1,p-1\right)  }$ $\left(
2\leq p\leq m\right)  $. Moreover,%
\begin{align*}
&  wt\left(    B_{\varphi_{re}^{m-1}},\pi_{m-1},B_{\varphi^{m}}
\right) \\
&  =v_{M_{B_{\varphi_{re}^{m-1}},\psi}}\left(  \varphi^{\left(  m,1\right)
}\right)  \cdot v_{A^{\left(  m-1,1\right)  }}\left(  \varphi^{\left(
m,2\right)  }\right)  \cdot\ldots\cdot v_{A^{\left(  m-1,m-1\right)  }}\left(
\varphi^{\left(  m,m\right)  }\right) \\
&  =v_{M_{B_{\varphi_{re}^{m-1}},\psi}}\left(  \varphi^{\left(  m,1\right)
}\right)
\end{align*}
where last equality holds since $weight_{\mathcal{A_{\varphi}}}\left(
P_{w}\right)  \neq\mathbf{0}$, which implies that $wt\left(  B_{\varphi_{re}^{m-1}%
},\pi_{m-1},B_{\varphi^{m}}  \right)  \neq\mathbf{0}$, i.e.,
$v_{A^{\left(  m-1,j\right)  }}\left(  \varphi^{\left(  m,j+1\right)
}\right)  =\mathbf{1}$\textbf{ }for every $1\leq j\leq m-1.$ \footnote{Recall that $\varphi^{\left(m,1\right)}$ is boolean, as it is a next formula of an $LTL$-step formula. Then, $\varphi^{\left(m,j\right)}$ $\left(1\leq j\leq m-1\right)$ are boolean, since they are next formulas of boolean formulas.}

$\bullet$ $\varphi_{re}^{l}=\left(
\varphi_{re}^{\left(  l,1\right)  }\wedge\ldots\wedge\varphi_{re}^{\left(
l.l\right)  }\right)  _{re}$ for boolean formulas $\varphi^{\left(  l,1\right)  }%
,\ldots,\varphi^{\left(  l,l\right)  }$ with

$\left(  i\right)  $ $\varphi^{\left(  l,1\right)  }\in next\left(
M_{B_{\varphi_{re}^{l-1}},\xi}\right)  ,$ and

$\left(  ii\right)  $ $\varphi^{\left(  l,p\right)  }\in next\left(
A^{\left(  l-1,p-1\right)  }\right)  $ for some $\varphi_{re}^{\left(
l-1,p-1\right)  }$-consistent set $A^{\left(  l-1,p-1\right)  }$ $\left(
2\leq p\leq l\right)  $. Furthermore,%
\begin{align*}
&  wt\left(    B_{\varphi_{re}^{l-1}},\pi_{l-1},B_{\varphi^{l}}
\right) \\
&  =v_{M_{B_{\varphi_{re}^{l-1}},\psi}}\left(  \varphi^{\left(  l,1\right)
}\right)  \cdot v_{A^{\left(  l-1,1\right)  }}\left(  \varphi^{\left(
l,2\right)  }\right)  \cdot\ldots\cdot v_{A^{\left(  l-1,l-1\right)  }}\left(
\varphi^{\left(  l,l\right)  }\right) \\
&  =v_{M_{B_{\varphi_{re}^{l-1}},\psi}}\left(  \varphi^{\left(  l,1\right)
}\right)  .
\end{align*}

$\bullet$ Last, by induction on $n$ and the same arguments used in the proof
of Lemma \ref{boithitiko lemma 2} we get that for every $n>l$ there exist
boolean $\varphi^{\left(  n,1\right)  },\ldots,\varphi^{\left(  n,l\right)  }$
such that $\varphi_{re}^{n}=\left(  \varphi_{re}^{\left(  n,1\right)  }%
\wedge\ldots\wedge\varphi_{re}^{\left(  n,l\right)  }\right)  _{re}$ and
$\varphi^{\left(  n,p\right)  }\in next\left(  A^{\left(  n-1,p\right)
}\right)  $ for some $\varphi_{re}^{\left(  n-1,p\right)  }$-consistent set
$A^{\left(  n-1,p\right)  }$ $\left(  1\leq p\leq l\right)  ,$ and%
\begin{align*}
&  wt\left(   B_{\varphi_{re}^{n-1}},\pi_{n-1},B_{\varphi^{n}}
\right) \\
&  =v_{A^{\left(  n-1,1\right)  }}\left(  \varphi^{\left(  n,1\right)
}\right)  \cdot v_{A^{\left(  n-1,1\right)  }}\left(  \varphi^{\left(
n,2\right)  }\right)  \cdot\ldots\cdot v_{A^{\left(  n-1,l-1\right)  }}\left(
\varphi^{\left(  n,l\right)  }\right) \\
&  =\mathbf{1},
\end{align*}
where the last equality is concluded by the fact that
$weight_{\mathcal{A_{\varphi}}}\left(  P_{w}\right)  \neq\mathbf{0}$, i.e.,
$v_{A^{\left(  n-1,j\right)  }}\left(  \varphi^{\left(  n,j\right)  }\right)
=\mathbf{1}$\textbf{ }for every $1\leq j\leq l.$

For every $0<m<l$ the sequence%

\[
P_{w_{\geq m-1}}^{1}:M_{B_{\varphi_{re}^{m-1,\psi}}}\overset{\pi_{m-1}%
}{\rightarrow}B_{\varphi^{\left(  m,1\right)  }}\overset{\varepsilon
}{\rightarrow}A^{\left(  m,1\right)  }\overset{\pi_{m}}{\rightarrow}%
B_{\varphi^{\left(  m+1,2\right)  }}\overset{\varepsilon}{\rightarrow
}A^{\left(  m+1,2\right)  }\ldots
\]%
\[
A^{\left(  l-1,l-m\right)  }\overset{\pi_{l-1}}{\rightarrow}B_{\varphi
^{\left(  l,l-m+1\right)  }}\overset{\varepsilon}{\rightarrow}A^{\left(
l,l-m+1\right)  }\overset{\pi_{l}}{\rightarrow}B_{\varphi^{\left(
l+1,l-m+1\right)  }}\ldots
\]
is a path in $next_{\mathcal{A_{\psi}}}\left(  w_{\geq m-1}\right)  $ with%
\[
weight_{\mathcal{A_{\psi}}}\left(  P_{w_{\geq m-1}}^{1}\right)  =Val^{\omega
}\left(  v_{B_{\varphi_{re}^{m-1}},\psi}\left(  \varphi^{\left(  m,1\right)
}\right)  ,\mathbf{1},\mathbf{1},\ldots\right)
\]
and the sequence%

\[
P_{w_{\geq l-1}}^{2}:M_{B_{\varphi_{re}^{l-1},\xi}}\overset{\pi_{l-1}%
}{\rightarrow}B_{\varphi^{\left(  l,1\right)  }}\overset{\varepsilon
}{\rightarrow}A^{\left(  l,1\right)  }\overset{\pi_{l}}{\rightarrow}%
B_{\varphi^{\left(  l+1,1\right)  }}\overset{\varepsilon}{\rightarrow
}A^{\left(  l+1,1\right)  }\ldots
\]
is a path of $next_{ \mathcal{A}_{\xi}  }\left(  w_{\geq
l-1}\right)  $ with%
\[
weight_{  \mathcal{A}_{\xi}  }\left(  P_{w_{\geq l-1}}%
^{2}\right)  =Val^{\omega}\left(  v_{B_{\varphi_{re}^{l-1}},\xi}\left(
\varphi^{\left(  l,1\right)  }\right)  ,\mathbf{1},\mathbf{1},\mathbf{1}%
,\ldots\right)  .
\]
We note that for every $0\leq j\leq l-1$, and every $i\geq1$ the state
$B_{\varphi^{\left(  j+i,i\right)  }}$ appearing in the above paths could be
any non-empty $\varphi^{\left(  j+i,i\right)  }$-consistent set. We show that
$P_{w_{\geq m}}^{1},$ $0\leq m\leq l-1$, and $P_{w_{\geq l-1}}^{2}$ are
successful. Let us assume the contrary. Then, there exists a subformula of
$\varphi$ of the form $\xi^{\prime}U\xi^{\prime\prime}$ and an $n\geq l,$ such
that for every $r>n$, there is an $1\leq h\leq r$ such that $\varphi^{\left(  r,h\right)  }$ does not satisfy the acceptance
condition of $\mathcal{A}_{\xi}$ corresponding to $\xi^{\prime}U\xi
^{\prime\prime}$, or it does not satisfy the acceptance condition of $\mathcal{A}_{\psi}$ corresponding to $\xi^{\prime}U\xi
^{\prime\prime}$. But then $P_{w}$ would not be successful, which is a
contradiction. It holds

$weight_{\mathcal{A}_{\varphi}}\left(  P_{w}\right)  $\\\\

$=$$Val^{\omega}\left(
\begin{array}
[c]{c}%
v_{M_{B_{\varphi_{re}^{0}},\psi}}\left(  \varphi^{\left(  0,1\right)
}\right)  ,\ldots,v_{M_{B_{\varphi_{re}^{l-2}},\psi}}\left(  \varphi^{\left(
l-1,1\right)  }\right)  ,\\\\

v_{M_{B_{re}^{l-1},\xi}}\left(  \varphi^{\left(  l,1\right)  }\right),\mathbf{1},\mathbf{1},\ldots
\end{array}
\right)  $\\\\

$=Val^{\omega}\left(
\begin{array}
[c]{c}%
weight_{\mathcal{A_{\psi}}}\left(  P_{w_{\geq0}}^{1}\right)  ,\ldots
,weight_{\mathcal{A_{\psi}}}\left(  P_{w_{\geq l-2}}^{1}\right)  ,\\
weight_{\mathcal{A_{\xi}}}\left(  P_{w_{\geq l-1}}^{2}\right)  ,\mathbf{1}%
,\mathbf{1},\ldots
\end{array}
\right)  $\\\\

$\leq$$\underset{i\geq0}{\sum}$$\left(  \underset{k_{i}\in
pri_{\mathcal{A_{\xi}}}\left(  w_{\geq i}\right)  }{\underset{k_{j}\in
pri_{\mathcal{A_{\psi}}}\left(  w_{\geq j}\right)  \left(  0\leq j<i\right)
}{\sum}}Val^{\omega}\left(  k_{0},\ldots,k_{i-1},k_{i},\mathbf{1}%
,\mathbf{1},\ldots\right)  \right)  $\\\\

$=\underset{i\geq0}{\sum}\left(  Val^{\omega}\left(
\begin{array}
[c]{c}%
\underset{k_{0}\in pri_{\mathcal{A_{\psi}}}\left(  w_{\geq0}\right)  }{\sum
}k_{0},\ldots,\underset{k_{i-1}\in pri_{\mathcal{A_{\psi}}}\left(  w_{\geq
i-1}\right)  }{\sum}k_{i-1},\\
\underset{k_{i}\in pri_{\mathcal{A_{\xi}}}\left(  w_{\geq i}\right)  }{\sum
}k_{i},\mathbf{1},\mathbf{1},\ldots
\end{array}
\right)  \right)  $\\\\

$=\underset{i\geq0}{\sum}\left(  Val^{\omega}\left(  \left(  \left(
\Vert\mathcal{A_{\psi}\Vert},w_{\geq0}\right)  ,\ldots,\left(  \Vert
\mathcal{A_{\psi}\Vert},w_{\geq i-1}\right)  ,\left(  \Vert\mathcal{A}_{\xi
}\Vert,w_{\geq i}\right)  ,\mathbf{1},\mathbf{1},\ldots\right)  \right)
\right)  $\\\\

$=\underset{i\geq0}{\sum}$$\left(  Val^{\omega}\left(  \left(  \left(
\Vert\mathcal{\psi\Vert},w_{\geq0}\right)  ,\ldots,\left(  \Vert\psi
\Vert,w_{\geq i-1}\right)  ,\left(  \Vert\xi\Vert,w_{\geq i}\right)
,\mathbf{1},\mathbf{1},\ldots\right)  \right)  \right)  $\\\\

$=\left(  \Vert\varphi\Vert,w\right) . $\\

The second equality holds by Property \ref{Property 3}, the inequality by
Lemmas \ref{Properties Sum}iii, \ref{Lemma inequality sum}, and the third equality by the distributivity of $Val^{\omega}$ over finite sums, and the fact that $pri_{\mathcal{A_{\psi}}}\left(  w_{\geq j}\right)$ $\left (0\leq j \leq i-1\right)$, $pri_{\mathcal{A_{\xi}}}\left(  w_{\geq i}\right)$ are finite for all $i\geq0$. For every path
$P_{w}\in next_{\mathcal{A_{\varphi}}}\left(  w\right)  $ with
$weight_{\mathcal{A_{\varphi}}}\left(  P_{w}\right)  =\mathbf{0}$\textbf{ }it
trivially holds $weight_{\mathcal{A_{\varphi}}}\left(  P_{w}\right)
\leq\left(  \Vert\varphi\Vert,w\right)  $. Thus, for every $k\in
pri_{\mathcal{A_{\varphi}}}\left(  P_{w}\right)  ,k\leq\left(  \Vert
\varphi\Vert,w\right)  $, and so by Lemmas \ref{Properties Sum}ii,
\ref{Lemma inequality sum} we get%
\[
\left(  \Vert\mathcal{A_{\varphi}\Vert},w\right)  =\underset{k\in
pri_{\mathcal{A_{\varphi}}}\left(  w\right)  }{\sum}k\leq\left(  \Vert
\varphi\Vert,w\right)  .
\]

We show now that $\left(  \Vert\varphi\Vert,w\right)  \leq\left(
\Vert\mathcal{A}_{\varphi}\Vert,w\right)  $. To this end, we fix an $l\geq0$,
and we let $P_{w_{\geq m}}^{1}\in next$$_{A_{\psi}}\left(  w_{\geq m}\right)
$ for every $0\leq m<l$, and $P_{w_{\geq l}}^{2}\in next_{\mathcal{A_{\xi}}%
}\left(  w_{\geq l}\right)  $. We further assume that $weight_{\mathcal{A}%
_{\psi}}\left(  w_{\geq m}\right)  \neq\mathbf{0}$ $\left(  0\leq m<l\right)
\mathbf{,}$ $weight_{\mathcal{A}_{\xi}}\left(  w_{\geq l}\right)
\neq\mathbf{0.}$ We prove that there exists a path $P_{w}\in
next_{\mathcal{A_{\varphi}}}\left(  w\right)  $ such that%
\begin{align*}
&  Val^{\omega}\left(  weight_{\mathcal{A_{\psi}}}\left(  P_{w_{\geq0}}%
^{1}\right)  ,\ldots,weight_{\mathcal{A_{\psi}}}\left(  P_{w_{\geq l-1}}%
^{1}\right)  ,weight_{\mathcal{A_{\xi}}}\left(  P_{w_{\geq l}}^{2}\right)
,\mathbf{1},\mathbf{1},\ldots\right) \\
&  \leq weight_{\mathcal{A_{\varphi}}}\left(  P_{_{w}}\right)  .
\end{align*}
We set%
\[
P_{w_{\geq m}}^{1}:B_{\psi^{\left(  m,0\right)  }}\overset{\pi_{m}%
}{\rightarrow}B_{\psi^{\left(  m,1\right)  }}\overset{\ast}{\rightarrow
}B_{\psi_{re}^{\left(  m,1\right)  }}\overset{\pi_{m+1}}{\rightarrow}%
B_{\psi^{\left(  m,2\right)  }}\ldots
\]
and%
\[
P_{w_{\geq l}}^{2}:B_{\xi^{0}}\overset{\pi_{l}}{\rightarrow}B_{\xi^{1}%
}\overset{\ast}{\rightarrow}B_{\xi_{re}^{1}}\overset{\pi_{l+1}}{\rightarrow
}B_{\xi^{2}}\ldots
\]

For every $j\geq1,$ it holds $\psi^{\left(  m,j\right)  }$$\in\widehat
{next}\left(  B_{\psi_{re}^{\left(  m,j-1\right)  }}\right)  $ $\left(  0\leq
m\leq l-1\right)  $, and $\xi^{j}$$\in\widehat{next}$$\left(  B_{\xi
_{re}^{j-1}}\right)  $. We point out the following cases: (a) There is at
least one subformula $\varphi_{1}U\varphi_{2}\in cl\left(  \psi\right)  $ that is in
the scope of an always operator in $\psi$, and for at least two of the paths
$P_{w_{\geq m}}^{1}$$\left(  0\leq m\leq l-1\right)  $ the acceptance
condition from $\varphi_{1}U\varphi_{2}$ is satisfied for infinitely many
positions, and not satisfied for infinitely many positions, too. (b) There is
at least one $\varphi_{1}U\varphi_{2}\in cl\left(  \psi\right)  \cap cl\left(
\xi\right)  $ that is in the scope of an always operator in both $\psi,\xi$
and for at least two of the above $l+1$ paths the acceptance condition from
$\varphi_{1}U\varphi_{2}$ is satisfied for infinitely many positions, and not
satisfied for infinitely many positions, too.

First assume that cases (a) and (b) do not hold. Then, we set $\varphi
^{0}=\varphi$ and the following is true.

\begin{itemize}
\item For every $0\leq m<l,$ with the use of Lemma \ref{boithitiko until 1},
we obtain $B_{\varphi_{re}^{m}}\neq\emptyset,$ and $\varphi^{m+1}\in
next\left(  B_{\varphi_{re}^{m}}\right)  $ such that%
\begin{align*}
B_{\varphi_{re}^{m}}\left(  \varphi^{m+1}\right)   &  \geq\underset{0\leq
j\leq m}{\prod}v_{B_{\psi_{re}^{\left(  j,m-j\right)  }}}\left(  \psi^{\left(
j,m-j+1\right)  }\right) \\
&  =v_{B_{\psi_{re}^{\left(  m,0\right)  }}}\left(  \psi^{\left(  m,1\right)
}\right)
\end{align*}
and $\left(  B_{\varphi_{re}^{m}},\pi_{m},B_{\varphi^{m+1}}\right)  $ is a
next transition. More precisely, $\varphi_{re}^{m}=\left(  \varphi
\wedge\left(  \underset{0\leq j\leq m-1}{%
%TCIMACRO{\dbigwedge }%
%BeginExpansion
{\displaystyle\bigwedge}
%EndExpansion
}\psi^{\left(  j,m-j\right)  }\right)  \right)  _{re},$ and we apply Lemma
\ref{boithitiko until 1}, for the next transitions
$\left(  B_{\psi_{re}^{\left(  m,0\right)  }},\pi_{m},B_{\psi^{\left(
m,1\right)  }}\right)  $, and $\left(  B_{\psi
_{re}^{\left(  j,m-j\right)  }},\pi_{m},B_{\psi_{re}^{\left(  j,m-j+1\right)
}}\right) \\ \left(  0\leq j\leq m-1\right)  .$

\item By Lemma \ref{boithitiko until 2} we obtain $B_{\varphi_{re}^{l}}%
\neq\emptyset$ and $\varphi^{l+1}\in next\left(  B_{\varphi_{re}^{l}}\right)
$ such that $\left(  B_{\varphi_{re}^{l}},\pi_{l},B_{\varphi^{l+1}}\right)  $
is a next transition and%
\begin{align*}
v_{B_{\varphi_{re}^{l}}}\left(  \varphi^{l+1}\right)   &  \geq v_{B_{\xi^{0}}%
}\left(  \xi^{1}\right)  \cdot\underset{0\leq j\leq l-1}{\prod}v_{B_{\psi
_{re}^{\left(  j,l-j\right)  }}}\left(  \psi^{\left(  j,l-j+1\right)  }\right)
\\
&  =v_{B_{\xi^{0}}}\left(  \xi^{1}\right)  .
\end{align*}

It holds, $\varphi_{re}^{l}=\left(  \varphi\wedge\left(  \underset{0\leq
i\leq l-1}{%
%TCIMACRO{\dbigwedge }%
%BeginExpansion
{\displaystyle\bigwedge}
%EndExpansion
}\psi^{\left(  j,l-j\right)  }\right)  \right)  _{re},$ and we apply Lemma
\ref{boithitiko until 2}, for the next transitions $\left(  B_{\psi
_{re}^{\left(  j,l-j\right)  }},\pi_{m},B_{\psi_{re}^{\left(  j,l-j+1\right)
}}\right)  $ $\left(  0\leq j\leq l-1\right)  ,$ and $\left(  B_{\xi^{0}}%
,\pi_{l},B_{\xi^{1}}\right)  .$

\item Last, with the same arguments used in Lemma \ref{boithitiko lemma 3} we
obtain, for every $k>l,$ $B_{\varphi_{re}^{k}}\neq\emptyset$ and
$\varphi^{k+1}\in next$$\left(  B_{\varphi_{re}^{k}}\right)  $ such that
$\left(  B_{\varphi_{re}^{k}},\pi_{k},B_{\varphi^{k+1}}\right)  $ is a next
transition and%
\begin{align*}
v_{B_{\varphi_{re}^{k}}}\left(  \varphi^{k+1}\right)   &  \geq v_{B_{\xi
_{re}^{k-l}}}\left(  \xi^{k-l+1}\right)  \cdot\underset{0\leq j\leq l-1}%
{\prod}v_{B_{\psi_{re}^{\left(  j,k-j\right)  }}\left(  \psi^{\left(
j,k-j+1\right)  }\right)  }\\
&  =\mathbf{1}  ,
\end{align*}
where the last equality is obtained by the following. It holds $\psi,\xi\in
stLTL\left(  K,AP\right)  ,$ which implies that $\psi_{re}^{\left(
m,j\right)  },\xi_{re}^{j}\in bLTL\left(  K,AP\right)  $ for all $j\geq1, 0\leq m\leq
l-1$. Since $P_{w_{\geq m}}^{1}\left(  0\leq m\leq
l-1\right)  ,$ and $P_{w_{\geq l}}^{2}$ have non-zero weight, all but
the first next transitions appearing in $P_{w_{\geq m}}^{1}\left(  0\leq m\leq
l-1\right)  ,$ and $P_{w_{\geq l}}^{2}$ have weight $\mathbf{1.}$
\end{itemize}

Clearly, the path $P_{w}:B_{\varphi^{0}}\overset{\pi_{0}}{\rightarrow
}B_{\varphi^{1}}$$\overset{\varepsilon}{\rightarrow}B_{\varphi_{re}^{1}%
}\overset{\pi_{1}}{\rightarrow}B_{\varphi^{2}}\ldots$ (where we let
$B_{\varphi^{i}}$ be any non-empty $\varphi^{i}$-consistent set $\left(
i\geq0\right)  $) is a successful path of $\mathcal{A}_{\varphi}$ over $w$. This is concluded by the constructive proofs of Lemmas \ref{boithitiko until 1}, \ref{boithitiko until 2}, \ref{boithitiko lemma 3} and by the fact that (a), and (b) do not hold, which imply that for all $\varphi_{1}U\varphi_{2}$$\in cl\left(  \psi\right)  \bigcap cl\left(  \xi\right)  $, $P_{w}$ satisfies the acceptance condition for $\varphi_{1}U\varphi_{2}$ for infinitely many $i\geq 0$. It
holds%
\[
wt_{1}\left(    B_{\psi_{re}^{\left(  m,0\right)  }},\pi_{m}%
,B_{\psi^{\left(  m,1\right)  }}  \right)  \leq wt\left(
B_{\varphi_{re}^{m}},\pi_{m},B_{\varphi^{m+1}}\right)
\]
for every $0\leq m<l$, and%
\[
wt_{2}\left(   B_{\xi_{re}^{0}},\pi_{l},B_{\xi^{1}}
\right)  \leq wt\left(   B_{\varphi_{re}^{l}},\pi_{l},B_{\varphi^{l+1}%
}\right)
\]
and%
\[
\mathbf{1} \leq wt\left(   B_{\varphi_{re}^{k}},\pi_{k},B_{\varphi^{k+1}%
}\right)
\]
for every $k> l$. Hence, by the above relations, Lemma
\ref{Valuation inequality}, and Property \ref{Property 3} we get%
\begin{align*}
&  Val^{\omega}\left(  weight_{\mathcal{A_{\psi}}}\left(  P_{w_{\geq0}}%
^{1}\right)  ,\ldots,weight_{\mathcal{A_{\psi}}}\left(  P_{w_{\geq l-1}}%
^{1}\right)  ,weight_{\mathcal{A_{\xi}}}\left(  P_{w_{\geq l}}^{2}\right)
,\mathbf{1},\mathbf{1},\ldots\right) \\
&  \leq weight_{\mathcal{A_{\varphi}}}\left(  P_{w}\right)  .
\end{align*}

Now, if case (a) or (b) holds, we can prove our claim following the same
arguments used in the proof of Lemma \ref{Lemma conjuction}. Thus, for every
$l\geq0$, every $P_{w_{\geq m}}^{1}\in next_{\mathcal{A_{\psi}}}\left(
w_{\geq m}\right)  $, where $0\leq m<l,$ and every $P_{w_{\geq l}}^{2}\in
next_{\mathcal{A_{\xi}}}\left(  w_{\geq l}\right)  $, there exists a $P_{w}\in
next_{\mathcal{A_{\varphi}}}\left(  w\right)  $ such that%
\begin{align*}
&  Val^{\omega}\left(  weight_{\mathcal{A_{\psi}}}\left(  P_{w_{\geq0}}%
^{1}\right)  ,\ldots,weight_{\mathcal{A_{\psi}}}\left(  P_{w_{\geq l-1}}%
^{1}\right)  ,weight_{\mathcal{A_{\xi}}}\left(  P_{w_{\geq l}}^{2}\right)
,\mathbf{1},\mathbf{1},\ldots\right) \\
&  \leq weight_{\mathcal{A_{\varphi}}}\left(  P_{w}\right)  .
\end{align*}
Thus, it holds

\begin{align*}
\underset{k_{l}\in pri_{\mathcal{A_{\xi}}}\left(  w_{\geq l}\right)
}{\underset{k_{m}\in pri_{\mathcal{A_{\psi}}}\left(  w_{\geq m}\right)
}{\underset{l\geq0,0\leq m<l}{\sum}}}Val^{\omega}\left(  k_{0},\ldots
,k_{l-1},k_{l},\mathbf{1},\mathbf{1},\ldots\right)   &  \leq\underset{k\in
pri_{\mathcal{A}_{\varphi}}\left(  w\right)  }{\sum}k
\end{align*}
\begin{align*}
&  \Longrightarrow
\end{align*}
\begin{align*}
\underset{l\geq0}{\sum}Val^{\omega}\left(
\begin{array}
[c]{c}%
\underset{k_{0}\in pri_{\mathcal{A_{\psi}}}\left(  w_{\geq0}\right)  }{\sum
}k_{0},\ldots,\\
\underset{k_{l-1}\in pri_{\mathcal{A_{\psi}}}\left(  w_{\geq l-1}\right)
}{\sum}k_{l-1},\underset{k_{l}\in pri_{\mathcal{A_{\xi}}}\left(  w_{\geq
l}\right)  }{\sum}k_{l},\\
\mathbf{1},\mathbf{1},\mathbf{1},\ldots
\end{array}
\right)   &  \leq\underset{k\in pri_{\mathcal{A_{\varphi}}}\left(  w\right)
}{\sum}k
\end{align*}
\begin{align*}
&  \Longrightarrow
\end{align*}
\begin{align*}
\underset{l\geq0}{\sum}Val^{\omega}\left(
\begin{array}
[c]{c}%
\left(  \Vert\psi\Vert,w_{\geq0}\right)  ,\ldots,\left(  \Vert\psi
\Vert,w_{\geq l-1}\right)  ,\left(  \Vert\varphi\Vert,w_{\geq l}\right)  ,\\
\mathbf{1},\mathbf{1},\mathbf{1},\ldots
\end{array}
\right)   &  \leq\underset{k\in pri_{\mathcal{A_{\varphi}}}\left(  w\right)
}{\sum}k
\end{align*}
\begin{align*}
&  \Longrightarrow
\end{align*}
\begin{align*}
\left(  \Vert\varphi\Vert,w\right)   &  \leq\left(  \mathcal{\Vert A_{\varphi
}\Vert},w\right)  ,
\end{align*}
where the second inequality is obtained by the distributivity of $Val^{\omega}$ over finite sums, and the fact that $pri_{\mathcal{A_{\psi}}}\left(  w_{\geq m}\right)  $ $\left(0\leq m<l\right) $, and $pri_{\mathcal{A_{\xi}}}\left(  w_{\geq l}\right)  $ are finite, and this concludes our proof.
\end{proof}

\begin{lemma}
\label{Lemma always}Let $\psi\in stLTL\left(  K,AP\right)  $ such that
$\varphi=\square\psi.$ If $\mathcal{A_{\psi}}$ recognizes $\Vert\psi\Vert,$
then $\mathcal{A_{\varphi}}$ recognizes $\Vert\varphi\Vert$.
\end{lemma}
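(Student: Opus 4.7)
The plan is to mirror the structure of Lemma~\ref{Lemma until}, but now the recursion on suffixes must handle infinitely many "copies" of $\psi$ spawned by the always operator. First, I would observe that, by definition,
\[
(\|\varphi\|,w)=Val^{\omega}\left((\|\psi\|,w_{\geq i})\right)_{i\geq 0}=Val^{\omega}\left(\underset{k\in pri_{\mathcal{A}_{\psi}}(w_{\geq i})}{\sum}k\right)_{i\geq 0},
\]
and by Property~\ref{Sum-valuation} this equals
\[
\underset{(k_i)_{i}\in\prod_{i}pri_{\mathcal{A}_{\psi}}(w_{\geq i})}{\sum}Val^{\omega}(k_{0},k_{1},\ldots),
\]
since $pri_{\mathcal{A}_{\psi}}(w_{\geq i})$ is finite for every $i$. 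So the equality $(\|\varphi\|,w)=(\|\mathcal{A}_{\varphi}\|,w)$ is reduced to matching, up to the partial order, weights of successful paths of $\mathcal{A}_{\varphi}$ over $w$ with weights of sequences of successful paths $(P_{w_{\geq i}}^{i})_{i\geq 0}$ of $\mathcal{A}_{\psi}$.

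For the inequality $(\|\mathcal{A}_{\varphi}\|,w)\leq(\|\varphi\|,w)$, I would take a path $P_{w}\in next_{\mathcal{A}_{\varphi}}(w)$ with non-zero weight, written as
\[
P_{w}:B_{\varphi^{0}}\overset{\pi_{0}}{\rightarrow}B_{\varphi^{1}}\overset{\ast}{\rightarrow}B_{\varphi_{re}^{1}}\overset{\pi_{1}}{\rightarrow}B_{\varphi^{2}}\overset{\ast}{\rightarrow}B_{\varphi_{re}^{2}}\ldots
\]
By definition of $\mathcal{A}_{\varphi}$ every next formula of a $\square\psi$-consistent set is of the form $\varphi\wedge\psi'$ with $\psi'\in next(M_{B,\psi})$, and $v_{B}(\varphi\wedge\psi')=v_{M_{B,\psi}}(\psi')$. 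Inductively on $m$, and by the same kind of decomposition used in the proof of Lemma~\ref{Lemma until}, one obtains boolean formulas $\psi^{(m,1)},\ldots,\psi^{(m,m)}$ such that $\varphi_{re}^{m}=(\varphi\wedge\psi_{re}^{(m,1)}\wedge\ldots\wedge\psi_{re}^{(m,m)})_{re}$ with $\psi^{(m,1)}\in next(M_{B_{\varphi_{re}^{m-1}},\psi})$, and $wt(B_{\varphi_{re}^{m-1}},\pi_{m-1},B_{\varphi^{m}})=v_{M_{B_{\varphi_{re}^{m-1}},\psi}}(\psi^{(m,1)})$ (all subsequent factors being $\mathbf{1}$ since the weight of the path is non-zero and $\psi\in stLTL(K,AP)$). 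Tracking, for each fixed $i\geq 0$, the copy of $\psi$ born at position $i$, I extract a path $P_{w_{\geq i}}^{i}\in next_{\mathcal{A}_{\psi}}(w_{\geq i})$ whose first next transition has weight $v_{M_{B_{\varphi_{re}^{i-1}},\psi}}(\psi^{(i,1)})$ and all subsequent next transitions have weight $\mathbf{1}$; successfulness of each $P_{w_{\geq i}}^{i}$ follows from successfulness of $P_{w}$ together with Remark~\ref{Remark} (each until subformula of $\psi$ must eventually be discharged in $P_{w}$, hence also in every spawned copy). Then by Property~\ref{Property 3}
\[
weight_{\mathcal{A}_{\varphi}}(P_{w})=Val^{\omega}\left(weight_{\mathcal{A}_{\psi}}(P_{w_{\geq i}}^{i})\right)_{i\geq 0},
\]
so $weight_{\mathcal{A}_{\varphi}}(P_{w})\leq(\|\varphi\|,w)$, and summing via Lemma~\ref{Lemma inequality sum} yields the desired inequality.

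For the converse direction $(\|\varphi\|,w)\leq(\|\mathcal{A}_{\varphi}\|,w)$, I would fix for every $i\geq 0$ a path $P_{w_{\geq i}}^{i}\in next_{\mathcal{A}_{\psi}}(w_{\geq i})$ with non-zero weight, and use Lemmas~\ref{boithitiko lemma 3} and~\ref{boithitiko until 1} to merge them inductively into a single path $P_{w}$ of $\mathcal{A}_{\varphi}$. Starting from $\varphi^{0}=\varphi$, at step $m$ the formula $\varphi_{re}^{m}=(\varphi\wedge\bigwedge_{j=0}^{m-1}\psi^{(j,m-j)})_{re}$ is a conjunction of $\varphi$ with $m$ reduced boolean formulas tracking the spawned copies; applying Lemma~\ref{boithitiko lemma 3} repeatedly I obtain $B_{\varphi_{re}^{m}}\neq\emptyset$ and $\varphi^{m+1}\in next(B_{\varphi_{re}^{m}})$ such that $(B_{\varphi_{re}^{m}},\pi_{m},B_{\varphi^{m+1}})$ is a next transition with
\[
wt(B_{\varphi_{re}^{m}},\pi_{m},B_{\varphi^{m+1}})\geq v_{M_{B_{\varphi_{re}^{m}},\psi}}(\psi^{(m,1)})\cdot\prod_{j=0}^{m-1}v_{B_{\psi_{re}^{(j,m-j)}}}(\psi^{(j,m-j+1)})=v_{M_{B_{\varphi_{re}^{m}},\psi}}(\psi^{(m,1)}),
\]
where the last equality uses $\psi\in stLTL(K,AP)$ (so all but the first next transition of each $P_{w_{\geq i}}^{i}$ carries weight $\mathbf{1}$). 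Lemma~\ref{Valuation inequality} then gives $Val^{\omega}(weight_{\mathcal{A}_{\psi}}(P_{w_{\geq i}}^{i}))_{i\geq 0}\leq weight_{\mathcal{A}_{\varphi}}(P_{w})$, and summing over all choices of $(P_{w_{\geq i}}^{i})_{i\geq 0}$ via Property~\ref{Sum-valuation} and Lemma~\ref{Lemma inequality sum} yields $(\|\varphi\|,w)\leq(\|\mathcal{A}_{\varphi}\|,w)$.

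The main obstacle is ensuring successfulness of the merged path $P_{w}$ when $\psi$ contains until subformulas. Since $\varphi=\square\psi$ spawns a new copy of $\psi$ at every position, all these infinitely many copies must simultaneously satisfy the acceptance conditions associated with each $\varphi_{1}U\varphi_{2}\in cl(\psi)$. As in case (d) of the proof of Lemma~\ref{Lemma conjuction} and the analogous case analysis in the proof of Lemma~\ref{Lemma until}, if for some until subformula the given paths $P_{w_{\geq i}}^{i}$ disagree on the positions where acceptance fires, I would invoke the auxiliary procedure used there (re-choosing paths of $\mathcal{A}_{\psi}$ with non-smaller weights so that acceptance eventually aligns at infinitely many common positions, which is possible because the until subformula is not in the scope of a next and $pri_{\mathcal{A}_{\psi}}(w_{\geq i})$ is finite). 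This step is the delicate one and would be spelled out in direct analogy with the case analysis completing Lemma~\ref{Lemma until}.
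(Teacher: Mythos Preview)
Your proposal is essentially correct and follows the same approach as the paper: decompose a successful path of $\mathcal{A}_{\varphi}$ into a family of successful paths of $\mathcal{A}_{\psi}$ on suffixes for one inequality, and merge a given family of such paths into a single path of $\mathcal{A}_{\varphi}$ for the converse, handling acceptance alignment via the auxiliary procedure.

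One small correction: in the first direction you justify successfulness of the extracted paths $P_{w_{\geq i}}^{i}$ by invoking Remark~\ref{Remark}, but that remark applies only to until subformulas \emph{not} in the scope of an always operator, whereas here every $\varphi_{1}U\varphi_{2}\in cl(\psi)$ sits inside $\square\psi$. The paper instead argues by contradiction: if some $P_{w_{\geq i}}^{i}$ failed the acceptance condition for $\xi U\xi'$, then for all large $r$ some conjunct $\varphi^{(r,h)}$ of $\varphi_{re}^{r}$ would violate it, forcing $P_{w}$ itself to fail acceptance. Your parenthetical intuition is right; just replace the citation of Remark~\ref{Remark} with this direct argument. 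Also, in your displayed inequality for the converse step, the factor you write as $v_{M_{B_{\varphi_{re}^{m}},\psi}}(\psi^{(m,1)})$ should be $v_{B_{\psi^{(m,0)}}}(\psi^{(m,1)})$, i.e.\ the weight of the first next transition of the given path $P_{w_{\geq m}}^{m}$, not a quantity attached to the merged state; the paper keeps these two notations separate.
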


\begin{proof}
Let $\mathcal{A_{\psi}=}$$\left(  Q^{\prime},wt^{\prime},I^{\prime
},\mathcal{F^{\prime}}\right)  $, $\mathcal{A_{\varphi}=}\left(
Q,wt,I,\mathcal{F}\right)  $. First, we prove that $\left(  \Vert
\mathcal{A_{\varphi}\Vert},w\right)  \leq\left(  \Vert\varphi\Vert,w\right)  $
for every $w\in\left(  \mathcal{P}\left(  AP\right)  \right)  ^{\omega}$. To
this end, let $w=\pi_{0}\pi_{1}\pi_{2}\ldots$ and $P_{w}\in
next_{\mathcal{A_{\varphi}}}\left(  w\right)  $ be a path with
$weight_{\mathcal{A_{\varphi}}}\left(  P_{w}\right)  \neq\mathbf{0}$. We show
that there exist paths $P_{w_{\geq i}}^{\prime}\in next_{\mathcal{A_{\psi}}%
}\left(  w_{\geq i}\right)  $ $\left(  i\geq0\right)  $ such that
\[
weight_{\mathcal{A_{\varphi}}}\left(  P_{w}\right)  \leq Val^{\omega}\left(
weight_{\mathcal{A_{\psi}}}\left(  P_{w_{\geq i}}^{\prime}\right)  \right)
_{i\geq0}%
\]

Without any loss we may assume that $P_{w}$ starts with a next transition. So
we let
\[
P_{w}:B_{\varphi^{0}}\overset{\pi_{0}}{\rightarrow}B_{\varphi^{1}}%
\overset{\ast}{\rightarrow}B_{\varphi_{re}^{1}}\overset{\pi_{1}}{\rightarrow
}B_{\varphi^{2}}\overset{\ast}{\rightarrow}B_{\varphi_{re}^{2}}\overset
{}{\rightarrow}\ldots
\]

It holds $\varphi^{0}=\varphi_{re}^{0}=\varphi$ and for every $i\geq1$, we can
prove by induction on $i$ and the same arguments used in Lemma
\ref{boithitiko lemma 3}, that there exist boolean formulas $\varphi^{\left(
i,1\right)  },\ldots,\varphi^{\left(  i,i\right)  }$ such that $\varphi
_{re}^{i}=\left(  \varphi\wedge\varphi_{re}^{\left(  i,1\right)  }\wedge
\ldots\wedge\varphi_{re}^{\left(  i,i\right)  }\right)  _{re}$, and
$\varphi_{re}^{\left(  i,1\right)  }\in next\left(  M_{B_{\varphi_{re}^{i-1}%
},\psi}\right)  ,\varphi^{\left(  i,p\right)  }\in next\left(  A^{\left(
i-1,p-1\right)  }\right)  $ where $A^{\left(  i-1,p-1\right)  }$ is a
$\varphi_{re}^{\left(  i-1,p-1\right)  }$-consistent set $\left(  2\leq p\leq
i\right)  $, and
\begin{align}
wt\left(    B_{\varphi_{re}^{i-1}},\pi_{i-1},B_{\varphi^{i}}
\right)   &  =v_{M_{B_{\varphi_{re}^{i-1},\psi}}}\left(  \varphi^{\left(
i,1\right)  }\right)  \cdot\underset{2\leq p\leq i}{\prod}v_{A^{\left(
i-1,p-1\right)  }}\left(  \varphi^{\left(  i,p\right)  }\right)
\label{always 1}\\
&  =v_{M_{B_{\varphi_{re}^{i-1},\psi}}}\left(  \varphi^{\left(  i,1\right)
}\right) \nonumber
\end{align}
where the last equality holds since $v_{A^{\left(  i-1,1\right)  }}\left(
\varphi^{\left(  i,2\right)  }\right)  =\ldots=v_{A^{\left(  i-1,i-1\right)
}}\left(  \varphi^{\left(  i,i\right)  }\right)  =\mathbf{1}.$

Hence, for every $i\geq0$ we can define the path $P_{w_{\geq i}}^{\prime}\in
next_{\mathcal{A_{\psi}}}\left(  w_{\geq i}\right)  $ as follows.
\[
P_{w_{\geq i}}^{\prime}:M_{B_{\varphi_{re}^{i},\psi}}\overset{\pi_{i}%
}{\rightarrow}B_{\varphi^{\left(  i+1,1\right)  }}\overset{\varepsilon
}{\rightarrow}A^{\left(  i+1,1\right)  }\overset{\pi_{i+1}}{\rightarrow
}B_{\varphi^{\left(  i+2,2\right)  }}\overset{\varepsilon}{\rightarrow
}A^{\left(  i+2,2\right)  }\ldots
\]
where for every $j\geq1,$ we let $B_{\varphi^{\left(  i+j,j\right)  }}$ be any
non-empty $\varphi^{\left(  i+j,j\right)  }$-consistent set. We show that
$P_{w_{\geq i}}^{\prime}$ is successful. Let us assume the contrary. This
means that there exists a boolean subformula of $\psi$ of the form $\xi
U\xi^{\prime}$ and an $n\geq0$, such that for every $r>n$, there is an $1\leq
h\leq r$ such that $\varphi^{\left(  r,h\right)  }$ does not satisfy the
acceptance condition of $\mathcal{A_{\psi}}$ corresponding to $\xi^{\prime
}U\xi^{\prime\prime}$. But then $P_{w}$ would not be successful, which is a
contradiction. Moreover, it holds%
\begin{align*}
&  weight_{\mathcal{A_{\psi}}}\left(  P_{w_{\geq i}}^{\prime}\right) \\
&  =Val^{\omega}\left(  v_{M_{B_{\varphi_{re}^{i}},\psi}}\left(
\varphi^{\left(  i+1,1\right)  }\right)  ,v_{A^{\left(  i+1,1\right)  }%
}\left(  \varphi^{\left(  i+2,2\right)  }\right)  ,v_{A^{\left(  i+2,2\right)
}}\left(  \varphi^{\left(  i+3,3\right)  }\right)  ,\ldots\right) \\
&  =Val^{\omega}\left(  v_{M_{B_{\varphi_{re}^{i},\psi}}}\left(
\varphi^{i+1,1}\right)  ,\mathbf{1},\mathbf{1},\ldots\right) \\
&  =v_{M_{B_{\varphi_{re}^{i},\psi}}}\left(  \varphi^{i+1,1}\right)  ,
\end{align*}
where the last equality holds by Property \ref{Property 3}. Then,%
\begin{align*}
weight_{\mathcal{A_{\varphi}}}\left(  P_{w}\right)   &  =Val^{\omega}\left(
weight_{\mathcal{A_{\psi}}}\left(  P_{w_{\geq0}}^{\prime}\right)
,weight_{\mathcal{A_{\psi}}}\left(  P_{w_{\geq1}}^{\prime}\right)
,\ldots\right) \\
&  \leq Val^{\omega}\left(  \underset{k_{0}\in pri_{\mathcal{A_{\psi}}}\left(
w_{\geq0}\right)  }{\sum}k_{0},\ldots\right) \\
&  =Val^{\omega}\left(  \left(  \Vert\psi\Vert,w_{\geq0}\right)  ,\left(
\Vert\psi\Vert,w_{\geq1}\right)  ,\ldots\right) \\
&  =\left(  \Vert\varphi\Vert,w\right)  ,
\end{align*}
where the first inequality is concluded by Lemmas \ref{Properties Sum}iii,
\ref{Valuation inequality}.

Hence, for every $k\in pri_{\mathcal{A_{\varphi}}}\left(  w\right)  $ it holds
$k\leq\left(  \Vert\varphi\Vert,w\right)  $, and thus using Lemmas
\ref{Properties Sum}ii, \ref{Valuation inequality} we get $\left(
\Vert\mathcal{A}_{\varphi}\Vert,w\right)  =\underset{k\in
pri_{\mathcal{A_{\varphi}}}\left(  w\right)  }{\sum}k$$\leq$$\left(
\Vert\varphi\Vert,w\right)  $. We show now that $\left(  \Vert\varphi
\Vert,w\right)  \leq\left(  \Vert\mathcal{A}_{\varphi}\Vert,w\right)  $. To
this end, we let $P_{w_{\geq i}}^{\prime}\in next_{\mathcal{A_{\psi}}}\left(
w_{\geq i}\right)  \left(  i\geq0\right)  .$ We will prove that there exists a
$P_{w}\in next_{\mathcal{A_{\varphi}}}\left(  w\right)  $ with
\[
Val^{\omega}\left(  weight_{\mathcal{A_{\psi}}}\left(  P_{w_{\geq0}}^{\prime
}\right)  ,weight_{\mathcal{A_{\psi}}}\left(  P_{w_{\geq1}}^{\prime}\right)
,\ldots\right)  \leq weight_{\mathcal{A_{\varphi}}}\left(  P_{w}\right)  .
\]

If $Val^{\omega}\left(  weight_{\mathcal{A_{\psi}}}\left(  P_{w_{\geq0}%
}^{\prime}\right)  ,weight_{\mathcal{A_{\psi}}}\left(  P_{w_{\geq1}}^{\prime
}\right)  ,\ldots\right)  =\mathbf{0}$, then the inequality holds for every
$P_{w}\in next_{\mathcal{A_{\varphi}}}\left(  w\right)  $. Otherwise, no empty
states appear in $P_{w_{\geq i}}^{\prime}\left(  i\geq0\right)  $ and the
subsequent hold.

Let $\varphi_{1}U\varphi_{2}\in cl\left(  \psi\right)  $. There exist
paths $\widehat{P}_{w_{\geq i}}\in next_{\mathcal{A_{\psi}}}\left(  w_{\geq
i}\right)  $$\left(  i\geq1\right)  $\footnote{We can prove for every $i\geq1$ the existence of the path $\widehat{P}_{w_{\geq i}}$ following the constructive arguments of Procedure 3.} with the following properties. (a) There
are infinitely many $j\geq1$ such that at the next transition of $\widehat
{P}_{w_{\geq k}}\left(  1\leq k\leq j\right)  $ that processes the letter
$\pi_{j}$, and at the corresponding next transition of $P_{w_{\geq0}}^{\prime
}$, the automaton moves to a state that satisfies the acceptance condition of
$\varphi_{1}U\varphi_{2}$, and (b)\\

$Val^{\omega}\left(  weight_{\mathcal{A_{\psi}}}\left(  P_{w_{\geq0}}^{\prime
}\right)  ,weight_{\mathcal{A_{\psi}}}\left(  P_{w_{\geq1}}^{\prime}\right)
,\ldots\right)  $

$\leq Val^{\omega}\left(  weight_{\mathcal{A_{\psi}}}\left(  P_{w_{\geq0}%
}^{\prime}\right)  ,weight_{\mathcal{A_{\psi}}}\left(  \widehat{P}_{w_{\geq1}%
}\right)  ,weight_{\mathcal{A_{\psi}}}\left(  \widehat{P}_{w_{\geq2}}\right)
,\ldots\right)  $\\

Moreover, the paths $\widehat{P}_{w_{\geq i}}$$\left(  i\geq1\right)  $ can be
chosen so that condition (a) is satisfied for every $\varphi_{1}U\varphi
_{2}\in cl\left(  \psi\right)  .$ We set $\widehat{P}_{w_{\geq0}}=P_{w_{\geq
0}}^{\prime}$, and for every $i\geq0$ we let

$\widehat{P}_{w_{\geq i}}:B_{\psi^{\left(  i,0\right)  }}\overset{\pi_{i}%
}{\rightarrow}B_{\psi^{\left(  i,1\right)  }}$$\overset{\ast}{\rightarrow
}B_{\psi_{re}^{\left(  i,1\right)  }}\overset{\pi_{i+1}}{\rightarrow}%
B_{\psi^{\left(  i,2\right)  }}\ldots$.

Clearly, for every $j\geq1$ it holds $\psi^{\left(  i,j\right)  }\in
\widehat{next}\left(  B_{\psi_{re}^{\left(  i,j-1\right)  }}\right)  $$\left(
i\geq0\right)  $. Then, we set $\varphi^{0}=\varphi$ and with the same
procedure used in Lemma \ref{boithitiko until 1} we obtain for every $m\geq0$,
a $B_{\varphi_{re}^{m}}\neq\emptyset$ and $\varphi^{m+1}\in next\left(
B_{\varphi_{re}^{m}}\right)  $, such that%
\[%
\begin{array}
[c]{ll}%
v_{B_{\varphi_{re}^{m}}}\left(  \varphi^{m+1}\right)  & \geq\underset{0\leq
j\leq m}{%
%TCIMACRO{\dprod }%
%BeginExpansion
{\displaystyle\prod}
%EndExpansion
}v_{B_{\psi_{re}^{\left(  j,m-j\right)  }}}\left(  \psi^{\left(
j,m-j+1\right)  }\right) \\
& =\underset{0\leq j\leq m}{%
%TCIMACRO{\dprod }%
%BeginExpansion
{\displaystyle\prod}
%EndExpansion
}wt^{\prime}\left(   B_{\psi_{re}^{\left(  j,m-j\right)  }},\pi
_{m},B_{\psi^{\left(  j,m-j+1\right)  }}\right)   \\
& =wt^{\prime} \left(  B_{\psi_{re}^{\left(  m,0\right)  }},\pi
_{m},B_{\psi^{\left(  m,1\right)  }}\right)   \\
& =weight_{\mathcal{A}_{\psi}}\left(  \widehat{P}_{w_{\geq m}}\right)  .
\end{array}
\]

Then, the path
\[
P_{w}:B_{\varphi^{0}}\overset{\pi_{0}}{\rightarrow}B_{\varphi^{1}}%
\overset{\varepsilon}{\rightarrow}B_{\varphi_{re}^{1}}\overset{\pi_{1}%
}{\rightarrow}B_{\varphi^{2}}\ldots
\]

(where we let $B_{\varphi^{1}}$ be any non-empty $\varphi^{i}$-consistent set
$\left(  i\geq0\right)  $) is a successful path of next and $\varepsilon
$-reduction transitions of $\mathcal{A_{\varphi}}$ over $w$ and it holds
\begin{align*}
Val^{\omega}\left(  weight_{\mathcal{A}_{\psi}}\left(  \widehat{P}_{w_{\geq
i}}\right)  \right)  _{i\geq0}  &  \leq Val^{\omega}\left(  v_{B_{\varphi
_{re}^{i}}}\left(  \varphi^{i+1}\right)  \right)  _{i\geq0}\\
&  =weight_{\mathcal{A}_{\varphi}}\left(  P_{w}\right)  .
\end{align*}
Thus, for every family $P_{w_{\geq i}}^{\prime}\in next_{\mathcal{A}_{\psi}%
}\left(  w_{\geq i}\right)  $ $\left(  i\geq0\right)  $ there exists a
$P_{w}\in next_{\mathcal{A}_{\varphi}}\left(  w\right)  $ such that
\[
Val^{\omega}\left(  weight_{\mathcal{A}_{\psi}}\left(  P_{w_{\geq i}}^{\prime
}\right)  \right)  _{i\geq0}\leq weight_{\mathcal{A}_{\varphi}}\left(
P_{w}\right)
\]
i.e., for every family $k_{i}\in pri_{\mathcal{A}_{\psi}}\left(  w_{\geq
i}\right)  \left(  i\geq0\right)  $ there exists a $k\in pri_{\mathcal{A}%
_{\varphi}}\left(  w\right)  $ such that $Val^{\omega}\left(  k_{i}\right)
_{i\geq0}\leq k$. Then, by Lemma \ref{Lemma inequality sum}\ we get
\[
\underset{i\geq0}{\underset{k_{i}\in pri_{\mathcal{A}_{\psi}}\left(  w_{\geq
i}\right)  }{\sum}}Val^{\omega}\left(  k_{i}\right)  _{i\geq0}\leq
\underset{pri_{\mathcal{A}_{\varphi}}\left(  w\right)  }{%
%TCIMACRO{\dsum }%
%BeginExpansion
{\displaystyle\sum}
%EndExpansion
}k=\left(  \left\Vert \mathcal{A}_{\varphi}\right\Vert ,w\right)
\]
Moreover,
\begin{align*}
\underset{i\geq0}{\underset{k_{i}\in pri_{\mathcal{A}_{\psi}}\left(  w_{\geq
i}\right)  }{\sum}}Val^{\omega}\left(  k_{i}\right)  _{i\geq0}  &
=Val^{\omega}\left(  \underset{k_{i}\in pri_{\mathcal{A}_{\psi}}\left(
w_{\geq i}\right)  }{\sum}k_{i}\right)  _{i\geq0}\\
&  =Val^{\omega}\left(  \left\Vert \psi\right\Vert ,w_{\geq i}\right)
_{i\geq0}\\
&  =\left(  \left\Vert \varphi\right\Vert ,w\right)
\end{align*}
where the first equality holds by the distributivity of $Val^{\omega}$ over finite sums and the fact that for every
$i\geq0$ the set $pri_{\mathcal{A}_{\psi}}\left(  w_{\geq i}\right)  $ is
finite. We conclude that $\left(  \left\Vert \varphi\right\Vert ,w\right)
\leq\left(  \left\Vert \mathcal{A}_{\varphi}\right\Vert ,w\right)  .$ Hence,
for every $w\in\left(  \mathcal{P}\left(  AP\right)  \right)  ^{\omega}$ we
have $\left(  \left\Vert \mathcal{A}_{\varphi}\right\Vert ,w\right)  =\left(
\left\Vert \varphi\right\Vert ,w\right)  ,$ and the proof is completed.
\end{proof}

\begin{lemma}
\label{Lemma boolean}Let $\varphi\in bLTL\left(  K,AP\right)  .$ Then,
$\mathcal{A}_{\varphi}$ recognizes $\left\Vert \varphi\right\Vert .$
\end{lemma}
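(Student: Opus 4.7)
The plan is to prove the statement by induction on the structure of the reduced form of $\varphi$, reusing essentially all the preceding lemmas. The only subtle point is packaging boolean formulas so that they fit the hypotheses of the constructive lemmas, all of which were stated for the $RULTL\left(  K,AP\right)$ fragment.

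First I would reduce to the case where $\varphi$ is reduced: the equivalences of Proposition \ref{LTL equiv} yield $\varphi \equiv \varphi_{re}$, and $\mathcal{A}_{\varphi}$ is defined only for reduced formulas, so it suffices to show $\left\Vert \mathcal{A}_{\varphi_{re}}\right\Vert = \left\Vert \varphi_{re}\right\Vert = \left\Vert \varphi\right\Vert$. Next I would verify the embedding $bLTL\left(K,AP\right) \subseteq RULTL\left(K,AP\right)$ that is already asserted in the text, and observe further that every $\psi \in bLTL\left(K,AP\right)$ is (up to equivalence) an $LTL$-step formula, since $\psi \equiv \mathbf{1} \wedge \psi$ is of the form $\bigvee_{1\leq i\leq 1}\left(k_i \wedge \varphi_i\right)$ with $k_1 = \mathbf{1}$ and $\varphi_1 = \psi \in bLTL\left(K,AP\right)$. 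This observation is what allows the $U$-case and $\square$-case of $bLTL$ formulas to fall under Lemmas \ref{Lemma until} and \ref{Lemma always}.

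With this in place, the induction on the structure of $\varphi$ proceeds case by case. For $\varphi \in \left\{\mathbf{0}, true, a, \lnot a\right\}$, Lemma \ref{Lemma atomic} applies directly (with $k = \mathbf{0}$ or $k = \mathbf{1}$ in the constant cases). For $\varphi = \psi \vee \xi$ with $\psi,\xi \in bLTL\left(K,AP\right)$, the induction hypothesis gives $\left\Vert \mathcal{A}_{\psi}\right\Vert = \left\Vert \psi\right\Vert$ and $\left\Vert \mathcal{A}_{\xi}\right\Vert = \left\Vert \xi\right\Vert$, so Lemma \ref{Lemma disjuction} yields the result since $bLTL\left(K,AP\right) \subseteq RULTL\left(K,AP\right)$. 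For $\varphi = \psi \wedge \xi$ with both $\psi, \xi \in bLTL\left(K,AP\right)$, Lemma \ref{Lemma conjuction} applies because $\psi$ is boolean and $\xi$ lies in $RULTL\left(K,AP\right)$. For $\varphi = \bigcirc\psi$, Lemma \ref{Lemma next} applies directly. For $\varphi = \psi U \xi$ and $\varphi = \square \psi$, the step-formula observation above places $\psi,\xi \in stLTL\left(K,AP\right)$, so Lemmas \ref{Lemma until} and \ref{Lemma always} apply, respectively.

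The only step that requires any real care is the step-formula embedding: I need to confirm that the automaton $\mathcal{A}_{\psi U \xi}$ (resp.~$\mathcal{A}_{\square \psi}$) constructed by viewing $\psi$ and $\xi$ as elements of $stLTL\left(K,AP\right)$ via the trivial presentation $\psi \equiv \mathbf{1} \wedge \psi$ is the same automaton (up to equivalence) as the one produced directly for the boolean formula. This should follow from the definition of $next\left(B_{\varphi}\right)$ and the valuation $v_{B_{\varphi}}$, since conjoining $\mathbf{1}$ and reducing just removes the constant; the weights of all next transitions collapse to $\mathbf{1}$ whenever the next formulas are boolean, and the acceptance conditions associated with any $U$-subformula are determined by the closure $cl\left(\varphi\right)$, which is unaffected by this trivial rewriting. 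I expect this to be the only mildly technical point; the remainder is a routine structural induction.
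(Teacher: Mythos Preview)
Your inductive plan matches the paper's one-line proof: structural induction on $bLTL$ formulas, reusing the work done in Lemmas \ref{Lemma atomic}, \ref{Lemma disjuction}, \ref{Lemma next}, \ref{Lemma conjuction}, \ref{Lemma until}, \ref{Lemma always}. For the atomic, $\vee$, $\wedge$, and $\bigcirc$ cases your invocation of those lemmas is fine, since their hypotheses are met directly.

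The detour through $stLTL$ for the $U$ and $\square$ cases, however, does not work as you describe and is not what the paper does. The class $stLTL\left(K,AP\right)$ is \emph{syntactic}: a boolean $\psi$ is not literally of the form $\bigvee_i\left(k_i\wedge\varphi_i\right)$, and the semantic equivalence $\psi\equiv\mathbf{1}\wedge\psi$ does not put it there. If you actually replace $\psi$ by $\mathbf{1}\wedge\psi=\text{true}\wedge\psi$, the resulting formula $\left(\mathbf{1}\wedge\psi\right)U\left(\mathbf{1}\wedge\xi\right)$ is not reduced (condition (a) is violated), and its reduction is exactly $\psi U\xi$ again; so there is no different automaton to compare with, and Lemmas \ref{Lemma until}, \ref{Lemma always} still cannot be invoked as black boxes because their hypotheses $\psi,\xi\in stLTL$ are not met. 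The ``automaton equivalence'' you flag as needing care is therefore not a well-posed step.

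What the paper actually means by ``using the same arguments'' is that you \emph{re-run the proofs} of Lemmas \ref{Lemma until} and \ref{Lemma always} with $\psi,\xi\in bLTL\left(K,AP\right)$ in place of the step-formula hypothesis. This goes through (and in fact simplifies): the only places those proofs use $\psi,\xi\in stLTL$ are to conclude that next formulas of $\psi$- and $\xi$-consistent sets are boolean, that all transition weights after the first are $\mathbf{1}$, and that the sets $pri_{\mathcal{A}_\psi}\left(w_{\geq j}\right)$, $pri_{\mathcal{A}_\xi}\left(w_{\geq j}\right)$ are finite. All of this is immediate when $\psi$ and $\xi$ are boolean to begin with. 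So drop the step-formula embedding and instead note that the arguments of those two lemmas specialize directly to boolean $\psi,\xi$.
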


\begin{proof}
We prove our claim by induction on the structure of $bLTL\left(  K,AP\right)
$-formulas and using the same arguments as in Lemmas \ref{Lemma atomic},
\ref{Lemma disjuction}, \ref{Lemma next}, \ref{Lemma conjuction},
\ref{Lemma until}, \ref{Lemma always}.
\end{proof}

\begin{lemma}
\label{Lemma step}Let $\varphi\in stLTL\left(  K,AP\right)  .$ Then,
$\mathcal{A}_{\varphi}$ recognizes $\left\Vert \varphi\right\Vert .$
\end{lemma}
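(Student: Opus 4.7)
The plan is to proceed by structural induction on $\varphi\in stLTL(K,AP)$, assembling the previously established lemmas. Recall that by definition every step formula has the shape $\varphi=\underset{1\leq i\leq n}{\bigvee}(k_{i}\wedge\varphi_{i})$ with $k_{i}\in K$ and $\varphi_{i}\in bLTL(K,AP)$, so the only building blocks I need are: atomic formulas (constants and boolean subformulas), a conjunction with a boolean operand, and a finite disjunction. Each $k_{i}\wedge\varphi_{i}$ lies in $RULTL(K,AP)$ because $k_{i}\in K\subseteq RULTL(K,AP)$ and $\varphi_{i}\in bLTL(K,AP)$, and consequently $\varphi$ itself lies in $RULTL(K,AP)$, so $\mathcal{A}_{\varphi}$ is well defined.

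First I would dispose of the atomic pieces: by Lemma \ref{Lemma atomic}, $\mathcal{A}_{k_{i}}$ recognizes $\left\Vert k_{i}\right\Vert $ for each constant $k_{i}\in K$, and by Lemma \ref{Lemma boolean}, $\mathcal{A}_{\varphi_{i}}$ recognizes $\left\Vert \varphi_{i}\right\Vert $ for each boolean $\varphi_{i}$. Next, for each conjunction $k_{i}\wedge\varphi_{i}$ I would invoke Lemma \ref{Lemma conjuction}, which handles precisely a conjunction of a boolean formula with a formula in $RULTL(K,AP)$. Strictly speaking Lemma \ref{Lemma conjuction} is formulated with the boolean conjunct on the left, whereas here $\varphi_{i}$ sits on the right; however, the proof uses the boolean-ness of that conjunct only through the product $v_{B_{\varphi}}(\psi'\wedge\xi')=v_{M_{B_{\varphi},\psi}}(\psi')\cdot v_{M_{B_{\varphi},\xi}}(\xi')$, together with Remark \ref{remark_commutative}, which ensures that products of elements from $\{\mathbf{0},\mathbf{1},k\}$ are order-insensitive. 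Hence the same argument carries through verbatim after swapping the two operands, yielding that $\mathcal{A}_{k_{i}\wedge\varphi_{i}}$ recognizes $\left\Vert k_{i}\wedge\varphi_{i}\right\Vert $.

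Finally, since each disjunct $k_{i}\wedge\varphi_{i}$ belongs to $RULTL(K,AP)$, an iterated application of Lemma \ref{Lemma disjuction} (induction on $n$, grouping the disjuncts from the left) assembles the automata $\mathcal{A}_{k_{i}\wedge\varphi_{i}}$ into $\mathcal{A}_{\varphi}$ and gives
\[
\left(\left\Vert \mathcal{A}_{\varphi}\right\Vert ,w\right)=\underset{1\leq i\leq n}{\sum}\left(\left\Vert k_{i}\wedge\varphi_{i}\right\Vert ,w\right)=\left(\left\Vert \varphi\right\Vert ,w\right)
\]
for every $w\in(\mathcal{P}(AP))^{\omega}$. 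I do not expect a genuine obstacle: the proof is a direct assembly of already-proven lemmas, and the only mildly delicate point, namely matching the orientation of the boolean conjunct to the statement of Lemma \ref{Lemma conjuction}, is taken care of by Remark \ref{remark_commutative}.
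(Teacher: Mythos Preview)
Your proposal is correct and follows essentially the same approach as the paper, which simply cites Lemmas \ref{Lemma boolean}, \ref{Lemma atomic}, and \ref{Lemma disjuction}. Your treatment is in fact more careful: you explicitly invoke Lemma \ref{Lemma conjuction} for the conjuncts $k_{i}\wedge\varphi_{i}$ (which the paper's one-line proof omits from its list) and you address the orientation of the boolean operand via Remark \ref{remark_commutative}.
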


\begin{proof}
Our claim is derived by Lemmas \ref{Lemma boolean}, \ref{Lemma atomic},
\ref{Lemma disjuction}.
\end{proof}

\begin{theorem}
Let $\varphi\in RULTL\left(  K,AP\right)  .$ Then, $\mathcal{A}_{\varphi}$
recognizes $\left\Vert \varphi\right\Vert .$
\end{theorem}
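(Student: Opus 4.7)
The plan is to prove the theorem by structural induction on $\varphi \in RULTL(K,AP)$, applying the lemmas established throughout Section 6. First, I would observe that by applying the equivalences of Proposition \ref{LTL equiv} we may assume without loss of generality that $\varphi$ is reduced, since $\varphi_{re}\in RULTL(K,AP)$ has the same semantics as $\varphi$, so it suffices to show that $\mathcal{A}_{\varphi_{re}}$ recognizes $\|\varphi_{re}\|$.

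For the base of the induction, when $\varphi=k\in K$, or $\varphi=a$, or $\varphi=\lnot a$ with $a\in AP$, the claim is immediate from Lemma \ref{Lemma atomic}. More generally, when $\varphi\in bLTL(K,AP)$, the claim follows directly from Lemma \ref{Lemma boolean}, and when $\varphi\in stLTL(K,AP)$, from Lemma \ref{Lemma step}. These cover all formulas at the bottom of the inductive definition of $RULTL(K,AP)$.

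For the inductive steps, I would match each clause of the definition of $RULTL(K,AP)$ with the corresponding lemma. If $\varphi=\bigcirc\psi$ with $\psi\in RULTL(K,AP)$, then by the induction hypothesis $\mathcal{A}_\psi$ recognizes $\|\psi\|$, and Lemma \ref{Lemma next} yields the conclusion for $\mathcal{A}_\varphi$. If $\varphi=\psi\vee\xi$ with $\psi,\xi\in RULTL(K,AP)$, Lemma \ref{Lemma disjuction} gives the result. If $\varphi=\psi\wedge\xi$ with $\psi\in bLTL(K,AP)$ and $\xi\in RULTL(K,AP)$ (and symmetrically), Lemma \ref{Lemma conjuction} applies, where the boolean hypothesis on $\psi$ is precisely the restriction imposed by the definition of $RULTL$. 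For $\varphi=\psi U\xi$ with $\psi,\xi\in stLTL(K,AP)$, we first use Lemma \ref{Lemma step} to obtain automata $\mathcal{A}_\psi,\mathcal{A}_\xi$ recognizing $\|\psi\|,\|\xi\|$, and then apply Lemma \ref{Lemma until}. The case $\varphi=\square\psi$ with $\psi\in stLTL(K,AP)$ is handled analogously via Lemma \ref{Lemma step} followed by Lemma \ref{Lemma always}.

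The heavy technical work has already been carried out in the preceding lemmas, so the induction itself is routine. The only subtlety, and the reason I would not expect a major obstacle, is verifying that at each step the syntactic hypotheses of the applicable lemma are satisfied: this is precisely ensured by the inductive definition of $RULTL(K,AP)$, which is engineered so that conjunctions have at most one non-boolean factor and the arguments of $U$ and $\square$ are always step formulas. Consequently, the induction closes and $\mathcal{A}_{\varphi}$ recognizes $\|\varphi\|$ for every $\varphi\in RULTL(K,AP)$.
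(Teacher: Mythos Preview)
Your proposal is correct and follows essentially the same approach as the paper: the paper's proof simply cites Lemmas \ref{Lemma atomic}, \ref{Lemma next}, \ref{Lemma disjuction}, \ref{Lemma conjuction}, \ref{Lemma step}, \ref{Lemma until}, \ref{Lemma always}, \ref{Lemma boolean} to conclude by structural induction, exactly as you outline. Your added remark about first reducing $\varphi$ is appropriate, since $\mathcal{A}_\varphi$ is only defined for reduced formulas.
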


\begin{proof}
By Lemmas \ref{Lemma atomic}, \ref{Lemma next}, \ref{Lemma disjuction},
\ref{Lemma conjuction}, \ref{Lemma step}, \ \ref{Lemma until},
\ref{Lemma always}, \ref{Lemma boolean} we get that $\left\Vert \mathcal{A}%
_{\varphi}\right\Vert =\left\Vert \varphi\right\Vert $ for every $\varphi\in
RULTL\left(  K,AP\right)  .$
\end{proof}

\begin{corollary}
Let $\varphi\in RULTL\left(  K,AP\right)  .$ Then, we can effectively
construct a wBa over $\mathcal{P}\left(  AP\right)  $ and $K$ recognizing
$\left\Vert \varphi\right\Vert .$
\end{corollary}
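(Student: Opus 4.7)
The plan is to chain together the main Theorem just established with the two conversion lemmas proved in Section 5. By the preceding Theorem, for every $\varphi\in RULTL\left(K,AP\right)$ we can effectively construct an $\varepsilon$-wgBa $\mathcal{A}_{\varphi}$ over $\mathcal{P}\left(AP\right)$ and $K$ such that $\left\Vert \mathcal{A}_{\varphi}\right\Vert =\left\Vert \varphi\right\Vert$. The construction is effective because, for a reduced formula $\varphi$, both the state set $Q=\underset{B_{\varphi}}{\bigcup}\left(\left\{B_{\varphi}\right\}\cup reach\left(B_{\varphi}\right)\right)$ and the weight function $wt$ are defined from the finite objects $cl\left(\varphi\right)$, $\varphi$-consistent sets, next formulas, and the values $v_{B_{\psi}}$, which are all computable by induction on the structure of $\varphi$; if $\varphi$ is not reduced to begin with, we first replace it by $\varphi_{re}$ via Proposition \ref{LTL equiv}, which preserves semantics.

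Next, I would apply the first Lemma of Section 5 to $\mathcal{A}_{\varphi}$ to effectively obtain an equivalent $\varepsilon$-wBa $\mathcal{A}_{\varphi}^{\prime}$, by collapsing the family $\mathcal{F}=\left\{F_{1},\ldots,F_{l}\right\}$ of final sets into a single one via the standard product-with-$\left\{1,\ldots,l\right\}$ construction given there. Then I would apply the second Lemma of Section 5 to $\mathcal{A}_{\varphi}^{\prime}$, producing an equivalent wBa $\mathcal{A}_{\varphi}^{\prime\prime}$ by absorbing the $\varepsilon$-closure into the weights of the letter transitions (using the sum $\underset{q\overset{\ast}{\rightarrow}\widetilde{q},\overline{q}\overset{\ast}{\rightarrow}q^{\prime}}{\underset{\widetilde{q},\overline{q}\in Q}{\sum}}wt\left(\widetilde{q},a,\overline{q}\right)$ to define the new weights). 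Both constructions are explicitly effective, so composing them yields a wBa $\mathcal{A}_{\varphi}^{\prime\prime}$ over $\mathcal{P}\left(AP\right)$ and $K$ with $\left\Vert \mathcal{A}_{\varphi}^{\prime\prime}\right\Vert =\left\Vert \mathcal{A}_{\varphi}^{\prime}\right\Vert =\left\Vert \mathcal{A}_{\varphi}\right\Vert =\left\Vert \varphi\right\Vert$.

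Since the Corollary is a pure effectiveness statement, no genuine obstacle remains: the equalities of behaviors are already proved in the cited lemmas and theorem, and effectiveness follows from the fact that each construction takes a finite description of an automaton and produces another finite description using only finite sums over finite state sets and the operations of $K$. The only point deserving explicit mention is that the initial reduction $\varphi\mapsto\varphi_{re}$ and the computation of $reach\left(B_{\varphi}\right)$ terminate: the former is a straightforward rewriting using Proposition \ref{LTL equiv}, while the latter is finite by the Lemma on finiteness of $reach\left(B_{\varphi}\right)$ for reduced $\varphi\in RULTL\left(K,AP\right)$, which is precisely the reason the syntactic restriction to $RULTL$ is imposed.
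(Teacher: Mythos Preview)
Your proposal is correct and takes essentially the same approach as the paper: the corollary is stated without proof immediately after the Theorem, and is meant to follow by composing the Theorem (yielding the $\varepsilon$-wgBa $\mathcal{A}_{\varphi}$) with the two conversion Lemmas of Section~5, exactly as you describe.
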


\section{Weighted LTL over generalized product $\omega$-valuation monoids}

\bigskip We let $AP$ be a finite set of atomic propositions and $K=\left(
K,+,\cdot,Val^{\omega},\mathbf{0},\mathbf{1}\right)  $ be an indempotent
generalized product $\omega$-valuation monoid. The syntax and semantics over
the weighted $LTL$ over $AP$ and $K$ is defined as in the previous section, as
well as the class $LTL(K,AP)$, and the fragment of $bLTL\left(  K,AP\right)  $. We let a \textit{restricted}
$LTL$\textit{-step formula }be an $LTL\left(  K,AP\right)  $-formula of the
form $\underset{1\leq i\leq n}{%
%TCIMACRO{\dbigvee }%
%BeginExpansion
{\displaystyle\bigvee}
%EndExpansion
}\left(  k_{i}\wedge\varphi_{i}\right)  $ with $k_{i}\in K\backslash\left\{
\mathbf{0},\mathbf{1}\right\}  $ and $\varphi_{i}\in bLTL\left(  K,AP\right)
$ for every $1\leq i\leq n.$ We denote by $r$-$stLTL\left(  K,AP\right)  $ the
class of restricted $LTL$-step formulas over $AP$ and $K.$ We introduce now
the syntactic fragment of \textit{totally restricted} $U$-\textit{nesting}
$LTL$-formulas.

\begin{definition}
The fragment of totally \textit{restricted} $U$-\textit{nesting}
$LTL$-formulas over $AP$ and $K$, denoted by $t$-$RULTL\left(  K,AP\right)  $,
is the least class of formulas in $LTL\left(  K,AP\right)  $ which is defined
inductively in the following way.

\begin{itemize}
\item[$\cdot$] $k\in t$-$RULTL\left(  K,AP\right)  $ for every $k\in K.$

\item[$\cdot$] $bLTL\left(  K,AP\right)  \subseteq t$-$RULTL\left(
K,AP\right)  $

\item[$\cdot$] If $\varphi\in t$-$RULTL\left(  K,AP\right)  ,$ then
$\bigcirc\varphi\in t$-$RULTL\left(  K,AP\right)  .$\

\item[$\cdot$] If $\varphi,\psi\in t$-$RULTL\left(  K,AP\right)  ,$ then
$\varphi\vee\psi\in t$-$RULTL\left(  K,AP\right)  .$

\item[$\cdot$] If $\varphi\in bLTL\left(  K,AP\right)  $ and $\psi\in
r$-$stLTL\left(  K,AP\right)  ,$

or $\psi\in bLTL\left(  K,AP\right)  ,$ or $\psi=\xi U\lambda,$ or $\psi
=\square\xi$

where $\xi,\lambda\in r$-$stLTL\left(  K,AP\right)  ,$

then $\varphi\wedge\psi,\psi\wedge\varphi\in t$-$RULTL\left(  K,AP\right)  .$

\item[$\cdot$] If $\varphi,\psi\in r$-$stLTL\left(  K,AP\right)  ,$ then
$\varphi U\psi\in t$-$RULTL\left(  K,AP\right)  .$

\item[$\cdot$] If $\varphi\in r$-$stLTL\left(  K,AP\right)  ,$ then
$\square\varphi\in t$-$RULTL\left(  K,AP\right)  .$
\end{itemize}
\end{definition}

We adopt the theory of the previous section (observe that using induction on the
structure of $\varphi$ we can derive that for every $\varphi\in
t$-$RULTL\left(  K,AP\right)  $ and $\varphi$-consistent set
$B_{\varphi}$, $\ $it holds $next\left(  B_{\varphi}\right)  \subseteq
t$-$RULTL\left(  K,AP\right)  $). The following theorem is obtained by induction on the structure of
$t$-$RULTL\left(  K,AP\right)  $-formulas and using the same arguments as the ones used
in Lemmas \ref{Lemma atomic}, \ref{Lemma next}, \ref{Lemma disjuction},
\ref{Lemma conjuction}, \ref{Lemma step}, \ref{Lemma until},
\ref{Lemma always}, \ref{Lemma boolean} of the previous section. The stronger syntactical restriction that we impose on the fragment of \textit{totally restricted} $U$-\textit{nesting}
$LTL$-formulas allows us the use of the distributivity of $Val^{\omega}$ over finite sums for generalized product $\omega$-valuation monoids and Lemma 3 whenever necessary.

\begin{theorem}
Let $\varphi\in t$-$RULTL\left(  K,AP\right)  .$ Then, $\mathcal{A}_{\varphi}$
recognizes $\left\Vert \varphi\right\Vert .$
\end{theorem}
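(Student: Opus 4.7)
The plan is to mirror the induction carried out in Section~6 on the structure of $RULTL(K,AP)$-formulas, replaying the analogues of Lemmas~\ref{Lemma atomic}--\ref{Lemma boolean} in the restricted setting. The base cases ($\varphi=k$, $\varphi=a$, $\varphi=\lnot a$, and $\varphi\in bLTL(K,AP)$) are unchanged, since the constructions of $\mathcal{A}_\varphi$ produce transitions whose weights are either $\mathbf{0}$, $\mathbf{1}$, or the single constant~$k$, and the required identities $Val^{\omega}(\mathbf{1}^{\omega})=\mathbf{1}$, $Val^{\omega}(k,\mathbf{1},\mathbf{1},\ldots)=k$ are direct consequences of Properties~\ref{Property 1} and \ref{Property 3}, which by hypothesis still hold on~$K$. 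Disjunction and next pass through verbatim, as they only invoke Property~\ref{Property 1} on sequences that begin with a single possibly-nontrivial weight followed by $\mathbf{1}$'s.

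For the conjunction case $\varphi=\psi\wedge\xi$ with $\psi$ boolean and $\xi\in t\text{-}RULTL(K,AP)$, I would reproduce Procedures~1 and~2 of the proof of Lemma~\ref{Lemma conjuction} together with the analogues of Lemmas~\ref{boithitiko lemma 1}--\ref{boithitiko lemma 3}. The only nontrivial point is verifying that whenever the factorization $Val^{\omega}(k_i^1\cdot k_i^2)_{i\ge 0}=Val^{\omega}(k_i^1)_{i\ge 0}\cdot Val^{\omega}(k_i^2)_{i\ge 0}$ type argument is invoked through Property~\ref{Sum-valuation} (restricted version), the relevant families satisfy the ``almost-always in $L\setminus\{\mathbf{0},\mathbf{1}\}$ or in $\{\mathbf{0},\mathbf{1}\}$'' constraint. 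Since $\psi$ is boolean, one of the two factors is $\mathbf{1}$ for every index, reducing the product to the second factor, so no application of the unrestricted distributivity is actually needed. Analogously, Lemma~\ref{Valuation inequality}(ii) is only invoked on pairs $(k_i^1,k_i^2)$ that are by construction either both boolean (and thus in $\{\mathbf{0},\mathbf{1}\}$) or both coming from the same step-formula evaluation (and thus trace together through $L\setminus\{\mathbf{0},\mathbf{1}\}$), so the restricted hypothesis of Lemma~\ref{Valuation inequality}(ii) is met.

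The until and always cases are where the stronger syntactic restriction genuinely pays off. Here the inductive hypothesis is applied to formulas $\psi,\xi\in r\text{-}stLTL(K,AP)$, whose semantics on any suffix is a finite sum of products $k_i\cdot(\left\Vert \varphi_i\right\Vert ,w_{\ge j})$ with $k_i\in K\setminus\{\mathbf{0},\mathbf{1}\}$. The weights labelling the next-transitions generated by $v_{B_{\cdot}}$ at each position therefore lie, modulo transient indices, in a fixed finite subset of $K\setminus\{\mathbf{0},\mathbf{1}\}\cup\{\mathbf{0},\mathbf{1}\}$, and successive positions preserve the same zone up to finitely many exceptions. Consequently, every invocation of Property~\ref{Sum-valuation} and of Lemma~\ref{Valuation inequality}(ii) in the analogues of Lemmas~\ref{boithitiko until 1}, \ref{boithitiko until 2}, \ref{Lemma until}, and \ref{Lemma always} satisfies the restricted family condition; with this verified, the exchange of the leading infinite sum and $Val^{\omega}$, and the monotonicity of $Val^{\omega}$ that drive the two inequalities $\left(\left\Vert \mathcal{A}_\varphi\right\Vert ,w\right)\le\left(\left\Vert \varphi\right\Vert ,w\right)$ and its converse, go through unchanged.

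The main obstacle, and the reason for defining $t\text{-}RULTL$ rather than simply porting $RULTL$, is precisely this bookkeeping: one must confirm that in every place where the unrestricted proof used Property~1 or Lemma~\ref{Valuation inequality}(i), the $t\text{-}RULTL$ syntactic shape produces families that are eventually segregated between $\{\mathbf{0},\mathbf{1}\}$ and $L\setminus\{\mathbf{0},\mathbf{1}\}$. Forbidding $\mathbf{0},\mathbf{1}$ as weights inside $r\text{-}stLTL$ coefficients and forbidding an until or always operator to appear on both sides of a conjunction with a non-boolean partner is exactly what guarantees this segregation. Once these restrictions are exploited case-by-case in each inductive step, the conclusion $\left\Vert \mathcal{A}_\varphi\right\Vert =\left\Vert \varphi\right\Vert $ follows by combining the restricted analogues of Lemmas~\ref{Lemma atomic}--\ref{Lemma boolean} exactly as in Theorem~1 of Section~6.
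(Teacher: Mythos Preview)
Your proposal is correct and follows essentially the same approach as the paper: the paper's proof is a one-sentence remark that the theorem is obtained by induction on the structure of $t$-$RULTL(K,AP)$-formulas using the same arguments as in Lemmas~\ref{Lemma atomic}--\ref{Lemma boolean}, with the stronger syntactic restriction ensuring that Property~1 (restricted version) and Lemma~\ref{Valuation inequality}(ii) apply wherever needed. Your write-up is in fact more explicit than the paper about \emph{why} the segregation condition of the restricted Property~1 and of Lemma~\ref{Valuation inequality}(ii) is met at each inductive step, which is exactly the bookkeeping the paper leaves implicit.
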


\begin{example}
Let $AP=\left\{  a,b\right\}  ,$ and $\varphi
=\square\left(  a\wedge2\right)  $$\in t$-$RULTL\left(  K,AP\right) $ where $K$ is the generalized product $\omega$-valuation monoid of Example \ref{example2}. Then, $\mathcal{A}_{\varphi}=\left(
Q,wt,in,\mathcal{F}\right)  $ is defined below, where $\pi$
ranges over $\mathcal{P}\left(  AP\right)  $, and by $\pi_{a}$ we denote any
letter in $\mathcal{P}\left(  AP\right)  $ that contains $a.$

\begin{itemize}
\item[$\bullet$] $Q=\left\{  q_{1},\ldots,q_{5}\right\}  $ with $q_{1}%
=\emptyset,$ $q_{2}=\left\{  -\infty\right\}  ,$ $q_{3}=\left\{  true\right\}
,$$q_{4}=\left\{  \varphi,a\wedge2,a,2\right\}  ,$ $q_{5}=\left\{
\varphi\wedge\left(  true\wedge true\right)  ,\varphi,a\wedge2,2,a,true\wedge
true,true\right\}  $

\item[$\bullet$] The states with initial weight $\infty$ are the sets $q_{1}%
,q_{4}.$

\item[$\bullet$] The transitions with weight$\neq-\infty$ are the following:

$wt\left(  q_{3},\pi,q_{3} \right)  =wt\left(
q_{5},\varepsilon,q_{4}\right)   =wt\left(    q_{k}%
,\varepsilon,q_{k}\right)    =\infty$ where $k=2,3,4$, and

$wt\left(    q_{4},\pi_{a},q_{5}  \right)  =2.$

\item[$\bullet$] The automaton has no final sets since $\varphi$ contains no
$U$ operators.
\end{itemize}
\end{example}

\begin{corollary}
Let $\varphi\in t$-$RULTL\left(  K,AP\right)  .$ Then, we can effectively
construct a wBa over $\mathcal{P}\left(  AP\right)  $ and $K$ recognizing
$\left\Vert \varphi\right\Vert .$
\end{corollary}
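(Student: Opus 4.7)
The plan is to chain together three constructions that have already been established in the paper. The theorem immediately preceding this corollary provides, for every $\varphi \in t\text{-}RULTL(K,AP)$, an effectively constructible $\varepsilon$-wgBa $\mathcal{A}_{\varphi}$ over $\mathcal{P}(AP)$ and $K$ such that $\|\mathcal{A}_{\varphi}\| = \|\varphi\|$. So the content of the corollary is entirely a matter of translating this $\varepsilon$-wgBa into an equivalent wBa, using the two translation lemmas proved in Section~5 for the restricted product $\omega$-valuation monoid setting.

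First, I would apply the lemma that converts any $\varepsilon$-wgBa over an idempotent restricted product $\omega$-valuation monoid to an equivalent $\varepsilon$-wBa. This yields an $\varepsilon$-wBa $\mathcal{A}'_{\varphi}$ with $\|\mathcal{A}'_{\varphi}\| = \|\mathcal{A}_{\varphi}\| = \|\varphi\|$. The construction there is effective, since it simply produces the product of the state set with $\{1,\ldots,l\}$ (where $l$ is the number of final sets of $\mathcal{A}_{\varphi}$) and redefines the weight function by a finite case analysis.

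Next, I would apply part (ii) of the subsequent lemma, which turns an $\varepsilon$-wBa over an idempotent restricted product $\omega$-valuation monoid into an equivalent wBa. Applied to $\mathcal{A}'_{\varphi}$, this yields a wBa $\overline{\mathcal{A}}_{\varphi}$ with $\|\overline{\mathcal{A}}_{\varphi}\| = \|\mathcal{A}'_{\varphi}\| = \|\varphi\|$. Again, this construction is effective, as it consists of doubling the state set by introducing decoy states and assigning the new weights via finite sums determined by the reflexive--transitive closure $\overset{\ast}{\rightarrow}$ of the $\varepsilon$-transitions of $\mathcal{A}'_{\varphi}$.

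Since each of the three steps (the theorem plus the two lemmas) is effective and preserves the recognized series, composing them produces the required wBa over $\mathcal{P}(AP)$ and $K$ whose behavior equals $\|\varphi\|$. There is no real obstacle: the corollary is essentially a bookkeeping combination of already proved results, and the only thing to be careful about is invoking the variants of the Section~5 lemmas that apply to idempotent \emph{restricted} product $\omega$-valuation monoids (as opposed to the non-restricted versions), which is exactly what is available.
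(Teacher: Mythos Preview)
Your proposal is correct and matches the paper's intended argument: the corollary is stated without proof because it follows immediately from the preceding theorem (giving the $\varepsilon$-wgBa $\mathcal{A}_\varphi$) together with the two effective translations of Section~5, namely the lemma converting an $\varepsilon$-wgBa to an $\varepsilon$-wBa and part~(ii) of the lemma converting an $\varepsilon$-wBa to a wBa, both of which are stated for idempotent restricted product $\omega$-valuation monoids.
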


\bigskip

\newpage
\section{Conclusion}
In this paper we introduced a weighted \textit{LTL} over product $\omega
$-valuation monoids (resp. generalized product $\omega$-valuation monoids) that satisfy specific properties, and proved that for
every formula $\varphi$ of a syntactic fragment of the weighted $\mathit{LTL}$
we can effectively construct a weighted generalized Büchi automaton with
$\varepsilon$-transitions $\mathcal{A}_{\varphi\text{ }}$ whose behavior
coincides with the semantics of $\varphi.$ We provided
in this way a theoretical basis for the definition of quantitative
model-checking algorithms. The structure of product $\omega$-valuation
monoids and generalized product $\omega$-valuation monoids, that was used for the domain of weights, refers to an interesting
range of possible applications. Naturally, in order to reach the goal of
quantitative reasoning it is necessary to further investigate complexity and
decidability results, providing in this way more arguments for the definition
of model-checking algorithms incorporating the proposed weighted \textit{LTL.}
In \cite{Ha-Ku} the authors introduced the notion of safety in the weighted
setting. More precisely, for a rational number $q$, a finite series $s$ over a
given alphabet and $%
%TCIMACRO{\U{211a} }%
%BeginExpansion
\mathbb{Q}
%EndExpansion
$ is called $q$-safe if every word with coefficient at least $q$ has a prefix
all whose extensions have coefficient at least $q$. Given a deterministic
weighted automaton, the authors relate the safety of its behavior with its
structure. They also propose the extension of their theory to infinite words
as a challenging perspective and we further add that the definition of the
notion of safety for infinitary series could be related with syntactical
fragments of the weighted \textit{LTL} and the structural properties of the
corresponding weighted generalized Büchi automaton with $\varepsilon
$-transitions that we propose in this paper. Finally, another interesting road
for extending the theory of our weighted \textit{LTL}, is to study its
relation with weighted \textit{FO} logic, $\omega$-star-free series and
weighted counter-free automata on infinite words.
\newpage

% if you use BibTeX then
% \bibliography{BibTeX file}
% else

\newpage

\textbf{Appendix.}

In the following proof, we prove that the structure presented in Example \ref{example2} is indeed a generalized product $\omega$-valuation monoid.\\
\begin{proof}
We prove first the distributivity of $Val^{\omega}$ over finite sums for generalized product $\omega $-valuation monoids.

Let $L\subseteq _{fin}\overline{%
%TCIMACRO{\U{211d} }%
%BeginExpansion
\mathbb{R}
%EndExpansion
},$ finite index sets $I_{j}\left(j\geq 0\right),$ and $k_{i_{j}}\in L\left( i_{j}\in
I_{j},j\geq 0\right) $ such that for all but a finite number of $j\geq 0,$ it holds $%
k_{i_{j}}\in L\backslash \left\{ \infty ,\mathbf{-\infty }\right\} $ for all
$i_{j}\in I_{j},$ or $k_{i_{j}}\in \left\{ \mathbf{\infty },\mathbf{-\infty }%
\right\} $ for all $i_{j}\in I_{j}$. We will prove that
\begin{equation*}
\emph{liminf}\left( \underset{i_{j}\in I_{j}}{\sup }k_{i_{j}}\right) _{j\in
%TCIMACRO{\U{2115} }%
%BeginExpansion
\mathbb{N}
%EndExpansion
}=\underset{\left( i_{j}\right) _{j}\in I_{0}\times I_{1}\times \ldots }{%
\sup }\left( \emph{liminf}\left( k_{i_{j}}\right) _{j\in
%TCIMACRO{\U{2115} }%
%BeginExpansion
\mathbb{N}
%EndExpansion
}\right) .
\end{equation*}%
We set $A=$\emph{liminf}$\left( \underset{i_{j}\in I_{j}}{\sup }k_{i_{j}}\right)
_{j\in
%TCIMACRO{\U{2115} }%
%BeginExpansion
\mathbb{N}
%EndExpansion
},$ and $B=\underset{\left( i_{j}\right) _{j}\in I_{0}\times I_{1}\times
\ldots }{\sup }\left( \emph{liminf}\left( k_{i_{j}}\right) _{j\in
%TCIMACRO{\U{2115} }%
%BeginExpansion
\mathbb{N}
%EndExpansion
}\right) .$

Assume that there exists an $l\geq 0,$ such that $k_{i_{l}}=-\infty $ for all
$i_{l}\in I_{l}$. Then, for all $\left( i_{j}\right) _{j}\in I_{0}\times
I_{1}\times \ldots ,$ \emph{liminf}$\left( k_{i_{j}}\right) _{j\in
%TCIMACRO{\U{2115} }%
%BeginExpansion
\mathbb{N}
%EndExpansion
}=-\infty ,$ i.e., $B=-\infty .$ Moreover, $\underset{i_{l}\in I_{l}}{\sup }%
k_{i_{l}}=-\infty ,$ and thus $A=-\infty $ as wanted. Otherwise, we point
out the following cases:

(I) Assume that for all $j\geq 0$ there exists $i_{j}\in I_{j}$ such that $k_{i_{j}}=\infty .$ Then,
there exist $\left( i_{j}\right) _{j}\in I_{0}\times I_{1}\times \ldots $
such that $\emph{liminf}$$\left( k_{i_{j}}\right) _{j\in
%TCIMACRO{\U{2115} }%
%BeginExpansion
\mathbb{N}
%EndExpansion
}=\infty $ which implies that $B=\infty .$ In addition, we get that $%
\underset{i_{j}\in I_{j}}{\sup }k_{i_{j}}=\infty $ for all $j\geq 0,$ i.e., $%
A=\infty $ as well.

(II) Assume that there exists finitely many $j\geq 0,$ such that $%
k_{i_{j}}\neq \infty $ for all $i_{j}\in I_{j},$ then $\underset{i_{j}\in
I_{j}}{\sup }k_{i_{j}}\neq \infty $ for only a finite number of $j\geq 0,$
which implies that $A=\inf \left\{ \underset{i_{j}\in I_{j}}{\sup }%
k_{i_{j}}\mid j\geq 0\text{ with}\underset{i_{j}\in I_{j}}{\sup }%
k_{i_{j}}\neq \infty \right\} $. Moreover, for all but a finite number of $j\geq$ it holds $k_{i_{j}}\in L\backslash \left\{ \mathbf{%
\infty },\mathbf{-\infty }\right\} $ for all $i_{j}\in I_{j},$ or $%
k_{i_{j}}\in \left\{ \mathbf{\infty },\mathbf{-\infty }\right\} $ for all $%
i_{j}\in I_{j}$, thus there exists finitely many $j\geq0$ such that $k_{i_{j}}\in L\backslash \left\{ \mathbf{%
\infty },\mathbf{-\infty }\right\} $ for all $i_{j}\in I_{j},$ which implies that the following equalities are true.
\begin{align*}
 B=\underset{k_{i_{j}}\neq
-\infty ,j\geq 0}{\underset{\left( i_{j}\right) _{j}\in I_{0}\times
I_{1}\times \ldots }{\sup }}\left( \emph{liminf}\left( k_{i_{j}}\right)
_{j\in
%TCIMACRO{\U{2115} }%
%BeginExpansion
\mathbb{N}
%EndExpansion
}\right) \\=\underset{k_{i_{j}}\neq -\infty ,j\geq 0}{\underset{\left(
i_{j}\right) _{j}\in I_{0}\times I_{1}\times \ldots }{\sup }}\left( \inf
\left\{ k_{i_{j}}\mid j\geq 0\text{ with }k_{i_{j}}\neq \infty \text{ }%
\right\} \right)\\ =\inf \left\{ \underset{i_{j}\in I_{j}}{\sup }k_{i_{j}}\mid
j\geq 0\text{ with}\underset{i_{j}\in I_{j}}{\sup }k_{i_{j}}\neq \infty
\right\} =A.
\end{align*}

(III) Assume that there exists infinitely many $j\geq 0$ such that $%
k_{i_{j}}\neq \infty $ for all $i_{j}\in I_{j}.$ Since $I_{k}$ are finite
for every $h\geq 0,$ it holds $\underset{i_{h}\in I_{h}}{\sup }
k_{i_{h}} \in \left\{ k_{i_{h}}\mid i_{h}\in I_{h}\right\} $ for
every $h\geq 0,$ hence there exist a sequence $\left( i_{j}\right) _{j}\in
I_{0}\times I_{1}\times I_{2}\times \ldots $ such that $A=$\emph{liminf}$\left(
k_{i_{j}}\right) _{j\geq 0},$ and thus $A\leq B.$

Let now $\left( i_{j}\right) _{j}\in I_{0}\times I_{1}\times I_{2}\times
\ldots $, and $h\geq 0$ be the maximum $j\geq 0$ such that $\left\{
k_{i_{h}}\mid i_{h}\in I_{h}\right\} \cap L\backslash \left\{ \infty
,-\infty \right\} \neq \emptyset ,$ $\left\{ k_{i_{h}}\mid i_{h}\in
I_{h}\right\} \cap \left\{ \infty ,-\infty \right\} \neq \emptyset $, then%
\begin{eqnarray*}
\emph{liminf}\left( k_{i_{j}}\right) _{j\in
%TCIMACRO{\U{2115} }%
%BeginExpansion
\mathbb{N}
%EndExpansion
} &=&\underset{j\geq 0}{\text{lim}}\left( \inf \left\{ k_{i_{l}}\mid
l\geq j,k_{i_{l}}\neq \infty \right\} \right) \\
&=&\underset{j>h}{\text{lim}}\left( \inf \left\{ k_{i_{l}}\mid l\geq
j,k_{i_{l}}\neq \infty \right\} \right) \\
&\leq &\underset{j>h}{\text{lim}}\left( \inf \left\{ \underset{i_{l}\in I_{l}%
}{\sup }k_{i_{l}}\mid l\geq j,\underset{i_{l}\in I_{l}}{\sup }%
k_{i_{l}}\neq \infty \right\} \right) \\
&=&\underset{j\geq 0}{\text{lim}}\left( \inf \left\{ \underset{i_{l}\in I_{l}%
}{\sup }k_{i_{l}}\mid l\geq j,\underset{i_{l}\in I_{l}}{\sup }%
k_{i_{l}}\neq \infty \right\} \right) \\
&=&\emph{liminf}\left( \underset{i_{j}\in I_{j}}{\sup }k_{i_{j}}\right)
_{j\in
%TCIMACRO{\U{2115} }%
%BeginExpansion
\mathbb{N}
%EndExpansion
}=A
\end{eqnarray*}
where the second and third equality hold because the sequences
\begin{equation*}
\left( \inf \left\{ k_{i_{l}}\mid l\geq j,k_{i_{l}}\neq \infty \right\}
\right) _{j\geq 0},\left( \inf \left\{ \underset{i_{l}\in I_{l}}{\sup }%
k_{i_{l}}\mid l\geq h,\underset{i_{l}\in I_{l}}{\sup }k_{i_{l}}\neq
\infty \right\} \right) _{j\geq 0}
\end{equation*}
are increasing, and the inequality holds by the fact that
\[
\inf \left\{
k_{i_{l}}\mid l\geq j,k_{i_{l}}\neq \infty \right\} \leq \inf \left\{
\underset{i_{l}\in I_{l}}{\sup }k_{i_{l}}\mid l\geq j,\underset{i_{l}\in
I_{l}}{\sup }k_{i_{l}}\neq \infty \right\} ,
\]
for every $j>h.$ Thus, $A=B$
as wanted.

We prove now Property 2. We will prove that for all $k_{i}\left( i\geq
1\right) \in \overline{%
%TCIMACRO{\U{211d} }%
%BeginExpansion
\mathbb{R}
%EndExpansion
},$ \emph{liminf}$\left( \infty \mathbf{,}k_{1},k_{2},\ldots \right) =$\emph{liminf}$%
\left( k_{1},k_{2},\ldots \right) .$ First we assume that $\exists i\geq 1,$
such that $k_{i}=-\infty ,$ then  \emph{liminf}$\left( \infty \mathbf{,}%
k_{1},k_{2},\ldots \right) =$\emph{liminf}$\left( k_{1},k_{2},\ldots \right)
=-\infty .$ Otherwise we point out the following cases.

- If $\forall i\geq 1,k_{i}=\infty ,$ then \emph{liminf}$\left( \infty \mathbf{,}%
k_{1},k_{2},\ldots \right) =$\emph{liminf}$\left( k_{1},k_{2},\ldots \right)
=\infty .$

- If there exist infinitely many $i\geq 1$ such that $k_{i}\neq \infty ,$
then\\
\emph{ liminf}$\left( \infty \mathbf{,}k_{1},k_{2},\ldots \right) =\underset{%
j\geq 1}{\text{lim}}\left( \inf \left\{ k_{i_{j}}\mid i_{j}\geq
j,k_{i_{j}}\neq \infty \right\} \right) =$\emph{liminf}$\left( k_{1},k_{2},\ldots
\right) $.

- Finally, if there exist finitely many $i\geq 1$ such that $k_{i}\neq
\infty ,$ then \emph{liminf}$\left( \infty \mathbf{,}k_{1},k_{2},\ldots \right)
= \inf \left\{ k_{i_{j}}\mid j\geq 1,k_{i_{j}}\neq \infty \right\}
 =$\emph{liminf}$\left( k_{1},k_{2},\ldots \right) .$

\bigskip

We prove now Property 3. Let $k\in \overline{%
%TCIMACRO{\U{211d} }%
%BeginExpansion
\mathbb{R}
%EndExpansion
},$ then we point out the cases $k=\infty ,k=-\infty $ or $k\neq \infty
,-\infty ,$ and we conclude by the definition that $k=$\emph{liminf}$\left(
k,\infty ,\infty ,\ldots \right) $ in all three cases, as wanted.

The rest of the properties of generalized product $\omega $-valuation monoids are
concluded in a straightforward way by the definition of the \emph{liminf}-function,
and the operations of $\sup ,\inf .$
\end{proof}
\\\\
\textbf{Definition of Procedure 3}
Let $\left\{\lambda_{1},\ldots,\lambda_{k}\right\}$ be the subset of
$cl\left( \psi \right) \cap cl\left( \xi \right) $ containing all formulas of the form $\chi U\varrho $, and let $\varphi_{1}U\varphi_{2}=\lambda_{1}$.
We let
$\widetilde{P}_{1}$$=B_{\widetilde{\xi }^{0}}B_{\widetilde{\xi }%
_{re}^{0}}B_{\widetilde{\xi }^{1}}B_{\widetilde{\xi }_{re}^{1}}\ldots $ where $%
B_{\widetilde{\xi }^{i}},B_{\widetilde{\xi }_{re}^{i}}$, $i\geq 0,$ are
defined in the following way. For all $i<i_{1}-1$ we set $\widetilde{\xi}^{i}=\xi^{i},$ $B_{\widetilde{\xi
}^{i}}=B_{\xi^{i}},$ and $B_{\widetilde{\xi}_{re}^{i}}=B_{\xi_{re}^{i}}.$ We
set $\widetilde{\xi}^{i_{1}-1}=\xi^{i_{1}-1},$ and $B_{\widetilde{\xi}%
^{i_{1}-1}}=B_{\xi^{i_{1}-1}},$ and $B_{\widetilde{\xi}_{re}^{i_{1}-1}}%
=B_{\xi_{re}^{i_{1}-1}}\cup M_{B_{\psi^{i_{1}-1},\varphi_{1}U\varphi_{2}}}.$ We have that $\psi^{i_{1}}=\beta^{i_{1}}\wedge\left(
\beta^{\prime}\right)  ^{i_{1}}$ with $\left(\beta^{\prime}\right)^{i_1}\in next\left(
M_{B_{\psi^{i_{1}-1},\varphi_{1}U\varphi_{2}}}\right)  .$ $P_{w}^{1}$
satisfies the acceptance condition for $\varphi_{1}U\varphi_{2}$ at position $i_1$, i.e.,
$\left(\beta^{\prime}\right)^{i_1}\in next\left(  M_{B_{\psi^{i_{1}-1},\varphi_{2}}}\right)  .$ Then,
$\psi_{re}^{i_{1}}=\left(  \beta_{re}^{i_{1}}\wedge\left(  \beta^{\prime
}\right)  _{re}^{i_{1}}\right)  _{re}$\ , and let\textbf{ sequence 1}%
\[
B_{\beta_{re}^{i_{1}}}\overset{\pi_{i_{1}}}{\rightarrow}B_{\beta^{i_{1}+1}%
}\overset{\varepsilon}{\rightarrow}B_{\beta_{re}^{i_{1}+1}}\overset{\pi
_{i_{1}+1}}{\rightarrow}B_{\beta^{i_{1}+2}}\overset{\varepsilon}{\rightarrow
}B_{\beta_{re}^{i_{1}+2}}\rightarrow\ldots,
\]
and \textbf{sequence 2}
\[
B_{\left(  \beta^{^{\prime}}\right)  _{re}^{i_{1}}}\overset{\pi_{i_{1}}%
}{\rightarrow}B_{\left(  \beta^{^{\prime}}\right)  ^{i_{1}+1}}\overset
{\varepsilon}{\rightarrow}B_{\left(  \beta^{^{\prime}}\right)  _{re}^{i_{1}%
+1}}\overset{\pi_{i_{1}+1}}{\rightarrow}B_{\left(  \beta^{^{\prime}}\right)
^{i_{1}+2}}\overset{\varepsilon}{\rightarrow}B_{\left(  \beta^{^{\prime}}\right)
_{re}^{i_{1}+2}}\rightarrow\ldots
\]
be the sequences obtained by Procedure 1. Also, $\xi^{i_{1}}=\zeta^{i_{1}%
}\wedge\left(  \zeta^{\prime}\right)  ^{i_{1}}$, where $\left(  \zeta^{\prime}\right)  ^{i_{1}}$ is the
conjunction of elements of $next\left(  M_{B_{\xi^{i_{1}-1},\varphi
_{1}U\varphi_{2}}}\right)  \backslash next\left(  M_{B_{\xi^{i_{1}-1}%
,\varphi_{2}}}\right)  $ that appear in $\xi^{i_{1}}.$ Then,
$\xi_{re}^{i_{1}}=\left(  \zeta_{re}^{i_{1}}\wedge\left(  \zeta^{\prime}\right)
_{re}^{i_{1}}\right)  _{re},$ and let \textbf{sequence 3}%
\[
B_{\zeta_{re}^{i_{1}}}\overset{\pi_{i_{1}}}{\rightarrow}B_{\zeta^{i_{1}+1}%
}\overset{\varepsilon}{\rightarrow}B_{\zeta_{re}^{i_{1}+1}}\overset{\pi
_{i_{1}+1}}{\rightarrow}B_{\zeta^{i_{1}+2}}\overset{\varepsilon}{\rightarrow
}B_{\zeta_{re}^{i_{1}+2}}\rightarrow\ldots,
\]
and \textbf{sequence 4}
\[
B_{\left(  \zeta^{^{\prime}}\right)  _{re}^{i_{1}}}\overset{\pi_{i_{1}}%
}{\rightarrow}B_{\left(  \zeta^{^{\prime}}\right)  ^{i_{1}+1}}\overset
{\varepsilon}{\rightarrow}B_{\left(  \zeta^{^{\prime}}\right)  _{re}^{i_{1}%
+1}}\overset{\pi_{i_{1}+1}}{\rightarrow}B_{\left(  \zeta^{^{\prime}}\right)
^{i_{1}+2}}\overset{\varepsilon}{\rightarrow}B_{\left(  \zeta^{^{\prime}%
}\right)  _{re}^{i_{1}+2}}\rightarrow\ldots
\]
be the sequences obtained by Procedure 1. Then, we obtain \ $\widetilde{\xi}%
^{i},B_{\widetilde{\xi}_{re}^{i}}$ for $i\in\left\{  i_{1}+1,\ldots
,i_{2}-1\right\}  $ following procedure 2 for sequences 2, and 3$.$ We apply
inductively the construction for all $i_{j},j\geq2.$ Observe that since
$\varphi_{1}U\varphi_{2}$ does appear in the scope of a next operator, then
whenever $\varphi_{1}U\varphi_{2}$ appears as part of the conjunction of form A
of the maximal formula of a state in $P_{w}^{1}$, $P_{w}^{2}$, then it is
obtained from a next formula of the non-empty $\varphi_{1}U\varphi_{2}%
$-consistent subset of the previous state.

For every $2\leq m\leq k$ we obtain $\widetilde{P}_{m}$ by applying the previous procedure for $\widetilde{P}_{m-1}$ and $P_{w}^{1}.$ Then, we set $\widehat{P}_{w}^{2}=\widetilde{P}_{k}$.
  \\\\\\

We present the proof of Lemma \ref{boithitiko until 1}.\\

\begin{proof}
Let $\psi=\underset{1\leq l\leq m}{%
%TCIMACRO{\dbigvee }%
%BeginExpansion
{\displaystyle\bigvee}
%EndExpansion
}\left(  k_{l}\wedge\psi_{l}\right)  $ where $k_{l}\in K,$ and $\psi_{l}\in
bLTL\left(  K,AP\right)  .$ First, we assume that all $\xi_{j}\left(  1\leq
j\leq n\right)  $ are different from $true$ and $\psi\notin bLTL\left(
K,AP\right)  ,$ and we point out the following cases.

(a) $\xi_{j}=\underset{1\leq i_{j}\leq m_{j}}{%
%TCIMACRO{\dbigwedge }%
%BeginExpansion
{\displaystyle\bigwedge}
%EndExpansion
}\xi^{\left(  j,i_{j}\right)  }$ with $\xi^{\left(  j,i_{j}\right)  }\in
bLTL\left(  K,AP\right)  $ for every $i_{j}\in\left\{  1,\ldots,m_{j}\right\}
,j\in\left\{  1,\ldots,k\right\}  ,$ and for every $j\in\left\{
2,\ldots,k\right\}  $ there exist $i_{j}^{1},\ldots,i_{j}^{h_{j}}\in\left\{
1,\ldots,m_{j}\right\}  $, such that for every $k\in\left\{1,\ldots,h_{j}\right\}$, $\xi^{^{\left(  j,i_{j}^{k}\right)  }%
}=\xi^{\left(  j^{\prime},i_{j^{\prime}}\right)  }$ for some $j^{\prime}%
\in\left\{  1,\ldots,j-1\right\}  ,i_{j^{\prime}}\in\left\{  1,\ldots
,m_{j^{\prime}}\right\}  .$ Then, $\xi_{j}^{\prime}=\underset{1\leq i_{j}\leq
m_{j}}{%
%TCIMACRO{\dbigwedge }%
%BeginExpansion
{\displaystyle\bigwedge}
%EndExpansion
}\left(  \xi^{\left(  j,i_{j}\right)  }\right)  ^{\prime},$ where $\left(
\xi^{\left(  j,i_{j}\right)  }\right)  ^{\prime}\in\widehat{next}\left(
M_{B_{\xi_{j}},\xi^{\left(  j,i_{j}\right)  }}\right)  $ for every $i_{j}%
\in\left\{  1,\ldots,m_{j}\right\}  ,$ $\ \ j\in\left\{  1,\ldots,k\right\}
.\ $ Moreover, it holds $\psi\in stLTL\left(  K,AP\right)  $, and thus $\psi^{\prime}\in
bLTL\left(  K,AP\right)  $. Clearly,%
\[
\varphi=\left(  \psi U\xi\right)  \wedge\left(  \left(  \underset{1\leq
i_{1}\leq m_{1}}{%
%TCIMACRO{\dbigwedge }%
%BeginExpansion
{\displaystyle\bigwedge}
%EndExpansion
}\xi^{\left(  j,i_{j}\right)  }\right)  \wedge\left(  \underset{2\leq j\leq
k}{%
%TCIMACRO{\dbigwedge }%
%BeginExpansion
{\displaystyle\bigwedge}
%EndExpansion
}\left(  \underset{i_{j}\neq i_{j}^{1},\ldots,i_{j}\neq i_{j}^{h_{j}}%
}{\underset{1\leq i_{j}\leq m_{j}}{%
%TCIMACRO{\dbigwedge }%
%BeginExpansion
{\displaystyle\bigwedge}
%EndExpansion
}}\xi^{\left(  j,i_{j}\right)  }\right)  \right)  \right)  .
\]
Let $B_{\varphi}=\left\{  \psi U\xi\right\}  \cup B_{\psi}\cup\left(
\underset{1\leq j\leq k}{%
%TCIMACRO{\dbigcup }%
%BeginExpansion
{\displaystyle\bigcup}
%EndExpansion
}\left(  \underset{1\leq i_{j}\leq m_{j}}{%
%TCIMACRO{\dbigcup }%
%BeginExpansion
{\displaystyle\bigcup}
%EndExpansion
}M_{B_{\xi_{j}},\xi^{\left(  j,i_{j}\right)  }}\right)  \right)  .$ We can
prove that $B_{\varphi}$ is a $\varphi$-consistent set following the arguments
of proof of Lemma 107 in \cite{Ma-Co}. Moreover, $M_{B_{\xi_{j}},\xi^{\left(
j,i_{j}\right)  }}\subseteq M_{B_{\varphi},\xi^{\left(  j,i_{j}\right)  }}$
$\left(  1\leq j\leq k,1\leq i_{j}\leq m_{j}\right)  $ and $B_{\psi}\subseteq
M_{B_{\varphi},\psi}$. Then, with the same arguments used in Lemma \ref{boithitiko lemma 1}, we get that $\varphi^{\prime}=\left(  \psi
U\xi\right)  \wedge\psi^{\prime}\wedge\left(  \underset{1\leq i_{1}\leq m_{1}%
}{%
%TCIMACRO{\dbigwedge }%
%BeginExpansion
{\displaystyle\bigwedge}
%EndExpansion
}\left(  \xi^{\left(  1,i_{1}\right)  }\right)  ^{\prime}\right)
\wedge\left(  \underset{2\leq j\leq k}{%
%TCIMACRO{\dbigwedge }%
%BeginExpansion
{\displaystyle\bigwedge}
%EndExpansion
}\left(  \underset{i_{j}\neq i_{j}^{1},\ldots,i_{j}\neq i_{j}^{h_{j}}%
}{\underset{1\leq i_{j}\leq m_{j}}{%
%TCIMACRO{\dbigwedge }%
%BeginExpansion
{\displaystyle\bigwedge}
%EndExpansion
}}\left(  \xi^{\left(  j,i_{j}\right)  }\right)  ^{\prime}\right)  \right)
\in\widehat{next}\left(  B_{\xi}\right)  .$ Therefore, $\left(  B_{\varphi
},\pi,B_{\varphi^{\prime}}\right)  $ is a next transition and
\begin{align*}
v_{B_{\varphi}}\left(  \varphi^{\prime}\right)   &  =v_{B_{\psi}}\left(
\psi^{\prime}\right)  \cdot\underset{1\leq i_{1}\leq m_{1}}{%
%TCIMACRO{\dprod }%
%BeginExpansion
{\displaystyle\prod}
%EndExpansion
}v_{M_{B_{\xi_{j},\xi^{\left(  1,i_{1}\right)  }}}}\left(  \left(
\xi^{\left(  1,i_{1}\right)  }\right)  ^{\prime}\right)  \\
&  \cdot\left(  \underset{2\leq j\leq k}{%
%TCIMACRO{\dprod }%
%BeginExpansion
{\displaystyle\prod}
%EndExpansion
}\left(  \underset{i_{j}\neq i_{j}^{1},\ldots,i_{j}\neq i_{j}^{h_{j}}%
}{\underset{1\leq i_{j}\leq m_{j}}{%
%TCIMACRO{\dprod }%
%BeginExpansion
{\displaystyle\prod}
%EndExpansion
}}\left(  v_{M_{B_{\xi_{j},\xi^{\left(  j,i_{j}\right)  }}}}\left(  \left(
\xi^{\left(  j,i_{j}\right)  }\right)  ^{\prime}\right)  \right)  \right)
\right)
\end{align*}
\begin{align*}
&  \geq v_{B_{\psi}}\left(  \psi^{\prime}\right)  \cdot\underset{1\leq
i_{1}\leq m_{1}}{%
%TCIMACRO{\dprod }%
%BeginExpansion
{\displaystyle\prod}
%EndExpansion
}v_{M_{B_{\xi_{j},\xi^{\left(  1,i_{1}\right)  }}}}\left(  \left(
\xi^{\left(  1,i_{1}\right)  }\right)  ^{\prime}\right)  \\
&  \cdot\left(  \underset{2\leq j\leq k}{%
%TCIMACRO{\dprod }%
%BeginExpansion
{\displaystyle\prod}
%EndExpansion
}\left(  \underset{i_{j}\neq i_{j}^{1},\ldots,i_{j}\neq i_{j}^{h_{j}}%
}{\underset{1\leq i_{j}\leq m_{j}}{%
%TCIMACRO{\dprod }%
%BeginExpansion
{\displaystyle\prod}
%EndExpansion
}}\left(  v_{M_{B_{\xi_{j},\xi^{\left(  j,i_{j}\right)  }}}}\left(  \left(
\xi^{\left(  j,i_{j}\right)  }\right)  ^{\prime}\right)  \right)  \right)
\right)  \\
&  \cdot\underset{2\leq j\leq k}{%
%TCIMACRO{\dprod }%
%BeginExpansion
{\displaystyle\prod}
%EndExpansion
}\left(  \underset{i_{j}\in\left\{  i_{j}^{1},\ldots,i_{j}^{h_{j}}\right\}  }{%
%TCIMACRO{\dprod }%
%BeginExpansion
{\displaystyle\prod}
%EndExpansion
}\left(  v_{M_{B_{\xi_{j},\xi^{\left(  j,i_{j}\right)  }}}}\left(  \left(
\xi^{\left(  j,i_{j}\right)  }\right)  ^{\prime}\right)  \right)  \right)  \\
&  =v_{B_{\psi}}\left(  \psi^{\prime}\right)  \cdot\underset{1\leq j\leq k}{%
%TCIMACRO{\dprod }%
%BeginExpansion
{\displaystyle\prod}
%EndExpansion
}v_{B_{\xi_{j}}}\left(  \xi_{j}^{\prime}\right)
\end{align*}
where the inequality is obtained using Lemma \ref{boithitiko lemma 1}, Remark
 \ref{remark_commutative}, and the same arguments that are used in the proof of the corresponding inequality of Lemma \ref{boithitiko lemma 3}.

We have completed the proof of (i). We prove now (ii). The claim of (ii)
trivially holds for $\psi^{\prime\prime}=\psi^{\prime},\xi_{1}^{\prime\prime
}=\xi_{1}^{\prime}.$ For $2\leq j\leq k$ we set $\xi_{j}^{\prime\prime
}=\underset{i_{j}\neq i_{j}^{1},\ldots,i_{j}\neq i_{j}^{h_{j}}%
}{\underset{1\leq i_{j}\leq m_{j}}{%
%TCIMACRO{\dbigwedge }%
%BeginExpansion
{\displaystyle\bigwedge}
%EndExpansion
}}\left(  \xi^{\left(  j,i_{j}\right)  }\right)  ^{\prime}.$ It holds%
\[
\varphi^{\prime}=\left(  \psi U\xi\right)  \wedge\psi^{\prime}\wedge\left(
\underset{1\leq i_{1}\leq m_{1}}{%
%TCIMACRO{\dbigwedge }%
%BeginExpansion
{\displaystyle\bigwedge}
%EndExpansion
}\left(  \xi^{\left(  1,i_{1}\right)  }\right)  ^{\prime}\right)
\wedge\left(  \underset{2\leq j\leq k}{%
%TCIMACRO{\dbigwedge }%
%BeginExpansion
{\displaystyle\bigwedge}
%EndExpansion
}\left(  \underset{i_{j}\neq i_{j}^{1},\ldots,i_{j}\neq i_{j}^{h_{j}}%
}{\underset{1\leq i_{j}\leq m_{j}}{%
%TCIMACRO{\dbigwedge }%
%BeginExpansion
{\displaystyle\bigwedge}
%EndExpansion
}}\left(  \xi^{\left(  j,i_{j}\right)  }\right)  ^{\prime}\right)  \right)  ,
\]
and
\[
\varphi_{re}^{\prime}=\left(  \left(  \psi U\xi\right)  \wedge\psi
_{re}^{^{\prime\prime}}\wedge\left(  \xi_{1}^{\prime\prime}\right)
_{re}\wedge\left(  \underset{2\leq j\leq k}{%
%TCIMACRO{\dbigwedge }%
%BeginExpansion
{\displaystyle\bigwedge}
%EndExpansion
}\left(  \xi_{j}^{\prime\prime}\right)  _{re}\right)  \right)  _{re},
\]
Let now $j\in\left\{  2,\ldots,k\right\}  .$ We consider now the infinite
sequence of next and $\varepsilon$-reduction transitions
\[
B_{\psi^{0}}\overset{\pi_{0}}{\rightarrow}B_{\psi^{1}}\overset{\varepsilon
}{\rightarrow}B_{\psi_{re}^{1}}\overset{\pi_{1}}{\rightarrow}B_{\psi^{2}%
}\overset{\varepsilon}{\rightarrow}B_{\psi_{re}^{2}}\ldots
\]
with $\psi^{0}=\left(  \xi_{j}^{\prime}\right)  _{re}$ and $v_{B_{\psi
_{re}^{i}}}\left(  \psi^{i+1}\right)  \neq\mathbf{0}$ $\left(  i\geq0\right)
.$ Clearly,
\[
\psi^{0}=\left(  \xi_{j}^{\prime}\right)  _{re}=\left(  \left(  \xi
_{j}^{\prime\prime}\right)  _{re}\wedge\left(  \left(  \xi^{\left(
j,i_{j}^{1}\right)  }\right)  ^{\prime}\wedge\ldots\wedge\left(  \xi^{\left(
j,i_{j}^{h_{j}}\right)  }\right)  ^{\prime}\right)  _{re}\right)  _{re}.
\]
Then, for $\lambda^{0}=\left(  \xi_{j}^{\prime\prime}\right)  _{re}$, and
$\zeta^{0}=\left(  \left(  \xi^{\left(  j,i_{j}^{1}\right)  }\right)
^{\prime}\wedge\ldots\wedge\left(  \xi^{\left(  j,i_{j}^{h_{j}}\right)
}\right)  ^{\prime}\right)  _{re}$, by induction on $i$ and Lemma \ref{boithitiko lemma 2}, we
obtain that for every $i\geq0$, there exist a $\lambda_{re}^{i}$-consistent
set $B_{\lambda_{re}^{i}}$, and a $\zeta_{re}^{i}$-consistent set
$B_{\zeta_{re}^{i}}$, and formulas $\lambda^{i+1}\in next\left(
B_{\lambda_{re}^{i}}\right)  ,$ $\zeta^{i+1}\in next\left(  B_{\zeta_{re}^{i}%
}\right)  $ such that%
\[
\psi_{re}^{i+1}=\left(  \lambda_{re}^{i+1}\wedge\zeta_{re}^{i+1}\right)
_{re},
\]
and
\[
v_{B_{\psi_{re}^{i}}}\left(  \psi^{i+1}\right)  =v_{B_{\lambda_{re}^{i}}%
}\left(  \lambda^{i+1}\right)  \cdot v_{B_{\zeta_{re}^{i}}}\left(  \zeta
^{i+1}\right)  .
\]
For every $i\geq0$, $v_{B_{\psi_{re}^{i}}}\left(  \psi^{i+1}\right)
\neq\mathbf{0}$ and $\zeta_{re}^{i}$\ is boolean, hence $v_{B_{\zeta_{re}^{i}%
}}\left(  \zeta^{i+1}\right)  =\mathbf{1}$, i.e., $v_{B_{\psi_{re}^{i}}%
}\left(  \psi^{i+1}\right)  =v_{B_{\lambda_{re}^{i}}}\left(  \lambda
^{i+1}\right)  $ for every $i\geq0$. So, the sequence
\[
B_{\lambda^{0}}\overset{\pi_{0}}{\rightarrow}B_{\lambda^{1}}\overset{\ast
}{\rightarrow}B_{\lambda_{re}^{1}}\overset{\pi_{1}}{\rightarrow}B_{\lambda
^{2}}\overset{\ast}{\rightarrow}B_{\lambda_{re}^{2}}\ldots
\]
satisfies the lemma's claim.

(b) If $\left(  \left(  \psi U\xi\right)  \wedge\left(  \underset{1\leq j\leq
k}{%
%TCIMACRO{\dbigwedge }%
%BeginExpansion
{\displaystyle\bigwedge}
%EndExpansion
}\xi_{j}\right)  \right)  _{re}=\left(  \psi U\xi\right)  \wedge\left(
\underset{1\leq j\leq k}{%
%TCIMACRO{\dbigwedge }%
%BeginExpansion
{\displaystyle\bigwedge}
%EndExpansion
}\xi_{j}\right)  ,$ we set $B_{\varphi}=\left\{  \psi U\xi\right\}  \cup
B_{\psi}\cup\left(  \underset{1\leq j\leq k}{\cup}B_{\xi_{j}}\right)  $, and we
proceed in the same way.

Finally, we use the same arguments to prove our claim in the cases where at
least one of $\xi_{j}\left(  1\leq j\leq k\right)  $ equals to $true.$
\end{proof}

\end{document}